\documentclass[noend]{article}

\usepackage{microtype}
\usepackage{graphicx}
\usepackage{subcaption}   %
\usepackage{booktabs} %
\usepackage{hyperref}
\usepackage{amsmath}
\usepackage{dsfont}

\usepackage[preprint]{icml2026}

\usepackage{amssymb}
\usepackage{mathtools}
\usepackage{amsthm}
\usepackage{tablefootnote}
\usepackage{amssymb}

\usepackage[capitalize,noabbrev]{cleveref}

\usepackage{graphicx}
\graphicspath{{plots/}}
\newlength{\smallfigwidth}
\newlength{\smallfigheight}
\newlength{\smallfigsep}
\newlength{\legendheight}
\usepackage{wrapfig}

\usepackage{enumitem}
\setlist{nolistsep}
\setlist[itemize]{leftmargin=*}
\setlist[enumerate]{leftmargin=*}

\theoremstyle{plain}

\newtheorem{theorem}{Theorem}[section]
\newtheorem{proposition}[theorem]{Proposition}
\newtheorem{fact}[theorem]{Fact}
\newtheorem{claim}[theorem]{Claim}
\newtheorem{lemma}[theorem]{Lemma}
\newtheorem{corollary}[theorem]{Corollary}
\theoremstyle{definition}
\newtheorem{definition}[theorem]{Definition}

\theoremstyle{remark}

\usepackage[textsize=tiny]{todonotes}

\AtBeginDocument{%
  }
\usepackage{amsthm}
\usepackage{algorithm}
\usepackage{algorithmic}

\usepackage{comment}
\usepackage{float}
\usepackage{amsmath}
\usepackage{cleveref}    
\usepackage{booktabs}
\usepackage{xcolor}       
\usepackage{bm}
\usepackage{tikz}
\usetikzlibrary{fit} %

\newcounter{case}
\newcommand{\Var}{\mathrm{Var}}
\newcommand{\1}{\mathds{1}}
\newcommand{\E}{\mathbb{E}}
\newcommand{\R}{\mathbb{R}}
\newcommand{\RR}{\mathrm{RR}}
\newcommand{\bA}{\mathbf{A}}
\newcommand{\bB}{\mathbf{B}}
\newcommand{\bM}{\mathbf{M}}

\newcommand{\bx}{\mathbf{x}}
\newcommand{\by}{\mathbf{y}}
\newcommand{\bv}{\mathbf{v}}
\newcommand{\bw}{\mathbf{w}}

\newcommand{\disc}{\mathrm{disc}}
\newcommand{\herdisc}{\mathrm{herdisc}}

\newcommand{\Lap}{\mathrm{Lap}}

\newcommand{\GS}{\mathrm{GS}}
\newcommand{\LS}{\mathrm{LS}}
\newcommand{\HS}{\mathrm{HS}}
\newcommand{\rmL}{\mathrm{L}}
\newcommand{\rmR}{\mathrm{R}}
\newcommand{\ba}{\mathbf{a}}

\newcommand{\midk}[1]{\left[#1\right]}
\newcommand{\bigk}[1]{\left\{#1\right\}}
\newcommand{\norm}[1]{\left\|#1\right\|}
\newcommand{\absnorm}[1]{\left|#1\right|}

\newcommand{\aut}{\mathrm{aut}}
\newcommand{\cA}{\mathcal{A}}
\newcommand{\cB}{\mathcal{B}}

\newcommand{\cE}{\mathcal{E}}
\newcommand{\cF}{\mathcal{F}}
\newcommand{\cG}{\mathcal{G}}
\newcommand{\cX}{\mathcal{X}}
\newcommand{\cY}{\mathcal{Y}}

\newcommand{\cR}{\mathcal{R}}
\newcommand{\cM}{\mathcal{M}}
\newcommand{\veps}{\varepsilon}
\newcommand{\poly}{\mathrm{poly}}

\allowdisplaybreaks

\newcommand{\cx}[1]{\todo[linecolor=red!50!black,backgroundcolor=red!25,bordercolor=red!50!black]{\scriptsize \textbf{CX:} #1}}

\icmltitlerunning{Differentially Private Range Subgraph Counting}

\begin{document}

\twocolumn[
\icmltitle{Differentially Private Range Subgraph Counting}

\icmlsetsymbol{equal}{*}

\begin{icmlauthorlist}
\icmlauthor{Xian Chen}{equal,USTC}
\icmlauthor{Ruobing Bai}{equal,USTC}
\icmlauthor{Pan Peng}{USTC}
\end{icmlauthorlist}

\icmlaffiliation{USTC}{School of Computer Science and Technology, University of Science and Technology of China, Hefei, China}
\icmlcorrespondingauthor{Pan Peng}{ppeng@ustc.edu.cn}

\icmlkeywords{Machine Learning, ICML}

\vskip 0.3in
]

\printAffiliationsAndNotice{\icmlEqualContribution}  %

\begin{abstract}
Subgraph counting is a fundamental problem in graph analysis. Motivated by practical scenarios where graph analytics are performed on subgraphs induced by selected vertices -- rather than on the entire graph -- and by growing privacy concerns, we initiate the study of \emph{differentially private range subgraph counting (DPRSC)}. The goal is to privately count occurrences of a fixed pattern graph within induced subgraphs defined by multi-dimensional attribute ranges. Unlike classical point counting, subgraph counting is inherently nonlinear and exhibits high sensitivity: a single edge modification can affect many subgraph occurrences. We present the first efficient algorithms for DPRSC with small additive error. Our approach introduces a subgraph projection that reduces DPRSC to weighted orthogonal range counting, enabling the use of range trees and local sensitivity estimation to achieve accurate private query answering. We complement our algorithms with matching lower bounds, obtained by reducing reconstruction attacks to DPRSC and leveraging discrepancy theory. In particular, we show that any differentially private algorithm for DPRSC must incur additive error exponential in the dimension. Empirical evaluations demonstrate that our algorithms significantly outperform baseline methods in accuracy and runtime while maintaining strong privacy guarantees.
\end{abstract}

\section{Introduction}
\emph{Subgraph counting} is essential for understanding the properties of a data graph and has been extensively studied \citep{chiba1985arboricity,alon1995color,bjorklund2014listing,curticapean2017homomorphisms,adiKK2019,FichtenbergerG2020panpeng}. %
Given a \emph{host} graph $ G = (V, E) $ and a \emph{pattern} graph $H$, a subgraph of $ G $ that is isomorphic to $H$ is called an \emph{occurrence} of $H$. The goal is to compute the number of such occurrences. Subgraph counts serve as key graph statistics: for instance, triangles and $ k $-stars are central to computing the clustering coefficient, which are widely used to evaluate social and recommendation networks, while counting four-cycles is particularly informative for capturing clustering behavior in bipartite graphs such as online dating and mentor-student networks.

In many applications, beyond counting subgraphs in the entire graph, we are often interested in counting subgraphs within \emph{specific induced subgraphs} determined by vertex attributes. For instance, in patient or social networks, analysts may wish to count patterns among individuals within certain age ranges or geographic regions. In financial networks, counting transaction patterns among entities with similar risk profiles or locations can help identify fraudulent activities or assess systemic risks. More generally, vertices often carry multidimensional attributes, and practitioners seek to issue repeated subgraph-counting queries restricted to vertices whose attributes fall within specified ranges. %

Motivated by these scenarios, we  introduce the \textit{Range Subgraph Counting} problem that combines classical subgraph counting with multidimensional range queries over vertex attributes. %
\begin{definition}[Range Subgraph Counting (RSC)]
Let $ G = (V, E) $ be an $n$-vertex undirected graph, where each vertex $ v \in V $ is associated with an attribute vector $\ba(v) \in \R^d$. For a query range $ q = [\ell_1, r_1]\times \dots \times [\ell_d,r_d] $, define 
\[V_q = \{ v \in V \mid \ell_i \leq a_i(v) \leq r_i, i \in [d]\}.
\]
Let $G_q = G[V_q] $ denote the subgraph of $ G $ induced by $ V_q $. 

Let $ H $ be a fixed connected pattern graph with $O(1)$ vertices and let $Q$ be a  set of query ranges. The goal is to return, for each query $ q\in Q$, the number of occurrences of $ H $ in $ G_q $. 
\end{definition}
Note that vertex attributes may represent age, time or location, depending on the practical context. Additionally, an occurrence is only counted if all its vertices are contained within $ V_q $. %
Prior work by \citet{Tao2023range} studied the $1$-dimensional version of this problem, focusing on optimizing the trade-off between space and query time. %

\textbf{Differential privacy perspective} 
In this work, we study the range subgraph counting problem from the perspective of \emph{differential privacy (DP)}. %
DP enables meaningful statistical analysis while ensuring that the presence or absence of any single individual’s data has a limited impact on the output \citep{dwork2006calibrating}. %
We focus on \emph{edge DP}. %
Two graphs $G$ and $G'$ with the same node set $V$ are said to be \emph{neighboring}, denoted  $G \sim G'$, if they differ by exactly one edge.  (A stronger notion, known as node DP, has also been studied; see, e.g., \citep{blocki2013differentially,chen2013recursive,kasiviswanathan2013analyzing}.) Throughout this work, vertex attributes are assumed to be public.

\begin{definition}[Edge DP~\citep{nissim2007smooth}]\label{def:dp_definitino}
     Let $\veps > 0$ and $\delta \in [0,1)$. A randomized algorithm $\cA$ is $(\veps,\delta)$-\emph{DP} if for all measurable events $S$ in the output space of $\cA$ and all neighboring graph $G \sim G'$, %
\[
  \Pr[\cA(G)\in S] \leq e^{\veps} \Pr[\cA(G')\in S]+\delta.
\]   
When $\delta = 0$, we say that $\mathcal{A}$ satisfies \emph{pure} DP (or $\veps$-DP); when $0 < \delta < 1$, it satisfies \emph{approximate} DP.
\end{definition}

Let $f_H(G)$ denote the number of occurrences of $H$ in $G$. A fundamental property of DP implies that any edge-DP algorithm for subgraph counting must incur additive error proportional to the \emph{local sensitivity} $\LS_{f_H}(G)$ with constant probability \citep{lindell2017tutorials}. Classical DP mechanisms instead depend on the \emph{global sensitivity} $\GS_{f_H}$, which can be significantly larger. Various alternatives to global sensitivity -- often via instance-dependent sensitivity bounds -- have been proposed, e.g., smooth sensitivity, ladder functions, and private higher-order local sensitivity \citep{nissim2007smooth,zhang2015private,nguyen2024faster}. %
While DP subgraph counting on the entire graph has been studied extensively, private algorithms for \emph{range} subgraph counting have remained unexplored.

The challenge in range subgraph counting is twofold. First, subgraph counts already exhibit high sensitivity, even without range restrictions. Second, range queries require answering subgraph counts over many induced subgraphs $G_q$, leading to increased algorithmic and privacy complexity that depends on the size of the query set $Q$. A naive approach -- privately answering each query independently by adding Laplace noise -- results in prohibitive error due to sensitivity scaling with $|V_q|$ and privacy loss accumulating via composition. When $|Q| = \Theta(n^{2d})$, the resulting error can overwhelm the true signal, even for simple patterns such as triangles. Moreover, range subgraph counting is inherently \emph{nonlinear}, preventing the direct application of DP techniques developed for linear query workloads.

\subsection{Our contribution}\label{sec:contribution}%

Before stating our main results, we note that for any algorithm producing \emph{consistent} outputs on identical induced subgraphs $G[V_q]$, we may safely assume that the query set $Q$ satisfies\footnote{
Note that the VC-dimension of $Q$ is at most $2d$. By Sauer's Lemma, the number of distinct induced subgraphs that $Q$ can generate is at most $\sum_{i=0}^{2d} \binom{n}{i} = O(\min(n^{2d}, 2^n))$ \citep{shalev2014understanding}.} $|Q| = O\!\left(\min(n^{2d}, 2^n)\right)$.  %

\textbf{Upper bound} We give the first efficient algorithm for differentially private range subgraph counting (DPRSC) that achieves small additive error.%

\begin{theorem}[Approximate DPRSC]\label{thm:main_upper}
    For any $\veps > 0$ and $\delta \in (0,1)$, %
    there exists a $(\veps,\delta)$-DP algorithm that, given a graph $G=(V,E,\ba)$ with $d$-dimensional vertex attributes, a fixed pattern graph $H$, and a query set $Q$, outputs noisy answers $\widetilde{f}_H(G_q)$ which satisfy
\[
\begin{aligned}
    &\max_{q\in Q} \absnorm{f_H(G_q)-\widetilde{f}_H(G_q)} \\ =& O\left(\frac{\widetilde{\HS}_{f_H}(G)\cdot \sqrt{(\veps+\log(1/\delta))\log(n|Q|)} \cdot\log^{2d}{n}}{\veps}\right) \\
\end{aligned}
\] with probability at least $1-\frac{1}{n}$, where $\widetilde{\HS}_{f_H}$ denotes the output in \Cref{alg: Estimating higher-order private local sensitivity}. Here, the hidden constants are of the form $c^{O(d)}$ for some universal constant $c>1$.%
\end{theorem}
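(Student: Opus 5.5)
The plan is to reduce DPRSC to weighted orthogonal range counting through a \emph{subgraph projection}, and then release noisy range-tree node sums. The noise is calibrated to a privately estimated bound $\widetilde{\HS}_{f_H}(G)$ on the higher-order local sensitivity, not to the global sensitivity.

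\textbf{Step 1: subgraph projection.} Let $k=|V(H)|$. Each occurrence $S$ of $H$ in $G$ has vertex set $\{v_1,\dots,v_k\}$. I map $S$ to the axis-aligned box
\[
B(S)=\prod_{i=1}^d\bigl[\min_j a_i(v_j),\,\max_j a_i(v_j)\bigr].
\]
Then $S\subseteq G_q$ if and only if $B(S)\subseteq q$. Containment of boxes is an orthogonal dominance condition on the point $p(S)=(\min_j a_1,\dots,-\max_j a_1,\dots)\in\R^{2d}$. Concretely, it asks $\ell_i\le \min_j a_i(v_j)$ and $\max_j a_i(v_j)\le r_i$. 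Define a weight function $w_G$ on the $O(n^{2d})$ distinct points, where $w_G(p)$ is the number of occurrences projecting to $p$. Then $f_H(G_q)$ is the sum of $w_G$ over a $2d$-dimensional orthogonal range. The attributes are public, so the point set is public and only the weights depend on $E$.

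\textbf{Step 2: range tree and noise.} I build a $2d$-dimensional range tree over the public point set. Every query range decomposes into $O(\log^{2d} n)$ canonical nodes, and each point lies in $O(\log^{2d} n)$ canonical nodes. Adding or deleting one edge $e$ changes $w_G$ only at projections of occurrences containing $e$. The total $\ell_1$ change is therefore at most $\LS_{f_H}$ at the current graph. Summed over canonical nodes, the $\ell_1$ change of the vector of node sums is at most $\log^{2d} n\cdot \LS_{f_H}(G)$.

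I then invoke the earlier private estimate $\widetilde{\HS}_{f_H}(G)$ from \Cref{alg: Estimating higher-order private local sensitivity}. I assume, as established there, that it is $(\veps/2,\delta/2)$-DP and upper bounds the local sensitivity with probability $1-\delta/2$. I use a propose-test-release style argument: release each node sum with noise calibrated to $\Delta=\widetilde{\HS}\log^{2d} n$. To obtain the $\sqrt{(\veps+\log(1/\delta))}$ dependence, I use the Gaussian mechanism with $\ell_2$ sensitivity. The $\ell_2$ change of the node vector is at most $\sqrt{\log^{2d}n}\cdot\LS$ per level. This bound is crude, and I would simply take the $\ell_1\ge\ell_2$ bound $\Delta$. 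The noise standard deviation is then $\sigma=O(\Delta\sqrt{\log(1/\delta)}/\veps)$.

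By basic composition with the sensitivity estimate, and adding $\delta/2$ for the failure event, the whole procedure is $(\veps,\delta)$-DP. Outputs for queries are post-processing.

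\textbf{Step 3: error.} Each answer is a sum of $O(\log^{2d} n)$ independent Gaussians. Its standard deviation is therefore $\sigma\cdot\log^{d} n$. A union bound over $|Q|$ queries at failure probability $1/n$ costs a factor $\sqrt{\log(n|Q|)}$. This gives an error of $O(\widetilde{\HS}\log^{3d}n\sqrt{\ldots}/\veps)$, which is worse than claimed. To reach $\log^{2d}n$, I would instead apply the Gaussian $\ell_2$ sensitivity per level correctly. An edge change affects $O(\log^{2d}n)$ nodes, but within each level the affected weights are disjoint across nodes, so their squares sum to at most $\LS^2$. Over all levels, the $\ell_2$ sensitivity is then $O(\LS\cdot\log^{d} n)$. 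Combined with the $\log^{d}n$ from summing canonical nodes, the total is $\log^{2d}n$, as stated. The constants $c^{O(d)}$ come from the range-tree bounds. The $\veps+\log(1/\delta)$ term comes from the tight Gaussian calibration $\sigma\propto\sqrt{\veps+\log(1/\delta)}/\veps$, which is valid for all $\veps$ (e.g.\ via zCDP conversion).

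\textbf{Main obstacle.} The hardest step is justifying privacy when the noise scale is data dependent. The algorithm uses $\widetilde{\HS}$, which must bound the sensitivity not only at $G$ but uniformly at the neighbor $G'$. This is where the higher-order local sensitivity is needed. I would rely on the guarantee of \Cref{alg: Estimating higher-order private local sensitivity} that $\widetilde{\HS}\ge \LS_{f_H}(G'')$ for all $G''$ within distance $1$, except with probability $\delta/2$. I would then carry out the standard conditioning argument on that event.
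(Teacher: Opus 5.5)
Your plan works and reaches the stated bound, but it calibrates the noise differently from the paper. Both arguments rest on the same fact. Group the $2d$D range-tree nodes into the $\theta\le(\lceil\log n\rceil+1)^{2d}$ depth classes; nodes in one class cover disjoint point sets, so each class vector $\bw_i$ moves by at most $\LS_{f_H}(G)$ in $\ell_1$ under an edge change.

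The paper keeps Laplace noise. It treats the $\theta$ classes as $\theta$ separate releases and combines them by advanced composition, inside its extension of Karwa et al.'s Lemma~4.4. This gives per-node scale $\widetilde{\HS}_{f_H}(G)\,(\lceil\log n\rceil+1)^{d}\,2\sqrt{2\ln(1/\delta'')}/\veps'$ with $\delta''=\min(e^{-\veps'/8},\delta')$. There the $\veps+\log(1/\delta)$ term comes from this choice of $\delta''$ and from the $e^{O(\veps)}$ factor relating $\delta'$ to $\delta$. The error bound then follows from Laplace concentration over the at most $\lceil\log n\rceil^{2d}$ queried nodes.

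You instead turn the per-class $\ell_1$ bound into a single $\ell_2$ sensitivity $\sqrt{\theta}\,\LS_{f_H}(G)$ and apply one Gaussian mechanism with a zCDP conversion. That is a cleaner one-shot calibration with the same $\log^{d}n\cdot\log^{d}n$ split in the error. The paper's choice reuses the Laplace machinery of its pure-DP algorithm and a ready-made Laplace-based lemma.

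One correction to your ``main obstacle'' paragraph. You do not need $\widetilde{\HS}_{f_H}(G)\ge\LS_{f_H}(G'')$ for every neighbor $G''$, and the paper never establishes that. Its guarantee is only $\Pr[\widetilde{\HS}^{(k)}_{f_H}(G)\ge f^{(k)}_H(G)]\ge 1-\delta'$ at the input graph. That weaker property suffices, because every neighbor $G'$ satisfies $\norm{\bw_i(G)-\bw_i(G')}_1\le|f_H(G)-f_H(G')|\le\LS_{f_H}(G)$.

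The paper's privacy argument uses a hybrid that pairs the scale drawn from $\mathcal{B}(G)$ with the data $\bw(G')$. The hybrid versus $\cA(G')$ is post-processing of the private estimator. The hybrid versus $\cA(G)$ is a fixed-scale mechanism that is private on the event $\widetilde{\HS}_{f_H}(G)\ge\LS_{f_H}(G)$. The higher-order quantities $f^{(k)}_H$ are needed only to make $\widetilde{\HS}_{f_H}(G)$ itself private, not to dominate sensitivities at neighbors.

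Relatedly, ``basic composition plus $\delta/2$'' does not apply verbatim, since your Gaussian release is not private for every fixed scale. The hybrid argument makes the failure-event accounting rigorous. It yields $\delta$-terms such as $\delta_1+2e^{\veps_1}\delta_2$, which you absorb by shrinking $\delta$ by an $e^{O(\veps)}$ factor. This costs only an additive $O(\veps)$ inside $\log(1/\delta)$, already covered by the $\veps+\log(1/\delta)$ term.
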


In the above, the quantity $\widetilde{\HS}_{f_H}(G)$ can be viewed as an approximation of $\LS_{f_H}(G)$ with an implicit dependence on $\poly(\log(1/\delta)/\veps)$ (see \citep{nguyen2024faster}). In real-world graphs, which are typically sparse, $\widetilde{\HS}_{f_H}(G)$ is often significantly smaller than $\GS_{f_H}$. 
For instance, when $H$ is a triangle, if $\veps$ and $\delta$ are constants, we have $\widetilde{\HS}_{f_H}(G) = \LS_{f_H}(G) + O(1) \le d_{\max}(G) + O(1) \ll \GS_{f_H} = n -2$ %
with constant probability, where $d_{\max}(G)$ represents the maximum degree of graph $G$ (see \Cref{ssec:approximate DP,appendix:tight bound of GS,subsection:Discussion on HS} for more discussions on $\widetilde{\HS}_{f_H}(G)$ and $\GS_{f_H}$). %
Note that the number of triangles in many real-world graphs is significantly larger than $n$. Consequently, the additive error of our algorithm is substantially smaller than the true triangle count in such graphs.%

For pure DP, which provides a stronger privacy guarantee, we also obtain a corresponding additive error bound of $O(\frac{\GS_{f_H}\cdot \sqrt{\log(n|Q|)} \cdot \log^{3d}n}{\veps})$ (see \Cref{thm:main_upper_PDP} for details). 

We also provide the complexity analysis of the above algorithms in \Cref{sec:time and space}. In particular, these algorithms run in polynomial time for $d = O(\log n/\log \log n)$. Furthermore, when each dimension of the attributes takes fewer than $n$ distinct values, our algorithms achieve improved additive error as well as better time and space complexity (see \Cref{sec:Experiment_real_attributes} for details). In the special case where all vertices share the same attribute  (so that the only non-trivial query corresponds to counting subgraphs in the entire graph $G$), our bounds are consistent with the results of \citet{nguyen2024faster} for DP subgraph counting. %

\textbf{Lower bound} We also establish a lower bound for the DPRSC problem as the dimension $d$ varies. 

\begin{theorem}[Lower Bound of DPRSC]
\label{thm:main_lower}
For any pattern graph $H$, let $\mathcal{A}$ be an $(\veps, \delta)$-DP algorithm with constants $\veps > 0$ and $\delta \in [0, \frac{1}{2})$ for the range subgraph counting problem with additive error $\eta = \max_{q\in Q}|f_H(G_q) - \widetilde{f}_H(G_q)|$ with a sufficiently large constant success probability. %
Assume that $|Q|\geq n^{c}$ for any sufficiently small constant $c > 0$. Then there exist infinitely many $n$-vertex graphs $G = (V, E, \mathbf{a})$ with $d$-dimensional vertex attributes and a corresponding query set $Q$ such that %
\begin{itemize}
    \item if $d = O(1)$, then $\eta = \Omega(\log^{d -1}n \cdot \widetilde{\HS}_{f_H}(G))$;
    \item if $d = O(\log n)$, then $\eta = 2^{\Omega(d)} \cdot \widetilde{\HS}_{f_H}(G)$;
    \item if $d = \Omega(\log n)$, then $\eta = n^{\Omega(1)} \cdot \widetilde{\HS}_{f_H}(G)$.
\end{itemize}
\end{theorem}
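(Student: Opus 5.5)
The plan is to reduce reconstruction attacks on a hidden binary vector to DPRSC, and then transfer the known discrepancy lower bounds for $d$-dimensional orthogonal range counting.

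\textbf{Gadget.} Take a secret vector $x\in\{0,1\}^m$ and build a graph $G_x$ with one gadget per coordinate. Each gadget is a copy of $H$ with one designated edge $e_i$, and $e_i$ is present iff $x_i=1$. All vertices of gadget $i$ receive the same attribute point $p_i\in\R^d$. Because $H$ is connected and the gadgets are vertex-disjoint, the count in a range is
\[
f_H(G_q)=\sum_{i:\,p_i\in q} x_i .
\]
In every $G_x$ and every neighbor of it, each edge lies in at most $O(1)$ copies of $H$. So $\LS_{f_H}$ is $O(1)$ in a neighborhood of radius $\poly(\log(1/\delta)/\veps)$. By the properties of \Cref{alg: Estimating higher-order private local sensitivity} from \citet{nguyen2024faster}, $\widetilde{\HS}_{f_H}(G)=O(1)$ with constant probability, since $\veps,\delta$ are constants. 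The number of vertices is $n=\Theta(m)$. Padding with isolated vertices adjusts $n$ as needed.

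\textbf{From DP to reconstruction.} The DPRSC problem on $G_x$ now coincides with answering the linear query workload $\{\1_{q\cap P}\}_{q\in Q}$ on the database $x$, where neighboring graphs correspond to Hamming-neighboring $x$. I then invoke the standard reconstruction-to-discrepancy lower bound (Muthukrishnan--Nikolov). Any $(\veps,\delta)$-DP mechanism with constant $\veps$ and $\delta<1/2$ that answers a workload $A$ with error $\eta$ and constant probability must satisfy $\eta=\Omega(\herdisc(A))$ up to constants. Otherwise, rounding the noisy answers to a nearest consistent $x'$ reconstructs all but a small fraction of the coordinates. This contradicts DP, because it would also reconstruct a random coordinate.

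\textbf{Choice of point sets.} It remains to choose $P$ and a subfamily $Q$ of boxes with large hereditary discrepancy.
\begin{itemize}
\item For $d=O(1)$, I use known constructions (e.g., Chazelle/Larsen lower bounds on boxes) giving discrepancy $\Omega(\log^{d-1}n)$ for orthogonal ranges on $n$ points. I restrict to $n^{c}$ boxes via a hereditary sub-instance; since $|Q|\ge n^c$ is allowed, the restricted subfamily still witnesses large discrepancy.
\item For $d=O(\log n)$ and $d=\Omega(\log n)$, I embed a high-discrepancy set system via the hypercube. Points are placed in $\{0,1\}^d$, and boxes can select subcubes. Subcube/Hadamard-type systems on $2^{d'}$ points, with $d'=\min(d,c\log n)$, have discrepancy $2^{\Omega(d')}$. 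This gives $2^{\Omega(d)}$ when $d=O(\log n)$, and $n^{\Omega(1)}$ once $d'=\Theta(\log n)$. The number of boxes used is polynomial, so the condition on $|Q|$ holds.
\end{itemize}

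\textbf{Main obstacle.} The main obstacle is the last step: exhibiting explicit axis-parallel box families in $d$ dimensions whose hereditary discrepancy is exponential in $d$, and making the argument survive the restriction on $|Q|$. I would first try the Hadamard-matrix discrepancy bound $\Omega(\sqrt{N})$, realized by encoding each row as a union of boxes. Because a union of boxes may overcount, I would instead use differences: additive error $\eta$ per box gives error $2^{O(d)}\eta$ for signed combinations. Alternatively, I would use a direct reduction from Larsen's $d$-dimensional lower bound, which already gives $2^{\Omega(d)}$ scaling.
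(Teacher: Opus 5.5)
Your argument proves the statement as written, but by a genuinely different route from the paper's.

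\textbf{Your route.} You make the subgraph count exactly linear. You use vertex-disjoint copies of $H$, each with one private edge and with all of its vertices placed at a single point $p_i$. Then $f_H(G_q)=(\bA x)_q$, and the reconstruction/discrepancy lower bound for orthogonal range counting transfers verbatim. Neither a common intersection of the boxes nor $\disc_{C_\alpha}(\bA)=\omega(1)$ is needed.

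The right justification for $\widetilde{\HS}_{f_H}(G_x)=O(1)$ is not a neighborhood bound on $\LS$. It is that $G_x\cup S$ has maximum degree at most $h-1+m_H$ for every $|S|\le m_H$, so every $f^{(k)}_H(G_x)=O(1)$. The recursion of \Cref{alg: Estimating higher-order private local sensitivity}, with constant $\veps',\delta'$ and Laplace tails, then gives $\widetilde{\HS}_{f_H}(G_x)=O(1)$ with probability $1-\beta'$ for any constant $\beta'$. A union bound with the success probability of $\mathcal{A}$ finishes exactly as in the paper.

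\textbf{The paper's route.} The paper joins two copies of $P$ by a private matching and adds a public clique $W$ placed in the common intersection of all boxes. Every private edge then lies in at least $\GS_{f_H}$ copies of $H$. \Cref{lem:discCalphaH} shows the discrepancy is amplified by $\GS_{f_H}$; it must absorb the $O(n^{h-2})$ contribution of occurrences with at least two private edges, which is where $\disc_{C_\alpha}(\bA)=\omega(1)$ enters. A separate computation gives $\widetilde{\HS}_{f_H}(G(\bx))=\Theta(\GS_{f_H})$.

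\textbf{What each buys.} Your construction is much simpler. However, it meets the theorem only because $\widetilde{\HS}_{f_H}(G)$ is bounded by a constant on your instances. It shows the dimension factor is unavoidable, but says nothing about whether the error must scale with $\widetilde{\HS}_{f_H}$ when that quantity is large. The paper's heavier construction is what yields \Cref{thm:lower bounds dp range subgraph counting}, i.e.\ $\eta=\Omega(\log^{d-1}\bar n\cdot\GS_{f_H})$ and its analogues. It also shows that the $\widetilde{\HS}_{f_H}$ factor in \Cref{thm:main_upper} cannot be dropped on graphs where it is as large as $\GS_{f_H}$.

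\textbf{Two points to fix.} First, your rounding argument consumes $\alpha$-discrepancy, not hereditary discrepancy. It needs $\|\bA(x-x')\|_\infty>2\eta$ whenever $\|x-x'\|_1\ge\alpha N$, as in \Cref{lem:general-attacker}. The claim that $\eta=\Omega(\herdisc(A))$ up to constants is not a general theorem. Passing from $\herdisc$ to $\disc_{C_\alpha}$ via the partial-coloring bound of \Cref{lem:discpalpha-f} sums over $\Theta(\log n)$ scales. That is harmless for $n^{\Omega(1)}$ targets but can cost a logarithmic factor against a $\log^{d-1}n$ target. Take the $\disc_{C_\alpha}$ bounds of \Cref{lem:lower bounds for disc of private orthogonal range counting} as your input, as the paper does; its $\log^{d-1}n$ bound for boxes comes from Matou\v{s}ek--Nikolov.

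Second, your \emph{main obstacle} is already settled by the Chazelle--Lvov boolean-subcube construction that your own bullet describes. It uses single axis-parallel boxes and is exactly what the paper uses. Drop the fallback of realizing Hadamard rows as signed combinations of $2^{O(d)}$ boxes, since that $2^{O(d)}$ blow-up in error can cancel the $2^{\Omega(d)}$ gain.
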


The above theorem follows directly as a corollary of the more general lower bound in \Cref{thm:lower bounds dp range subgraph counting}, which additionally captures the dependence on $|Q|$. %
In particular, for the hard instances underlying \Cref{thm:main_lower}, the quantity $\widetilde{\HS}_{f_H}(G)$ is, with high constant probability, close to $\GS_{f_H}$. In \Cref{sec:lower_bound_d=1}, we further provide a lower bound $\Omega(\log n \cdot \widetilde{\HS}_{f_H}(G))$ for the case $d = 1$, under the additional assumption that $\delta = n^{-\Omega(1)}$.

The above results demonstrate the inevitable exponential dependence of the additive error on the dimension $d$ for range subgraph counting when $d = O(\log n)$. In particular, when $d$ is constant, the base of the corresponding exponential dependence is $\log n$, which is consistent with our upper bounds, up to constant factors in the exponent.  

\textbf{Experiments} We experimentally test our DP algorithms for RSC on real network datasets in \Cref{sec:Experiment}. 

\textbf{Other extensions} We also explore better upper and lower bounds of DPRSC in higher dimensions $d$ in \Cref{sec:extensions}.

We present a simple polynomial-time $\veps$-DP algorithm with an additive error $O(n\sqrt{\log(n|Q|)}\GS_{f_H})$ for any $d$ and constant $\veps$ %
based on randomized response (see \Cref{thm:HD-DP-RSC}). The algorithm achieves better additive error and time complexity than \Cref{thm:main_upper} and \Cref{thm:main_upper_PDP} for $d = \Omega(\log n /\log \log n)$. %
We further extend an approach in \citet{Talya2023LDPtriangle} to obtain an instance-dependent error for any $d$ and $H$ %
(see \Cref{thm:MD-DP-RSC}). %
For example, when $H$ is a triangle and $\veps$ is a constant, our algorithm has an additive error $O(\log^3(n|Q|)n^{\frac{3}{2}} + \log(n|Q|)\sqrt{f_{C_4}(G)})$, where $C_4$ represents a 4-cycle. For many real-world graphs, the additive error is significantly better than the aforementioned randomized response method.%

Moreover, when $d$ is sufficiently large, %
we extend the technique in \citet{eliavs2020differentially} 
to prove that for any pattern graph $H$ and any $(\veps, \delta)$-DP algorithm with constant $\varepsilon$ and $\delta$, there exist infinitely many $n$-vertex graphs that incur an additive error of $\Omega(n \GS_{f_H})$ with constant probability (see \Cref{thm:subset-subgraph-counting-lb} for details). Since in \Cref{thm:main_lower}, the exponent of $n$ multiplied in front of $\widetilde{\HS}_{f_H}(G)$ is always less than $1$, this result provides a stronger lower bound than \Cref{thm:main_lower} when $d$ is sufficiently large.%

\subsection{Technical overview} 
A natural approach is to reduce range subgraph counting to a DP range point counting problem by embedding vertices into a Euclidean space, and then applying existing DP algorithms for orthogonal range counting %
(see, e.g., \citet{dwork2015pure}).  However, unlike point counting, subgraph counting is inherently nonlinear: the sum of occurrences of a pattern graph in two graphs is not necessarily equal to the number of occurrences in their union, and a single edge change can affect many subgraphs (e.g., one edge may participate in $\Theta(n)$ triangles), leading to high sensitivity.

To address these challenges, we introduce a subgraph projection technique that maps each occurrence of the pattern graph $H$ to a point in $\R^{2d}$ based on vertex attributes, converting range subgraph counting into estimating weighted point counts over axis-aligned rectangles. %
The weight of each point corresponds to the number of occurrences associated with the underlying vertex tuple. A single edge can change the weights of all points projected by the subgraphs it is associated with. %
We organize these points using a range tree \citep{bentley1978decomposable,dwork2015pure} and add Laplace noise to node weights, allowing each query to be answered by aggregating a small number of tree nodes. This approach effectively leverages query correlations and substantially reduces noise. Finally, to further improve accuracy, we adapt the local sensitivity estimation technique of \citet{nguyen2024faster} and incorporate it into the range tree, yielding smaller additive error.

To complement our algorithms, we establish lower bounds on the additive error for differentially private range subgraph counting via a reduction from the reconstruction attacks (RA) problem, inspired by the framework of \citet{muthukrishnan2012optimal}. Given an RA instance $\bx$, together with a point set $P \subseteq \R^d$ and a query set $Q$ with non-empty common intersection, we construct a $\Theta(n)$-vertex graph $G(\bx)$ whose private edges encode the entries of $\bx$. Any DP algorithm for RSC on $(G(\bx), Q)$ can then be transformed into a DP algorithm for RA.

The lower bound follows from two key ingredients: (i) known lower bounds for DP reconstruction attacks, and (ii) the existence of point sets $P$ and query families $Q$ whose incidence matrices have large discrepancy %
(\cref{lem:lower bounds for disc of private orthogonal range counting}). Our main technical contribution shows that the discrepancy of a matrix associated with the RSC instance $(G(\bx), Q)$ is larger by a factor of $\Theta(\GS_{f_H})$ (\cref{lem:discCalphaH}). The proof relies on an approximate reduction and a careful classification of subgraphs according to the number of private edges they contain. %
Combining this discrepancy bound with standard connections between discrepancy and differential privacy completes the lower bound argument.

Due to space constraint, we defer the discussion of other related work to \Cref{sec:relatedwork}.

\section{Preliminaries}\label{Preliminaries}

Let $G=(V,E,\ba)$ be an undirected graph with node set $V$ of size $|V|=n$, edge set $E$ of size $|E|=m$, and vertex attribute vector $\ba:V \rightarrow \R^d$. %
$H=(V_H,E_H)$ is a pattern graph with $|V_H| = h$ and $|E_H| = m_H$. %
A subgraph of $G$ isomorphic to $H$ is called an \emph{occurrence} of $H$. We use $f_H(G)$ to represent the number of occurrences of $H$ in $G$. Let $K_n$ denote the complete graph on $n$ vertices. For $\bx \in \R^k$, we denote $\norm{\bx}_1 = \sum_{i\in[k]} |\bx_i|$ and $\norm{\bx}_\infty =\max_{i\in[k]}|\bx_i|$. 

\textbf{Differential privacy} 
The global sensitivity and local sensitivity of a function are defined as follows.

\begin{definition}[Local Sensitivity \citep{nissim2007smooth}]\label{def:Local Sensitivity}
 For any function $f:\cX \rightarrow \R^k$ defined over a domain space $\cX$ and any $G\in \cX$, the \emph{local sensitivity} of the function $f$ at input $G$ is defined as $\LS_{f}(G) = \max_{G':G \sim G'} \norm{f(G)-f(G')}_1.$
\end{definition}

\begin{definition}[Global Sensitivity \citep{dwork2006calibrating}]\label{def:Global Sensitivity}
 For any function $f:\cX \rightarrow \R^k$ defined over a domain space $\cX$, the \emph{global sensitivity} of the function $f$ is defined as 
$\GS_{f} = \max_{G \sim G'} \norm{f(G)-f(G')}_1 = \max_{G}\LS_{f}(G).$
\end{definition}

We will make use of the following composition theorems of differential privacy. %

\begin{proposition}[Basic Composition Theorem \citep{dwork2006calibrating}]\label{def:basic composition}
For any $\veps,\delta>0$, the composition of $k$ $(\veps,\delta)$-differentially private algorithms is $(k\veps,k\delta)$-differentially private.
\end{proposition}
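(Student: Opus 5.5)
The plan is to argue directly from \Cref{def:dp_definitino} by induction on $k$, treating the composition as the algorithm $\cA(G) = (\cA_1(G), \dots, \cA_k(G))$. Here each $\cA_i$ is $(\veps,\delta)$-DP and uses independent randomness. I will handle the non-adaptive case first and then note that the same argument goes through when $\cA_i$ may also depend on the earlier outputs $\cA_1(G),\dots,\cA_{i-1}(G)$, provided it is $(\veps,\delta)$-DP for every fixed value of those outputs.

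For $k=1$ there is nothing to prove. For the inductive step, write $\cB = (\cA_1,\dots,\cA_{k-1})$, which is $((k-1)\veps,(k-1)\delta)$-DP by hypothesis. Fix neighboring $G \sim G'$ and a measurable set $S$ in the product output space. For each $y$ in the range of $\cB$, let $S_y = \{z : (y,z)\in S\}$ be the section of $S$. By independence (or conditioning in the adaptive case),
\[
\Pr[\cA(G)\in S] = \E_{y\sim \cB(G)}\big[\Pr[\cA_k(G)\in S_y]\big].
\]
Applying the guarantee of $\cA_k$ pointwise gives $\Pr[\cA_k(G)\in S_y] \le \min\{1, e^{\veps}\Pr[\cA_k(G')\in S_y]\} + \delta$. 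The total contribution of the $+\delta$ term is at most $\delta$.

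It remains to bound $\E_{y\sim\cB(G)}[g(y)]$, where $g(y) = \min\{1, e^{\veps}\Pr[\cA_k(G')\in S_y]\}$ takes values in $[0,1]$. I will use the layer-cake formula $\E[g(y)] = \int_0^1 \Pr_{y\sim\cB(G)}[g(y) > t]\,dt$. Applying the DP guarantee of $\cB$ to each level set $\{y : g(y)>t\}$ gives
\[
\E_{y\sim\cB(G)}[g(y)] \le e^{(k-1)\veps}\,\E_{y\sim\cB(G')}[g(y)] + (k-1)\delta.
\]
The level sets are measurable as long as $y \mapsto \Pr[\cA_k(G')\in S_y]$ is measurable, which holds by Fubini for product measurable $S$. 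Finally, $g(y) \le e^{\veps}\Pr[\cA_k(G')\in S_y]$, so
\[
\E_{y\sim\cB(G')}[g(y)] \le e^{\veps}\Pr[\cA(G')\in S].
\]

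Combining these bounds gives $\Pr[\cA(G)\in S] \le e^{k\veps}\Pr[\cA(G')\in S] + k\delta$, which is the claim.

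The main obstacle is the $\delta$ bookkeeping. A naive pointwise ratio argument only works for pure DP, because the additive $\delta$ cannot be pushed through the expectation over $y$. The truncation $\min\{1,\cdot\}$ combined with the level-set integration is exactly what lets the $(k-1)\delta$ slack of $\cB$ be applied to a $[0,1]$-valued test function rather than to an indicator. If this integration step becomes delicate in full generality, the fallback is the standard decomposition lemma: $(\veps,\delta)$-DP implies that the output distributions agree outside events of mass $\delta$ and have pointwise likelihood ratio at most $e^{\veps}$ on those events, and one then union-bounds the $k$ bad events.
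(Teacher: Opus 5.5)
The paper gives no proof of this proposition. It is quoted from \citet{dwork2006calibrating} and used as a black box, for the PDP\_Comp baseline and for the privacy of \textsc{RR\_IE}. Your argument is a correct, self-contained proof, and it also covers adaptive composition.

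The two delicate steps both hold:
\begin{enumerate}
\item The pointwise bound $\Pr[\cA_k(G)\in S_y]\le\min\{1,e^{\veps}\Pr[\cA_k(G')\in S_y]\}+\delta$ follows from $\min\{1,a+\delta\}\le\min\{1,a\}+\delta$.
\item Integrating the set-wise guarantee of $\cB$ over the level sets $\{g>t\}$, $t\in[0,1]$, extends it to the $[0,1]$-valued test function $g$ with slack exactly $(k-1)\delta$. The truncation is what makes this work: without it, $g$ could be as large as $e^{\veps}$ and the slack would grow to $e^{\veps}(k-1)\delta$.
\end{enumerate}

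The usual textbook route is different. It replaces each output law $P$ by some $P'$ within total-variation distance $\delta$ with $P'\le e^{\veps}Q$ pointwise. Such a $P'$ is obtained by moving the excess mass where $P>e^{\veps}Q$, which is at most $\delta$. One then composes these dominated laws as in the pure case and pays $k\delta$ in total variation. Your test-function argument is shorter and needs no structural lemma. The decomposition viewpoint is more modular, and it is the one behind advanced composition (\Cref{def:advanced composition}), which the paper also uses.

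One caution about your fallback: as phrased, it is false. $(\veps,\delta)$-DP does not imply that the likelihood ratio exceeds $e^{\veps}$ only on an event of mass at most $\delta$. For small $\delta$, take $P=(2\delta,1-2\delta)$ and $Q=(\delta e^{-\veps},1-\delta e^{-\veps})$ on $\{a,b\}$. These are $(\veps,\delta)$-indistinguishable, with equality on $\{a\}$. Yet $P(a)/Q(a)=2e^{\veps}$ on an event of $P$-mass $2\delta$. What does hold is the total-variation statement above; here $P'=(\delta,1-\delta)$ works. Since your main argument never uses the fallback, the proof stands as written.
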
 
\begin{proposition}[Advanced Composition Theorem \citep{kairouz2015composition}]\label{def:advanced composition}
For any $\veps,\delta, \delta'>0$, the composition of $k$ $(\veps,\delta)$-differentially private algorithms is $(k\veps^2/2+\veps \sqrt{2k \ln (1/\delta')},k\delta + \delta')$-differentially private.
\end{proposition}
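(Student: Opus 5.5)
The plan is the standard privacy-loss-random-variable argument. Fix neighboring inputs $G\sim G'$ and run the $k$ mechanisms, possibly adaptively, where mechanism $i$ may depend on the earlier outputs $y_1,\dots,y_{i-1}$. The goal is to show that the total privacy loss
\[
L(y_1,\dots,y_k)=\sum_{i=1}^k L_i,\qquad L_i=\ln\frac{\Pr[\cA_i(G;y_{<i})=y_i]}{\Pr[\cA_i(G';y_{<i})=y_i]},
\]
with $(y_1,\dots,y_k)$ drawn from the run on $G$, exceeds $\veps^*:=k\veps^2/2+\veps\sqrt{2k\ln(1/\delta')}$ with probability at most $\delta'$. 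The remaining $k\delta$ will come from a bad event. Once this tail bound holds, the standard conversion finishes the proof: for any event $S$, split $S$ into outputs with $L\le\veps^*$ and outputs with $L>\veps^*$. This gives $\Pr[\cA(G)\in S]\le e^{\veps^*}\Pr[\cA(G')\in S]+\Pr[L>\veps^*]$.

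\textbf{Step 1: reduce to the pure case.} I will use the known decomposition lemma for $(\veps,\delta)$-DP, as in Dwork--Rothblum--Vadhan and Kairouz et al. For each $i$ and each fixed history $y_{<i}$, the output distributions $P_i,Q_i$ on $G,G'$ can be written as $P_i=(1-\delta)\hat P_i+\delta E_i$ and $Q_i=(1-\delta)\hat Q_i+\delta F_i$, where $\hat P_i$ and $\hat Q_i$ are $\veps$-indistinguishable. Couple the run on $G$ so that, except on a bad event, every step samples from the ``good'' component. A union bound over the $k$ steps shows the bad event has probability at most $k\delta$. On the good event, each $L_i$ (computed for the good components) lies in $[-\veps,\veps]$.

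\textbf{Step 2: bound the conditional mean.} For $\veps$-indistinguishable $\hat P,\hat Q$, the expected privacy loss satisfies $\E_{y\sim\hat P}[\ln(\hat P(y)/\hat Q(y))]=\mathrm{KL}(\hat P\|\hat Q)\le \veps\tanh(\veps/2)\le\veps^2/2$. The middle bound is the tight KL bound for pure DP. It follows by noting that the worst case, subject to a pointwise likelihood ratio in $[e^{-\veps},e^{\veps}]$, is a two-point randomized-response distribution, where one computes it directly. I expect this to be the main technical obstacle: the cruder bound $\veps(e^\veps-1)$ is easy but does not give the stated constant $k\veps^2/2$, so the extremal argument has to be made carefully. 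Conditioning on $y_{<i}$, this gives $\E[L_i\mid y_{<i}]\le\veps^2/2$, and this holds uniformly over adaptive choices.

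\textbf{Step 3: concentrate.} The sequence $M_j=\sum_{i\le j}(L_i-\E[L_i\mid y_{<i}])$ is a martingale, and each $L_i$ lies in an interval of length $2\veps$. Azuma--Hoeffding, in the form for increments confined to an interval of length $2\veps$, gives $\Pr[M_k>t]\le\exp(-2t^2/(k(2\veps)^2))=\exp(-t^2/(2k\veps^2))$. Choosing $t=\veps\sqrt{2k\ln(1/\delta')}$ makes this at most $\delta'$. Combined with Step 2, $L\le k\veps^2/2+t=\veps^*$ except with probability $\delta'$ on the good event. Adding the $k\delta$ from Step 1 and applying the conversion above yields $(\veps^*,k\delta+\delta')$-DP.
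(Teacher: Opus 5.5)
The paper gives no proof of this statement; it cites Kairouz et al. They derive the bound from their exact optimal-composition theorem: the worst case is the $k$-fold composition of the four-point randomized-response pair, whose privacy loss is a scaled binomial. They then use $\frac{e^{\veps}-1}{e^{\veps}+1}\le\frac{\veps}{2}$ and $1-(1-\delta)^k(1-\delta')\le k\delta+\delta'$. Your martingale route (pure/approximate decomposition, the KL bound $\veps\tanh(\veps/2)\le\veps^2/2$, Hoeffding--Azuma with range $2\veps$) can reach the same bound. As written, however, your handling of the $\delta$-part has a genuine gap.

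Your stated target is $\Pr[L>\veps^*]\le k\delta+\delta'$ for the \emph{true} privacy loss $L=\sum_i\ln(P_i(y_i)/Q_i(y_i))$, since that is the quantity your conversion inequality needs. Steps 2--3 only control the \emph{good-component} loss $\hat L_i=\ln(\hat P_i(y_i)/\hat Q_i(y_i))$. On a good sample the true ratio $P_i/Q_i=((1-\delta)\hat P_i+\delta E_i)/((1-\delta)\hat Q_i+\delta F_i)$ need not lie in $[e^{-\veps},e^{\veps}]$.

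The target is in fact false already for $k=1$. Let $\hat P=\hat Q$ be uniform on $\{1,2\}$, $E$ the point mass at $1$, and $F$ the point mass at $2$. Then $P=(\frac{1+\delta}{2},\frac{1-\delta}{2})$ and $Q=(\frac{1-\delta}{2},\frac{1+\delta}{2})$, which is $(\veps,\delta)$-DP for every $\veps>0$. With $\delta=\delta'=0.01$ and $\veps=10^{-3}$, one gets $\veps^*<0.004<\ln\frac{1+\delta}{1-\delta}$. Hence $\Pr[L>\veps^*]=P(1)>\frac12\gg k\delta+\delta'$: $(\veps,\delta)$-DP does not give a tail bound of that form on the true loss. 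Nor can you fix this by moving from the good runs back to the real runs via total variation on both sides. That costs $k\delta$ on the $G$ side and $e^{\veps^*}k\delta$ on the $G'$ side.

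The repair uses ingredients you already wrote down. Write $P^{(k)},Q^{(k)}$ for the laws of the runs on $G,G'$. Let $\hat P^{(k)},\hat Q^{(k)}$ be the adaptive compositions of the $\hat P_i$ and of the $\hat Q_i$. Run Steps 2--3 under $\hat P^{(k)}$; this is the law of the run on $G$ conditioned on every coin being good, which is legitimate since the coins are independent. Apply your conversion to $\hat P^{(k)}$ versus $\hat Q^{(k)}$, giving $\hat P^{(k)}(S)\le e^{\veps^*}\hat Q^{(k)}(S)+\delta'$.

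Then use the mixture structure on \emph{both} sides with the same weight. Expanding $\prod_i((1-\delta)\hat P_i+\delta E_i)$ gives $P^{(k)}(S)\le(1-\delta)^k\hat P^{(k)}(S)+1-(1-\delta)^k$. On the other side, $Q^{(k)}(y)=\prod_i Q_i(y_i\mid y_{<i})\ge(1-\delta)^k\hat Q^{(k)}(y)$. Together,
\[
P^{(k)}(S)\le e^{\veps^*}Q^{(k)}(S)+1-(1-\delta)^k(1-\delta')\le e^{\veps^*}Q^{(k)}(S)+k\delta+\delta',
\]
and symmetrically with $G,G'$ swapped. The lower bound on $Q^{(k)}$ is the step your proposal is missing. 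It is exactly why the equal-weight two-sided decomposition, which follows from Kairouz et al.'s simulation by the four-point mechanism, matters.

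Alternatively, perturb only $P_i$ to some $P_i'$ within total variation distance $\delta$ that is purely $\veps$-indistinguishable from the actual $Q_i$. You can then run your martingale on the true loss between $P'^{(k)}$ and $Q^{(k)}$.

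Finally, Step 2 is not the obstacle you expect. With $r=\hat P/\hat Q\in[e^{-\veps},e^{\veps}]$ and $\E_{\hat Q}[r]=1$, the chord bound for the convex function $r\ln r$ on $[e^{-\veps},e^{\veps}]$ gives $\mathrm{KL}(\hat P\|\hat Q)=\E_{\hat Q}[r\ln r]\le\veps\frac{e^{\veps}-1}{e^{\veps}+1}$ in one line.
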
 
\textbf{Laplace distribution and Laplace mechanism} We now introduce the definitions of Laplace distribution and Laplace mechanism. 
\begin{definition}[Laplace Distribution]
  We say a zero-mean random variable $X$ follows the \emph{Laplace distribution} with parameter b if the probability density function of $X$ follows $\Lap(b) = \frac{1}{2b}e^{-\frac{|x|}{b}}$.
\end{definition} 

\begin{fact}\label{fact:laplace}
If $Y \sim \Lap(b)$, %
then $\Pr[|Y|>tb] \leq e^{-t}$.
\end{fact}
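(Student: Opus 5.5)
The claim is the Laplace tail bound. The plan is to compute the tail probability directly from the density and use symmetry to handle both tails at once. Fix $t \ge 0$ and $b>0$. Since the density $\frac{1}{2b}e^{-|x|/b}$ is an even function of $x$, we have
\[
\Pr[|Y|>tb] = 2\Pr[Y>tb] = 2\int_{tb}^{\infty}\frac{1}{2b}e^{-x/b}\,dx .
\]
The integral is elementary. Substituting $u = x/b$ turns it into $\int_t^\infty e^{-u}\,du = e^{-t}$, so $\Pr[|Y|>tb] = e^{-t}$ exactly. In particular the inequality $\Pr[|Y|>tb]\le e^{-t}$ holds with equality.

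Two edge cases need a word. For $t<0$ the right-hand side $e^{-t}$ exceeds $1$, so the bound is trivial. Whether the inequality in the event is strict or not does not matter, since $Y$ is continuous and $\Pr[|Y|=tb]=0$.

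There is no real obstacle here. The only points that need care are using symmetry to get the factor of $2$, which cancels the $\frac{1}{2b}$ normalization, and the change of variables that removes the dependence on $b$.
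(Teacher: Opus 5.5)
Your proof is correct: for $t\ge 0$, symmetry and direct integration give $\Pr[|Y|>tb]=e^{-t}$ exactly, and the case $t<0$ is trivial. The paper states this as a standard fact without proof, and your computation is the standard justification for it.
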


\begin{definition}[Laplace Mechanism \citep{dwork2006calibrating}]\label{def:Laplace mechanism}
For any function $f:\cX \rightarrow \R^k$, the \emph{Laplace mechanism} on input $x \in \cX$ samples $\cY_1,\dots,\cY_k$ independently from $\Lap(\frac{\GS_f}{\veps})$ and outputs
$M(x) = f(x) + (\cY_1,\dots,\cY_k)$.
The Laplace mechanism is $\veps$-DP.%
\end{definition}

\subsection{Discrepancy and the point-query incidence matrix}
\begin{definition}
Given a point set $P \subseteq \R^d$ and a query set $Q$ defined by $m$ axis-parallel boxes $B_1, B_2, \dots, B_m \subseteq \R^d$, let $\bA\in \mathbb{R}^{m\times n}$ be an incidence matrix of the collection of the set $\{B_j\cap P, j\in [m]\}$ (i.e. a matrix whose rows are the indicator vectors of $\{B_j\cap P, j\in [m]\}$). 
\end{definition}

\begin{definition}
For a matrix $\bM\in \mathbb{R}^{m\times n}$ and a set $C\subseteq \{0, \pm 1\}^n$, the discrepancy of the matrix $\bM$ is defined as $\disc_{C}(\bM) = \min_{\chi \in C}\|\bM\chi\|_\infty.$
\end{definition} 

For any $\alpha = \Omega(1)$, let $C_\alpha$ be the set of all vectors $\chi \in \{0,\pm 1\}^n$ satisfying $\|\chi\|_1 \ge \alpha n$. We have following result, whose proof idea follows from \citet{CL01,muthukrishnan2012optimal,MN15} and we defer the proof to \Cref{appendix:proof for disc of lower bounds for private orthogonal range counting}.

\begin{lemma}
\label{lem:lower bounds for disc of private orthogonal range counting}
For any $m$ large enough, there exists a set of $n = \Theta(m)$ points $P\subseteq \R^d$ and a query set $Q$ defined by $m$ axis-parallel boxes $B_1, B_2, \dots, B_m\subseteq \R^d$, such that these boxes have a non-empty common intersection and the following holds. Let $\bA$ denote the incidence matrix of the collection of set $\{B_j \cap P, j\in [m]\}$. For any $\alpha = \Omega(1)$,
    \begin{itemize}
        \item if $d = O(1)$, then $\disc_{C_\alpha}(\bA) = \Omega(\log^{d-1}n)$;
        \item if $d = O(\log n)$, then $\disc_{C_\alpha}(\bA) = 2^{\Omega(d)}$;
        \item if $d = \Omega(\log n)$, then $\disc_{C_\alpha}(\bA) = n^{\Omega(1)}$.
    \end{itemize}
\end{lemma}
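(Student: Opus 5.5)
The plan is to take known discrepancy lower bounds for axis-parallel boxes and upgrade them in two ways. First, full colorings $\chi\in\{\pm1\}^n$ must be relaxed to partial colorings in $C_\alpha$, i.e.\ vectors with at least $\alpha n$ nonzero entries. Second, the boxes must be forced to share a common point. I will treat the three regimes of $d$ separately, but with one template. Pick a known point set $P_0$ of size $n$ and a box family $Q_0$ whose discrepancy is large even under partial colorings. Then modify the boxes so that they all contain one fixed point, without changing their intersections with $P_0$, and finally count the boxes to get $m=\Theta(n)$. Since the bounds are stated up to constants and we may pass from $n$ to $\Theta(n)$ freely, small paddings (adding dummy points or boxes) are harmless.

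For $d=O(1)$ I would use the Chazelle--Lvov (or Matou\v{s}ek--Nikolov) argument on the $d$-dimensional grid or a Halton-type point set. The Roth/Chazelle-style orthogonal-function method (Haar-type products over dyadic boxes) gives the bound $\Omega(\log^{d-1}n)$ for the corresponding box family. This method actually lower-bounds the $\ell_2$-average of $\bA\chi$ over a suitable family in terms of $\|\chi\|_1$. Hence it applies to any $\chi\in\{0,\pm1\}^n$ with $\|\chi\|_1\ge \alpha n$, losing only a factor depending on $\alpha=\Omega(1)$. This is exactly the partial-coloring version used by \citet{muthukrishnan2012optimal}, who obtain such bounds for $d=2$ via Chazelle's argument. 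Extending to general constant $d$ would follow \citet{MN15}, which gives $\herdisc$ bounds $\Theta(\log^{d-1}n)$ through the $\gamma_2$-norm. The $\gamma_2$-norm lower bound, through the trace/determinant-type arguments, also controls vectors with many nonzero entries, since it bounds $\|\bA\chi\|_2^2\ge c\,\|\chi\|_2^2\cdot(\text{bound})^2/\ldots$ on average.

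For $d=O(\log n)$ and $d=\Omega(\log n)$ I would use a product/embedding construction, taking the Boolean cube $\{0,1\}^d$ as the point set. Subcubes are axis-parallel boxes, and the incidence matrix of points versus subcubes (or a tensor-power structure, e.g.\ $\bA_1^{\otimes k}$ with $\bA_1$ the $2\times 2$ lower-triangular matrix) has $\gamma_2$ or determinant lower bounds of $2^{\Omega(d)}$. When $d\ge \log n$, restricting to $d=\Theta(\log n)$ coordinates gives $n^{\Omega(1)}$. Partial colorings are again handled via a spectral or determinant argument, since $\|\bA\chi\|_2\ge\sigma_{\min}\|\chi\|_2$-type bounds apply on average to any $\chi$ with $\|\chi\|_2^2\ge\alpha n$.

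The common intersection is handled by a lifting trick. Add one extra coordinate, or reuse the geometry: place all points of $P$ strictly inside $(0,1)^d$ and extend every box so that it also reaches a common corner region. More simply, translate each point $p$ to $p'=(p,-p)$, splitting each box into a lower/upper constraint. Alternatively, note that the dominance-type boxes $\prod_i(-\infty,r_i]$ used in the lower-bound constructions already all contain $(-\infty,\dots,-\infty)$. After clipping them to a large finite box they share a common corner, and such orthants (anchored boxes) already carry the $\log^{d-1}n$ and $2^{\Omega(d)}$ lower bounds. So I would choose anchored boxes from the start, and then all $m$ boxes contain the anchor point. Finally I would take $m=\Theta(n)$ by keeping only the rows that the lower-bound argument actually uses.

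The main obstacle is proving the partial-coloring ($C_\alpha$) version with anchored boxes in the large-$d$ regimes, since standard statements concern $\herdisc$ or full colorings. I would first try to derive it from a spectral lower bound of the form $\frac1m\|\bA\chi\|_2^2\ge \lambda\|\chi\|_2^2/n$ with $\lambda$ equal to the square of the target bound. This immediately yields $\|\bA\chi\|_\infty\ge\sqrt{\alpha\lambda}$ for all $\chi\in C_\alpha$.
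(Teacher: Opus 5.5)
Your handling of partial colorings does not work; this is a genuine gap. You want the $C_\alpha$ bound on the \emph{whole} grid or Boolean cube. You propose to get it from an $\ell_2$/spectral inequality $\frac1m\|\bA\chi\|_2^2\ge\lambda\|\chi\|_2^2/n$, or from the $\gamma_2$/determinant bounds, applied to every $\chi$ with $\|\chi\|_1\ge\alpha n$. On exactly the sets where the cited bounds are proved, the statement you want is false.

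On the grid $[k]^d$, take the checkerboard coloring $\chi(i)=\prod_j(-1)^{i_j}$, which has $\|\chi\|_1=n$. For every box $\prod_j[\ell_j,r_j]$ the signed count factors as $\prod_j\sum_{i_j=\ell_j}^{r_j}(-1)^{i_j}\in\{0,\pm1\}$. The same product coloring gives values in $\{0,\pm1\}$ on $\{0,1\}^d$ with anchored or general Boolean boxes. Hence $\disc_{C_\alpha}(\bA)\le\disc_{C_1}(\bA)\le 1$ on these full sets. Consistently with this, your $\bA_1^{\otimes k}$ has determinant $1$ and smallest singular value $\phi^{-k}$, where $\phi=(1+\sqrt5)/2$, so no $\lambda\gg 1$ exists.

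The trace-norm/$\gamma_2$ bounds of \citet{MN15} and \citet{CL01} control the sum of the singular values, not the smallest one. What they certify is hereditary: \emph{some} subset of the points has large full-coloring discrepancy. Separately, Roth's orthogonal-function method is an $L_2$ argument and yields only $\log^{(d-1)/2}n$, not $\log^{d-1}n$.

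The missing step is the one the paper uses. First restrict to the extremal subset $S^*$ supplied by the hereditary bound, so that the restricted system has large \emph{full}-coloring discrepancy. Then transfer to partial colorings by contradiction over all restrictions, using the iteration $\disc_{C_1}(\bA)\le\sum_{i\ge0}f((1-\alpha)^in)$ of \Cref{lem:discpalpha-f}. Your anchored-box choice handles the common intersection for the grid exactly as the paper does. For Boolean boxes, the paper replaces $\{0\}$ and $\{1\}$ by $[0,\frac12]$ and $[\frac12,1]$; this preserves incidences, and every box then contains $(\frac12,\dots,\frac12)$.

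Two cautions apply to this route:
\begin{enumerate}
\item The iteration sums over $\Theta(\log n)$ scales. For polynomial targets this costs only a constant factor. But from $\disc_{C_1}=\Omega(\log^{d-1}n)$ it only forces some restriction to have partial discrepancy $\Omega(\log^{d-2}n)$. So the $d=O(1)$ case needs an additional argument to reach $\log^{d-1}n$.
\item Padding with points outside all boxes is harmless for $C_1$ but not for $C_\alpha$. Once there are at least $\alpha n$ such zero columns, a coloring supported on them has discrepancy $0$. So padding from $2^d$ points up to $n\gg 2^d$ points destroys the bound, contrary to your remark that passing from $n$ to $\Theta(n)$ is free.
\end{enumerate}
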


\section{The Upper Bound}\label{sec:upper bound}

Our approximate DP algorithm in \Cref{thm:main_upper} is based on a pure DP algorithm, which we introduce below.

\subsection{The pure DP algorithm}
\label{ssec:pure DP}

The high level description of the algorithm is as follows: (1) Find all occurrences of subgraph $H$ in $G$ and map each to a \emph{rank tuple} (see below) in $[n]^{2d}$. %
Project each occurrence into its rank tuple to obtain a weight vector $\bw$ of length $n^{2d}$ (see \Cref{alg: Subgraph Counting Projection}). (2) Treat the rank tuples weighted points in $[n]^{2d}$ and build a differentially private $2d$D range tree over them (see \Cref{alg: DP Range Tree Construction}). (3) Answer each query by traversing the range tree and aggregating the noisy weights of the corresponding nodes (see \Cref{alg: PDP_RSC}).

Now we describe our algorithm in more detail.

\textbf{Subgraph counting projection} \Cref{alg: Subgraph Counting Projection} takes as input an $n$-vertex graph $ G = (V, E, \ba)$. For each $i \in [d]$, let $A_i(V,\ba)$ be the set of the $i$-th attributes for all vertices in $V$. %
Let $s_i(v) \in [n]$ represent the rank of $\ba_i(v)$ in $A_i(V,\ba)$ for some vertex $v$, when the $i$-th attributes in $A_i(V,\ba)$ are sorted in ascending order. Note that we have $s_i(u) = s_i(v)$ if two vertices $u$ and $v$ have the same $i$-th attribute. %
Now we present the definition of the rank tuple of an occurrence.

\begin{definition}[Rank Tuple]\label{def: rank tuple}
The rank tuple of an occurrence $H'=(V(H'),E(H'))$ is defined as a $2d$-tuple $(s_i(u_i),s_i(v_i))_{i = 1}^{d}$ satisfying $u_i = \arg \min_{v\in V(H')}s_i(v)$ and $v_i = \arg \max_{v\in V(H')}s_i(v)$ for all $i \in [d]$.
\end{definition}

For each occurrence of a subgraph $H$ in $G$, it updates a weight vector $\bw$ at the position corresponding to its rank tuple. Each position in $\bw$ records the number of occurrences that share a specific rank tuple.

\begin{algorithm}[t]
\caption{\textsc{Proj} ($G=(V,E,\ba),H$) $\quad\quad\triangleright$ Subgraph Counting Projection}
\label{alg: Subgraph Counting Projection}
\begin{algorithmic}[1]
\STATE \textbf{Input:} An $n$-vertex graph $G=(V,E,\ba)$. 
\STATE %
For each $i \in [d]$, sort the $i$-th attributes in $A_i(V,\ba)$, and then define the rank function $s_i:V \rightarrow [n]$.
    \STATE Initialize $w_{\bv}=0$, for any $\bv\in [n]^{2d}$.
\FORALL{occurrences of subgraph $H$ in $G$}
        \STATE Compute $w_{(s_i(u_i),s_i(v_i))_{i=1}^{d}} =  w_{(s_i(u_i),s_i(v_i))_{i=1}^{d}} + 1$, where $u_i$ (resp. $v_i$) be the vertex in this occurrence with the smallest (resp. largest) rank in dimension $i$.  
    \ENDFOR
    \STATE \textbf{return} $\bw$.
\end{algorithmic}
\end{algorithm}

\textbf{DP range tree construction} %
\Cref{alg: DP Range Tree Construction} takes as input the project vector $\bw$ from \Cref{alg: Subgraph Counting Projection}. Then the algorithm treats each rank tuple $\bv$ in $[n]^{2d}$ as a point with weight $\bw_\bv$ and utilizes a $2d$D range tree (see \Cref{sec:range tree construction and query}) to preprocess these points. 

The schematic diagram of the $2d$D range tree, with $d = 1$ as an example, is shown in \Cref{fig:2D Range Tree}. In the $2d$D range tree, each leaf node of every $1$D range tree corresponds to a point in $[n]^{2d}$, and its weight is set to the weight of the corresponding point. The weight of each internal node of every $1$D range tree is set to the sum of the weights of its children. We add Laplace noise to the weights of nodes in all $1$D range trees to ensure privacy. %

\begin{figure}[t]
    \centering
    \includegraphics[width=6cm, height=4cm]{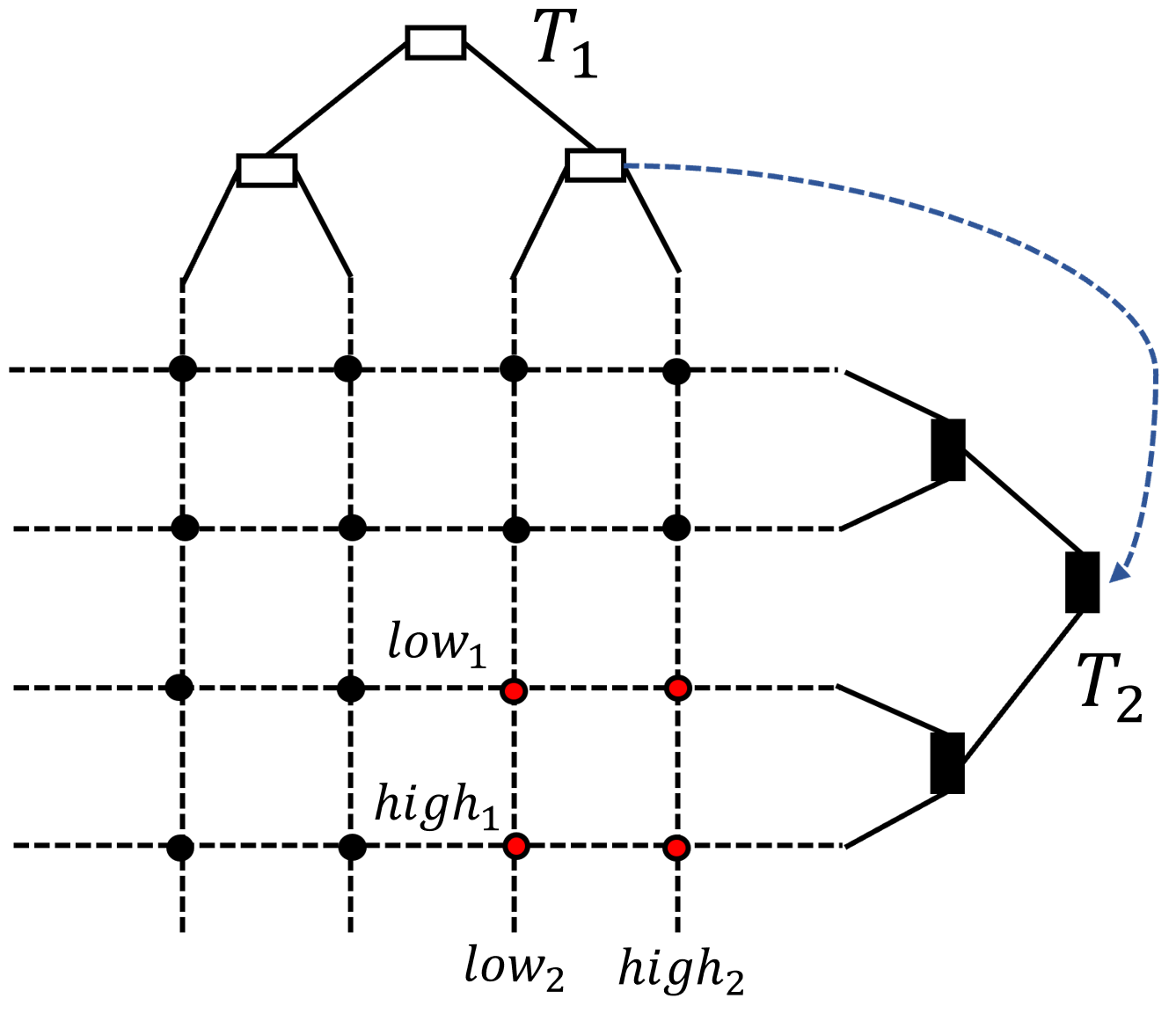}
    \caption{Schematic diagram of the $2d$D range tree $T_1$ for $d = 1$. The tree construction process recursively divides $[n]^2$ into two equal parts, where each tree node in $T_1$ is associated with a $1$D range tree (e.g., $T_2$), and each tree node in a $1$D range tree stores the corresponding weight sum (see \Cref{sec:range tree construction and query} for more details).}
    \label{fig:2D Range Tree}
\end{figure}

\begin{algorithm}[t]
\caption{\textsc{TreeConst}($\bw,\veps'$) $\quad\quad \triangleright$ DP Range Tree Construction}
\label{alg: DP Range Tree Construction}
\begin{algorithmic}[1]
\STATE \textbf{Input:} Projection vector $\bw$, and privacy parameter $\veps'$.

    \STATE Construct a $2d$D range tree $T_1$ according to \Cref{def:kd range tree construction} using a set of points $\{(\mathbf{v}, w_{\bv})\}_{\mathbf{v} \in [n]^{2d}}$. 
    \STATE Create a noisy version $\widetilde{T}_1$ by adding Laplace noise to the weights of nodes in all $1$D range trees within $T_1$. Specifically, update the weight as $\text{weight} = \text{weight} + \Lap(1/\veps')$.
    \STATE \textbf{return} \textbf{$\widetilde{T}_1$}.
\end{algorithmic}
\end{algorithm} 

\textbf{Query procedure} 
For any query $q = \prod_{i=1}^{d}[\ell_i, r_i] \in Q$, we first apply the discretization described below to obtain a new query $q'$. Note that the ranges $ q $ and $ q' $ correspond to the same induced subgraph. For simplicity, we will use $ q = \prod_{i=1}^{d}[\ell_i, r_i] $ to refer to the discretized query. 

\begin{definition}[Discretization]\label{clm:discretized to n2}
Let $ q = \prod_{i=1}^{d}[\ell_i, r_i] $ be a query. For each $[l_i,r_i]$, we associate it with two vertices $ v_{\ell_i} $ and $ v_{r_i} $,  where $v_{\ell_i} = \arg \min_{v\in V, \ba_i(v)\ge \ell_i} \ba_i(v)$ and $v_{r_i} = \arg \max_{v\in V, \ba_i(v)\le r_i} \ba_i(v)$. The discretized query $q'$ is defined as $\prod_{i = 1}^{d}[s_i(v_{\ell_i}), s_i(v_{r_i})]$.
\end{definition}

For a discretized query $q = \prod_{i = 1}^{d}[\ell_i,r_i]$, we call the $2d$D range tree query with range $\prod_{i = 1}^{d}([\ell_i, n]\times[1, r_i])$, which ensures that all subgraphs exactly falling  within the range are counted in the answer.

\begin{algorithm}[t]
\caption{\textsc{PDP\_RSC}($G,H,Q,\veps$) $\quad\quad \triangleright$ Pure DP Range Subgraph Counting}
\label{alg: PDP_RSC}
\begin{algorithmic}[1]
\STATE \textbf{Input}: An $n$-vertex graph $G = (V, E, \ba)$, a pattern graph $H$, a query set $Q$, and privacy parameter $\veps$.
\STATE Set $\GS_{f_H} = f_H(K_n) - f_H(K_n - e)$.
\STATE Set $\bw = \textsc{Proj}(G, H)$ and $J_Q=\emptyset$.
\STATE Set $\widetilde{T}_1 = \textsc{TreeConst}(\bw, \frac{\veps}{\GS_{f_H}\cdot(\lceil\log n \rceil+ 1)^{2d}})$.
\FOR{each query $q \in Q$} 
\STATE Determine $\prod_{i = 1}^{d}[\ell_i, r_i]$ according to \Cref{clm:discretized to n2}. 
\STATE Add the query result of $\widetilde{T}_1$ (see \Cref{def:kD range tree counting}) using the range $\prod_{i = 1}^{d}([\ell_i, n] \times [1, r_i])$ to $J_Q$. 
\ENDFOR
\STATE \textbf{return} {$J_Q$}.
\end{algorithmic}
\end{algorithm}

The complete algorithm is presented in \Cref{alg: PDP_RSC}%
, with its analysis deferred to \Cref{ssec:analysis of pure DP}. To improve time efficiency, the implementation of \Cref{alg: PDP_RSC} differs slightly from the pseudocode without affecting correctness (see \Cref{sec:time and space}). %
As a result, we obtain the following theorem. 

\begin{theorem}[Pure DPRSC]\label{thm:main_upper_PDP}
    For any $\veps > 0$, 
    there exists an $\veps$-DP algorithm that, given a graph $G=(V,E,\ba)$ with $d$-dimensional vertex attributes, a fixed pattern graph $H$, and a query set $Q$, outputs noisy answers $\widetilde{f}_H(G_q)$ which satisfy 
$\max_{q\in Q} \absnorm{f_H(G_q)-\widetilde{f}_H(G_q)} = O\left(\frac{\GS_{f_H}\cdot \sqrt{\log(n|Q|)} \cdot\log^{3d}{n}}{\veps}\right)$, 
with probability $\geq 1-\frac{1}{n}$. Here, the hidden constants are of the form $c^{O(d)}$ for some universal constant $c>1$.%
\end{theorem}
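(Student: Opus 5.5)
The plan is to analyze \Cref{alg: PDP_RSC} directly, proving privacy and accuracy separately. There are three ingredients: the sensitivity of the projection vector $\bw$, the structure of the $2d$D range tree, and a Laplace tail bound.

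\textbf{Correctness of the reduction.} First I check that, without noise, the range tree query on $\prod_{i=1}^{d}([\ell_i,n]\times[1,r_i])$ returns exactly $f_H(G_q)$ for the discretized $q$. An occurrence $H'$ lies in $G_q$ iff every vertex $v$ of $H'$ satisfies $\ell_i\le s_i(v)\le r_i$ for all $i$. This holds iff, for every $i$, the minimum rank $s_i(u_i)\ge \ell_i$ and the maximum rank $s_i(v_i)\le r_i$. That is precisely the condition that the rank tuple of $H'$ lies in the $2d$-dimensional box. Hence $f_H(G_q)=\sum_{\bv\in \text{box}} w_\bv$. By \Cref{clm:discretized to n2}, the discretization preserves $V_q$.

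\textbf{Privacy.} Consider neighboring $G\sim G'$ differing in edge $e$. Every occurrence created or destroyed contains $e$, and each changes exactly one coordinate of $\bw$ by $1$. Hence
\[
\|\bw(G)-\bw(G')\|_1 \le \LS_{f_H}(G)\le \GS_{f_H}.
\]
I would also justify $\GS_{f_H}=f_H(K_n)-f_H(K_n-e)$: the number of occurrences containing a given edge is maximized in $K_n$ by monotonicity, and by symmetry this value is the same for every edge.

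In a $2d$D range tree, each point appears in at most $(\lceil\log n\rceil+1)$ nodes per level of nesting. So it appears in at most $(\lceil\log n\rceil+1)^{2d}$ nodes of the $1$D trees. The vector of all node weights is therefore a linear map of $\bw$ with $\ell_1$-sensitivity at most $\GS_{f_H}(\lceil\log n\rceil+1)^{2d}$. Adding $\Lap(1/\veps')$ with the chosen $\veps'$ is the Laplace mechanism (\Cref{def:Laplace mechanism}), which gives $\veps$-DP. Answering the queries is post-processing.

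\textbf{Accuracy.} Each query box decomposes canonically into at most $k=(2\lceil\log n\rceil+O(1))^{2d}$ nodes, so the error is a sum of $k$ i.i.d.\ $\Lap(b)$ variables with $b=\GS_{f_H}(\lceil\log n\rceil+1)^{2d}/\veps$. This is the step needing care. A naive union bound using \Cref{fact:laplace} costs a factor $k\log(n|Q|)$. Instead I would use a sub-exponential (Bernstein-type) concentration bound for sums of Laplace variables:
\[
\Pr\Bigl[\bigl|\textstyle\sum_{j\le k} Y_j\bigr|>t\Bigr]\le 2\exp\bigl(-c\min(t^2/(kb^2),\,t/b)\bigr).
\]
Taking $t=O(b\sqrt{k\log(n|Q|)})$ (valid while $\log(n|Q|)\lesssim k$; otherwise the linear term is absorbed into the $c^{O(d)}$ and $\log^{3d}n$ factors, since $\log|Q|=O(d\log n)$) and union bounding over $Q$ gives failure probability at most $1/n$. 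The resulting error is $b\sqrt{k}\sqrt{\log(n|Q|)}=O\bigl(\GS_{f_H}\log^{2d}n\cdot\log^{d}n\sqrt{\log(n|Q|)}/\veps\bigr)$, with constants $c^{O(d)}$ arising from $(\lceil\log n\rceil+1)^{2d}$ versus $\log^{2d}n$ and $2^{2d}$ in $k$. This yields the claimed bound.

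The main obstacle is getting exactly the $\log^{3d}n$ exponent. This requires the canonical-decomposition count of $O(\log n)^{2d}$ nodes and the sum concentration contributing $\sqrt{k}=\log^{d}n$, rather than a worse $\log^{4d}n$ from a crude union bound.
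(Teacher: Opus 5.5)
Your proposal is correct and follows essentially the same route as the paper. The steps match one for one:
\begin{itemize}
\item the rank-tuple correctness argument, as in \Cref{lem: sum of weight of vertex tuple};
\item an $\ell_1$-sensitivity bound of $(\lceil\log n\rceil+1)^{2d}\GS_{f_H}$ for the vector of all $1$D-tree node weights, where your per-point node count is equivalent to the paper's partition of nodes by depth;
\item the Laplace mechanism, followed by post-processing to answer the queries;
\item Laplace-sum concentration over the $O(\log n)^{2d}$ canonical nodes, with a union bound over $Q$.
\end{itemize}
Your explicit treatment of the regime $\log(n|Q|)\gtrsim k$ slightly tightens the paper's appeal to \Cref{lem:laplace concentration}, since the paper never verifies that lemma's condition $m\ge\log\beta$.
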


\subsection{The approximate DP algorithm}
\label{ssec:approximate DP}
Now we present our approximate DP algorithm. 
We begin by introducing some notation. %
Let $S\subseteq V\times V$ be a set of vertex pairs. Let $f_H(G, S)$ denote the number of occurrences of a fixed pattern graph $H$ containing all edges in $S$ in the graph $(V(G),E(G)\cup S)$. %
We define %
$f^{(k)}_H(G) = \max_{|S|=k} f_H(G, S)$. Note that we have \(f_H^{(1)}(G) = \LS_{f_H}(G)\) by definition. 

The local sensitivity of $\bw$ is estimated by iteratively adding noise to ensure privacy (see \Cref{alg: Estimating higher-order private local sensitivity}). The approach is derived from \citet{nguyen2024faster}, whose method was originally designed to estimate subgraph counts but has been modified here to estimate the local sensitivity.%

\begin{algorithm}[t]
{
\caption{{\textsc{EstimateHS}($G,H,\veps',\delta'$) $\quad\quad \triangleright$ Estimating private higher-order local sensitivity (Algorithm 5 in \citet{nguyen2024faster})}}
\label{alg: Estimating higher-order private local sensitivity}
\begin{algorithmic}[1]
\STATE \textbf{Input}: An $n$-vertex graph $G$, a pattern graph $H$ with $m_H$ edges, privacy parameters $\veps',\delta'$. 
\STATE Set $\widetilde{\mathrm{HS}}^{(m_H)}_{f_H} = f_H^{(m_H)}(G)$.
\FOR{$k = m_H - 1$ down to $1$}
    \STATE Set $\widetilde{\mathrm{HS}}^{(k)}_{f_H}(G)= f^{(k)}_H(G) + \widetilde{\mathrm{HS}}^{(k+1)}_{f_H}(G)\frac{\ln{(1/\delta')}}{\veps'}+
    \Lap(\widetilde{\mathrm{HS}}^{(k+1)}_{f_H}(G)/\veps')$.
\ENDFOR
\STATE \textbf{return} $\widetilde{\HS}_{f_H}(G)=\widetilde{\mathrm{HS}}^{(1)}_{f_H}(G)$.
\end{algorithmic}
}
\end{algorithm}

The complete approximate DP algorithm is presented in \Cref{alg: ADP_RSC}. It closely resembles the pure DP one (i.e. \Cref{alg: PDP_RSC}), with the key difference being in the parameter settings. The algorithm adds noise in the range tree based on the private estimation of local sensitivity rather than global sensitivity. The analysis of this algorithm (i.e., proof of \Cref{thm:main_upper}) and the implementation details are deferred to \Cref{ssec:analysis of approximate DP,sec:time and space}, respectively.%

\begin{algorithm}[t]
\caption{{\textsc{ADP\_RSC} ($G,H,Q,\veps$,$\delta$) $\quad\quad\triangleright$ Approximate DP Range Subgraph Counting}}
\label{alg: ADP_RSC}
\begin{algorithmic}[1]
\STATE \textbf{Input:} An $n$-vertex graph $G = (V, E, \ba)$, a pattern graph $H$ with $m_H$ edges, a query set $Q$, and privacy parameters $\veps$, $\delta$.
\STATE Set $\veps'$ and $\delta'$ such that $\veps = (m_H + 1)\veps'$ and $\delta = \max(2e^{m_H\veps'}+m_He^{\veps'}+1, m_He^{2\veps'}+e^{\veps'}+2)\delta'$.
\STATE Set $\widetilde{\HS}_{f_H}(G) = \textsc{EstimateHS}(G,H,\veps',\delta')$.
\STATE Set $\bw$ = \textsc{Proj}($G,H$) and $J_Q = \emptyset$.
\STATE Set $\delta''=\min(e^{-\veps'/8}, \delta')$ and $\veps''=\veps'/(\widetilde{\HS}_{f_H}(G)\cdot (\lceil \log n\rceil + 1)^d \cdot 2\sqrt{2\ln(1/\delta'')})$.
\STATE Set $\widetilde{T}_1$ = \textsc{TreeConst} ($\bw,\veps''$).
\FOR{each query $q \in Q$} 
\STATE Determine $\prod_{i = 1}^{d}[\ell_i, r_i]$ according to \Cref{clm:discretized to n2}. 
\STATE Add the query result of $\widetilde{T}_1$ (see \Cref{def:kD range tree counting}) using the range $\prod_{i = 1}^{d}([\ell_i, n] \times [1, r_i])$ to $J_Q$. 
\ENDFOR 
\STATE \textbf{return} {$J_Q$}.
\end{algorithmic}
\end{algorithm}

\section{The Lower Bound}\label{sec:Lower Bounds}%
In this section, we prove the following theorem which presents a lower bound for DPRSC. 
\begin{theorem}[Lower Bound of DPRSC]
\label{thm:lower bounds dp range subgraph counting}
For any pattern graph $H$, let $\mathcal{A}$ be an $(\veps, \delta)$-DP algorithm with constants $\veps > 0$ and $\delta \in [0, \frac{1}{2})$ for the range subgraph counting problem with additive error $\eta = \max_{q\in Q}|f_H(G_q) - \widetilde{f}_H(G_q)|$ with a sufficiently large constant success probability. 

Then there exist infinitely many $n$-vertex graphs $G = (V, E, \mathbf{a})$ with $d$-dimensional vertex attributes and a corresponding query set $Q$, such that%
\begin{itemize}
    \item if $d = O(1)$, then $\eta = \Omega(\log^{d -1}\bar{n} \cdot \GS_{f_H})$;
    \item if $d = O(\log \bar{n})$, then $\eta = 2^{\Omega(d)}\GS_{f_H}$;
    \item if $d = \Omega(\log \bar{n})$, then $\eta = \bar{n}^{\Omega(1)}\GS_{f_H}$;
\end{itemize}
where $\bar{n} = \min(n, |Q|)$ can be arbitrarily large.%
\end{theorem}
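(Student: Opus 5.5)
We reduce from reconstruction attacks, in the spirit of \citet{muthukrishnan2012optimal}, and then invoke the discrepancy bound of \Cref{lem:lower bounds for disc of private orthogonal range counting}. Let $m$ be large and take the point set $P=\{p_1,\dots,p_N\}\subseteq\R^d$ and boxes $B_1,\dots,B_m$ from \Cref{lem:lower bounds for disc of private orthogonal range counting}, with $N=\Theta(m)$ and a common point $c\in\bigcap_j B_j$. Fix a private vector $\bx\in\{0,1\}^N$ and build a graph $G(\bx)$ on $\Theta(N)$ vertices.

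\emph{Gadget.} Fix an edge $e_H=\{a,b\}$ of $H$. Create a constant-size set of ``hub'' vertices, all with attribute $c$, forming $H$ minus the edge $e_H$. For each point $p_i$, place one gadget vertex $u_i$ with attribute $p_i$. Attach $u_i$ to the hub so that adding the single private edge $\{u_i,\text{hub}\}$ (present iff $x_i=1$) completes occurrences of $H$. To make each private edge worth $\Theta(\GS_{f_H})$ occurrences, replicate the hub-side vertices playing the roles of $V_H\setminus\{a,b\}$ about $\Theta(n)$ times, all at attribute $c$, mimicking the near-complete extremal structure of $K_n$. The guiding identity is $\GS_{f_H}=f_H(K_n)-f_H(K_n-e)$: the hub together with the $u_i$'s should locally look like the neighbourhood of one edge in $K_n$. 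Since every box contains $c$, every query contains the whole hub. Hence $f_H(G(\bx)_{B_j})=\mathrm{const}_j+\sum_{i:p_i\in B_j}\Phi_i(\bx)$, where the leading linear term is $\GS'\cdot(\bA\bx)_j$ with $\GS'=\Theta(\GS_{f_H})$. The remaining terms come from occurrences using two or more private edges, or none.

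\emph{Approximate linearity (main obstacle).} Occurrences containing $\ge 2$ private edges spoil linearity. I would classify occurrences by the number $t$ of private edges they contain and arrange the gadget so that $t\ge2$ occurrences are impossible or negligible. The first thing I would try is making each $u_i$ have a single possible private neighbour and ensuring the $u_i$'s are pairwise nonadjacent. Then any occurrence using two private edges must map two vertices of $H$ to distinct $u_i$'s that meet only through the hub. If $H$ allows this, such terms are bounded by a lower-order quantity, say $o(\GS_{f_H})$ per pair, and they can be absorbed. This is the content of \cref{lem:discCalphaH}: the matrix $\bM$ with $\bM\bx$ equal to the query answers minus the $\bx$-independent part satisfies $\disc_{C_\alpha}(\bM)\ge \Omega(\GS_{f_H})\cdot\disc_{C_\alpha}(\bA)$.

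\emph{Reconstruction.} Suppose $\mathcal A$ is $(\veps,\delta)$-DP with error $\eta$. Neighbouring $\bx$ give neighbouring graphs, so $\bx\mapsto\mathcal A(G(\bx),Q)$ is DP. Subtracting the known constants yields $\widetilde\by$ with $\|\widetilde\by-\bM\bx\|_\infty\le\eta$. Solve the feasibility problem for $\bx'\in[0,1]^N$ with $\|\bM\bx'-\widetilde\by\|_\infty\le\eta$, and round. If the result $\bx''$ disagreed with $\bx$ on $\ge\alpha N$ coordinates, then $\chi=\bx-\bx''\in C_\alpha$ (after the standard rounding argument) would satisfy $\|\bM\chi\|_\infty\le 2\eta+\text{rounding}$. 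So whenever $\eta<\disc_{C_\alpha}(\bM)/3$, $\mathcal A$ reconstructs all but an $\alpha$ fraction of $\bx$. This contradicts the known fact that an $(\veps,\delta)$-DP mechanism with constant $\veps$ and $\delta<1/2$ cannot reconstruct a $1-\alpha$ fraction of a uniformly random $\bx$ with large constant probability. Therefore $\eta=\Omega(\GS_{f_H}\disc_{C_\alpha}(\bA))$. Substituting the three regimes of \Cref{lem:lower bounds for disc of private orthogonal range counting} with $N=\Theta(m)$ and $\bar n=\min(n,|Q|)$ gives the theorem; padding with isolated vertices or duplicating queries yields infinitely many $n$.
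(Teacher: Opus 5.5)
Your outline (a reconstruction attack, with $\disc_{C_\alpha}(\bA)$ amplified by a factor $\Theta(\GS_{f_H})$) is the paper's. With the gadget you chose, however, the approximate-linearity step fails as argued.

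Only the roles $V_H\setminus\{a,b\}$ are replicated, so every private edge $\{u_i,b\}$ ends at the single hub vertex $b$. The private edges therefore form a star, not a matching. Two private edges then fix only three vertices. For patterns such as $K_{1,2}$ or $C_4$ this gives $\Theta(n^{h-3})$ occurrences per pair of present private edges, and up to $\Theta(n^{h-1})=\Theta(n\cdot\GS_{f_H})$ in total. That is too many to be $o(\GS_{f_H}\cdot\disc_{C_\alpha}(\bA))$, since $\disc_{C_\alpha}(\bA)\le N$.

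Concretely, take $H=K_{1,2}$: the path $a$--$b$--$c$, with $u_i$ playing $a$ and a class $C$ of size $\Theta(n)$ playing $c$. The answer to query $q$ is $\mathrm{const}_q+|C|\,k+\binom{k}{2}$ with $k=(\bA\bx)_q$, where the last term counts the stars $u_i\,b\,u_j$. Even in a difference between two inputs, this term contributes $\Delta(k+k'-1)/2$ with $\Delta=(\bA(\bx-\bx'))_q$. That is of the same order as the linear term $|C|\,\Delta$.

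So ``$o(\GS_{f_H})$ per pair'' does not let you absorb anything; you would need $o(\GS_{f_H}/N)$ per pair. There is also no matrix $\bM$ with $\bM\bx$ equal to the answers minus a constant. Finally, \Cref{lem:discCalphaH} cannot be invoked for your gadget, because its proof relies on the specific structure of the paper's $G(\bx)$.

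The paper avoids this by making the private edges vertex-disjoint. Each point has two copies $p^{(1)},p^{(2)}$ joined by the private edge, and both are completely joined to a clique $W$ on $n-2$ vertices placed in the common intersection of the boxes. Each private edge together with $W$ spans a $K_n$, so it lies in at least $\GS_{f_H}$ occurrences. An occurrence with $i\ge 2$ private edges fixes $2i$ vertices, so there are only $O(n^{h-i})\le O(n^{h-2})=O(\GS_{f_H})$ of them in total. This is absorbed because $\disc_{C_\alpha}(\bA)=\omega(1)$, and linearity is restored by lifting to the occurrence-indicator vector $\bx_H$ and the matrix $\bA_H$ over $K_{3n-2}$.

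Alternatively, you could keep your gadget and replace absorption by monotonicity. The $u_i$ are twins in the public graph, so the answer to $q$ depends on $\bx$ only through $k$. Each unit increase of $k$ adds at least $c_1=\Omega(\GS_{f_H})$ occurrences, namely those through the new edge. This gives $|F_q(\bx)-F_q(\bx')|\ge c_1\,|(\bA(\bx-\bx'))_q|$ with no error term.

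In either case, drop the LP-and-rounding decoder. With nonlinear answers there is no LP, and rounding can move the answers by far more than the discrepancy. Since efficiency is irrelevant, decode by exhaustive search over $\{0,1\}^N$, as in \Cref{lem:general-attacker}.
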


We prove \Cref{thm:lower bounds dp range subgraph counting} via a reduction from the reconstruction attacks (RA) problem to DPRSC, using techniques inspired by \citet{muthukrishnan2012optimal}. Moreover, the proof reveals that for the constructed hard instances, the %
approximated local sensitivity $\widetilde{\HS}_{f_H}(G)$ is, with high probability, within a constant factor of the global sensitivity $\GS_{f_H}$. This directly implies \Cref{thm:main_lower}, whose full proof is deferred to \Cref{sec:proofofmainlower}. 

In \Cref{sec:lower_bound_d=1}, we further provide a lower bound of $\eta = \Omega(\log \bar{n} \cdot \mathrm{GS}_{f_H})$ for the case $d = 1$, under the additional assumption that $\delta = n^{-\Omega(1)}$. %

\subsection{The reduction}\label{sec:the reduction}
We first introduce the reconstruction attacks problem and then reduce it to range subgraph counting.

\begin{definition}[Reconstruction Attacks (RA); \citep{dinur2003revealing}]
Given a private vector $\bx \in \{0,1\}^N$, a reconstruction algorithm outputs a vector in $\{0,1\}^N$. The goal is to design an $(\veps,\delta)$-DP algorithm $\mathcal{B}$ minimizing $\|\mathcal{B}(\bx)-\bx\|_1$.
\end{definition}

\textbf{From RA to RSC} 
Let $P \subseteq \mathbb{R}^d$ be a point set of size $n$ and let $Q$ be a collection of $m$ axis-parallel boxes $B_1,\dots,B_m$ with a non-empty common intersection.

\textbf{Graph construction} 
We construct a graph $G(\bx)$ as follows. Create two copies of $P$, denoted $P^{(1)}$ and $P^{(2)}$. For each $p \in P$, let $p^{(1)} \in P^{(1)}$ and $p^{(2)} \in P^{(2)}$ denote its copies. Let $W$ be a set of $n-2$ auxiliary vertices, all lying in the common intersection of the boxes in $Q$.

The vertex set of $G(\bx)$ is
$V = P^{(1)} \cup P^{(2)} \cup W$, 
so that $|V| = 3n-2$ (see \Cref{fig:G(x)}). %
For each $p \in P$, we add an edge between $p^{(1)}$ and $p^{(2)}$ if and only if $x_p = 1$. In addition, every vertex in $P^{(1)} \cup P^{(2)}$ is connected to every vertex in $W$, and $W$ induces a clique.

The edges between $p^{(1)}$ and $p^{(2)}$ encode the private data $\bx$, while all other edges are public. Consequently, two vectors $\bx,\bx'$ are neighboring if and only if the corresponding graphs $G(\bx)$ and $G(\bx')$ are neighboring.

\begin{figure}[t]
    \centering
    \begin{tikzpicture}[
      dot/.style={circle,fill,inner sep=1.7pt},
      x=1cm,
      y=1cm
    ]
      \foreach \i in {1,...,8} {
        \node[dot] (t\i) at (\i*0.5,2) {};
        \node[dot] (b\i) at (\i*0.5,0) {};
      }

      \node at (5, 2.5) {$P^{(1)}$};
      
      \node at (5, -0.5) {$P^{(2)}$};
      
      \node[left=2pt] at (2.5, 1) {$(x_{p_i} = 1)$};
      
      \foreach \i in {1, 5, 6} {
        \draw[thick] (t\i) -- (b\i);
      }
      \node[above=3pt] (t1label) at (t1) {$p_1^{(1)}$};
      \node[below=3pt] (b1label) at (b1) {$p_1^{(2)}$};

      \node[above=4pt] at (t3) {$\cdots$};
      \node[below=8pt] at (b3) {$\cdots$};
      
      \node[above=3pt] at (t5) {$p_i^{(1)}$};
      \node[below=3pt] at (b5) {$p_i^{(2)}$};
      
      \node[above=4pt] at (t6) {$\cdots$};
      \node[below=8pt] at (b6) {$\cdots$};
      
      \node[above=3pt] (t8label) at (t8) {$p_n^{(1)}$};
      \node[below=3pt] (b8label) at (b8) {$p_n^{(2)}$};

      \node[draw=gray,dashed,fit=(t1)(t1label)(t8)(t8label), inner sep=5pt] (topbox) {};
      \node[draw=gray,dashed,fit=(b1)(b1label)(b8)(b8label), inner sep=5pt] (botbox) {};

      \def\cx{5.5}      %
      \def\cy{1}     %
      \def\r{0.5}        %

      \node[blue] at (6.6, 1) {$W$};
      \node[blue] at (6.6, 0) {$(K_{n-2})$};

      \foreach \k in {1,...,6} {
        \node[dot, blue] (c\k) at ({\cx+\r*cos(60*\k)}, {\cy+\r*sin(60*\k)}) {};
      }

      \foreach \i in {1,...,6} {
         \draw[red] (c\i) -- (t5);
         \draw[red] (c\i) -- (b5);
      }
      \foreach \i in {1,...,6} {
        \foreach \j in {1,...,6} {
          \ifnum\i<\j
            \draw[blue] (c\i) -- (c\j);
          \fi
        }
      }

      \draw[dashed,gray] (\cx,\cy) circle (\r+0.28);
    \end{tikzpicture}
    \caption{The construction of the graph $G(\bx)$. $P^{(1)}$ and $P^{(2)}$ contain $n$ vertices respectively, while $W$ contain $n-2$ vertices. We omit the edges from the set $P^{(1)}\cup P^{(2)}$ to $W$ except those incident to the vertices $p_i^{(1)}$ and $p_i^{(2)}$ (red edges). The vertices $p_i^{(1)}$ and $p_i^{(2)}$, together with those in $W$, induce a $n$-vertex clique.}
    \label{fig:G(x)}
\end{figure}

\textbf{Hard instance} 
We begin with a point set of size $\bar{n}$ and a query set $Q$ from \Cref{lem:lower bounds for disc of private orthogonal range counting}. We then add $n-\bar{n}$ dummy points lying outside all boxes in $Q$ to form $P$. Using this point set $P$ and query set $Q$, we construct the graph $G(\bx)$ as described above and use $Q$ as the query set for the RSC problem. We denote the resulting instance by $(G(\bx), Q)$, which completes the construction of the hard instance.

\subsection{The analysis}

To prove the lower bound using the constructed instance $(G(\bx), Q)$, we associate it with a matrix $\bA_H$ and show that this matrix has large discrepancy.

\textbf{The matrix} 
The matrix $\bA_H$ is defined with respect to the complete graph $K_{3n-2}$ on vertex set
$V = P^{(1)} \cup P^{(2)} \cup W$. 
It has $|Q|$ rows and $f_H(K_{3n-2})$ columns. Each row corresponds to a query $q \in Q$, i.e., the induced subgraph $K_{3n-2}[V_q]$, and each column corresponds to an occurrence $H'$ of the pattern $H$ in $K_{3n-2}$. The entries are defined as
\[
(A_H)_{q,H'} =
\begin{cases}
1, & \text{if } E(H') \subseteq E(K_{3n-2}[V_q]);\\
0, & \text{otherwise}.
\end{cases}
\]

Note that $\bA_H$ is fixed and depends only on $K_{3n-2}$ and the query set $Q$, and is independent of the edge set of $G(\bx)$.

Let $\bx_H \in \{0,1\}^{f_H(K_{3n-2})}$ be the indicator vector whose coordinates correspond to occurrences $H'$ of $H$ in $K_{3n-2}$, where $(\bx_H)_{H'} = 1$ if all edges of $H'$ are present in $G(\bx)$, and $0$ otherwise. Then for each query $q \in Q$, the quantity $(\bA_H \bx_H)_q$ equals the number of occurrences of $H$ in the induced subgraph $(G(\bx))[V_q]$.

Let $C_{\alpha,H}$ denote the set of vectors $\chi = \bx_H - \bx_H' \in \{0,\pm1\}^{f_H(K_{3n-2})}$ such that $\bx - \bx' \in C_\alpha$. The following lemma, proved in \Cref{sec:proofofdisccalphah}, shows that $\bA_H$ inherits large discrepancy from the point-incidence matrix.

\begin{lemma}
\label{lem:discCalphaH}
Let $P \subseteq \mathbb{R}^d$ be a set of $n$ points, and let $Q$ be a collection of $m$ sets $B_1,\dots,B_m \subseteq \mathbb{R}^d$ with a non-empty common intersection. Let $\bA$ be the incidence matrix of $\{B_j \cap P : j \in [m]\}$. For any $\alpha=\Omega(1)$ and any pattern graph $H$, if $\disc_{C_\alpha}(\bA)=\omega(1)$, then 
$\disc_{C_{\alpha,H}}(\bA_H)
= \Omega\!\left(\GS_{f_H} \cdot \disc_{C_\alpha}(\bA)\right)$, 
where $\GS_{f_H}$ denotes the global sensitivity of $f_H$ on $n$-vertex graphs.
\end{lemma}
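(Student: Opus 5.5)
Fix $\chi=\bx_H-\bx'_H\in C_{\alpha,H}$ coming from $\bx,\bx'$ with $\bx-\bx'=\chi_0\in C_\alpha$. We will show that for every such $\chi$ some query $q$ satisfies $|(\bA_H\chi)_q|=\Omega(\GS_{f_H}\cdot\disc_{C_\alpha}(\bA))$. The plan is to write $(\bA_H\chi)_q=f_H(G(\bx)[V_q])-f_H(G(\bx')[V_q])$ and expand this difference by classifying occurrences $H'\subseteq K_{3n-2}[V_q]$ according to the set $S(H')$ of private edges $p^{(1)}p^{(2)}$ they contain. Occurrences with $S(H')=\emptyset$ cancel. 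An occurrence is present in $G(\bx)$ iff all its private edges have $x_p=1$ (public edges are fixed). So
\[
(\bA_H\chi)_q=\sum_{k\ge1}\sum_{\substack{S\subseteq P\cap B_q\\ |S|=k}} N_k(q,S)\Bigl(\prod_{p\in S}x_p-\prod_{p\in S}x'_p\Bigr),
\]
where $N_k(q,S)$ is the number of occurrences of $H$ in $K_{3n-2}[V_q]$ whose private-edge set is exactly $S$. Since $W$ lies in every box, these counts depend only on $S$ and on $V_q$.

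\textbf{Step 1: the linear term dominates.} For $k=1$, every $p\in B_q$ has the same count $N_1$: the number of occurrences through the edge $p^{(1)}p^{(2)}$ using no other private edge. We aim to show $N_1=\Theta(\GS_{f_H})$. Take those occurrences whose remaining $h-2$ vertices all lie in $W$. The set $\{p^{(1)},p^{(2)}\}\cup W$ is an $n$-vertex clique, so these occurrences number exactly $f_H(K_n)-f_H(K_n-e)=\GS_{f_H}$. Occurrences using further vertices of $P^{(1)}\cup P^{(2)}$ without their private partner only increase $N_1$, by a factor $2^{O(h)}$ at most, giving $N_1=\Theta(\GS_{f_H})$ with constants depending only on $H$. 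The subtlety is that $N_1$ depends on $|V_q|$; we will handle this by noting that the count restricted to the $W$-part is query independent. The linear term is then $\GS_{f_H}\cdot(\bA\chi_0)_q$ plus a query-dependent correction of the same sign structure.

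\textbf{Step 2: bound the higher-order terms.} This is the main obstacle. For $|S|=k\ge2$, an occurrence contains $k$ disjoint private edges, hence $2k$ vertices outside $W$. So $N_k(q,S)=O(n^{h-2k})$ while $\GS_{f_H}=\Theta(n^{h-2})$, and each term is a factor $n^{-2(k-1)}$ smaller. The difficulty is that there are up to $n^k$ sets $S$, so the crude total $O(n^{h-k})$ is not negligible for $k=2$.

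I would first exploit that $\prod x_p-\prod x'_p$ is nonzero only when $S$ meets $\mathrm{supp}(\chi_0)$, and try to rewrite the $k=2$ term as $N_2\sum_{p\ne p'}(x_px_{p'}-x'_px'_{p'})$ over $B_q\cap P$. This equals a quadratic expression in $(\bA\bx)_q$ and $(\bA\bx')_q$, roughly $N_2\,(\bA\chi_0)_q\bigl((\bA\bx)_q+(\bA\bx')_q\bigr)$. It therefore factors through the same quantity $(\bA\chi_0)_q$. In general each term factors as $(\bA\chi_0)_q$ times a polynomial in the two counts, because $\prod$-differences telescope and the counts are symmetric functions of the $x_p$ with $p\in B_q$.

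This should give $(\bA_H\chi)_q=(\bA\chi_0)_q\cdot M_q$, where $M_q\ge N_1=\Omega(\GS_{f_H})$ and all terms have the same sign, since every coefficient is nonnegative and $\bx,\bx'\ge0$. The identity $a^k-b^k=(a-b)\sum a^ib^{k-1-i}$ applies to elementary symmetric polynomials via $e_k(\bx)-e_k(\bx')$. This is not a clean factorization, so I would fall back to an approximate reduction if it fails.

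\textbf{Step 3: conclude.} Choose $q$ with $|(\bA\chi_0)_q|\ge\disc_{C_\alpha}(\bA)$. Then $|(\bA_H\chi)_q|\ge N_1\cdot|(\bA\chi_0)_q|-(\text{error})$. The error either vanishes by the sign argument or is lower order, using $\disc_{C_\alpha}(\bA)=\omega(1)$ to absorb additive $O(\GS_{f_H})$ slack coming from non-factorizable pieces. This yields $\Omega(\GS_{f_H}\cdot\disc_{C_\alpha}(\bA))$.
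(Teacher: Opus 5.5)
Your proposal is correct, and it really contains two arguments. The fallback you invoke in Step~3 is exactly the paper's proof. At the query $q$ maximizing $|(\bA(\bx-\bx'))_q|$, the paper does the following:
\begin{itemize}
\item it splits the difference of counts by the number of private edges, and the zero-private-edge part cancels;
\item it shows the one-private-edge part equals $\overline{g}_H(n,|V_q|)\cdot|(\bA(\bx-\bx'))_q|\ge \GS_{f_H}\,|(\bA(\bx-\bx'))_q|$, using the $n$-clique $\{p^{(1)},p^{(2)}\}\cup W$ as you do;
\item it bounds everything with at least two private edges crudely by $\sum_{i\ge 2}O(n^{h-i})=O(n^{h-2})=O(\GS_{f_H})$ via the triangle inequality.
\end{itemize}
So what you call the main obstacle in Step~2 is not one. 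The $k=2$ crude total is comparable to $\GS_{f_H}$, but it is negligible against $\GS_{f_H}\cdot\disc_{C_\alpha}(\bA)$, and that is precisely what the hypothesis $\disc_{C_\alpha}(\bA)=\omega(1)$ is for. Your worry that $N_1$ depends on $|V_q|$ is harmless, as it is in the paper: everything is evaluated at a single query.

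Your primary sign argument is a genuinely different and stronger route, and it does go through, but not for the reason you state. Nonnegative coefficients together with $\bx,\bx'\ge 0$ do not suffice. For real nonnegative vectors, $e_2$ can drop while $e_1$ rises (compare $(1,1)$ with $(2.1,0)$). Already $\binom{a}{2}-\binom{b}{2}=(a-b)(a+b-1)/2$ has a cofactor with a negative constant term. What makes it work is two facts:
\begin{enumerate}
\item $N_k(q,S)$ depends only on $|S|$. Inside $K_{3n-2}[V_q]$ the private edges form a perfect matching on $V_q\setminus W$, and permuting matched pairs is an automorphism preserving that matching.
\item The $0/1$ structure turns the $k$-th term into $N_k(q)\bigl[\binom{a}{k}-\binom{b}{k}\bigr]$ with $a=(\bA\bx)_q$ and $b=(\bA\bx')_q$.
\end{enumerate}
Since $\binom{\cdot}{k}$ is nondecreasing on the nonnegative integers, every term has the sign of $a-b$. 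Hence $|(\bA_H\chi)_q|\ge N_1(q)\,|a-b|\ge \GS_{f_H}\,|(\bA(\bx-\bx'))_q|$.

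This yields $\disc_{C_{\alpha,H}}(\bA_H)\ge \GS_{f_H}\cdot\disc_{C_\alpha}(\bA)$ exactly, with constant $1$ and no need for the $\omega(1)$ assumption. The paper's approximate reduction is lossier, but it needs only the counting bound $g_H^{(i)}=O(n^{h-i})$, not the exact symmetric form of the higher-order terms.
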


\textbf{Proof sketch of \Cref{thm:lower bounds dp range subgraph counting}} %
Given the above reduction, we show that if there exists a DP algorithm for RSC whose additive error is asymptotically smaller than $\disc_{C_{\alpha,H}}(\bA_H)$, then this algorithm can be used to solve RA with small error (\Cref{lem:general-attacker}). This contradicts known lower bounds for RA (\Cref{lem:decoder}), and thus yields the desired lower bound for DPRSC. We defer the full argument to \Cref{ssec:the reduction}.%

\section{Experiments}\label{sec:Experiment}

\begin{figure*}[!htb]
    \centering
    \raisebox{1mm} {
    \begin{minipage}{0.48\textwidth}
        \centering
        \begin{subfigure}[t]{0.45\textwidth}
            \centering
            \includegraphics[width=\textwidth]{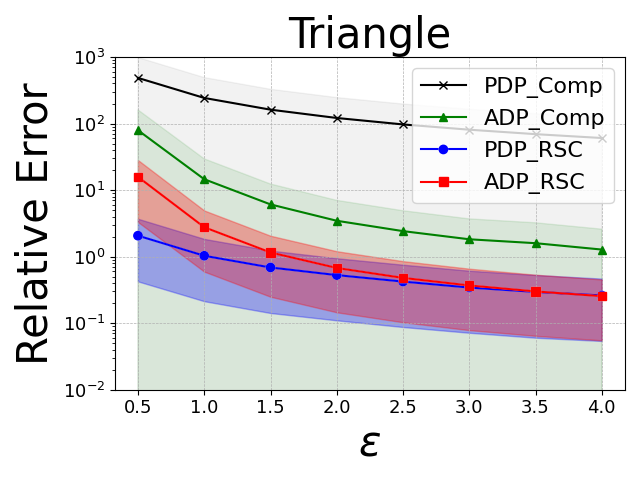}
            \caption{Wiki-Squirrel}
        \end{subfigure}
        \begin{subfigure}[t]{0.45\textwidth}
            \centering
            \includegraphics[width=\textwidth]{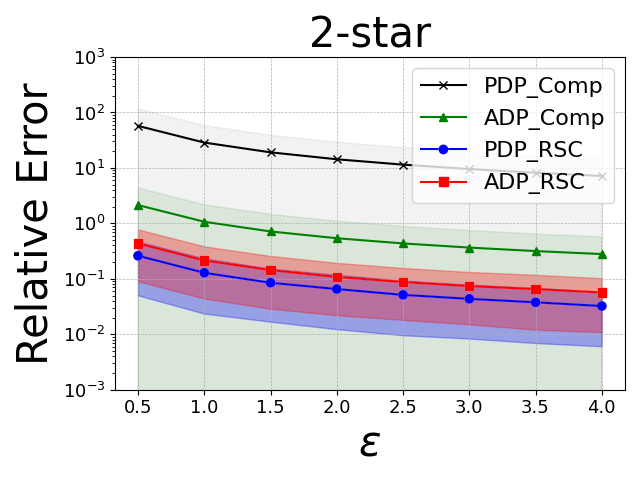}
            \caption{Wiki-Squirrel}
        \end{subfigure}    
        \begin{subfigure}[t]{0.45\textwidth}
            \centering
            \includegraphics[width=\textwidth]{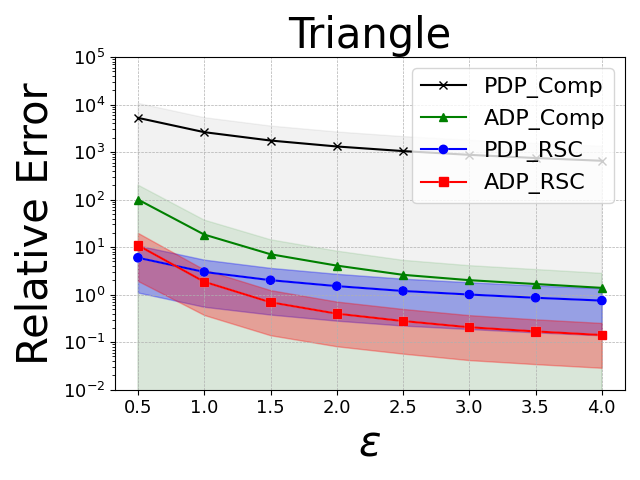}
            \caption{WormNet-v3}
        \end{subfigure}
        \begin{subfigure}[t]{0.45\textwidth}
            \centering
            \includegraphics[width=\textwidth]{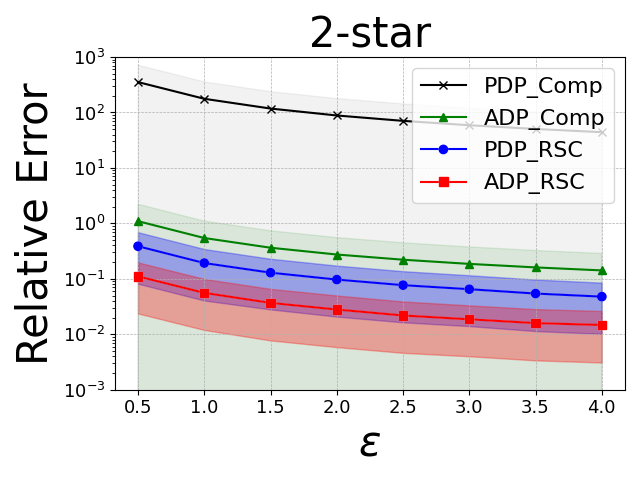}
            \caption{WormNet-v3}
        \end{subfigure}
        \caption{Relative error vs. $\veps$}
         \label{fig:epsilon_error}
    \end{minipage}
    }
    \raisebox{1.8mm} {
    \begin{minipage}{0.48\textwidth}
        \centering
    \centering

    \begin{subfigure}[t]{0.45\textwidth}
        \centering
        \includegraphics[width=\textwidth]{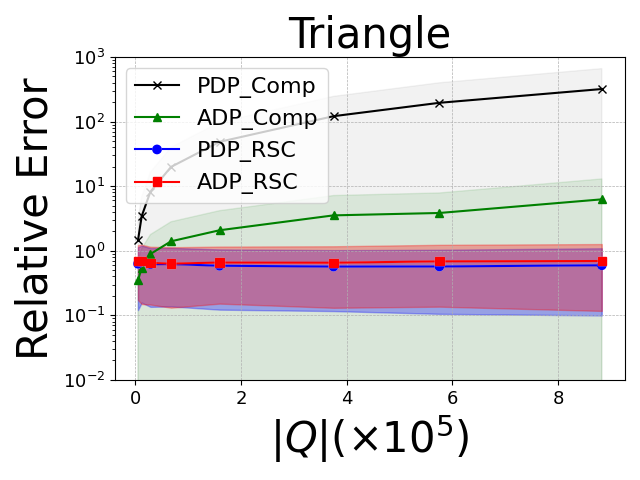}
        \caption{Wiki-Squirrel}
    \end{subfigure}
    \begin{subfigure}[t]{0.45\textwidth}
        \centering
        \includegraphics[width=\textwidth]{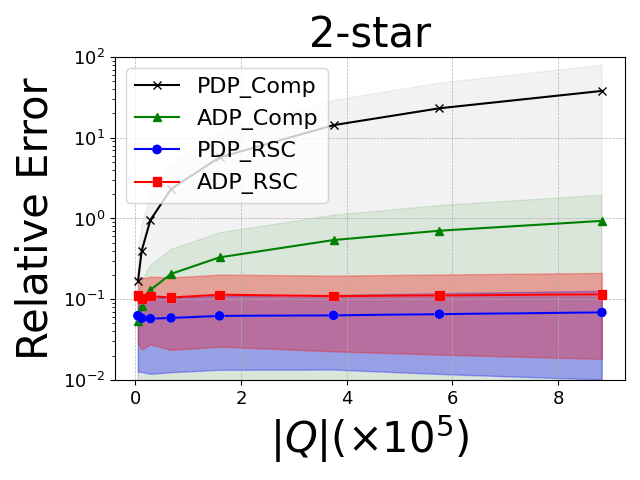}
        \caption{Wiki-Squirrel}
    \end{subfigure}
    \begin{subfigure}[t]{0.45\textwidth}
        \centering
        \includegraphics[width=\textwidth]{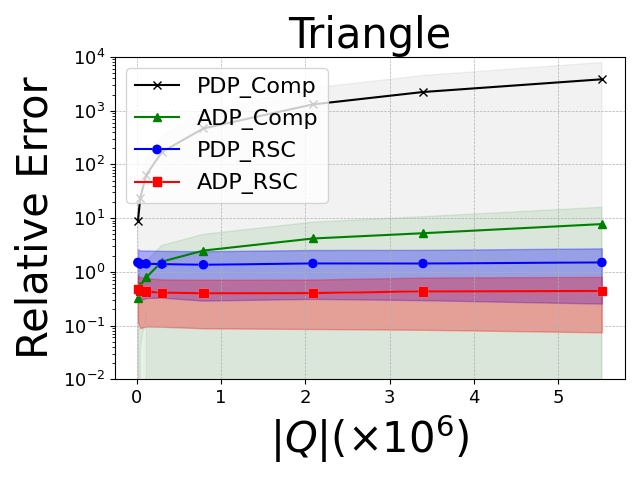}
        \caption{WormNet-v3}
    \end{subfigure}
    \begin{subfigure}[t]{0.45\textwidth}
        \centering
        \includegraphics[width=\textwidth]{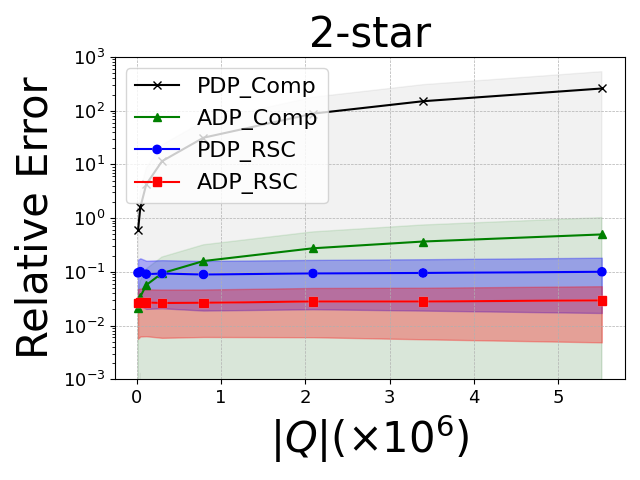}
        \caption{WormNet-v3}
    \end{subfigure}
    \caption{Relative error vs. $|Q|$}
        \label{fig:withQ_error}
    \end{minipage}
    }

    \raisebox{1mm} {
    \begin{minipage}{0.48\textwidth}
        \centering
        \begin{subfigure}[t]{0.45\textwidth}
            \centering
            \includegraphics[width=\textwidth]{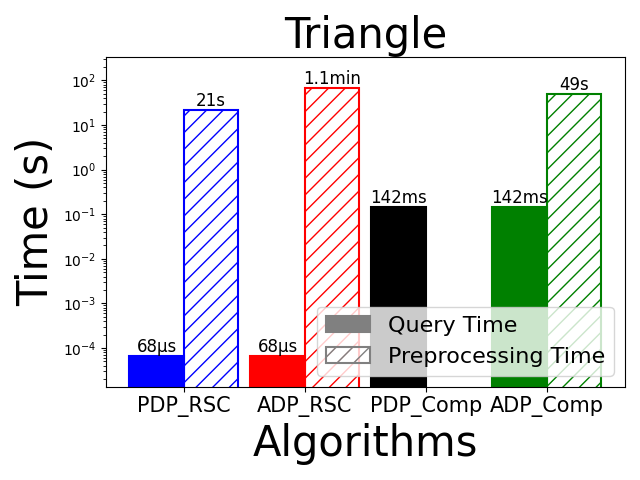}
            \caption{Wiki-Squirrel}
        \end{subfigure}
        \hspace{0.01\textwidth}
        \begin{subfigure}[t]{0.45\textwidth}
            \centering
            \includegraphics[width=\textwidth]{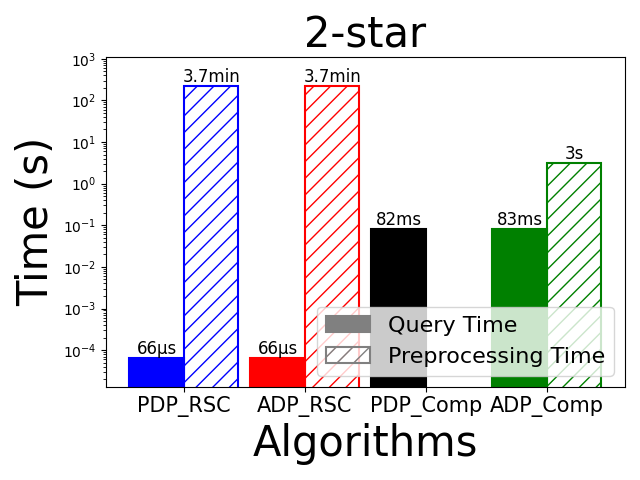}
            \caption{Wiki-Squirrel}
        \end{subfigure}    
        \begin{subfigure}[t]{0.45\textwidth}
            \centering
            \includegraphics[width=\textwidth]{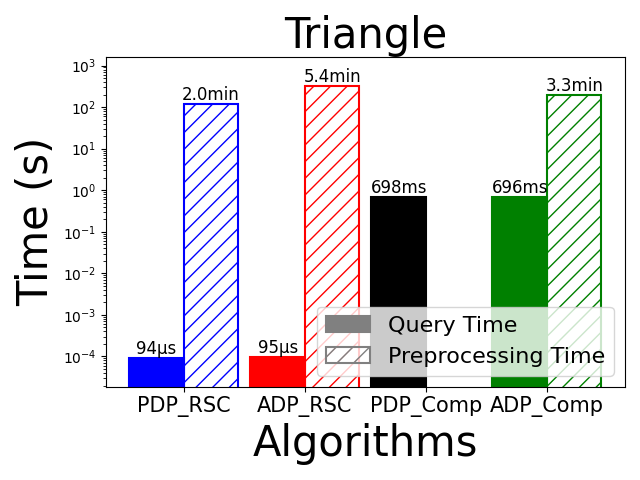}
            \caption{WormNet-v3}
        \end{subfigure}
        \hspace{0.01\textwidth}
        \begin{subfigure}[t]{0.45\textwidth}
            \centering
            \includegraphics[width=\textwidth]{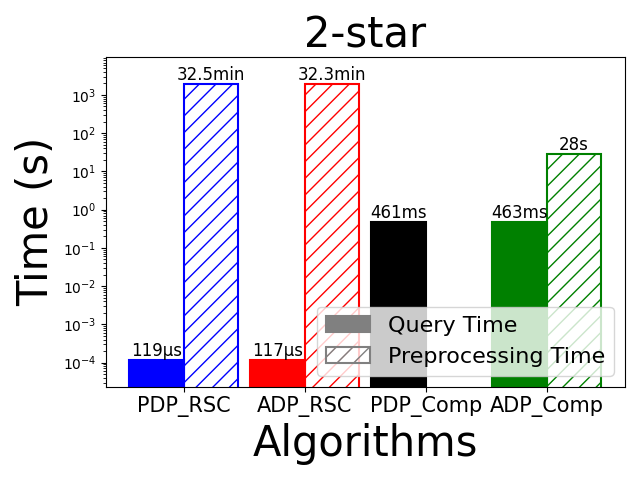}
            \caption{WormNet-v3}
        \end{subfigure}
        \caption{Query time and preprocessing time}
         \label{fig:query_time}
    \end{minipage}
    }
    \raisebox{1.8mm} {
    \begin{minipage}{0.48\textwidth}
        \centering
    \centering
    \begin{subfigure}[t]{0.45\textwidth}
        \centering
        \includegraphics[width=\textwidth]{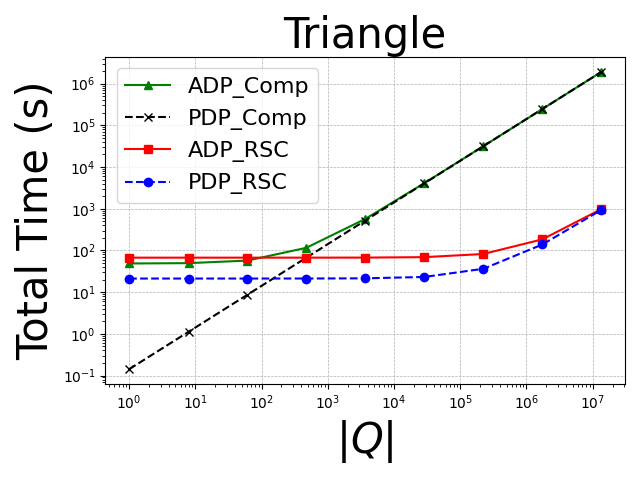}
        \caption{Wiki-Squirrel}
    \end{subfigure}
    \begin{subfigure}[t]{0.45\textwidth}
        \centering
        \includegraphics[width=\textwidth]{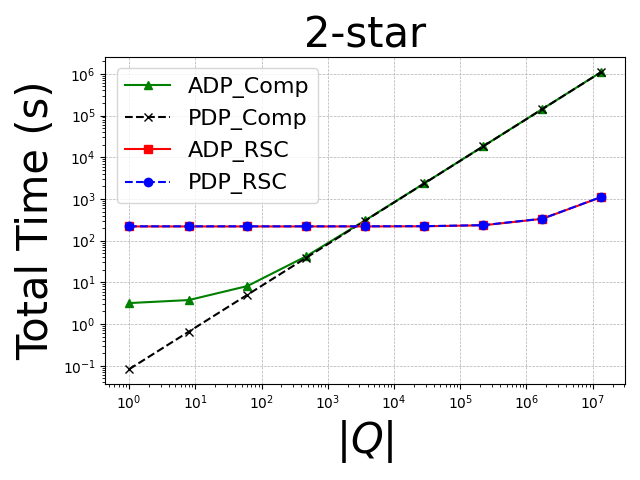}
        \caption{Wiki-Squirrel}
    \end{subfigure}
    \begin{subfigure}[t]{0.45\textwidth}
        \centering
        \includegraphics[width=\textwidth]{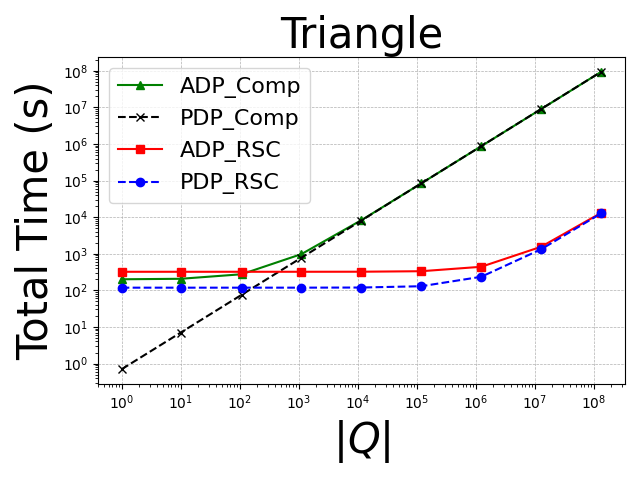}
        \caption{WormNet-v3}
    \end{subfigure}
    \begin{subfigure}[t]{0.45\textwidth}
        \centering
        \includegraphics[width=\textwidth]{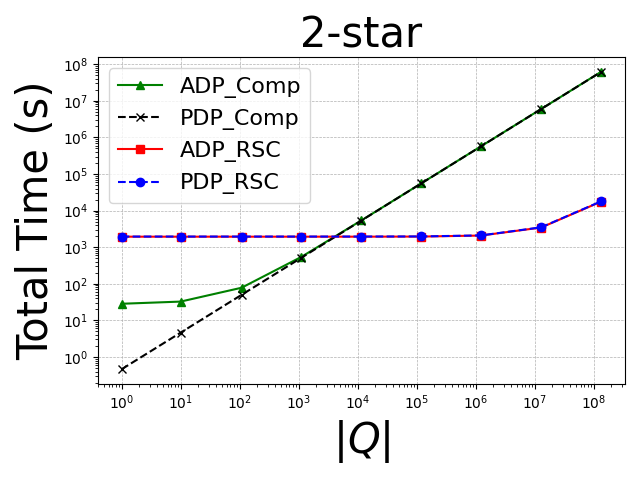}
        \caption{WormNet-v3}
    \end{subfigure}
    \caption{Total time}
        \label{fig:total_time}
    \end{minipage}
    }
\end{figure*}

Now we present our experimental results. All algorithms were implemented in Python\footnote{The corresponding codes and data are available at \url{https://github.com/Airleave/DPRSC}.}, and run on an Intel(R) Xeon(R) Platinum 8562Y Processor @ 2.80 GHZ with 768 GB RAM.
We evaluated our methods on three real-world datasets, after removing self-loops and duplicate edges from each graph. Basic statistics for these datasets are summarized in Table \ref{tab:datasets}. %
The edges are treated as sensitive data.

\begin{table}[t]
\centering
\caption{Summary of real-world datasets used in our experiments (see \citep{newman2006finding,cho2014wormnet,rozemberczki2019multiscale}). Here $n$, $m$, $d_{\mathrm{avg}}$, and $d_{\max}$ denote the number of nodes, edges, average degree, and maximum degree, respectively.}
\begin{tabular}{lcccc}
\hline
\textbf{Dataset} & $n$ & $m$ & $d_{\mathrm{avg}}$ & $d_{\max}$ \\
\hline
CA-Netscience & 379 & 914 & 4.8 & 34 \\
Wiki-Squirrel & 5{,}201 & 198{,}353 & 76.3 & 1{,}903 \\
WormNet-v3    & 16{,}347 & 762{,}822 & 93.3 & 1{,}272 \\
\hline
\end{tabular}
\label{tab:datasets}
\end{table}

\textbf{Baseline} There is no prior work on DP range subgraph counting. We construct two natural baselines for comparison. The first baseline \textbf{PDP\_Comp} answers each query independently by adding Laplace noise with scale $\frac{\GS_{f_H}\cdot |Q|}{\veps}$, using the basic composition theorem (\Cref{def:basic composition}). The second baseline \textbf{ADP\_Comp} improves upon this by applying \Cref{lem: local local sensitivity} and the advanced composition theorem (\Cref{def:advanced composition}), adding Laplace noise with scale $\Theta(\frac{\widetilde{\HS}_{f_H}(G)\cdot\sqrt{|Q|\ln(1/\delta)}}{\veps})$. We denote our pure DP algorithm (\Cref{alg: PDP_RSC}) and approximate DP algorithm (\Cref{alg: ADP_RSC}) by \textbf{PDP\_RSC} and \textbf{ADP\_RSC}, respectively. The comparison of different algorithms is provided in \Cref{tab:compare}.

\begin{table}[t!]
    \centering
      \smallskip
    \caption{The performance guarantees of DP algorithms for counting occurrences of $H$. We assume that $\veps,\delta,d$ are constants and $|Q|$ is a polynomial in $n$. The notation $\widetilde{O}(\cdot)$ hides factors that are polylogarithmic in $n$. The additive errors hold with probability at least $1 - \frac{1}{n}$ and are given by \Cref{fact:laplace}, \Cref{thm:main_upper_PDP,thm:main_upper}.}
    \begin{tabular}{ccc} 
    \toprule
    Algorithm & DP & Additive Error\\
    \midrule 
    PDP\_Comp & $\veps$ & $\widetilde{O}(\GS_{f_H} \cdot |Q|)$ \\
    ADP\_Comp & ($\veps$,$\delta$) & $\widetilde{O}(\widetilde{\HS}_{f_H}(G)\cdot \sqrt{|Q|})$\\
    \textbf{PDP\_RSC} & \bm{$\veps$} & \bm{$\widetilde{O}(\GS_{f_H})$}\\
    \textbf{ADP\_RSC} & \bm{$(\veps,\delta)$} & \bm{$\widetilde{O}(\widetilde{\HS}_{f_H}(G))$}\\    
    \bottomrule
    \end{tabular}
    \label{tab:compare}
\end{table}

\textbf{Setting and metric} %
All algorithms satisfy either $\veps$-DP or $(\veps, \delta)$-DP. We set $\veps = 2.0, \delta=10^{-5}$ (which satisfies $\delta<\frac{1}{n}$ for all datasets considered), $d = 1$  and $|Q| = \lceil n^{1.5}\rceil$ by default. Each query $q$ in $Q$ is generated randomly and corresponds to an induced vertex set $V_q$ with $|V_q| = \Theta(n)$. Experiments are conducted on three datasets, with $H$ being a triangle, 2-star, or edge. Results for edge and CA-Netscience are deferred to \Cref{sec:Experiment_edge,sec:Experiment_d=2}, respectively. For each vertex in these networks, we independently sample attributes from a standard normal distribution as a conservative (worst-case) evaluation; the rationale for this choice is provided in \Cref{sec:Experiment_real_attributes}. We also include additional experiments to complement the default setting: Results for $d > 1$, small-range queries and real attributes are provided in \Cref{sec:Experiment_d=2,sec:Experiment_small_ranges,sec:Experiment_real_attributes}, respectively. We measure relative error as $\frac{|\widetilde{f}_H(G_q) - f_H(G_q)|}{\max(f_H(G_q), 0.001n)}$, following standard practice \citep{imola2021locally}. %
Each algorithm is run at least 20 times; we report mean relative error and runtime, with standard-deviation bands shown in figures. %
We define the \textit{query time} as the time required to answer a single query, the \textit{total query time} as the sum of the query times over all queries, and the \textit{total time} as the sum of the preprocessing time and the total query time.

\textbf{Relative error vs.\ $\varepsilon$}  
\Cref{fig:epsilon_error} shows that our algorithms consistently outperform the baselines across all tested privacy parameter $\varepsilon$ (smaller values of $\veps$ correspond to stronger privacy guarantees), achieving substantially lower mean error and variance. The approximate DP algorithm ADP\_RSC generally performs best on sparse graphs, except for 2-stars on Wiki-Squirrel, where large $d_{\max}$ increases the local sensitivity and noise relative to PDP\_RSC, since $\widetilde{\HS}_{f_H}(G) \approx 2d_{\max}$ for 2-stars (\Cref{subsection:Discussion on HS}). For triangles, ADP\_RSC is more sensitive to small $\varepsilon$ since estimating $\widetilde{\HS}_{f_H}$ requires iteratively adding noise related to $\veps$ and $\delta$.

\textbf{Relative error vs.\ $|Q|$}  
As shown in \Cref{fig:withQ_error}, our algorithms are largely insensitive to $|Q|$, whereas baseline errors increase quickly with $|Q|$. A minor exception occurs for very small $|Q|<2n$ in the ADP\_Comp vs.\ ADP\_RSC comparison, consistent with the theoretical bounds in \Cref{thm:main_upper}. The performance gap grows as $|Q|$ increases.

\textbf{Runtime} %
For runtime evaluation, we assume that the queries in $Q$ are i.i.d. and uniformly sampled from all queries that produce distinct induced subgraphs. We evaluated the relationship between the total time and $|Q|$ in \Cref{fig:total_time}. $|Q|$ ranges from 1 to $\Theta(n^2)$, which corresponds to the full range of $|Q|$ under our setting when $d = 1$. Since exhaustively evaluating all $\Theta(n^2)$ queries is computationally prohibitive, we sample a sufficiently large number of queries from the distribution defined above. We continue sampling until the average query time converges, as measured by a small relative standard error (below $5\%$), and use the average query time and preprocessing time (shown in \Cref{fig:query_time}) %
to estimate the total time. Given the substantial differences in query time between our algorithms and the baselines, the resulting estimation error is negligible. As shown in \Cref{fig:query_time}, our algorithms significantly outperform the baselines in query time, achieving lower latency by 3 to 4 orders of magnitude at the cost of higher preprocessing time. Note that PDP\_comp requires no preprocessing. Here, the baselines compute exact counts of 2-stars using a straightforward implementation in $O(m)$ time, while exact triangle counting is carried out using the widely adopted algorithm from \citet{chiba1985arboricity}. \Cref{fig:total_time} shows that our algorithm consistently outperforms the baselines in total time when $|Q| = \Omega(n)$, and the advantage becomes increasingly pronounced as $|Q|$ grows. Moreover, the advantage is more significant on larger datasets. Overall, when $Q = \Theta(n^{2})$, the total time is dominated by the total query time, yielding a speedup of 3 to 4 orders of magnitude.

\section{Conclusion}%

In this paper, we study the DPRSC problem and present the first efficient algorithms achieving small additive error. We complement these results with lower bounds showing an exponential dependence on the attribute dimension, nearly matching our upper bounds. 

A limitation of our work is that our current algorithms do not address settings in which attribute information is sensitive and must be protected. Combining our techniques with existing methods (e.g., \citep{dwork2015pure}) may yield upper and lower bounds in this setting. However, achieving general algorithms with optimal utility guarantees would likely require new ideas beyond our current approach. In addition, a gap remains between our upper and lower bounds in certain parameter regimes, leaving room for improvement in both analysis and algorithm design. Another interesting direction is to extend the RSC problem to more challenging settings, such as dynamic graphs where edges or attributes evolve over time, and alternative privacy models including local differential privacy and node-level differential privacy.

\section*{Acknowledgments}
This work is supported in part by NSFC Grant 62272431 and Quantum Science and Technology - National Science and Technology Major Project (Grant No. 2021ZD0302901).

\section*{Impact Statement}

This paper presents work whose goal is to advance the field
of Machine Learning and Algorithmic Privacy. There are many potential societal
consequences of our work, none which we feel must be
specifically highlighted here.

\bibliography{example_paper}
\bibliographystyle{icml2026}

\newpage
\appendix
\onecolumn

\section{Overview of the Appendix}

The appendix is organized as follows:
\begin{itemize}
    \item \Cref{sec:relatedwork}: we provide the discussion of other related work.
    \item \Cref{sec:tools}: we provide the supplementary tools used in the appendix.
    \item \Cref{sec:defer_preliminary}: we present the deferred proofs from \Cref{Preliminaries}.
    \item \Cref{sec:defer_upper}: we present the deferred proofs, analysis and implementation details from \Cref{sec:upper bound}.%
    \item \Cref{appendix:missing proof of lower bounds}: we present the deferred proofs and analysis from \Cref{sec:Lower Bounds}.
    \item \Cref{appendix:Experiment}: we present the deferred experiments and implementation details from \Cref{sec:Experiment}.%
    \item \Cref{sec:extensions}: we present better upper and lower bounds of DPRSC in higher dimensions $d$ as extensions.
\end{itemize}

\section{Other Related Work}\label{sec:relatedwork}
\textbf{DP subgraph counting}
The DP subgraph counting problem is a topic that has been extensively studied, primarily for the entire graph $ G $. \citet{nissim2007smooth} improved the utility guarantees for triangle counting in differential privacy by incorporating instance-dependent noise. \citet{karwa2011private} extended the smooth sensitivity approach to $ k $-stars and proposed methods for computing local sensitivity to perform $ k $-triangle counting. \citet{kasiviswanathan2013analyzing} introduced a triangle counting algorithm under the node-DP framework. \citet{zhang2015private} developed ladder functions for various subgraph counting tasks. \citet{nguyen2024faster} focused on optimizing runtime by calculating approximate smooth sensitivity for graphs with certain properties, achieving both privacy and utility while reducing time complexity. Additionally, several studies have examined subgraph counting under the local DP model, such as \citet{imola2021locally,imola2022communication,imola2022ldpshuffling,Talya2023LDPtriangle}, while \citet{FHO2021} studied  DP subgraph counting in dynamic model.%

\textbf{DP range queries}
\citet{muthukrishnan2012optimal} present algorithms for the halfspace range counting problem under differential privacy, achieving good approximate accuracy in terms of average squared error. \citet{deng2023differentiallyPath} propose an algorithm for counting queries and bottleneck queries on shortest paths while ensuring differential privacy. A closer examination of their model reveals that they effectively address a range counting problem on a graph. A cut query on a graph is a specialized form of range counting, where the range space includes all possible cuts. The cut query problem is widely studied in the field of differential privacy, with significant research dedicated to it \citep{gupta2010dp-combinatorial,blocki2012johnson,gupta2012iterative,arora2019differentially,eliavs2020differentially,dalirrooyfard2024nearly}.

\textbf{DP lower bound} %
DP lower bounds are often motivated by reconstruction attacks \citep{dinur2003revealing}, which show that answering too many queries accurately can leak sensitive information about the dataset. Building on this technique, \citet{muthukrishnan2012optimal} explore the connections between discrepancy theory and DP lower bound. In particular, they establish the first lower bound on the average squared error for the private range counting problem. Subsequent improvements in discrepancy results by \citet{MN15} further strengthen the lower bound. Furthermore, based on the connections,  \citet{eliavs2020differentially} studied the lower bound for DP synthetic graph generation to approximate all cuts, while \citet{peng2025differentially} extended this to triangle-motif cuts. Their approaches heavily rely on the randomness inherent in Erd\H{o}s--R\'enyi graphs, which limits its applicability to range subgraph counting for moderate $d$.

\section{Supplementary Tools}\label{sec:tools}

We will make use of the following post-processing property of differential privacy.

\begin{proposition}[Post-processing Property \citep{dwork2006calibrating}]\label{def:post-processing}
    Let $M$: $\R^{d_1} \rightarrow \R^{d_2}$ be an $(\veps,\delta)$-differentially private mechanism and let $h$: $\R^{d_2} \rightarrow \R^{d_3}$ be an arbitrary function. Then, the function $g \circ M$: $\R^{d_1} \rightarrow \R^{d_3}$ is also $(\veps,\delta)$-differentially private.
\end{proposition}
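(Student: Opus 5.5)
The plan is to reduce everything to the definition of $(\veps,\delta)$-DP (\Cref{def:dp_definitino}) by pulling back events along the post-processing map. I read the statement's $g\circ M$ as $h\circ M$, and I assume $h$ is measurable, since otherwise the events in the output space of $h\circ M$ need not be well defined. Fix neighboring inputs $G\sim G'$ and a measurable set $S\subseteq\R^{d_3}$. The key observation is that the event $\{h(M(G))\in S\}$ coincides with the event $\{M(G)\in h^{-1}(S)\}$. Here $h^{-1}(S)\subseteq \R^{d_2}$ is a measurable subset of the output space of $M$.

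For deterministic $h$, I would write
\[
\Pr[h(M(G))\in S] = \Pr[M(G)\in h^{-1}(S)] \le e^{\veps}\Pr[M(G')\in h^{-1}(S)] + \delta = e^{\veps}\Pr[h(M(G'))\in S]+\delta .
\]
The inequality is exactly the DP guarantee of $M$, applied to the measurable event $T=h^{-1}(S)$. Since $G\sim G'$ and $S$ were arbitrary, this proves the claim in the deterministic case.

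To also cover randomized post-processing, I would model a randomized $h$ as a deterministic measurable map $h(y,r)$. Here $r$ is drawn from a distribution $\mu$ independently of $M$'s internal randomness and of the input. Conditioning on $r$, the deterministic case gives, for every fixed $r$,
\[
\Pr[h(M(G),r)\in S]\le e^{\veps}\Pr[h(M(G'),r)\in S]+\delta .
\]
Integrating both sides against $\mu$ (Fubini/Tonelli, valid because all quantities are nonnegative and jointly measurable) preserves the inequality. The additive $\delta$ integrates to $\delta$ because $\mu$ is a probability measure.

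The only delicate point is measurability: we need $h^{-1}(S)$ to be measurable, and we need the joint map $(y,r)\mapsto h(y,r)$ to be measurable for the Fubini step. I would handle this by stating measurability of $h$ as a standing assumption, which is standard and is implicit in the paper's phrase ``measurable events''. Beyond this, the argument is a direct calculation with no real obstacle.
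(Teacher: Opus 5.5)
Your proof is correct. The paper gives no proof of its own and simply attributes the result to \citet{dwork2006calibrating}. Your argument is the standard one: pull $S$ back to $h^{-1}(S)$ when $h$ is deterministic, and average over independent coins when $h$ is randomized. Reading $g\circ M$ as $h\circ M$ and assuming $h$ is measurable are the right fixes for the statement as written.
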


The sum of multiple variables that follow the Laplace distribution satisfies the following properties.
\begin{lemma}[\citet{chan2011private,wainwright2019high}]\label{lem:laplace concentration}
    Let $\{X_i\}$ be a collection of independent random variables such that $X_i \sim \Lap(b_i)$ %
    for all $1 \leq i\leq m$. Then, for $\nu \geq \sqrt{\sum_i b^2_i}$ and $0 < \lambda < \frac{2\sqrt{2}\nu^2}{b}$ for $b = \max_i\{b_i\}$, $
    \Pr \midk{|\sum_i X_i| \geq \lambda} \leq 2\cdot \exp(-\frac{\lambda^2}{8\nu^2})$. Furthermore,
    if $b = b_i$ for any $i\in [m]$ and $m \geq \log{\beta}$, we have $
        \Pr \midk{|\sum_i X_i| \geq 2\sqrt{2} \cdot b\sqrt{m\log{\beta}}} \leq \frac{2}{\beta}$.
\end{lemma}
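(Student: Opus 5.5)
The plan is the standard Chernoff (moment generating function) argument for sub-exponential variables. Let $S=\sum_i X_i$. First I would compute the MGF of a single Laplace variable: for $X\sim\Lap(b)$ and $|t|<1/b$, a direct integration gives $\E[e^{tX}]=\frac{1}{1-t^2b^2}$.

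Next I would replace this by a Gaussian-type bound on a restricted range of $t$. Using the elementary inequality $\frac{1}{1-x}\le e^{2x}$ for $0\le x\le \frac12$, we get, whenever $|t|\le \frac{1}{\sqrt2\, b}$ with $b=\max_i b_i$, the bound $\E[e^{tX_i}]\le \exp(2t^2b_i^2)$ for every $i$. By independence,
\[
\E[e^{tS}]\le \exp\Big(2t^2\sum_i b_i^2\Big)\le \exp(2t^2\nu^2)
\]
for any $\nu\ge\sqrt{\sum_i b_i^2}$.

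Markov's inequality then gives $\Pr[S\ge\lambda]\le \exp(-t\lambda+2t^2\nu^2)$. I would choose $t=\lambda/(4\nu^2)$, which minimizes the exponent. This choice is admissible exactly when $\lambda/(4\nu^2)\le 1/(\sqrt2 b)$, i.e.\ $\lambda\le 2\sqrt2\,\nu^2/b$, which is where the hypothesis on $\lambda$ enters. Substituting yields $\exp(-\lambda^2/(8\nu^2))$. The Laplace distribution is symmetric, so $-S$ satisfies the same bound, and a union bound over the two tails gives the factor $2$.

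For the second statement I would take all $b_i=b$ and $\nu=b\sqrt m$, so that $\nu^2=\sum_i b_i^2$ holds with equality. Set $\lambda=2\sqrt2\, b\sqrt{m\log\beta}$, with natural logarithm. The admissibility condition $\lambda\le 2\sqrt2\,\nu^2/b=2\sqrt2\, bm$ reduces to $\sqrt{\log\beta}\le\sqrt m$, which is exactly the assumption $m\ge\log\beta$. The exponent is then $\lambda^2/(8\nu^2)=\frac{8b^2 m\log\beta}{8b^2m}=\log\beta$, so the tail bound is $2e^{-\log\beta}=2/\beta$.

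The only delicate point is the boundary case $m=\log\beta$, where the strict inequality in the first statement fails. This is harmless because the Chernoff argument works verbatim with the non-strict condition $t\le 1/(\sqrt2 b)$. So the main thing to check carefully is the constant in $\frac{1}{1-x}\le e^{2x}$ on $[0,\frac12]$, since that inequality fixes both the range of $\lambda$ and the constant $8$.
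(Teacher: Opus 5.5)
Your proof is correct and is essentially the standard argument from the cited source (the paper gives no proof of its own and defers to \citet{chan2011private}). That argument uses the same ingredients you do: the Laplace MGF $\frac{1}{1-t^2b_i^2}$, the bound $\frac{1}{1-x}\le e^{2x}$ on $[0,\frac12]$, and the choice $t=\lambda/(4\nu^2)$, whose admissibility is exactly the constraint $\lambda\le 2\sqrt{2}\nu^2/b$. Your specialization $\nu=b\sqrt{m}$ is right, and so is your remark that the non-strict range $t\le 1/(\sqrt{2}b)$ covers the boundary case $m=\log\beta$, which the strict hypothesis of the first part would otherwise exclude.
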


\subsection{Definitions and properties of range tree}\label{sec:range tree construction and query}
A range tree is a binary tree structure designed for interval queries and summation. For clarity, we define the tree construction and query process to streamline the algorithm's description. 
The construction of the range tree is based primarily on \citet{bentley1978decomposable}, with minor modifications. %

We begin by introducing the basic $1$D range tree.
\begin{definition}[$1$D Range Tree]
Given a set of points $P=\bigk{(x_t,w_t)}$, where each point has an $x$-coordinate $x_t$ and a weight $w_t$, the $1$D range tree is constructed as follows:
    \begin{enumerate}
    \item Sort the points by $x$-coordinates, denoted as $x_1, \dots, x_n$.
    \item Begin building the tree recursively from the root node, where the interval spans from $x_1$ to $x_n$.
    \item For a given interval $x_l, \dots, x_r$ corresponding to a tree node $p$, set $mid = \lfloor \frac{l + r}{2} \rfloor$. Recursively construct the left child using points $x_l, \dots, x_{mid}$ and the right child using points $x_{mid+1}, \dots, x_r$. If the interval contains only one kind of coordinate, terminate the recursion and set the weight of the current tree node to the sum of the weights of the points at the coordinate.
    \item During backtracking, sum the weights of the left child and the right child as the weight of the current tree node: 
    \[
    node\text{.weight} = left\text{.weight} + right\text{.weight}.
    \]
    \end{enumerate}
\end{definition}
\begin{definition}[$1$D Range Tree Query]Given a query range $[low, high]$, start at the root node of the 1D range tree $T$.
    \begin{enumerate}
 \item Start the recursive query from the root node of $T$. 
 \item For the current $node$, if $node$ falls within the range $[low, high]$, return $p$.weight. If $low$ lies within the left child of $p$, recursively query the left subtree; if $high_1$ lies within the right child, recursively query the right subtree. 
 \item When backtracking, sum the results of the left child and the right child and return them.
\end{enumerate}
\end{definition}
Next, we introduce the $k$D Range Tree ($k \ge$ 2).

\begin{definition}[$k$D Range Tree Construction]\label{def:kd range tree construction}
For a set of points $P = \bigk{((x^i_t)_{i=1}^{k},w_t)}$ where each point has coordinates $(x^i_t)_{i=1}^{k}$ and a weight $w_t$, the $k$D range tree is constructed as follow:
\begin{enumerate}
\item Group the points by their first dimension, and sort each group by the first dimension, denoted as $p_1, \dots, p_n$. 
\item Construct the $k$D range tree $T_1$ using $p_1, \dots, p_n$ in a similar approach to the $1$D range tree, partitioning the first dimension. Note that each node of $T_1$ contains an associated $(k-1)$D range tree $T_2$ for the second dimension, recursively.
\item For each node in $T_1$, take the points covered by that node, group them by their second dimension, sort them, and construct a corresponding $(k-1)$D range tree $T_2$. 
\end{enumerate}
\end{definition}
\begin{definition}[$k$D Range Tree Query]\label{def:kD range tree counting}
Given a $k$-dimensional query range $\prod_{i = 1}^{k}[low_i, high_i]$ and $k$D range tree $T_1$:
\begin{enumerate}
 \item Start the recursive query from the root node of $T_1$. 
 \item For the current $node$, if $node$ falls within the range $[low_1, high_1]$, perform a query $\prod_{i = 2}^{k}[low_i, high_i]$ on $T_2$ (call $(k-1)$D tree query recursively). If $low_1$ lies within the left child of $node$, recursively query the left subtree; if $high_1$ lies within the right child, recursively query the right subtree. 
  \item When backtracking, sum the results of the left child and the right child and return them.
\end{enumerate}
\end{definition}

We present construction and query time complexity of $k$D range tree for any $k \ge 1$.

\begin{lemma}[Construction Time]\label{lem:Construction time} For any $k\ge 1$,
    the runtime of $k$D range tree construction is $O(n\log^{k}{n})$.
\end{lemma}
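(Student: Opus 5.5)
We argue by induction on $k$, bounding the total construction cost $T_k(n)$ of a $k$D range tree built on $n$ points. Throughout, we assume the points are presorted once in each of the $k$ coordinates. This costs $O(kn\log n)=O(n\log n)$ for constant $k$ and is absorbed into the bound. Inside each recursive call we then only split already-sorted lists, never re-sort them.

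\textbf{Base case $k=1$.} After sorting, the $1$D range tree is built by recursive halving. Each level of the recursion touches every point $O(1)$ times: merging groups of equal coordinates and computing weights by summing children during backtracking. The tree has $O(\log n)$ levels, so the cost is $O(n\log n)$ including the sort. (In fact it is $O(n)$ after presorting.)

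\textbf{Inductive step.} Suppose $T_{k-1}(n')=O(n'\log^{k-1}n')$. The primary tree $T_1$ on the first coordinate has depth $O(\log n)$. At any fixed depth its nodes partition the point set, so the sizes $n_p$ of the point sets stored at the nodes of that level sum to at most $n$. For each node $p$ we build an associated $(k-1)$D tree on its $n_p$ points. The point list for $p$, sorted by the next coordinate, comes from splitting the parent's presorted lists in linear time, in the standard fractional manner. The total cost is therefore
\[
T_k(n)\le O(n\log n)+\sum_{\text{levels }\ell}\sum_{p\text{ at level }\ell} c\, n_p\log^{k-1}n_p \le O(n\log n)+ O(\log n)\cdot c\,n\log^{k-1}n = O(n\log^k n).
\]
Here we used $\log n_p\le\log n$ and $\sum_{p \text{ at level } \ell} n_p\le n$.

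\textbf{Main obstacle.} The subtle point is to keep the hidden constant from degrading across the induction. If each associated tree re-sorted its points, every level would pick up an extra $\log$ factor. Presorting by all coordinates and splitting the sorted lists linearly avoids this. Grouping equal coordinates, where recursion stops when only one coordinate value remains, can only shrink node sets, so it does not affect the bound.

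The resulting constant is of the form $c^{k}$, which matches the paper's convention that hidden constants are $c^{O(d)}$ when this lemma is applied with $k=2d$.
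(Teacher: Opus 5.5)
The paper gives no proof of this lemma; it is quoted as the standard construction bound for Bentley's range trees. Your level-by-level induction is the textbook argument and it is correct: nodes at a fixed depth of the primary tree partition the points, there are $O(\log n)$ depths, and the hypothesis is applied to each associated $(k-1)$D tree. It also yields a constant of the form $c^{O(k)}$, as needed when $k=2d$.

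One correction to your ``main obstacle'' paragraph: presorting is not needed for this bound. The paper's own definition re-sorts at every node (``group them by their second dimension, sort them''). That costs only $O(n_p\log n_p)$ at a node with $n_p$ points, which is absorbed into $T_{k-1}(n_p)=O(n_p\log^{k-1}n_p)$ for $k\ge 2$. Indeed, your induction already runs from the sort-included base case $T_1(n)=O(n\log n)$. What presorting actually buys is the sharper standard bound $O(n\log^{k-1}n)$ for $k\ge 2$, obtained by feeding your $O(n)$ base case into the same recursion. Splitting the sorted lists is just a stable partition of each list; no fractional cascading is involved.

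As a minor technicality, state the hypothesis as $T_{k-1}(n')\le c\,n'(1+\log n')^{k-1}$, so that singleton nodes are not charged zero cost.
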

\begin{lemma}[Query Time]\label{lem:running time}
    For any $k\ge 1$, the runtime of $k$D range tree query is $O(\log^{k}{n})$.
\end{lemma}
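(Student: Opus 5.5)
The statement immediately above is the query time lemma: a $k$D range tree query takes $O(\log^{k} n)$ time. I will prove it by induction on $k$. The key fact is that a $1$D range-tree query on $n$ points visits only $O(\log n)$ nodes. The $k$D query makes $O(\log n)$ recursive calls into $(k-1)$D associated trees, each on at most $n$ points.

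\textbf{Base case $k=1$.} The tree is built by halving the index interval $[l,r]$ at $mid=\lfloor (l+r)/2\rfloor$, so its depth is $\lceil \log n\rceil+O(1)$. Standard canonical-decomposition reasoning then bounds the number of visited nodes.
\begin{itemize}
\item Let $v_{\mathrm{split}}$ be the deepest node whose interval contains both $low$ and $high$.
\item Above $v_{\mathrm{split}}$ the query follows a single root-to-node path.
\item Below $v_{\mathrm{split}}$ the query descends along two paths, one toward $low$ and one toward $high$.
\item At each level, at most one further node hanging off each path is fully contained in $[low,high]$. At such a node the recursion stops and returns the stored weight.
\end{itemize}
Hence at most $O(1)$ nodes are touched per level, so $O(\log n)$ nodes in total. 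Each visited node does $O(1)$ work (comparisons plus one addition on backtracking), which gives $O(\log n)$ time.

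I must also check that grouping points with equal coordinates into a single leaf does not break this. Leaves are determined by distinct coordinate values, and there are at most $n$ of them, so the depth bound still holds.

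\textbf{Inductive step.} Assume the bound for $k-1$. In the $k$D query of \Cref{def:kD range tree counting}, traversal of $T_1$ on the first coordinate is exactly a $1$D query. By the base-case argument it visits $O(\log n)$ nodes of $T_1$. Of these, $O(\log n)$ canonical nodes are fully contained in $[low_1,high_1]$. At each canonical node we issue one $(k-1)$D query on its associated tree $T_2$, which holds at most $n$ points. By induction each such query costs $O(\log^{k-1} n)$. The total is
\[O(\log n)\cdot O(\log^{k-1} n)+O(\log n)=O(\log^{k} n).\]
The hidden constant is $c^{k}$, matching the paper's convention of $c^{O(d)}$ constants.

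\textbf{Main obstacle.} The delicate point is the base-case claim that only $O(1)$ nodes per level are visited. The query rule as written recurses into the left child if $low$ lies in it and into the right child if $high$ lies in it. I must make sure that when both children partially overlap the range, the subsequent paths do not branch further. I will argue this by an invariant: below $v_{\mathrm{split}}$, every node on the left path has $high$ beyond its interval, so it either returns a fully covered right child immediately or descends only left. The right path is symmetric. If the informal rule is ambiguous, I will interpret it as this standard canonical decomposition, since that is what the construction implements.
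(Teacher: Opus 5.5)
Your induction is correct. The paper states this lemma without proof as a standard range-tree fact, and the closest argument it does give (the proof of \Cref{lem:query nodes}) rests on the same ingredients you use: depth at most $\lceil\log n\rceil+1$ per dimension, the observation that at a split one of the two subtrees returns after querying its root, and multiplying the per-dimension count across the $k$ dimensions. One small wording fix: in your invariant, a node on the path toward $low$ descends into whichever child contains $low$ (possibly the right child, in which case the left child is disjoint and not visited), not always into the left child.
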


Based on the properties of the range tree, we can derive the following lemma.

\begin{lemma}\label{lem:query nodes}
    Let $T$ denote a $k$D range tree constructed from a set of points $P = \{((x_t^i)_{i = 1}^{k}, w_t)\}$. If a $k$-dimensional query range $\prod_{i = 1}^{k}[low_i, high_i]$ satisfies $low_i \le \min_{x\in P}x^i$ or $high_i \ge \max_{x \in P}x^i$ for all $i \in [k]$, the query result is the sum of the weights of at most $\lceil \log n \rceil^k$ tree nodes. 
\end{lemma}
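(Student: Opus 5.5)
We argue by induction on the dimension $k$, which mirrors the recursive structure of the $k$D range tree in \Cref{def:kd range tree construction} and \Cref{def:kD range tree counting}. The key observation is that the hypothesis makes every one-dimensional query one-sided. For each coordinate $i$, either $low_i$ is at most the smallest coordinate, so the interval is effectively a prefix $[x_{\min},high_i]$, or $high_i$ is at least the largest coordinate, so it is effectively a suffix $[low_i,x_{\max}]$. A one-sided query on a balanced binary tree decomposes into at most one canonical node per level, rather than the usual two per level.

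\textbf{Base case $k=1$.} Suppose the query is a prefix $[x_{\min},high]$; the suffix case is symmetric. Follow the search path from the root toward the leaf containing $high$. At each node $p$ on this path, with children $\ell$ and $r$, there are two possibilities. If $high$ lies in $r$, then the whole of $\ell$ is inside the range, so we collect $\ell$ and descend into $r$. Otherwise we descend into $\ell$ and collect nothing. At the end we may also collect the final leaf (or the root itself, if it is fully covered).

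So at most one node is collected per level. The recursion halves the interval of distinct coordinates, so its depth is at most $\lceil\log n\rceil$, where $n$ bounds the number of distinct coordinates. One must check the bookkeeping: the collected nodes are the "left siblings" along the path together with possibly the terminal node. Their count is at most the number of levels, which we bound by $\lceil\log n\rceil$. The main care needed is the off-by-one at the leaf: either argue that the terminal node replaces the sibling collected at the last level, or accept a bound of $\lceil\log n\rceil+1$, which is consistent with the $(\lceil\log n\rceil+1)$ factors used in \Cref{alg: PDP_RSC}. I expect this off-by-one accounting to be the only delicate point.

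\textbf{Inductive step.} By the same path argument applied to the first coordinate of $T_1$, the query visits at most $\lceil\log n\rceil$ nodes of $T_1$ whose intervals lie entirely within the first-dimension range. At each such node, the algorithm issues a $(k-1)$-dimensional query $\prod_{i\ge2}[low_i,high_i]$ on the associated tree $T_2$. That tree is built on a subset $P'\subseteq P$ of size at most $n$.

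The hypothesis carries over to $P'$: we have $\min_{P'}x^i\ge\min_P x^i$ and $\max_{P'}x^i\le\max_P x^i$. Hence each remaining range is still one-sided relative to $P'$. The induction hypothesis therefore gives at most $\lceil\log n\rceil^{k-1}$ nodes for each $T_2$ query, and multiplying yields $\lceil\log n\rceil^k$. Finally, the returned value is by construction the sum of the weights of exactly these canonical nodes, because internal node weights are sums over their children. This completes the argument.
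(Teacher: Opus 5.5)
Your strategy matches the paper's: each coordinate range is one-sided, so each one-dimensional search contributes at most one canonical node per level, and the counts multiply over the $k$ nested levels of the structure. Your check that the hypothesis passes to the subsets $P'$ stored in the associated trees is a detail the paper leaves implicit.

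The gap is the off-by-one you flag but leave open. It is exactly what separates the stated $\lceil\log n\rceil^k$ from $(\lceil\log n\rceil+1)^k$. Two claims in your base case are wrong as written:
\begin{itemize}
\item A root-to-leaf path has up to $\lceil\log n\rceil+1$ levels of nodes, not $\lceil\log n\rceil$.
\item Your walk follows the search path all the way down to the leaf of $high$, so it can collect two nodes on one level. For coordinates $1,\dots,8$ and $high=6$ it collects $[1,4]$, $\{5\}$ and $\{6\}$, and the last two lie on the same level. So ``at most one node per level'' fails for the walk you describe.
\end{itemize}
Your fallback of accepting $\lceil\log n\rceil+1$ per dimension does not prove the lemma as stated. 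Also, the $(\lceil\log n\rceil+1)$ factors in the algorithms come from the depth partition in the privacy argument. This lemma supplies a different quantity: the $\lceil\log n\rceil^{2d}$ node count used in the utility bounds.

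The missing step, which the paper uses, is that the query returns as soon as it reaches a fully covered node.
\begin{itemize}
\item The root therefore contributes only if the whole range is covered, and then it is the only node. This case stays within the bound provided $n\ge 2$.
\item Otherwise nothing is collected at the root level, and at each of the at most $\lceil\log n\rceil$ lower levels at most one node is collected. If the search turns right at an uncovered node $p$, it collects the left child $\ell$. The right child $r$ cannot be fully covered, since $p$'s interval is the union of those of $\ell$ and $r$.
\end{itemize}
If you prefer to keep your walk, the same conclusion follows another way. When the root is not covered, the search path (of length $L\le\lceil\log n\rceil$) must contain a left turn, because an all-right path ends at the leaf of $x_{\max}$ and would force $high\ge x_{\max}$. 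So you collect at most $L-1$ left siblings plus the terminal node. With this fix in the base case, and in the first-coordinate step of your induction, your argument gives $\lceil\log n\rceil^k$.
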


\begin{proof}
    Note that the depth of the tree (i.e., the maximum number of nodes along the path from the root node to any node in the tree) in each dimension of the $k$D range tree is at most $(\lceil \log n\rceil + 1)$ by \Cref{def:kd range tree construction}. For each $i\in [k]$, consider the process of querying the range $[low_i, high_i]$ on the $(k-i+1)$D range trees.
    
    If $low_i \le \min_{x\in P}x^i$ and $high_i \ge \max_{x \in P}x^i$, only the root node of each $(k -i+1)$D range tree can be queried. Otherwise, the weight of the root node is not included in the sum. Moreover, when querying the left and right subtrees simultaneously on each $(k - i + 1)$D range tree, at least one of the subtrees will return after performing a query at its root node. Therefore, for each $1$D range tree, at most $\lceil \log n \rceil$ tree nodes' weights are summed; For each $(k - i + 1)$D range tree satisfying $2 \leq i \leq k$, at most $\lceil \log n \rceil$ tree nodes will be queried. 
    
    After aggregating the results over \(k\) dimensions, the query result is the sum of the weights of at most $\lceil \log n \rceil^k$ tree nodes.
\end{proof}

\subsection{Tight bound on the global sensitivity of subgraph counting}\label{appendix:tight bound of GS}
In \Cref{sec:upper bound}, we used $\GS_{f_H}$ to denote the global sensitivity of subgraph counting. We present an exact method for computing the global sensitivity $\GS_{f_H}$, from which we derive a simple and asymptotically tight bound. Table~\ref{tab:pattern graph} summarizes the global sensitivity of common pattern graphs.%

\renewcommand{\arraystretch}{1.25} %
\begin{table}[H]
    \centering
    \caption{Global sensitivity $\GS_{f_H}$ of some common pattern graphs $H$}
    \begin{tabular}{cc} 
    \toprule
    Pattern Graph & $\GS_{f_H}$  \\   
    \midrule
    Edge & $1$ \\
    Triangle & $n-2$ \\  
    $k$-Star & $2\binom{n-2}{k-1}$ \\
    $k$-Clique & $\binom{n-2}{k-2}$ \\
    \bottomrule
    \end{tabular}
    \label{tab:pattern graph}
\end{table}

\begin{definition}[Automorphism Group]
    An \emph{automorphism} of a graph $G = (V, E)$ is a permutation $\sigma$ of $V$ such that $(u, v) \in E$ if and only if $(\sigma(u), \sigma(v)) \in E$. The set of all automorphisms of $G$ forms a group under composition, called the \emph{automorphism group} of $G$, denoted by $\aut(G)$.
\end{definition}

\begin{lemma}\label{lem:GS tight bound}
For $n$-vertex graphs, given a pattern graph $H$, the global sensitivity $\GS_{f_H}$ is given by \[\GS_{f_H} = \binom{n - 2}{h - 2}\cdot \frac{2m_H(h-2)!}{|\aut(H)|} = \Theta\left(n^{h-2}\right),\]
where the hidden constant depends only on $h$.
\end{lemma}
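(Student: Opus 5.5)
We prove \Cref{lem:GS tight bound} by first reducing the global sensitivity to a single extremal pair of graphs, and then counting occurrences of $H$ through a fixed edge of $K_n$.

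\textbf{Reduction to the complete graph.} Fix any pair of neighboring graphs $G \sim G'$ on $n$ vertices, with $G'$ equal to $G$ plus one edge $e=\{u,v\}$. Then $f_H(G')-f_H(G)$ equals the number of occurrences of $H$ in $G'$ that contain $e$. Every such occurrence is also an occurrence in $K_n$ that contains $e$. Hence the difference is at most the number of occurrences of $H$ in $K_n$ containing $e$, which is $f_H(K_n)-f_H(K_n-e)$. This bound is attained by the pair $(K_n-e, K_n)$. By symmetry of $K_n$, the count does not depend on which edge $e$ is chosen. So $\GS_{f_H}$ equals the number $N_e$ of copies of $H$ in $K_n$ that use a fixed edge $e=\{u,v\}$. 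This matches the choice in \Cref{alg: PDP_RSC}.

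\textbf{Counting $N_e$ by double counting.} The total number of copies of $H$ in $K_n$ is
\[
f_H(K_n)=\binom{n}{h}\frac{h!}{|\aut(H)|},
\]
since injective maps $V_H\to V$ number $n!/(n-h)!$, and each copy arises from exactly $|\aut(H)|$ of them. Each copy has $m_H$ edges, and $K_n$ has $\binom n2$ edges, each lying in the same number $N_e$ of copies by symmetry. Double counting pairs (copy, edge of the copy) gives
\[
N_e\binom n2 = m_H\binom nh\frac{h!}{|\aut(H)|}.
\]
Solving,
\[
N_e=\frac{m_H\,h!\,\binom nh}{|\aut(H)|\binom n2}.
\]
Using the identity $\binom nh h(h-1)=\binom{n-2}{h-2}n(n-1)$, this simplifies to
\[
N_e=\binom{n-2}{h-2}\frac{2m_H(h-2)!}{|\aut(H)|},
\]
which is the claimed formula.

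\textbf{Asymptotics.} For fixed $h$, we have $\binom{n-2}{h-2}=\Theta(n^{h-2})$. The factor $2m_H(h-2)!/|\aut(H)|$ is a positive constant depending only on $H$: it is nonzero because $m_H\ge1$ for a connected $H$ with $h\ge2$. This gives $\GS_{f_H}=\Theta(n^{h-2})$. As a sanity check, the formula reproduces the table: for a triangle it gives $\binom{n-2}{1}\cdot 6/6=n-2$, and for an edge it gives $1$.

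\textbf{Main obstacle.} The only delicate point is the first step. One must argue that the maximum change is attained at the pair $(K_n-e,K_n)$, i.e., that the count of $H$-copies through $e$ is monotone under adding edges. This holds because we count subgraphs (not induced subgraphs), so adding edges never destroys an occurrence. The remaining steps are routine combinatorics.
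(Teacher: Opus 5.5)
Your proof is correct, and it reaches the formula by a somewhat different counting route than the paper. The paper takes $\GS_{f_H}=f_H(K_n)-f_H(K_n-e)$ as evident. It then counts copies through a fixed pair $\{i,j\}$ directly: there are $2m_H$ ways to send an oriented edge of $H$ onto $(i,j)$ and $\binom{n-2}{h-2}(h-2)!$ ways to place the remaining $h-2$ vertices. Each copy arises from exactly $|\aut(H)|$ of these labelled embeddings.

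You instead make the extremal-pair step explicit: non-induced counts are monotone, and the maximum is attained at $(K_n-e,K_n)$. You then obtain $N_e$ by double counting (copy, edge) pairs against the total $f_H(K_n)=\binom{n}{h}h!/|\aut(H)|$, using edge-transitivity of $K_n$. The paper's direct count is self-contained and needs neither the total count nor the symmetry argument. Yours gets the constant as a byproduct of the formula for $f_H(K_n)$ (which the paper states separately as \Cref{lem:ub f_H(G)}). It also turns the reduction, which the paper only asserts, into a proved step.

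One small point: the statement asks for a hidden constant depending only on $h$, but you only say it depends on $H$. You can close this with $h-1\le m_H\le\binom{h}{2}$ and $1\le|\aut(H)|\le h!$, as the paper does. These pin $2m_H(h-2)!/|\aut(H)|$ between $2/h$ and $h!$. Alternatively, note that there are only finitely many $H$ on $h$ vertices.
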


\begin{proof}
    Note that $\GS_{f_H}$ is actually the number of occurrences of $H$ that contain a specific vertex pair $(i,j)$ in the complete graph $K_n$. This equals the number of ways to choose and permute the remaining $h-2$ vertices, multiplied by the number of ways to map an edge of $H$ to $(i,j)$, divided by $|\aut(H)|$ to account for symmetries. Since $h - 1 \le m_H \le \binom{h}{2}$ and $1 \le |\aut(H)| \le h!$, the expression for $\GS_{f_H}$ yields a tight bound up to constants depending only on $h$.
\end{proof}

Following analogous reasoning, we can obtain an upper bound for $f_H(G)$.

\begin{lemma}\label{lem:ub f_H(G)}
For any $n$-vertex graph $G$, given a pattern graph $H$, it holds that 
\[
f_H(G) \le f_H(K_n) = \binom{n}{h}\cdot \frac{h!}{|\aut(H)|} = \Theta\left(n^{h}\right),
\]
where the hidden constant depends only on $h$.
\end{lemma}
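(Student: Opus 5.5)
The plan is to mimic the counting argument used in the proof of \Cref{lem:GS tight bound}: count labelled embeddings of $H$ into $K_n$, divide by the overcount coming from automorphisms, and first note the monotonicity. For the inequality $f_H(G)\le f_H(K_n)$, I would observe that every occurrence of $H$ in $G$ is a subgraph of $G$, hence also a subgraph of $K_n$ on the same vertex set. Distinct occurrences stay distinct, since an occurrence is determined by its vertex and edge set. So the occurrences in $G$ inject into those in $K_n$.

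For the exact value, I would count injective maps $\phi:V_H\to V(K_n)$. These are homomorphisms automatically, since every pair of vertices in $K_n$ is adjacent. Their number is $n(n-1)\cdots(n-h+1)=\binom{n}{h}h!$.

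Each such $\phi$ determines an occurrence, namely the image graph $(\phi(V_H),\phi(E_H))$. Two maps $\phi,\phi'$ give the same occurrence exactly when $\phi^{-1}\circ\phi'$ is an automorphism of $H$. To see this, note that both maps have the same image vertex set, so $\phi^{-1}\circ\phi'$ is a permutation of $V_H$, and it preserves edges because both maps send $E_H$ onto the same edge set. So $\aut(H)$ acts freely on the set of maps, and the orbits are exactly the occurrences. Each fiber therefore has size $|\aut(H)|$, which gives $f_H(K_n)=\binom{n}{h}h!/|\aut(H)|$.

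For the asymptotics, I would use $1\le|\aut(H)|\le h!$. This gives $\binom{n}{h}\le f_H(K_n)\le\binom{n}{h}h!$. Together with $\binom{n}{h}=\Theta(n^h)$ for fixed $h$, this yields $\Theta(n^h)$ with constants depending only on $h$.

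The only step needing care is the fiber-size claim, specifically that the action of $\aut(H)$ is free and transitive on each fiber. Freeness is immediate from the injectivity of $\phi$, so I expect no real obstacle.
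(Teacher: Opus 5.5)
Your proposal is correct and follows essentially the route the paper intends. The paper gives no separate proof and only says the bound follows by reasoning analogous to \Cref{lem:GS tight bound}: choose and order the $h$ vertices, then divide by $|\aut(H)|$. Your argument carries out exactly that count, and it also spells out the fiber/orbit step and the monotonicity $f_H(G)\le f_H(K_n)$, which the paper leaves implicit.
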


\subsection{Discussions on the approximated local sensitivity $\widetilde{\HS}_{f_H}(G)$}\label{subsection:Discussion on HS}

There is no explicit upper bound on $\widetilde{\HS}_{f_H}(G)$ for all $H$, and its value typically varies depending on $H$ and $G$. For some $H$, $\widetilde{\HS}_{f_H}(G)$ can be relatively easy to estimate, while for others, it presents more significant challenges. Nevertheless, our results remain practically significant. For convenience, we use $\triangle$ and $*$ to represent triangle and $2$-star in the following lemmas, respectively.

\begin{lemma}\label{lem:HS of triangle}
    For any constant $\beta \in (0, 1)$, it holds that $\widetilde{\HS}_{f_{\triangle}}(G) \leq  \LS_{f_{\triangle}}(G) + \frac{\ln{(2/(\beta\delta'))}}{\veps'}(\frac{\ln{(2/(\beta\delta'))}}{\veps'} + 1) \le d_{\max}(G) + \frac{\ln{(2/(\beta\delta'))}}{\veps'}(\frac{\ln{(2/(\beta\delta'))}}{\veps'} + 1)$ with probability at least $1 - \beta$. 
\end{lemma}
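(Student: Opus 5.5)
The proof unrolls \Cref{alg: Estimating higher-order private local sensitivity} for $H=\triangle$, where $m_H=3$, so there are only two noisy iterations ($k=2$ and $k=1$). I would bound the deterministic quantities $f^{(k)}_{\triangle}(G)$ directly, control each of the two Laplace noises by \Cref{fact:laplace} with failure probability $\beta/2$, and finish with a union bound. Throughout I write $L=\ln(2/(\beta\delta'))=\ln(1/\delta')+\ln(2/\beta)$.

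\textbf{Step 1 (deterministic quantities).} I would show $f^{(3)}_{\triangle}(G)=1$, $f^{(2)}_{\triangle}(G)\le 1$, and $f^{(1)}_{\triangle}(G)=\LS_{f_\triangle}(G)\le d_{\max}(G)$.
\begin{itemize}
\item For $k=3$: three pairs lie in a common triangle only if they are exactly its three edges. Taking such an $S$ gives exactly one triangle, so $f^{(3)}_{\triangle}(G)=1$.
\item For $k=2$: two distinct vertex pairs span at least three vertices. Any triangle containing both pairs is determined by those vertices, so at most one triangle contains both, giving $f^{(2)}_{\triangle}(G)\le 1$.
\item For $k=1$: the triangles containing a pair $\{u,v\}$ in $(V,E\cup\{uv\})$ correspond to common neighbours of $u$ and $v$ in $G$. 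There are at most $d_{\max}(G)$ of these, which is also the standard bound on $\LS_{f_\triangle}(G)$.
\end{itemize}

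\textbf{Step 2 (iteration $k=2$).} Here $\widetilde{\HS}^{(3)}=1$, so
\[
\widetilde{\HS}^{(2)} = f^{(2)}_\triangle(G) + \frac{\ln(1/\delta')}{\veps'} + Y_2, \qquad Y_2\sim\Lap(1/\veps').
\]
By \Cref{fact:laplace} with $t=\ln(2/\beta)$, we have $|Y_2|\le \ln(2/\beta)/\veps'$ with probability at least $1-\beta/2$. On this event $\widetilde{\HS}^{(2)}\le 1+L/\veps'$, and also $\widetilde{\HS}^{(2)}\ge 1-\ln(2/\beta)/\veps'$.

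\textbf{Step 3 (iteration $k=1$).} Here
\[
\widetilde{\HS}^{(1)} = \LS_{f_\triangle}(G) + \widetilde{\HS}^{(2)}\frac{\ln(1/\delta')}{\veps'} + Y_1, \qquad Y_1\sim\Lap(\widetilde{\HS}^{(2)}/\veps').
\]
Conditioning on $\widetilde{\HS}^{(2)}$ and applying \Cref{fact:laplace} again with $t=\ln(2/\beta)$, we get $|Y_1|\le \widetilde{\HS}^{(2)}\ln(2/\beta)/\veps'$ with probability at least $1-\beta/2$. Then
\[
\widetilde{\HS}^{(1)}\le \LS_{f_\triangle}(G)+\widetilde{\HS}^{(2)}\cdot\frac{L}{\veps'} \le \LS_{f_\triangle}(G)+\frac{L}{\veps'}\left(\frac{L}{\veps'}+1\right).
\]
A union bound over the two events gives overall probability at least $1-\beta$. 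The second claimed inequality then follows from $\LS_{f_\triangle}(G)\le d_{\max}(G)$ in Step 1.

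\textbf{Main obstacle.} The only delicate point is that the scale of $Y_1$ is the random quantity $\widetilde{\HS}^{(2)}$, and the Laplace scale needs to be positive. I would handle this by interpreting the scale as $|\widetilde{\HS}^{(2)}|$, or by noting it is positive on the good event of Step 2 for the relevant parameter range. The conditional tail bound is then applied given $Y_2$, and the upper bound $\widetilde{\HS}^{(2)}\le 1+L/\veps'$ is used only after multiplying by the nonnegative factor $L/\veps'$.
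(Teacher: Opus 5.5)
Your proposal is correct and follows the paper's own proof. Like the paper, you bound $f^{(3)}_{\triangle},f^{(2)}_{\triangle}\le 1$ and $f^{(1)}_{\triangle}=\LS_{f_\triangle}(G)\le d_{\max}(G)$, apply \Cref{fact:laplace} to each of the two Laplace noises with failure probability $\beta/2$, and take a union bound. Your remark about the random, possibly nonpositive, scale $\widetilde{\HS}^{(2)}_{f_\triangle}(G)$ is extra care the paper skips; the lower bound on $\widetilde{\HS}^{(2)}_{f_\triangle}(G)$ in Step 2 is never used, and it implicitly assumes $f^{(2)}_{\triangle}(G)=1$.
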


\begin{proof}
    \citet{karwa2011private} provided a proof for the case of $k$-triangles. For clarity, we have rewritten the proof for triangles.
 
 If $H$ is a triangle, then $f^{(1)}_{\Delta}(G) = \LS_{f_{\triangle}}(G) \leq d_{\max}(G)$, $f^{(2)}_{\Delta}(G) \le 1$, $f^{(3)}_{\Delta}(G) \le 1$. According to the algorithm \Cref{alg: Estimating higher-order private local sensitivity}, it holds that 
 \[
 \begin{aligned}
 \widetilde{\HS}^{(2)}_{f_{\Delta}}(G) \le& f^{(2)}_{\Delta}(G) + \frac{\ln{(1/\delta')}}{\veps'}+|\Lap(1/\veps')| \leq 1 +\frac{\ln{(1/\delta')}}{\veps'}+|\Lap(1/\veps')| \\ \widetilde{\HS}_{f_{\Delta}}(G) =& \widetilde{\HS}^{(1)}_{f_{\Delta}}(G) \le f^{(1)}_{\Delta}(G) + \widetilde{\HS}^{(2)}_{f_{\Delta}}(G)\frac{\ln{(1/\delta')}}{\veps'}+\left|\Lap\left(\widetilde{\HS}^{(2)}_{f_{\Delta}}(G)/\veps'\right)\right| \\ \leq& \LS_{f_{\triangle}}(G) +\widetilde{\HS}^{(2)}_{f_{\Delta}}(G)\frac{\ln{(1/\delta')}}{\veps'}+\left|\Lap\left(\widetilde{\HS}^{(2)}_{f_{\Delta}}(G)/\veps'\right)\right|.
 \end{aligned}
 \]
 We have $\widetilde{\HS}^{(2)}_{f_{\Delta}}(G) \le \frac{\ln{(2/(\beta\delta'))}}{\veps'} + 1$ and $\widetilde{\HS}_{f_{\triangle}}(G) \leq  \LS_{f_{\triangle}}(G) + \frac{\ln{(2/(\beta\delta'))}}{\veps'}(\frac{\ln{(2/(\beta\delta'))}}{\veps'} + 1) \le d_{\max}(G) + \frac{\ln{(2/(\beta\delta'))}}{\veps'}(\frac{\ln{(2/(\beta\delta'))}}{\veps'} + 1)$ with probability at least $1-\beta$ by \Cref{fact:laplace} and the union bound. 
\end{proof}

\begin{lemma}\label{lem:HS of 2-star} For any constant $\beta \in (0, 1)$, it holds that $\widetilde{\HS}_{f_{*}}(G) \leq \LS_{f_{*}}(G) + \frac{\ln{(1/(\beta\delta'))}}{\veps'}\textbf{} \leq  2d_{\max}(G) + \frac{\ln{(1/(\beta\delta'))}}{\veps'}$ with probability at least $1-\beta$.
\end{lemma}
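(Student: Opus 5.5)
For the 2-star $H=*$ we have $m_H=2$, so \Cref{alg: Estimating higher-order private local sensitivity} runs a single iteration ($k=1$). The plan is to mirror the proof of \Cref{lem:HS of triangle}, in three steps.

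\textbf{Step 1: the second-order term.} The algorithm initializes $\widetilde{\HS}^{(2)}_{f_*}(G)=f^{(2)}_*(G)$, which is deterministic. Two distinct vertex pairs lie together in a 2-star only if they share a vertex, and then they form exactly one 2-star. Hence $f^{(2)}_*(G)\le 1$. For the degenerate case where no two pairs share a vertex, adding $S$ gives no 2-star and the quantity is $0$. So $\widetilde{\HS}^{(2)}_{f_*}(G)\le 1$.

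\textbf{Step 2: the single iteration.} The one step of the loop gives
\[
\widetilde{\HS}_{f_*}(G)=f^{(1)}_*(G)+\widetilde{\HS}^{(2)}_{f_*}(G)\frac{\ln(1/\delta')}{\veps'}+\Lap\bigl(\widetilde{\HS}^{(2)}_{f_*}(G)/\veps'\bigr).
\]
Using $f^{(1)}_*=\LS_{f_*}$ and Step 1, this is at most $\LS_{f_*}(G)+\frac{\ln(1/\delta')}{\veps'}+|\Lap(1/\veps')|$, where we scale the noise by its (at most $1$) parameter. By \Cref{fact:laplace}, $|\Lap(1/\veps')|\le \ln(1/\beta)/\veps'$ with probability at least $1-\beta$. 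Combining the logarithms gives the bound $\LS_{f_*}(G)+\frac{\ln(1/(\beta\delta'))}{\veps'}$. Only one noise draw appears, so no union bound is needed; this explains why the statement has $\beta$ rather than $\beta/2$.

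\textbf{Step 3: bounding the local sensitivity.} We show $\LS_{f_*}(G)\le 2d_{\max}(G)$. Toggling the edge $(u,v)$ changes exactly the 2-stars that use it. Each such 2-star has $(u,v)$ plus one more edge at $u$ or at $v$. This gives at most $\deg(u)+\deg(v)\le 2d_{\max}$ such 2-stars, up to the $-1$ adjustments when $(u,v)$ is already present.

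\textbf{Main obstacle.} The only mildly delicate point is the careful definition of $f^{(2)}_*$ in Step 1, namely checking that forcing two pairs yields at most one occurrence. The rest follows routinely.
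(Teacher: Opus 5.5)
Your proposal is correct and follows the paper's proof essentially step for step. You bound $f^{(2)}_*(G)\le 1$, substitute into the single loop iteration $k=1$, and apply the Laplace tail bound with $t=\ln(1/\beta)$. You also justify $\LS_{f_*}(G)\le 2d_{\max}(G)$ explicitly, which the paper only asserts.
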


\begin{proof}

 If $H$ is a $2$-star, then $f^{(1)}_{*}(G) = \LS_{f_{*}}(G) \leq 2d_{\max}(G)$, $f^{(2)}_{*}(G) \leq 1$. According to \Cref{alg: Estimating higher-order private local sensitivity}, $\widetilde{\HS}_{f_{*}}(G) \le \LS_{f_*}(G) + \frac{\ln{(1/\delta')}}{\veps'}+|\Lap(1/\veps')| \leq  2d_{\max}(G)+\frac{\ln{(1/\delta')}}{\veps'}+|\Lap(1/\veps')|$. We have $\widetilde{\HS}_{f_{\Delta}}(G) \le \LS_{f_*}(G) + \frac{\ln{(1/(\beta\delta'))}}{\veps'} \leq 2d_{\max}(G) + \frac{\ln{(1/(\beta\delta'))}}{\veps'}$ with probability at least $1-\beta$ by \Cref{fact:laplace}.   
\end{proof}

\section{Deferred proofs from \Cref{Preliminaries}}\label{sec:defer_preliminary}

\subsection{Proof of \Cref{lem:lower bounds for disc of private orthogonal range counting}}\label{appendix:proof for disc of lower bounds for private orthogonal range counting}

For a matrix $\bA$ with $n$ columns, and a set $S\subseteq [n]$ we use $\bA|_S$ to denote the submatrix of $\bA$ consisting of the columns corresponding to elements of $S$. We adopt the following definition and result from \citet{muthukrishnan2012optimal}, with notation slightly modified. Note that the set $C_\alpha$ for $\alpha = 1$ is equivalent to the set $\{\pm 1\}^n$.

\begin{definition}
    For any $\bA\in \mathbb{R}^{m\times n}$, we define $\herdisc_{C_\alpha}(\bA) = \max\limits_{S\subseteq [n]} \disc_{C_\alpha}(\bA|_S).$
\end{definition}

\begin{lemma}[Lemma 5 in \citet{muthukrishnan2012optimal}]
    \label{lem:discpalpha-f}
    Let $f(s) = \max_{S\subseteq [n]: |S| \leq s}{\disc_{C_\alpha}(\bA|_S)}.$ Then $\disc_{C_1}(\bA) \leq \sum_{i= 0}^\infty{f((1 - \alpha)^in)}$.
\end{lemma}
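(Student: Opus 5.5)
The plan is the standard iterated partial-coloring argument: build a full $\pm1$ coloring of $[n]$ in rounds, each round using a partial coloring from $C_\alpha$ on the columns still uncolored.

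Throughout, $C_\alpha$ for a submatrix $\bA|_S$ is read relative to its own column count. That is, it is the set of $\chi\in\{0,\pm1\}^{S}$ with $\|\chi\|_1\ge \alpha|S|$. Note that $f$ is monotone non-decreasing in $s$ by definition, since it is a maximum over a growing family of sets.

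\emph{Round structure.} Set $S_0=[n]$. Given the uncolored set $S_i\neq\emptyset$, pick $\chi_i\in C_\alpha$ (for $\bA|_{S_i}$) attaining $\|\bA|_{S_i}\chi_i\|_\infty=\disc_{C_\alpha}(\bA|_{S_i})\le f(|S_i|)$.
\begin{itemize}
\item Permanently color every coordinate of $S_i$ where $\chi_i$ is nonzero with that sign.
\item Let $S_{i+1}\subseteq S_i$ be the coordinates where $\chi_i$ vanishes.
\end{itemize}
Since $\|\chi_i\|_1\ge\alpha|S_i|$, we get $|S_{i+1}|\le(1-\alpha)|S_i|$, and by induction $|S_i|\le(1-\alpha)^i n$. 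Moreover $\|\chi_i\|_1>0$ whenever $S_i\neq\emptyset$, since $\|\chi_i\|_1\ge\alpha|S_i|>0$ and the norm is integer-valued. So $|S_i|$ strictly decreases and the process stops after finitely many rounds with $S_k=\emptyset$.

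\emph{Combining the rounds.} Extend each $\chi_i$ by zeros to a vector $\tilde\chi_i\in\{0,\pm1\}^n$. The supports $S_i\setminus S_{i+1}$ are pairwise disjoint and together cover $[n]$. Hence $\chi=\sum_{i<k}\tilde\chi_i\in\{\pm1\}^n=C_1$, and $\bA\tilde\chi_i=\bA|_{S_i}\chi_i$. By the triangle inequality,
\[
\disc_{C_1}(\bA)\le\|\bA\chi\|_\infty\le\sum_{i<k}\|\bA|_{S_i}\chi_i\|_\infty\le\sum_{i<k}f(|S_i|)\le\sum_{i<k}f\bigl((1-\alpha)^i n\bigr).
\]
The last step uses the monotonicity of $f$. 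Extending the sum to all $i\ge0$ only adds nonnegative terms, which gives the bound stated in \Cref{lem:discpalpha-f}.

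\emph{Main obstacle.} There is no substantial difficulty here. The only point needing care is the convention just fixed: $C_\alpha$ must be taken relative to the submatrix's column count $|S_i|$, so that each round guarantees a constant-fraction shrinkage of the uncolored set. With that convention both the shrinkage bound and termination follow immediately.
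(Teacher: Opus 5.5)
Your proof is correct and is essentially the standard iterated partial-coloring argument behind Lemma 5 of \citet{muthukrishnan2012optimal}, which the paper cites without reproducing. It shrinks the uncolored set by a factor $(1-\alpha)$ each round and sums the per-round bounds by the triangle inequality. Your convention that $C_\alpha$ for $\bA|_S$ is measured relative to $|S|$ is the reading under which the lemma is meaningful.
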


\textbf{The case for $d = O(1)$} %
Let an \emph{anchored axis-parallel box} denote an element of the form $[0, b_0] \times \cdots \times [0, b_d]$. By the main results in \citet{MN15}, we obtain the following lemma.

\begin{lemma}[\citet{MN15}] \label{lem:herdisc_d=O(1)}
Let $d = O(1)$. For any $n$ large enough, there exists a set of $n$ points in $[n]^d$ and $n$ anchored axis-parallel boxes $B_1, B_2, \dots, B_n$ such that the following holds. Let $\bA$ denote the incidence matrix of the collection of set $\{B_j \cap P, j\in [n]\}$. Then $\herdisc_{C_1}(\bA) = \Omega(\log^{d-1} n)$.
\end{lemma}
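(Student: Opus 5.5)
Although this is the cited result of \citet{MN15}, I would reconstruct it from the factorization-norm method of Matou\v{s}ek and Nikolov. The plan has four parts: fix a grid instance, identify its incidence matrix as a Kronecker power, lower-bound $\gamma_2$ of that matrix, and convert the $\gamma_2$ bound into a hereditary discrepancy bound.

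\textbf{The instance and its matrix.} Set $m=\lfloor n^{1/d}\rfloor$ and let the point set be the grid $[m]^d\subseteq[n]^d$, which has $N=m^d=\Theta(n)$ points since $d=O(1)$. For each grid point $g$, take the anchored box $[0,g_1]\times\cdots\times[0,g_d]$. A point $p$ lies in the box for $g$ exactly when $p_i\le g_i$ for every coordinate $i$. Hence the incidence matrix is $T_m^{\otimes d}$, where $T_m$ is the $m\times m$ lower-triangular all-ones matrix, i.e.\ the 1D prefix-sum matrix. To reach exactly $n$ points and $n$ boxes, I would add $n-N$ extra points in $[n]^d$ lying outside every box, and duplicate boxes. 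Neither change can decrease hereditary discrepancy, because the extra points can be discarded by restricting to a subset $S$ and duplicate rows do not change the sup-norm.

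\textbf{Bounding $\gamma_2$.} I would use $\gamma_2(\bM)=\min_{\bM=UV}\|U\|_{2\to\infty}\|V\|_{1\to2}$. Two facts are needed.
\begin{itemize}
\item $\gamma_2(T_m)=\Theta(\log m)$. The lower bound is the classical one for the triangular truncation operator, obtained through the trace-norm/duality characterization $\gamma_2(\bM)\ge \|P^{1/2}\bM Q^{1/2}\|_{*}$ with uniform diagonal weights $P,Q$.
\item $\gamma_2$ is multiplicative under Kronecker products: $\gamma_2(A\otimes B)=\gamma_2(A)\gamma_2(B)$. The upper bound is immediate by tensoring factorizations. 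The lower bound follows because the dual certificates (nuclear norms with product weights) also tensorize.
\end{itemize}
Together these give $\gamma_2(T_m^{\otimes d})=\Omega(\log m)^d=\Omega(\log^d N)$, with the constants $d^{-d}$ absorbed since $d=O(1)$.

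\textbf{From $\gamma_2$ to hereditary discrepancy.} I would invoke the Matou\v{s}ek--Nikolov--Talwar inequality $\herdisc(\bM)\ge \Omega(\gamma_2(\bM)/\log(\#\text{rows}))$. Its proof goes through the determinant lower bound, or a spectral lower bound applied to a well-chosen restriction, combined with the duality characterization of $\gamma_2$. Here the number of rows is $N$, so I obtain $\herdisc_{C_1}(\bA)=\Omega(\log^d N/\log N)=\Omega(\log^{d-1}n)$. This uses that $C_1=\{\pm1\}^n$, so $\herdisc_{C_1}$ coincides with the usual hereditary discrepancy.

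\textbf{Main obstacle.} The hard step is the last one. Comparing $\herdisc$ with $\gamma_2$ up to a single logarithmic loss is deep: it requires the dual SDP for $\gamma_2$ together with a restricted-invertibility or determinant argument. I would cite it from \citet{MN15} rather than reprove it. The exact $\Theta(\log m)$ value of $\gamma_2(T_m)$ is the other nontrivial ingredient, and I would also quote it from there.
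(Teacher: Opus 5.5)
Your outline is correct and is essentially the Matou\v{s}ek--Nikolov argument that the paper invokes only by citation. The steps are: the grid $[m]^d$ with corner-anchored boxes has incidence matrix $T_m^{\otimes d}$; $\gamma_2(T_m)=\Theta(\log m)$ together with multiplicativity of $\gamma_2$ gives $\Omega(\log^d n)$; the inequality $\gamma_2(\bA)=O(\log N)\cdot\herdisc(\bA)$ then costs a single logarithm. Your padding to exactly $n$ points and $n$ boxes is also sound, since zero columns and extra rows cannot decrease hereditary discrepancy.
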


By an approach similar to that in \citet{muthukrishnan2012optimal}, we derive the following lemma.

\begin{lemma}
\label{lem:disc-A-d=O(1)}
Let $d = O(1)$. For any $n$ large enough, there exists a set of $n$ points $P$ and $n$ axis-parallel boxes $B_1, B_2, \dots, B_n$  with a non-empty common intersection such that the following holds. Let $\bA$ denote the incidence matrix of the collection of set $\{B_j \cap P, j\in [n]\}$. Then for $\alpha = \Omega(1)$, $\disc_{C_\alpha}(\bA) = \Omega(\log^{d-1} n)$.
\end{lemma}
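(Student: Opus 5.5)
We start from \Cref{lem:herdisc_d=O(1)}, which supplies a point set $P_0\subseteq[n]^d$ of $n$ points and $n$ anchored boxes $B_1,\dots,B_n$ whose incidence matrix $\bA_0$ has $\herdisc_{C_1}(\bA_0)=\Omega(\log^{d-1}n)$. The anchored boxes $[0,b_1]\times\cdots\times[0,b_d]$ all contain the origin, and by translating slightly (e.g.\ shifting all points to have coordinates in $[1,n]^d$ and letting each box start at $0$) they have the common point $0$ in their intersection. Thus the box family already satisfies the non-empty common intersection requirement.

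The obstacle is that the hereditary statement gives large $C_1$-discrepancy only for some column subset $S$, whereas we need $C_\alpha$-discrepancy of the full matrix. We therefore first restrict to a good subset. Let $S^\star\subseteq[n]$ attain $\herdisc_{C_1}(\bA_0)$, and take $P=\{p_i:i\in S^\star\}$. The rows of the new incidence matrix $\bA=\bA_0|_{S^\star}$ are still indicator vectors of $B_j\cap P$, and the boxes are unchanged. Since $|S^\star|\le n$, the bound $\Omega(\log^{d-1}n)$ is also $\Omega(\log^{d-1}|S^\star|)$ provided $|S^\star|\ge n^{c}$. We argue this holds: a set of $s$ columns has $C_1$-discrepancy $O(\log^{d-1/2}s)$ or even trivially at most $s$, so $|S^\star|\ge\Omega(\log^{d-1}n)$. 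This alone is too weak, so to keep the parameters clean we instead recast the statement in terms of $n':=|S^\star|$ via the standard fact that the hard instance of \citet{MN15} is hereditary at every scale. The cleaner route is to take the instance of size $N$ and use the lower bound at size $n$ directly: the lower bound in \citet{MN15} holds for $\disc_{C_1}$ of the full $n$-point instance, which is what we will actually use.

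Next we transfer from $C_1$ to $C_\alpha$ via \Cref{lem:discpalpha-f}. Suppose for contradiction that $\disc_{C_\alpha}(\bA|_S)\le D$ for every $S$ of the relevant sizes, with $D=o(\log^{d-1}n)$. Then $f(s)\le D$ for all $s$, and since the upper bound $O(\log^{d-1/2} s)$ (or any such bound) applies, we obtain
\[
\disc_{C_1}(\bA)\le\sum_{i\ge0}f\bigl((1-\alpha)^i n\bigr).
\]
Here we split the sum. For the first $O(\log\log n/\alpha)$ terms we use $f\le D$. For the tail, where $(1-\alpha)^in\le \log^{d}n$, we bound $f(s)$ by the discrepancy of an arbitrary set system on $s$ elements, $O(\sqrt{s\log s})$, summed geometrically. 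With $\alpha=\Omega(1)$ the sum is $O(D\log\log n)+$ lower order. To avoid the $\log\log$ loss, we apply the argument to the hereditary maximizer and use $f((1-\alpha)^i n)\le \herdisc$ restricted at size $(1-\alpha)^in$, which for the \citet{MN15} instances scales like $\log^{d-1}((1-\alpha)^in)$. The sum of these terms is then dominated by a convergent series only after normalization. This gives $\disc_{C_\alpha}(\bA|_S)=\Omega(\log^{d-1}n)$ for some $S$ of size $\Theta(n)$, and we take $P$ to be that subset.

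The main obstacle is exactly this summation in \Cref{lem:discpalpha-f}: avoiding a loss of $\log\log n$ and ensuring the chosen subset has linear size, so that $\log n'=\Theta(\log n)$. I would first try the argument of \citet{muthukrishnan2012optimal}: pick $S$ maximizing $\disc_{C_\alpha}$ among sets of size $\ge n/2$, and use the recursion $f(s)\le \disc_{C_\alpha}$ on large sets plus crude bounds on small sets, with constants depending on $\alpha$.
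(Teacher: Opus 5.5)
Your outline has the same skeleton as the paper's proof: restrict to a maximizer $S^\star$ of the hereditary discrepancy, then pass from $C_1$ to $C_\alpha$ by contradiction through \Cref{lem:discpalpha-f}. However, neither step is completed.

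\textbf{First step.} \Cref{lem:herdisc_d=O(1)} only controls $\herdisc_{C_1}$. Your switch to ``the lower bound in \citet{MN15} holds for $\disc_{C_1}$ of the full $n$-point instance'' replaces the available hypothesis with an unproved one, and it makes the restriction to $S^\star$ pointless. The idea you are missing is the paper's padding step. Keep the points indexed by $S^\star$ and add $n-|S^\star|$ dummy points lying outside every box. These add zero columns, which no $\chi\in C_1=\{\pm1\}^n$ can see. The padded matrix $\bB$ then has exactly $n$ columns and the same boxes, hence the same common point, and $\disc_{C_1}(\bB)=\disc_{C_1}(\bB'|_{S^\star})=\herdisc_{C_1}(\bB')=\Omega(\log^{d-1}n)$. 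No lower bound on $|S^\star|$ is needed. Your worry about linear size is still legitimate at the $C_\alpha$ stage. Once there are at least $\alpha n$ zero columns, a $\chi$ supported on them lies in $C_\alpha$ and gives $\bA\chi=0$. So padding cannot be used after passing to $C_\alpha$, and nothing in your write-up produces an $n$-point instance with large $\disc_{C_\alpha}$.

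\textbf{Second step.} The computation is wrong in a way that matters. For constant $\alpha$, the number of indices $i$ with $(1-\alpha)^in>\log^d n$ is $\Theta(\log n)$, not $O(\log\log n/\alpha)$. So the hypothesis $f\le D$ only gives $\disc_{C_1}(\bA)=O(D\log n)$. This contradicts $\Omega(\log^{d-1}n)$ only when $D=o(\log^{d-2}n)$. For $d=2$, even your tail term (the geometric sum of $\sqrt{s\log s}$ over $s\le\log^d n$) is not of lower order.

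Your proposed repair goes in the wrong direction. In a contradiction argument the bound on $f$ must come from the hypothesis. The bound $f((1-\alpha)^in)\le\herdisc_{C_1}$ holds simply because $C_1\subseteq C_\alpha$, and it is of the same order as the target, so it cannot produce a contradiction. Moreover, $\sum_{i\ge 0}\log^{d-1}((1-\alpha)^in)=\Theta(\log^d n)$, which is not dominated by any convergent series. Hence ``$\disc_{C_\alpha}(\bA|_S)=\Omega(\log^{d-1}n)$ for some $S$ of size $\Theta(n)$'' is asserted, not proved, and your last paragraph only says what you would try.

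You also cannot fill this gap by copying the paper. Its proof applies the contradiction hypothesis to each restriction $\bB|_S$, made into an $n$-column instance by adding points, and then invokes \Cref{lem:discpalpha-f} in a single sentence. That sentence does not address the $\Theta(\log n)$ terms of the series, nor the effect of the added columns on $C_\alpha$. With \Cref{lem:discpalpha-f} and a polylogarithmic per-scale bound, a $\log n$ factor is lost. Closing this step needs an ingredient that lower-bounds $\disc_{C_\alpha}$ directly.
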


\begin{proof}
By \Cref{lem:herdisc_d=O(1)}, for any $n$ large enough, there exists a set of $n$ points $P'$ in $[n]^d$ and $n$ anchored axis-parallel boxes $B_1, B_2, \ldots, B_n$ such that the incidence matrix $\bB'$ of $\{B_j \cap P', j \in [n]\}$ has $\herdisc_{C_1}(\bB') = \Omega(\log^{d-1} n)$. Note that all boxes $B_j$ contain the common point $(0, 0, \dots, 0)$.

Let $S^* = \arg \max_{S\in [n]}\disc_{C_1}(\bB'|_S)$. Let $P$ be the set of points corresponding to $S^*$, and augment $P$ with additional points not contained in any of the boxes $B_1, B_2, \dots, B_n$, so that $|P| = n$. Let $\bB$ denote the incidence matrix of $\{B_j \cap P, j\in [n]\}$. Then we have $\disc_{C_1}(\bB) = \disc_{C_1}(\bB'|_{S^*}) = \herdisc_{C_1}(\bB') = \Omega(\log^{d-1} n)$. 

Now we can prove the lemma. Assume for contradiction that all but finitely many $n \times n$ incidence matrices $\bA$ of points and axis-parallel boxes with a non-empty common intersection have discrepancy $\disc_{C_\alpha}(\bA) = o(\log^{d-1} n)$. We take any set of points $P$, axis-parallel boxes, and the corresponding incidence matrix $\bB$ satisfying $\disc_{C_1}(\bB) = \Omega(\log^{d- 1}n)$ from the previous construction. By augmenting additional points, any restriction $\bB|_S$ for $S \subseteq P$ is also the incidence matrix of sets induced by points and axis-parallel boxes in the assumption, and thus we have $\disc_{C_\alpha}(\bB|_S) = o(\log^{d-1} n)$. Plugging this bound into \Cref{lem:discpalpha-f}, we get $\disc_{C_1}(\bB) = o(\log^{d-1} n)$, a contradiction.
\end{proof}

\textbf{The case for $d = O(\log n)/\Omega(\log n)$} %
A \emph{boolean point} is a point in $\{0,1\}^d$. A \emph{boolean axis-parallel box} is an element of $\{\{0\}, \{1\}, [0,1]\}^d$. By a probabilistic construction in \citet{CL01}, we obtain the following lemma.

\begin{lemma}[\citet{CL01}] \label{lem:herdisc_d=Theta(logn)}
Let $d = \Theta(\log n)$. For any $m$ large enough, there exists a set of $n = \Theta(m)$ boolean points and $m$ boolean axis-parallel boxes $B_1, B_2, \dots, B_m$ such that the following holds. Let $\bA$ denote the incidence matrix of the collection of set $\{B_j \cap P, j\in [m]\}$. Then $\herdisc_{C_1}(\bA) = n^{\Omega(1)}$.
\end{lemma}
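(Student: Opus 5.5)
The lemma is a hereditary-discrepancy lower bound, and I would prove it through a spectral bound on a hypercube point set. Since $\herdisc_{C_1}(\bA)\ge \disc_{C_1}(\bA)$, it suffices to exhibit a family whose plain $\pm1$-discrepancy is $n^{\Omega(1)}$. The basic inequality I plan to use is the eigenvalue bound: for every $\chi\in\{\pm1\}^n$,
\[
\|\bA\chi\|_\infty^2 \;\ge\; \frac{\|\bA\chi\|_2^2}{m}\;\ge\; \frac{\lambda_{\min}(\bA^\top\bA)\,\|\chi\|_2^2}{m}\;=\;\frac{n\,\lambda_{\min}(\bA^\top\bA)}{m}.
\]
So the whole task reduces to a boolean point/box system with $m=\Theta(n)$ and $\lambda_{\min}(\bA^\top\bA)\ge n^{\Omega(1)}\cdot m/n$.

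\textbf{Step 1 (full hypercube).} Take $P=\{0,1\}^d$, so $n=2^d$, and take all $3^d$ boolean boxes in $\{\{0\},\{1\},[0,1]\}^d$. The matrix $\bA^\top\bA$ is invariant under the group $\mathbb{Z}_2^d$, so it is diagonalized by the characters $\chi_S(x)=(-1)^{\sum_{i\in S}x_i}$. The eigenvalue of $\chi_S$ is proportional to $\sum_{B}\big(\sum_{x\in B}\chi_S(x)\big)^2$. The inner sum is $\pm|B|$ when $S$ avoids the free coordinates of $B$, and $0$ otherwise. Hence the eigenvalue is a product over coordinates: each coordinate in $S$ must be fixed, contributing a factor of roughly $2$, and each coordinate outside $S$ contributes a factor of about $2+4=6$. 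I therefore expect a bound of the form $\lambda_S\ge c^{\,d}$ with $c\ge 4$, uniformly in $S$; I would verify the exact constant by direct computation. Plugging this into the inequality gives
\[
\disc_{C_1}(\bA)^2\;\ge\;\frac{2^d\cdot 4^d}{3^d},
\]
that is, $\disc_{C_1}(\bA)\ge (8/3)^{d/2}=n^{\Omega(1)}$.

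\textbf{Step 2 (making $m=\Theta(n)$).} Here $m=3^d=n^{\log_2 3}$ is only polynomial in $n$, not linear. This is where the probabilistic construction enters. I would sample $m=\Theta(n)$ boxes independently from a suitable distribution, for instance uniformly or weighted so that $\E[\bA_B^\top\bA_B]$ is a scaled copy of the full Gram matrix. A matrix Chernoff bound then shows that $\lambda_{\min}$ of the sampled Gram matrix is at least a constant times its expectation with positive probability. The expectation is $(m/3^d)$ times the Step 1 spectrum, so the ratio $n\lambda_{\min}/m$ is preserved up to constants.

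The main obstacle is this concentration step. Matrix Chernoff needs $m\gtrsim (R/\lambda_{\min}^{\mathrm{avg}})\log n$, where $R=\max_B|B|^2$, and large boxes have $|B|^2$ up to $4^d$, which is too large. My first attempt would be to restrict to boxes with exactly $k=\gamma d$ free coordinates, choosing $\gamma$ small so that $R=4^{k}$ is modest. I would then recompute the character eigenvalues: they vanish when $|S|>d-k$, so the restriction must also keep $\lambda_{\min}$ nonzero. If this fails, I would relax the requirement $P=\{0,1\}^d$ to a random subset of points, with $d$ a constant multiple of $\log n$, and argue in the same way. This is the parameter balancing I expect \citet{CL01} to carry out.
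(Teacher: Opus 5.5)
\textbf{There is a genuine gap.} Step 1 is false, and Step 2 cannot repair it.

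On $P=\{0,1\}^d$, take the parity coloring $\chi(x)=(-1)^{x_1+\cdots+x_d}$. It sums to $0$ over every boolean box with a free coordinate: pair each $x$ with its flip in that coordinate. It sums to $\pm1$ over singleton boxes. Hence $\disc_{C_1}(\bA)=1$ for the family of all $3^d$ boxes, and $\disc_{C_1}\le 1$ for \emph{every} subfamily of boxes, however you sample it.

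The slip in your computation is the missing normalization $\|\chi_S\|_2^2=2^d$. The Gram matrix is $(\bA^\top\bA)_{x,y}=2^{d-\mathrm{dist}(x,y)}$, with $\mathrm{dist}$ the Hamming distance. Its eigenvalues are $2^{-d}\,2^{|S|}6^{d-|S|}=3^{d-|S|}$. So $\lambda_{\min}=1$, attained exactly at the parity character $S=[d]$, and your inequality only gives $\disc_{C_1}(\bA)^2\ge(2/3)^d$.

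Reducing $\herdisc$ to plain $\disc$ on the full cube discards exactly the restriction to point subsets that the lemma needs. Your fallback (a random subset of points) is where the real work lies. There the characters no longer diagonalize the Gram matrix, so ``arguing in the same way'' is not available. The paper itself gives no argument here; it imports the lemma from \citet{CL01}.

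\textbf{A repair using your corrected spectrum.} Use a bound that sees the whole spectrum rather than its minimum. Since $\det(\bA^\top\bA)=\prod_{S}3^{d-|S|}=3^{d2^{d-1}}$, the geometric mean of the eigenvalues is $3^{d/2}$. Take $n=2^d$ and $m=3^d$. Cauchy--Binet gives $\det(\bA^\top\bA)=\sum_{|R|=n}\det(\bA_R)^2$ over $n$-row subsets $R$. Using $\binom{m}{n}\le(em/n)^n$, some square row-submatrix satisfies
\[
|\det\bA_R|^{1/n}\ \ge\ \Bigl(3^{d2^{d-1}}\big/\tbinom{m}{n}\Bigr)^{1/(2n)}\ \ge\ \frac{3^{d/4}}{\sqrt{e}\,(3/2)^{d/2}}\ =\ \frac{(4/3)^{d/4}}{\sqrt{e}}.
\]
The Lov\'asz--Spencer--Vesztergombi bound $\herdisc(\bB)\ge\frac12|\det\bB|^{1/k}$ for a $k\times k$ matrix $\bB$ then gives
\[
\herdisc_{C_1}(\bA_R)\ \ge\ \frac{1}{2\sqrt{e}}(4/3)^{d/4}\ =\ n^{\Omega(1)},
\]
with $n$ points and $n$ boxes. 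This yields the lemma directly, with no sampling or matrix-Chernoff step.
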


By a simple transformation and an approach similar to the proof of \Cref{lem:disc-A-d=O(1)}, we derive the following lemma.

\begin{lemma}
\label{lem:disc-A-d=Theta(logn)}
For any $m$ large enough, there exists a set of $n = \Theta(m)$ points $P$ and $m$ axis-parallel boxes $B_1, B_2, \dots, B_m$  with a non-empty common intersection such that the following holds. Let $\bA$ denote the incidence matrix of the collection of set $\{B_j \cap P, j\in [m]\}$. Then for $\alpha = \Omega(1)$, 
\begin{itemize}
    \item if $d = O(\log n)$, then $\disc_{C_\alpha}(\bA) = 2^{\Omega(d)}$;
    \item if $d = \Omega(\log n)$, then $\disc_{C_\alpha}(\bA) = n^{\Omega(1)}$.
\end{itemize}
\end{lemma}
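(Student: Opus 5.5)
The plan is to mirror the proof of \Cref{lem:disc-A-d=O(1)}, with the \citet{CL01} construction of \Cref{lem:herdisc_d=Theta(logn)} in place of \citet{MN15}. There are two steps: (i) turn the boolean construction into genuine points and boxes in $\R^d$ whose boxes share a common point, keeping hereditary discrepancy; (ii) pass from $\herdisc_{C_1}$ to $\disc_{C_\alpha}$ via \Cref{lem:discpalpha-f}.

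\textbf{Regime $d=\Theta(\log n)$.} Take the boolean points $P'\subseteq\{0,1\}^d$ and boolean boxes $B_1,\dots,B_m\in\{\{0\},\{1\},[0,1]\}^d$ from \Cref{lem:herdisc_d=Theta(logn)}. In general these boxes have no common point, since $\{0\}$ and $\{1\}$ are disjoint. The transformation I would use is:
\begin{itemize}
\item map coordinate value $0$ to $-1$ and value $1$ to $+1$;
\item enlarge each factor $\{0\}$ to $[-1,0]$, each $\{1\}$ to $[0,1]$, and each $[0,1]$ to $[-1,1]$.
\end{itemize}
This changes no incidence with the point set, because the points have no coordinate equal to $0$. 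Every transformed box now contains the origin, and the incidence matrix is unchanged, so its hereditary discrepancy is still $n^{\Omega(1)}$.

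Next, choose $S^*$ attaining $\herdisc_{C_1}$ and keep only the points indexed by $S^*$. Pad with dummy points lying outside every box, so that $|P|=n=\Theta(m)$. This yields a matrix $\bB$ with $\disc_{C_1}(\bB)=n^{\Omega(1)}$.

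Suppose, for contradiction, that $\disc_{C_\alpha}$ is small for every such family with a common intersection. Any column restriction $\bB|_S$, padded back with dummy points, is again such a family. Hence $f(s)$ is small for every $s$, and \Cref{lem:discpalpha-f} gives a geometric-type sum contradicting $\disc_{C_1}(\bB)=n^{\Omega(1)}$.

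The delicate point is the summation. For the $n^{\Omega(1)}$ bound, suppose $f(s)\le s^{c'}$ for a small exponent $c'$. Then $\sum_i f((1-\alpha)^i n)\le \sum_i ((1-\alpha)^i n)^{c'}=O_{\alpha,c'}(n^{c'})$. Choosing $c'$ below the exponent coming from \citet{CL01} gives the contradiction, since $\alpha=\Omega(1)$ makes the ratio $(1-\alpha)^{c'}$ bounded away from $1$.

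\textbf{Regime $d=\Omega(\log n)$.} If $d$ is larger than $\Theta(\log n)$, embed using only the first $c\log n$ coordinates. Set every remaining coordinate of each point to $0$ and every remaining factor of each box to $[-1,1]$. This preserves both the incidences and the common point.

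\textbf{Regime $d=O(\log n)$.} Here $d$ may be much smaller than $\log n$. Apply the $\Theta(\log n)$ construction with a smaller ground size $n_0=2^{\Theta(d)}$, which gives discrepancy $n_0^{\Omega(1)}=2^{\Omega(d)}$. To reach $n$ points and $m$ queries, take disjoint copies of this block placed far apart along the first coordinate. To keep a common intersection, each box is elongated so that it covers the origin region but only the points of its own block; alternatively, use $n/n_0$ translated copies combined with dummy points.

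\textbf{Expected main obstacle.} The hardest step is reconciling disjoint blocks with a common intersection. My first attempt would be to use a single block of size $n_0$ padded with $n-n_0$ dummy points, repeating boxes so that $m=\Theta(n)$. The worry is that $\chi\in C_\alpha$ could put all its mass on the dummy columns, which are all-zero, and make $\bA\chi=0$. If padding breaks $C_\alpha$ in this way, I would instead replicate the block $n/n_0$ times. The copies would be separated in one coordinate by nested intervals arranged so that all boxes still meet at a shared point. Some block must then carry $\Omega(\alpha)$ of the mass of $\chi$, and the $C_\alpha$ argument applies within that block.
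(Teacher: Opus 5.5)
Your route is the paper's: take the boolean instance of \Cref{lem:herdisc_d=Theta(logn)}, re-embed it so all boxes share a point, restrict to the maximizing $S^*$, pass from $C_1$ to $C_\alpha$ via \Cref{lem:discpalpha-f}, and handle the other regimes by changing the dimension or the ground size. Your $[-1,0],[0,1],[-1,1]$ map is equivalent to the paper's $[0,0.5],[0.5,1]$.

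Your worry about dummy columns is justified, and it exposes a real defect in the paper's own $d=O(\log n)$ step. That step pads $n'=\Theta(2^d)$ points with $n-n'$ points outside every box. Once $n-n'\ge\alpha n$, the indicator of $\alpha n$ dummy columns lies in $C_\alpha$ and is annihilated by $\bA$, so $\disc_{C_\alpha}(\bA)=0$.

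Replication is the right repair, but as you describe it the blocks are not isolated. An interval containing the common point and a far copy contains every nearer copy, and translated copies agree in the remaining coordinates. So one separating coordinate isolates at most the nearest block on each side. With nested intervals $[0,t_1]\subset[0,t_2]\subset\cdots$ in a reserved coordinate, row block $j$ is $(\bA_0,\dots,\bA_0,0,\dots,0)$. It sees $\sigma_j=\sum_{j'\le j}\chi^{(j')}$ rather than $\chi^{(j)}$, so ``the $C_\alpha$ argument applies within that block'' needs an extra telescoping step. Writing $\bA_0\chi^{(j)}=\bA_0\sigma_j-\bA_0\sigma_{j-1}$ gives $\|\bA\chi\|_\infty\ge\tfrac12\disc_{C_\alpha}(\bA_0)$, once averaging supplies a block with $\|\chi^{(j)}\|_1\ge\alpha n_0$. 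Alternatively, get genuine isolation from two opposite orderings: place block $j$ at $x_0=t_j$ with $t_j$ increasing, scale its base instance by $s_j$ with $s_j$ decreasing, and use boxes $[0,t_j]\times s_jB_i$.

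The same objection, however, breaks your contradiction step in the $\Theta(\log n)$ regime, which copies the paper's. From ``$\bB|_S$, padded back with dummy points, is again such a family'' it does not follow that $f(s)$ is small. $C_\alpha$ is normalized by the number of columns, and once the padding is an $\alpha$ fraction the padded matrix has $\disc_{C_\alpha}=0$ whatever $\bB|_S$ is. So no upper bound transfers to $\disc_{C_\alpha}(\bB|_S)$.

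The fix is not to pad: $\bB|_S$ is itself the incidence matrix of $|S|$ points and the same boxes, with the same common point. Apply \Cref{lem:discpalpha-f} directly to $\bB=\bB'|_{S^*}$, where $\disc_{C_1}(\bB)\ge n^{c_0}$. Since $\disc_{C_\alpha}(\bB|_S)\le|S|$, suppose every $S$ with $|S|\ge n^{c_0/2}$ had $\disc_{C_\alpha}(\bB|_S)<|S|^{c'}$ for a fixed $c'<c_0$. Then the sum would be $O(n^{c_0/2}\log n+n^{c'})=o(n^{c_0})$, a contradiction. So some such $S$ gives an unpadded family with $\disc_{C_\alpha}\ge|S|^{c'}$, and this is also the block you must replicate in the $O(\log n)$ regime.

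This family's number of boxes is only polynomial in $|S|$. Neither padding (which kills $C_\alpha$) nor replication (which preserves the ratio of boxes to points) makes it linear. That is harmless for bounds of the form $2^{\Omega(d)}$ or $n^{\Omega(1)}$. However, the literal $n=\Theta(m)$ claim is not delivered by this argument, in your proposal or in the paper.
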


\begin{proof}
Let $d = \Theta(\log n)$. By \Cref{lem:herdisc_d=Theta(logn)}, for any $m$ large enough, there exists a set of $n = \Theta(m)$ boolean points $P'$ and $m$ boolean axis-parallel boxes $B'_1, B'_2, \ldots, B'_m$ such that the incidence matrix $\bB'$ of $\{B'_j \cap P', j \in [m]\}$ has $\herdisc_{C_1}(\bB') = n^{\Omega(1)}$. Replacing $\{0\}$ and $\{1\}$ in $B'_1, B'_2, \dots, B'_m$ with the intervals $[0, 0.5]$ and $[0.5, 1]$, respectively, yields axis-parallel boxes $B_1, B_2, \ldots, B_m$. This transformation preserves the containment of points within boxes, and thus leaves the incidence matrix $\bB'$ unchanged. Furthermore, all resulting boxes $B_i$ contain the common point $(0.5, 0.5, \dots, 0.5)$.

Let $S^* = \arg \max_{S\in [n]}\disc_{C_1}(\bB'|_S)$. Let $P$ be the set of points corresponding to $S^*$, and augment $P$ with additional points not contained in any of the boxes $B_1, B_2, \dots, B_m$, so that $|P| = \Theta(m)$. Let $\bB$ denote the incidence matrix of $\{B_j \cap P, j\in [m]\}$. Then we have $\disc_{C_1}(\bB) = \disc_{C_1}(\bB'|_{S^*}) = \herdisc_{C_1}(\bB') = n^{\Omega(1)}$. 
Then by an approach similar to the proof of \Cref{lem:disc-A-d=O(1)}, we can prove that there are infinitely many $m \times n$ incidence matrices $\bA$ of points and axis-parallel boxes with a non-empty common intersection have discrepancy $\disc_{C_\alpha}(\bA) = n^{\Omega(1)}$.

The case for $d = \Omega(\log n)$ follows immediately from that for $d = \Theta(\log n)$. We thus focus on the case for $d = O(\log n)$. Set $n' = \Theta(2^{d})$ so that the above conclusion can be applied to the cases for $d = \Theta(\log n')$. Let $m'$ be the number of boxes corresponding to $n'$. Then we can add $n - n'$ points that are not contained in any boxes and $m - m'$ boxes that do not contain any points. Thus, we finish the proof of the case for $d = O(\log n)$.
\end{proof}

Combining \Cref{lem:disc-A-d=O(1),lem:disc-A-d=Theta(logn)}, we complete the proof of \Cref{lem:lower bounds for disc of private orthogonal range counting}.

\section{Deferred Proofs and Analysis from \Cref{sec:upper bound}}\label{sec:defer_upper}

\subsection{Proof of \Cref{thm:main_upper_PDP}: The analysis of pure DP algorithm}
\label{ssec:analysis of pure DP}

\textbf{Privacy} We now prove that \Cref{alg: PDP_RSC} is an $\veps$-DP algorithm.

\begin{lemma}\label{lem: sum of weight of vertex tuple}
Let $\bw$ be the weight vector generated by \Cref{alg: Subgraph Counting Projection}. For any discretized query $q = \prod_{i = 1}^{d}[\ell_i, r_i]$, let $G_q$ be the corresponding induced subgraph. The number of occurrences of $ H $ in $G_q$ is equal to the sum of the weights of all rank tuples falling within the range $ \prod_{i = 1}^{d}([\ell_i, n] \times [1, r_i])$, i.e., $f_H(G_q) = \sum_{\bv\in \prod_{i = 1}^{d}([\ell_i, n]\times [1,r_i])}w_{\bv}$.
\end{lemma}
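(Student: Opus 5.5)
The plan is a double-counting argument: show that the occurrences of $H$ in $G_q$ are exactly those occurrences of $H$ in $G$ whose rank tuple lies in $R_q := \prod_{i=1}^{d}([\ell_i,n]\times[1,r_i])$. Since \Cref{alg: Subgraph Counting Projection} adds each occurrence exactly once, at its rank tuple, we have
\[
\sum_{\bv\in R_q} w_{\bv} \;=\; \bigl|\{H' \text{ occurrence of } H \text{ in } G : \text{rank tuple of } H' \in R_q\}\bigr|.
\]
It then suffices to prove that, for every occurrence $H'$ in $G$, the rank tuple of $H'$ lies in $R_q$ if and only if $H'$ is an occurrence in $G_q$.

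First, I will pin down what the discretized query means in terms of ranks. By \Cref{clm:discretized to n2}, and because ranks are monotone in attribute values (equal attributes receive equal ranks), a vertex $v$ lies in $V_q$ iff $\ell_i\le \ba_i(v)\le r_i$ for all $i$ in the original query. In rank terms this is $s_i(v_{\ell_i})\le s_i(v)\le s_i(v_{r_i})$ for all $i$, i.e.\ $\ell_i\le s_i(v)\le r_i$ for the discretized bounds. The justification is that $v_{\ell_i}$ has the smallest attribute that is at least the original $\ell_i$, and $v_{r_i}$ has the largest attribute that is at most the original $r_i$. If no such vertex exists the query set is empty, and I will treat that degenerate case separately, both sides being $0$.

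Second, I will prove the equivalence. An occurrence $H'$ of $G$ is an occurrence in $G_q = G[V_q]$ iff $V(H')\subseteq V_q$, because $G_q$ is induced, so every edge of $G$ between vertices of $V_q$ is present. This containment holds iff for every $i$ both $\min_{v\in V(H')} s_i(v)\ge \ell_i$ and $\max_{v\in V(H')} s_i(v)\le r_i$. These are exactly the coordinates $s_i(u_i)$ and $s_i(v_i)$ of the rank tuple. Since ranks take values in $[n]$, the conditions $s_i(u_i)\in[\ell_i,n]$ and $s_i(v_i)\in[1,r_i]$ are equivalent to those inequalities. Summing over occurrences then gives $f_H(G_q)=\sum_{\bv\in R_q}w_{\bv}$.

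The main obstacle is the bookkeeping in the discretization step: ties in attribute values, and making the rank-versus-attribute equivalence rigorous, including the empty-range case. The remaining steps are direct.
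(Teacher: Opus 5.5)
Your proposal is correct and takes essentially the same route as the paper. Each occurrence adds one unit of weight at its rank tuple, and an occurrence lies in $G_q$ iff, in every dimension, its minimum rank is at least $\ell_i$ and its maximum rank is at most $r_i$; this is exactly the condition that its rank tuple lies in $\prod_{i=1}^{d}([\ell_i,n]\times[1,r_i])$. Your extra justification that the discretized rank bounds carve out exactly $V_q$ (handling ties and empty ranges) fills in a step the paper takes for granted, but it does not change the argument.
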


\begin{proof}
Let $H' = (V(H'), E(H'))$ be an occurrence of $H$ in $G$ with rank tuple $(s_i(u_i), s_i(v_i))_{i = 1}^{d}$. Then $H'$ is in $G_q$ if and only if the ranks of vertices in $H'$ satisfy $\min_{v\in V(H')} s_i(v) \ge \ell_i$ and $\max_{v \in V(H')}s_i(v) \le r_i$ for all $i \in [d]$. The latter is equal to the condition that $s_i(u_i) \ge \ell_i$ and $s_i(v_i) \le r_i$ for all $i \in [d]$ by \Cref{def: rank tuple}. By \Cref{alg: Subgraph Counting Projection}, each such occurrence $H'$ exactly increases the weight vector $\bw$ by 1 at the position corresponding to its rank tuple. Therefore, the number of occurrences of $ H $ in $G_q$ is equal to the sum of the weights of all rank tuples falling within the range $ \prod_{i = 1}^{d}([\ell_i, n] \times [1, r_i])$.
\end{proof}

\begin{lemma}\label{lem: PDP privacy}
    \Cref{alg: PDP_RSC} is $\veps$-DP.
\end{lemma}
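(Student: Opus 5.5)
We will prove that \Cref{alg: PDP_RSC} is $\veps$-DP by treating the noisy tree $\widetilde{T}_1$ as a single Laplace mechanism applied to a vector-valued function. The queries are then answered by post-processing alone (\Cref{def:post-processing}). The attributes, and hence the rank functions $s_i$, the discretization of \Cref{clm:discretized to n2}, and the shape of the $2d$D range tree, are all public. Only the node weights depend on the private edge set.

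Let $F(G)$ be the vector of all node weights in all $1$D range trees inside $T_1$. Each coordinate of $F(G)$ is a sum of entries $w_{\bv}$ of $\bw=\textsc{Proj}(G,H)$ over some fixed set of rank tuples. The mechanism adds i.i.d.\ $\Lap(1/\veps')$ noise to every coordinate, where $\veps'=\veps/(\GS_{f_H}(\lceil\log n\rceil+1)^{2d})$. By \Cref{def:Laplace mechanism}, it therefore suffices to show
\[
\norm{F(G)-F(G')}_1 \le \GS_{f_H}\,(\lceil\log n\rceil+1)^{2d}
\]
for all neighboring $G\sim G'$.

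\textbf{Step 1: sensitivity of $\bw$.} Suppose $G'$ differs from $G$ by adding or removing the edge $e$. Every occurrence that exists in exactly one of the two graphs contains $e$. Moreover, all such occurrences change in the same direction: all appear, or all disappear. The number of these occurrences is at most the number of occurrences of $H$ containing $e$ in $K_n$, which equals $f_H(K_n)-f_H(K_n-e)=\GS_{f_H}$ (cf.\ \Cref{lem:GS tight bound}). Each such occurrence changes exactly one coordinate of $\bw$ by $1$, namely the coordinate of its rank tuple. Hence $\norm{\bw(G)-\bw(G')}_1\le \GS_{f_H}$.

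\textbf{Step 2: each point affects few tree nodes.} Fix a rank tuple $\bv\in[n]^{2d}$. Along the first dimension, $\bv$ lies in at most $\lceil\log n\rceil+1$ nodes of $T_1$, namely one root-to-leaf path, since the depth is at most $\lceil\log n\rceil+1$ (as noted in the proof of \Cref{lem:query nodes}). Each of those nodes carries an associated $(2d-1)$D tree, in which $\bv$ again lies on one path, and so on recursively. By induction on the dimension, $w_{\bv}$ contributes to at most $(\lceil\log n\rceil+1)^{2d}$ coordinates of $F$, each time with coefficient $1$. By the triangle inequality and linearity,
\[
\norm{F(G)-F(G')}_1\le(\lceil\log n\rceil+1)^{2d}\,\norm{\bw(G)-\bw(G')}_1,
\]
and combining with Step 1 gives the required bound.

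The main care point is Step 2 together with the structure of the tree. The tree is built over all of $[n]^{2d}$, including points of weight zero, so its shape does not depend on $G$. Consequently, $F(G)$ and $F(G')$ are vectors indexed by the same node set, which is what makes the Laplace mechanism applicable. Once the sensitivity bound is established, $\widetilde{T}_1$ is $\veps$-DP. Each answer in $J_Q$ is a deterministic function of $\widetilde{T}_1$ and the public query set $Q$, so $J_Q$ is $\veps$-DP by post-processing.
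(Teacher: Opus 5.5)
Your proof is correct and follows the paper's argument. Like the paper, you bound the $\ell_1$ sensitivity of the full vector of 1D-tree node weights by $\GS_{f_H}(\lceil\log n\rceil+1)^{2d}$, apply the Laplace mechanism, and finish by post-processing. The only cosmetic difference is the counting: you count, for each rank tuple, the at most $(\lceil\log n\rceil+1)^{2d}$ nodes on its root-to-leaf paths and use linearity, whereas the paper partitions the nodes into at most $(\lceil\log n\rceil+1)^{2d}$ depth classes and bounds each class's sensitivity by $\GS_{f_H}$ via monotonicity of $\bw$, which is the same double count viewed from the other side.
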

\begin{proof}
We use $\bw$ and $\bw'$ to denote the different weight vectors formed by two neighboring graphs $G$ and $G'$, respectively, where $G$ and $G'$ differ by a single edge. 
By fixing $H$ and treating $\bw$ as a function of $G$, we can define the global sensitivity of $\bw$, denoted as $\GS_{\bw}$, by \Cref{def:Global Sensitivity}. Note that for any $\bw, \bw'$, we have $\norm{\bw - \bw'}_1 = |\norm{\bw}_1 - \norm{\bw'}_1|$. This follows from the fact that the function $\bw$ are monotonic, meaning that the addition of any edge does not reduce the number of occurrences of $H$ at each component of $\bw$. %
Thus, we have
    \begin{align}
        \text{GS}_{\bw} %
        &= \max_{\bw,\bw'}\norm{\bw-\bw'}_1
        = \max_{\bw,\bw'}\left|\norm{\bw}_1-\norm{\bw'}_1\right| \notag \\
        &= \max \left|\sum_{\bv\in [n]^{2d}} w_{\bv} - \sum_{\bv\in[n]^{2d}} w'_{\bv}\right| \label{eq:GSw} \\
        &= \max_{G \sim G'}\left|f_H(G)-f_H(G')\right| = \GS_{f_H}, \label{eq:GSw=GSfH}
    \end{align}
where the penultimate equation follows from \Cref{lem: sum of weight of vertex tuple} by setting the discretized query $q = [1, n]^{2d}$.

Let us revisit the individual dimensions of the $2d$D range tree in the algorithm. For any $(2d - i + 1)$D range tree $T_i$ in the $2d$D range tree, where $1 \le i \le 2d$, define the depth of each node in $T_i$ as the number of nodes along the path from the root node of $T_i$ to that node. Note that the depth of a node is at most $(\lceil \log n\rceil + 1)$ by \Cref{def:kd range tree construction}.

We now partition the nodes of the $2d$D range tree $T_1$ according to their depths. This partition thus consists of at most $(\lceil \log n\rceil + 1)$ sets of nodes. For each node in each set, we recursively partition the nodes in its associated $(2d - 1)$D range tree. That is, in each recursive step, a node in the $2d$D range tree is replaced by the nodes of its associated $(2d - 1)$D range tree. We continue this process until each set consists solely of nodes from $1$D range trees. After aggregating the results over \(2d\) dimensions, we obtain a partition consisting of $\theta$ sets of nodes with $\theta \le (\lceil \log n \rceil + 1)^{2d}$. 

For each node set in the partition, we concatenate the weights of the nodes in the set into a weight vector $\bw_i$, resulting in a collection of weight vectors $\{\bw_i\}_{i=1}^{\theta}$. Note that for each $\bw_i$, there exists a set $S_i \subseteq [n]^{2d}$ associated with the node set, which is independent of the edges in $G$ and satisfies $\|\bw_i\|_1 = \sum_{\bv \in S_i} w_{\bv}$. We can similarly define the global sensitivity of $\bw_i$ as $\GS_{\bw_i}$. Let $\bw_i$ and $\bw'_i$ be the weight vectors for neighboring graphs $G$ and $G'$, respectively. Then we have $\norm{\bw_i - \bw'_i}_1 = |\norm{\bw_i}_1 - \norm{\bw'_i}_1|$ and 
\begin{align}
        \text{GS}_{\bw_i} %
        &= \max_{\bw_i,\bw'_i}\norm{\bw_i-\bw'_i}_1
        = \max_{\bw_i,\bw'_i}\left|\norm{\bw_i}_1-\norm{\bw'_i}_1\right| \notag \\
        &= \max \left|\sum_{\bv\in S_i} w_{\bv} - \sum_{\bv\in S_i} w'_{\bv}\right| = \max \left|\sum_{\bv\in S_i} (w_{\bv} - w'_{\bv})\right| \notag \\
        &\le \max \left|\sum_{\bv\in [n]^{2d}} (w_{\bv} - w'_{\bv})\right| = \GS_{\bw}, \label{eq:GSwi=GSw}
\end{align}
where the final inequality follows from the monotonicity of $\bw$ and the final equation follows from \Cref{eq:GSw}. %

Let $\bw_T = (\bw_i)_{i=1}^{\theta}$ denote the weight vector obtained by concatenating all $\bw_i$. We can similarly define the global sensitivity of $\bw_T$ as $\GS_{\bw_T}$. Then, we have $\GS_{\bw_T} \le \sum_{i = 1}^{\theta}\GS_{\bw_i} \le (\lceil \log n \rceil + 1)^{2d}\GS_{\bw} = (\lceil \log n \rceil + 1)^{2d}\GS_{f_H}$, where the first two inequalities and the final equation follow from the monotonicity of $\bw_i$, \Cref{eq:GSwi=GSw}, and \Cref{eq:GSw=GSfH}, respectively.

Therefore, according to the Laplace mechanism (\Cref{def:Laplace mechanism}), adding Laplace noise with scale $\frac{\GS_{f_H} \cdot (\lceil \log {n}\rceil + 1)^{2d}}{\veps}$ to the weights of nodes corresponding to each component of $\bw_T$ ensures that $\widetilde{T}_1$ achieves differential privacy. For each query, the range tree $\widetilde{T}_1$ are reused. Therefore, \Cref{alg: PDP_RSC} maintains $\veps$-DP based on the post-processing property (\Cref{def:post-processing}).
\end{proof}
\textbf{Utility} we have the following lemma.

\begin{lemma}\label{lem: PDP utility}%
    For any $\veps > 0$, the output of \Cref{alg: PDP_RSC} satisfies %
   $$\max_{q\in Q} \absnorm{f_H(G_q)-\widetilde{f}_H(G_q)} = O\left(\frac{\GS_{f_H}\cdot \sqrt{\log(n|Q|)} \cdot\log^{3d}{n}}{\veps}\right)$$
with probability at least $1-\frac{1}{n}$. Here, the hidden constants are of the form $c^{O(d)}$ for some universal constant $c>1$.
\end{lemma}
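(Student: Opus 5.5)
By \Cref{lem: sum of weight of vertex tuple}, the exact answer $f_H(G_q)$ equals the sum of the weights $w_{\bv}$ over the range $\prod_{i=1}^d([\ell_i,n]\times[1,r_i])$. I will first show that the noiseless range-tree query on this range returns exactly that sum. This holds because every internal node's weight is the sum of its children's weights, and the query decomposes the range into disjoint canonical nodes. Consequently, the only error in the answer to $q$ is the sum of the Laplace noises attached to the nodes visited by the query.

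Next I count those nodes. Each coordinate range has the form $[\ell_i,n]$ or $[1,r_i]$, i.e.\ one endpoint is at the extreme of the coordinate domain. The hypothesis of \Cref{lem:query nodes} is therefore satisfied in every one of the $2d$ dimensions, and the answer is a sum of at most $m:=\lceil\log n\rceil^{2d}$ noisy node weights. Every noise term is an independent $\Lap(b)$ with $b=\GS_{f_H}(\lceil\log n\rceil+1)^{2d}/\veps$, as fixed by \Cref{alg: PDP_RSC}. Distinct nodes carry independent noise, and nodes visited by one query are distinct, so the terms in a single query's error are independent.

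For a fixed $q$ I apply the second part of \Cref{lem:laplace concentration} with $\beta=n|Q|$ (padding $m$ if needed so that $m\ge\log\beta$, which only increases the bound). This gives that the error exceeds $2\sqrt2\, b\sqrt{m\log(n|Q|)}$ with probability at most $2/(n|Q|)$. A union bound over all $q\in Q$ makes the failure probability $O(1/n)$; adjusting $\beta$ by a constant factor brings it to $1/n$. The resulting bound is
\[
b\sqrt{m\log(n|Q|)}=\frac{\GS_{f_H}(\lceil\log n\rceil+1)^{2d}\,\lceil\log n\rceil^{d}\sqrt{\log(n|Q|)}}{\veps}.
\]
Since $(\lceil\log n\rceil+1)^{3d}\le 2^{3d}\log^{3d}n$, this is $c^{O(d)}\GS_{f_H}\sqrt{\log(n|Q|)}\log^{3d}n/\veps$, as claimed.

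The main care point is the application of \Cref{lem:laplace concentration}. Its condition $m\ge\log\beta$ may fail when $|Q|$ is huge, e.g.\ $|Q|$ close to $2^n$. In that case I would instead use the first form of the lemma with $\nu=b\sqrt{\max(m,\log\beta)}$, which yields the same bound up to constants. One also has to confirm that the query really touches only nodes whose noise is independent. I do not expect the discretization step to cause trouble, because \Cref{clm:discretized to n2} preserves $V_q$.
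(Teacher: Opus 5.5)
Your proposal is correct and follows the paper's proof essentially step for step. Like the paper, you bound the number of visited nodes by $\lceil\log n\rceil^{2d}$ via \Cref{lem:query nodes}, apply \Cref{lem:laplace concentration} with scale $b=\GS_{f_H}(\lceil\log n\rceil+1)^{2d}/\veps$ and $\beta=\Theta(n|Q|)$, and take a union bound over $Q$. Your worry about the condition $m\ge\log\beta$ is moot: the standing assumption $|Q|=O(\min(n^{2d},2^n))$ gives $\log(2n|Q|)\le(2d+1)\log n+O(1)<\lceil\log n\rceil^{2d}$ for large $n$. If that condition genuinely failed, your fallback $\nu=b\sqrt{\max(m,\log\beta)}$ would only give $O(b\log\beta)$, not the same bound.
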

\begin{proof}
Let $P_q = \{p_i\}_{i = 1}^{m}$ with $|P_q| = m$ denote the set of nodes in the $2d$D range tree selected by a query $q\in Q$. Then we have $m \le \lceil \log n\rceil^{2d}$ by \Cref{lem:query nodes}. For any $p_i \in P_q$, Let $Y_{p_i}$ denote an independent random variable with $Y_{p_i} \sim \Lap(\GS_{f_H}\cdot (\lceil\log {n}\rceil + 1)^{2d} / \veps)$. Let $w(p_i)$ denote its weight and $\widetilde{w}(p_i)$ denote the weight $w(p_i)$ plus Laplace noise. Then the additive error $|f_H(G_q)-\widetilde{f}_H(G_q)|$ satisfies
 \begin{align*}
    &\left|f_H(G_q)-\widetilde{f}_H(G_q)\right|
     =\absnorm{\sum_{i = 1}^{m}w(p_i)-\sum_{i = 1}^{m}\widetilde{w}(p_i)} \leq \left|\sum_{i = 1}^{m} Y_{p_i}\right|
     =   O\left(\frac{\GS_{f_H}\cdot \sqrt{\log(n|Q|)} \cdot \log^{3d}{n}}{\veps}\right)         
 \end{align*}
 with a probability of at least $1-\frac{1}{n|Q|}$ by \Cref{lem:laplace concentration} which $b=\GS_{f_H}\cdot (\lceil\log {n}\rceil + 1)^{2d} / \veps$, $m=\lceil\log n\rceil^{2d}$ and $\beta = 2n|Q|$. By the union bound, we finish the proof.
\end{proof}

Combining \Cref{lem: PDP privacy} and \Cref{lem: PDP utility}, we finish the proof of \Cref{thm:main_upper_PDP}. The complexity analysis of the algorithm in \Cref{thm:main_upper_PDP} is presented in \Cref{sec:time and space}.

\subsection{Proof of \Cref{thm:main_upper}: The analysis of approximate DP algorithm}
\label{ssec:analysis of approximate DP}
Let $\widetilde{\HS}_{f_H}^{(k)}(G) $ be the quantity computed at Step 4 of \Cref{alg: Estimating higher-order private local sensitivity}. %
Then, the noisy estimate of $\LS_{f_H}(G)$ is $\widetilde{\HS}_{f_H}(G)=\widetilde{\HS}_{f_H}^{(1)}(G)$. This follows from the fact that $f_H^{(1)}(G) = \LS_{f_H}(G)$, and the subsequent bias term at Step 4 of \Cref{alg: Estimating higher-order private local sensitivity} ensures privacy. We have the following lemma from \citet{nguyen2024faster}.

\begin{lemma}[\citet{nguyen2024faster}]\label{lem:high order privacy}
It holds that (1) $\widetilde{\HS}_{f_H}(G)$ is a $(m_H\veps',\delta'+m_He^{\veps'}\delta')$-DP estimate of \(\LS_{f_H}(G)\); (2) $\Pr[\widetilde{\HS}^{(k)}_{f_H}(G) \geq f^{(k)}_H(G)] \geq 1-\delta'$, for any $k \in [m_H - 1]$.
\end{lemma}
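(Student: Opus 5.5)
The plan is to treat \Cref{alg: Estimating higher-order private local sensitivity} as an adaptive composition of $m_H-1$ Laplace mechanisms. In each one, the noise scale is a previously released private value. That scale is valid as a sensitivity bound except on a low-probability event, which is exactly what part (2) controls. I will prove part (2) first, since it needs no privacy argument, and then use it for part (1).

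\emph{Step 1 (part (2)).} Fix $k\in[m_H-1]$ and condition on any value $t=\widetilde{\HS}^{(k+1)}_{f_H}(G)>0$. The top value $f_H^{(m_H)}(G)$ is a positive constant independent of $G$: with $|S|=m_H$ all edges of the occurrence are forced, so the count does not depend on $E(G)$. By induction on the sign of the bias term, every later $t$ is positive except on null events. Then
\[
\Pr\bigl[\widetilde{\HS}^{(k)}_{f_H}(G)<f^{(k)}_H(G)\bigr]=\Pr\bigl[\Lap(t/\veps')<-t\ln(1/\delta')/\veps'\bigr]=\tfrac12\delta'\le\delta',
\]
uniformly in $t$, which gives (2).

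\emph{Step 2 (sensitivity of $f^{(k)}_H$).} I will show that for neighbours $G'=G+e$ we have
\[
0\le f^{(k)}_H(G')-f^{(k)}_H(G)\le f^{(k+1)}_H(G).
\]
Monotonicity is immediate. For the upper bound, fix the maximizing $S$ for $G'$ with $|S|=k$. The occurrences counted in $f_H(G',S)$ either avoid $e$, and are then counted in $f_H(G,S)\le f^{(k)}_H(G)$, or use $e$, and are then counted in $f_H(G,S\cup\{e\})$. If $e\notin S$ the latter is at most $f^{(k+1)}_H(G)$. If $e\in S$ the second class is already inside the first, since $E(G)\cup S=E(G')\cup S$. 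Consequently $t\ge f^{(k+1)}_H(G)$ guarantees $t\ge\max_{G'\sim G}|f^{(k)}_H(G)-f^{(k)}_H(G')|$ in this ordered sense. I would double-check the case where $G$ is the larger graph, where the bound is against $f^{(k+1)}$ of the smaller graph; it may be necessary to phrase the good event for the smaller graph of the pair.

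\emph{Step 3 (part (1)).} Run the levels $k=m_H-1,\dots,1$ on the pair $(G,G')$ and view level $k$ as a mechanism $M_k(\cdot\,;t)$ parameterized by the previous output $t$. The deterministic shift $t\ln(1/\delta')/\veps'$ is identical for $G$ and $G'$ given $t$. So on the good set $\{t\ge f^{(k+1)}_H(\text{smaller graph})\}$, Step 2 and the Laplace mechanism (\Cref{def:Laplace mechanism}) make $M_k(\cdot\,;t)$ $\veps'$-indistinguishable. By Step 1 the bad set has probability at most $\delta'$ under either input's run.

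I then apply the standard adaptive-composition lemma for mechanisms that are $\veps'$-DP outside a bad event, peeling one level at a time. Each level contributes a factor $e^{\veps'}$ to the privacy loss. Its bad event contributes an additive $\delta'$, measured under the other graph's distribution, which is inflated by at most $e^{\veps'}$. Summing over the levels, together with the deterministic top level $k=m_H$, yields $(m_H\veps',\,\delta'+m_He^{\veps'}\delta')$, where the $m_H$ slack absorbs the off-by-one in the number of levels. Post-processing (\Cref{def:post-processing}) then gives the claim for the returned $\widetilde{\HS}^{(1)}_{f_H}(G)$.

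The main obstacle is Step 3's bookkeeping. The bad event is defined via $f^{(k+1)}$ of one specific graph of the pair, but its probability must be bounded under the other graph's distribution as well. This is where the $e^{\veps'}$ inflation of $\delta'$ arises, and getting the exact constants $\delta'+m_He^{\veps'}\delta'$ rather than a cruder $O(m_He^{m_H\veps'}\delta')$ requires charging each bad event at the level where it occurs rather than at the end.
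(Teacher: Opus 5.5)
Your architecture is the right one: a bias term for (2), $f^{(k+1)}_H$ as a sensitivity bound for $f^{(k)}_H$, and composition outside a bad event. However, Step 1 contains a false claim that the rest leans on. $\widetilde{\HS}^{(k+1)}_{f_H}(G)$ is \emph{not} positive almost surely. The bias is positive, but the Laplace noise is unbounded below. For example, $\widetilde{\HS}^{(m_H-1)}_{f_H}(G)=f^{(m_H-1)}_H(G)+\ln(1/\delta')/\veps'+\Lap(1/\veps')$ is negative with probability up to $\delta'/2$ (exactly $\delta'/2$ when $f^{(m_H-1)}_H(G)=0$), and the next level then has a nonpositive noise scale.

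Your computation therefore only proves the conditional bound $\Pr[\widetilde{\HS}^{(k)}_{f_H}(G)<f^{(k)}_H(G)\mid \widetilde{\HS}^{(k+1)}_{f_H}(G)=t]=\delta'/2$ for $t>0$. Unconditionally, decomposing by the highest level whose fresh noise falls below $-t\ln(1/\delta')/\veps'$ gives only $(m_H-k)\delta'/2$. This is not slack in your argument. Take the empty graph, where $f^{(j)}_H=0$ for $j<m_H$, and read $\Lap(b)$ as $b\cdot\Lap(1)$ when $b<0$. Then $\widetilde{\HS}^{(k)}_{f_H}=\prod_{j=k}^{m_H-1}(\ln(1/\delta')+W_j)/\veps'$ with $W_j\sim\Lap(1)$ i.i.d. So for $m_H\ge 4$ and small $\delta'$, $\Pr[\widetilde{\HS}^{(m_H-3)}_{f_H}<0]\approx\frac32\delta'>\delta'$. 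Part (2) should therefore be proved, and used, in the conditional form ``given that all higher levels are good.'' That form is all the privacy argument needs.

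Your worry in Step 2 is resolved by monotonicity. For $G'=G-e$, $f^{(k)}_H(G)-f^{(k)}_H(G')\le f^{(k+1)}_H(G')\le f^{(k+1)}_H(G)$, so $|f^{(k)}_H(G)-f^{(k)}_H(G')|\le\min\{f^{(k+1)}_H(G),f^{(k+1)}_H(G')\}$.

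Step 3, however, is not carried out. ``Peeling one level at a time'' is the hybrid argument of \Cref{lem: local local sensitivity} iterated. Iterating it compounds the additive terms with factors $e^{j\veps'}$ for $j\le m_H-2$, so your claim that summing over levels yields $\delta'+m_He^{\veps'}\delta'$ is asserted, not derived. Instead, argue once on the joint transcript $y=(y_{m_H-1},\dots,y_1)$, with $y_{m_H}=1$, and write $P_G$ for its law under $G$. Every noise scale $y_{k+1}/\veps'$ is a function of the transcript, not of the graph, so the normalising constants cancel. Wherever all scales are positive,
\[
\Bigl|\ln\frac{p_G(y)}{p_{G'}(y)}\Bigr|\le \veps'\sum_{k=1}^{m_H-1}\frac{|f^{(k)}_H(G)-f^{(k)}_H(G')|}{y_{k+1}}.
\]
Let $\Gamma=\{y: y_j\ge\min\{f^{(j)}_H(G),f^{(j)}_H(G')\}\text{ for }2\le j\le m_H-1\}$, which is your ``smaller graph'' event.
\begin{itemize}
\item On $\Gamma$ the privacy loss is at most $(m_H-1)\veps'$, and up to null sets $\Gamma$ excludes every nonpositive scale.
\item $\Gamma^c\subseteq\{\exists j: y_j<f^{(j)}_H(G)\}$, and likewise with $G'$ in place of $G$. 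So each run is charged only for its own failures, with no $e^{\veps'}$ inflation.
\item The first-failure decomposition, combined with the conditional form of (2), gives $P_G(\Gamma^c),P_{G'}(\Gamma^c)\le(m_H-2)\delta'/2$.
\end{itemize}
The transcript is therefore $((m_H-1)\veps',(m_H-2)\delta'/2)$-DP, which is stronger than the claimed $(m_H\veps',\delta'+m_He^{\veps'}\delta')$, and post-processing finishes the proof.
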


We also need the following lemma, the proof of which is based on the proof of Lemma 4.4 in \citet{karwa2011private}. We extend the proof there to the case of a family of multidimensional functions so that it can be adapted to the problem we are addressing.

\begin{lemma}\label{lem: local local sensitivity}
Let $\cF=\{f_i\}_{i=1}^k$ be a family of $k$ functions, where $f_i : \cX \to \R^{d_i}$ and $\LS_\cF(x) = \max_{f\in\cF}\LS_{f}(x)$. Let $\mathcal{B}$ be an $(\veps_1, \delta_1)$-DP algorithm such that $\Pr[\mathcal{B}(x)\geq \LS_\cF(x)] >1-\delta_2$ for all $x$. Consider the algorithm $\cA$ that runs $\mathcal{B}(x)$ to obtain an estimate $\widetilde{\LS}_x$ 
of $\LS_\cF(x)$, and releases both $\widetilde{\LS}_x$ and a noisy estimate of $\cF$, i.e., \[
\begin{aligned}
\cA(x) =& \biggl(\widetilde{\LS}_x,\,
   \biggl(f_i(x) + \Lap^{d_i}\left(\widetilde{\LS}_x \cdot 2\sqrt{2k\ln(1/\delta')}/ \veps_2\right)\biggr)_{i=1}^k
\biggr),
\end{aligned}
\]
    where $\widetilde{\LS}_x = \mathcal{B}(x)$, $\delta'=\min(e^{-\veps_2/8}, \delta_2)$, and $\Lap^{d_i}(b)$ is a $d_i$-dimensional vector such that each element is independently sampled from a Laplace distribution with mean $0$ and scale parameter $b$. %
    Then $\cA$ is $(\veps_1+\veps_2, \max(\delta_1+2e^{\veps_1}\delta_2,2\delta_2+e^{\veps_2}\delta_1))$-DP.
\end{lemma}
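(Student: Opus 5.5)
We follow the standard ``propose--test--release'' argument of \citet{karwa2011private}, Lemma 4.4. Fix neighbors $x\sim y$ and a measurable set $S$ of outputs $(\ell,\mathbf{z})$, where $\ell$ is the released estimate and $\mathbf{z}=(\mathbf{z}_i)_{i=1}^k$ are the noisy function values. Call $\ell$ \emph{good} for $x$ if $\ell\ge \LS_\cF(x)$, and write $\mathcal{L}_x$ for this set. Conditioned on $\widetilde{\LS}=\ell$, the second part of $\cA$ is a fixed mechanism $M_\ell$. This mechanism is $k$ Laplace mechanisms, the $i$-th with scale $\ell\cdot 2\sqrt{2k\ln(1/\delta')}/\veps_2$.

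\textbf{Step 1 (conditional privacy for good $\ell$).} Let $\ell\in\mathcal{L}_x$. Then $\|f_i(x)-f_i(y)\|_1\le \LS_{f_i}(x)\le \ell$ for every $i$. Although $\ell$ need not bound the global sensitivity, the Laplace density ratio at any point $\mathbf{z}$ between the centers $f_i(x)$ and $f_i(y)$ is controlled only by $\|f_i(x)-f_i(y)\|_1$. So the $i$-th mechanism is $\veps_0$-indistinguishable on the pair $(x,y)$ in both directions, with $\veps_0=\veps_2/(2\sqrt{2k\ln(1/\delta')})$. Composing the $k$ mechanisms by \Cref{def:advanced composition} gives privacy loss at most
\[
k\veps_0^2/2+\veps_0\sqrt{2k\ln(1/\delta')}=\frac{\veps_2^2}{16\ln(1/\delta')}+\frac{\veps_2}{2},
\]
with failure probability $\delta'$. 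The choice $\delta'\le e^{-\veps_2/8}$ gives $\ln(1/\delta')\ge\veps_2/8$, so the first term is at most $\veps_2/2$ and the total is at most $\veps_2$. Hence for good $\ell$, $\Pr[M_\ell(x)\in S_\ell]\le e^{\veps_2}\Pr[M_\ell(y)\in S_\ell]+\delta'$, and symmetrically with $x$ and $y$ swapped. Here $S_\ell$ is the $\ell$-slice of $S$.

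\textbf{Step 2 (combine with $\mathcal{B}$).} Write
\[
\Pr[\cA(x)\in S]=\int \Pr[M_\ell(x)\in S_\ell]\,d\mu_x(\ell),
\]
where $\mu_x$ is the law of $\mathcal{B}(x)$. Split the integral into $\ell\in\mathcal{L}_x$ and its complement. The complement has $\mu_x$-mass at most $\delta_2$ by hypothesis, contributing at most $\delta_2$. On $\mathcal{L}_x$, apply Step 1 to get
\[
\int_{\mathcal{L}_x}\Pr[M_\ell(x)\in S_\ell]\,d\mu_x\le e^{\veps_2}\int \Pr[M_\ell(y)\in S_\ell]\,d\mu_x+\delta'.
\]
Next, $g(\ell)=\Pr[M_\ell(y)\in S_\ell]$ is $[0,1]$-valued, so the $(\veps_1,\delta_1)$-DP of $\mathcal{B}$ lets us replace $\mu_x$ by $\mu_y$. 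This gives $\int g\,d\mu_x\le e^{\veps_1}\int g\,d\mu_y+\delta_1$. Altogether,
\[
\Pr[\cA(x)\in S]\le e^{\veps_1+\veps_2}\Pr[\cA(y)\in S]+e^{\veps_2}\delta_1+\delta_2+\delta'.
\]
Since $\delta'\le\delta_2$, the additive term is at most $2\delta_2+e^{\veps_2}\delta_1$, which is the second expression in the stated maximum.

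The first expression $\delta_1+2e^{\veps_1}\delta_2$ comes from the other order of exchanges. First move from $\mu_x$ to $\mu_y$ using $\mathcal{B}$'s privacy on the event $\mathcal{L}_y$, paying $e^{\veps_1}$ times $\delta_2$-type mass terms. Then apply Step 1 with $\ell\in\mathcal{L}_y$ to pass from $M_\ell(x)$ to $M_\ell(y)$; Step 1 is symmetric, since $\ell\ge\LS_\cF(y)$ also bounds $\|f_i(x)-f_i(y)\|_1$. Taking the better of the two bounds yields the maximum in the statement.

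The main obstacle is Step 1. The Laplace noise is calibrated to a data-dependent $\ell$ rather than to global sensitivity, and the argument must use only the pairwise indistinguishability of $(x,y)$ inside advanced composition, with the bookkeeping of $\delta'$ against $\delta_2$. The second obstacle is the exact constant tracking in the reversed ordering that produces the $2e^{\veps_1}\delta_2$ term. I would handle it carefully by bounding $\mu_y(\mathcal{L}_y^c)\le\delta_2$ and transferring the remaining $\delta$ terms through $e^{\veps_1}$.
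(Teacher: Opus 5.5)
Your proof is correct and is essentially the paper's argument: your intermediate quantity $\int g\,d\mu_x$ is exactly the paper's hybrid $\cA_{\mathrm{mix}}$ (estimate drawn from $\mathcal{B}(x)$, function values taken at the neighbor). Your two steps match the paper's two comparisons: first you discard the bad event of mass at most $\delta_2$ and apply advanced composition conditionally, then you use $\mathcal{B}$'s privacy to swap the law of the estimate. One small remark: your first chain needs no assumption on which neighbor has the larger $\LS_\cF$ (unlike the paper's WLOG) and holds for every ordered pair, so it already gives $(\veps_1+\veps_2,\, 2\delta_2+e^{\veps_2}\delta_1)$-DP, which implies the stated guarantee, and the reversed ordering is not needed.
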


\begin{proof}
    Given neighboring datasets $x$ and $x'$, %
    consider the following algorithms:
\[
\begin{aligned}
\cA(x) =& \left(\widetilde{\LS}_x,\,
   \left(f_i(x) + \Lap^{d_i}\left(\widetilde{\LS}_x \cdot 2\sqrt{2k\ln(1/\delta')}/ \veps_2\right)\right)_{i=1}^k
\right), \\
\cA(x') =& \left(\widetilde{\LS}_{x'},\,
   \left(f_i({x'}) + \Lap^{d_i}\left(\widetilde{\LS}_{x'} \cdot 2\sqrt{2k\ln(1/\delta')}/ \veps_2\right)\right)_{i=1}^k 
\right),
\end{aligned}
\]
where $\widetilde{\LS}_{x} = \mathcal{B}(x), \widetilde{\LS}_{x'} = \mathcal{B}(x')$ and $\delta' = \min(e^{-\veps_2/8}, \delta_2)$. Without loss of generality, assume that $\LS_\cF(x') \le \LS_\cF(x)$. Now, define the random variable %
\[
\cA_{\mathrm{mix}} = \left(\widetilde{\LS}_{x},\,
   \left(f_i({x'}) + \Lap^{d_i}\left(\widetilde{\LS}_{x} \cdot 2\sqrt{2k\ln(1/\delta')}/ \veps_2\right)\right)_{i=1}^k 
\right).
\]
Let $p_x$, $p_{x'}$ and $p_{\mathrm{mix}}$ be the probability distributions of $\cA(x)$, $\cA(x')$ and $\cA_{\mathrm{mix}}$, respectively. 
First, consider the difference between $\cA(x')$ and $\cA_{\mathrm{mix}}$. They differ only in the initial estimate $\widetilde{\LS}$ (either $\mathcal{B}(x')$ or $\mathcal{B}(x)$). Since $\mathcal{B}$ is $(\veps_1,\delta_1)$-DP and post-processing does not affect differential privacy, it follows that for every event $E$,
\begin{eqnarray}
    p_{x'}(E) \leq e^{\veps_1}p_{\mathrm{mix}}(E) + \delta_1\text{ and }p_{\mathrm{mix}}(E) \leq e^{\veps_1} p_{x'}(E) + \delta_1. \label{ineq:dp-x'-mix}
\end{eqnarray}
Then consider the difference between $\cA(x)$ and $\cA_{\mathrm{mix}}$. Let $F$ denote the events $\widetilde{\LS}_x \ge \LS_{\cF}(x)$. %
By the precondition of the lemma, $\Pr[\mathcal{B}(x) \ge \LS_\cF(x)]>1-\delta_2$, %
$\Pr[F] > 1 - \delta_2$. Let $\widetilde{\cX} = \{\widetilde{x} \mid \widetilde{x} \in \cX,\LS_\cF(\widetilde{x}) \le \LS_\cF(x)\}$. Note that $x,x' \in \widetilde{\cX}$. For any $\widetilde{x}\in \widetilde{\cX}$ and $i\in [k]$, conditioned on $F$, the algorithm 
\[\cA_i(\widetilde{x}) = f_i(\widetilde{x}) + \Lap^{d_i}\left(\widetilde{\LS}_x \cdot 2\sqrt{2k\ln(1/\delta')}/\veps_2\right) 
\] 
over $\widetilde{\cX}$ is $(\veps_2 /(2\sqrt{2k\ln(1/\delta')})$-DP. By applying the advanced composition theorem (\Cref{def:advanced composition}), the algorithm $(A_i(\widetilde{x}))_{i = 1}^k$ is $(\widetilde{\veps}, \delta')$-DP, where $\delta' \le \delta_2$ and
\[
\widetilde{\veps} = k \cdot \frac{\veps_2}{2}\cdot \frac{\veps_2}{8k\ln (1/\delta')} + \frac{\veps_2}{2} \le \veps_2.
\]

It implies 
\[p_{\mathrm{mix}}(E|F) \leq e^{\veps_2}p_x(E|F) + \delta_2\text{ and }\ p_x(E|F) \leq e^{\veps_2}p_{\mathrm{mix}}(E|F) + \delta_2.\] Since the probability of $F$ is the same under both $p_{\mathrm{mix}}$ and $p_x$, we can strengthen this to
\[p_{\mathrm{mix}}(E \cap F) \leq e^{\veps_2}p_x(E \cap F) + \delta_2\text{ and }p_{x}(E \cap F) \leq e^{\veps_2}p_{\mathrm{mix}}(E \cap F) + \delta_2.
\]Note that $\Pr(\overline{F}) \leq\delta_2$ and thus%
\begin{align}
    p_{\mathrm{mix}}(E) &\leq p_{\mathrm{mix}}(E \cap F) + p_{\mathrm{mix}}(E \cap \overline{F}) \notag \\
    &\leq e^{\veps_2}p_x(E \cap F) + \delta_2 + p_{\mathrm{mix}}(E \cap \overline{F}) \notag \\
    &\leq e^{\veps_2}p_x(E)+2\delta_2.    \label{ineq:dp-mix-x}
\end{align}

By symmetry, we can obtain
\begin{eqnarray}\label{ineq:dp-x-mix}
    p_{x}(E) \leq e^{\veps_2}p_{\mathrm{mix}}(E)+2\delta_2.
\end{eqnarray}

Plugging the inequalities (\ref{ineq:dp-x'-mix}) into (\ref{ineq:dp-mix-x}) and (\ref{ineq:dp-x-mix}), we get 
\[
    p_{x'}(E) \leq e^{\veps_1+\veps_2}p_x(E)+2e^{\veps_1}\delta_2+\delta_1\text{ and }p_{x}(E) \leq e^{\veps_1+\veps_2}p_{x'}(E)+e^{\veps_2}\delta_1+2\delta_2.
\]
Hence we prove $\cA$ is $(\veps_1+\veps_2, \max(\delta_1+2e^{\veps_1}\delta_2,2\delta_2+e^{\veps_2}\delta_1))$-DP.
\end{proof}

\textbf{Privacy} We have the following lemma.

\begin{lemma}\label{lem: ADP privacy}
    \Cref{alg: ADP_RSC} is $(\veps,\delta)$-DP.
\end{lemma}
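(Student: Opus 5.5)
The plan is to realize \Cref{alg: ADP_RSC} as an instance of the mechanism $\cA$ in \Cref{lem: local local sensitivity}, with $\mathcal{B} = \textsc{EstimateHS}(G,H,\veps',\delta')$, and then finish by post-processing.

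\textbf{Step 1 (the estimator).} By \Cref{lem:high order privacy}(1), $\mathcal{B}$ is $(\veps_1,\delta_1)$-DP with $\veps_1 = m_H\veps'$ and $\delta_1 = \delta' + m_He^{\veps'}\delta'$. By part (2) with $k=1$, $\Pr[\mathcal{B}(G)\ge \LS_{f_H}(G)]\ge 1-\delta'$, so we may take $\delta_2=\delta'$. Strictly, the lemma asks for a strict inequality $>1-\delta_2$; I would handle this by a negligible adjustment, or by observing that its proof only uses $\Pr[\overline F]\le \delta_2$.

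\textbf{Step 2 (the released quantity as a family $\cF$).} Reuse the node partition from the proof of \Cref{lem: PDP privacy}, adapted to the new noise scale, which uses $(\lceil\log n\rceil+1)^d$ in place of the exponent $2d$. Group the nodes of the $2d$D range tree into $k$ sets. For each set, let $f_i(G)=\bw_i$ be the vector of its node weights, so that $\|\bw_i\|_1=\sum_{\bv\in S_i}w_{\bv}$ with $S_i$ independent of the edges.

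\textbf{Step 3 (sensitivity of each $f_i$).} By the monotonicity argument of \Cref{eq:GSwi=GSw}, now applied locally, for $G\sim G'$ we get $\|\bw_i-\bw'_i\|_1\le |\sum_{\bv}(w_{\bv}-w'_{\bv})| = |f_H(G)-f_H(G')|$. Hence $\LS_{f_i}(G)\le \LS_{f_H}(G)$ and $\LS_\cF(G)\le \LS_{f_H}(G)$, so $\mathcal{B}$ also upper-bounds $\LS_\cF$ with the same probability.

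\textbf{Step 4 (matching the noise scale and the parameters).} \textsc{TreeConst} adds $\Lap(1/\veps'')$ with
\[
1/\veps'' = \widetilde{\HS}_{f_H}(G)\,(\lceil\log n\rceil+1)^d\cdot 2\sqrt{2\ln(1/\delta'')}/\veps'.
\]
This is exactly the scale $\widetilde{\LS}\cdot 2\sqrt{2k\ln(1/\delta'')}/\veps_2$ prescribed by \Cref{lem: local local sensitivity}, with $\veps_2=\veps'$ and $\delta''=\min(e^{-\veps'/8},\delta')$ as in that lemma, provided $\sqrt{k}\le(\lceil\log n\rceil+1)^d$. This holds when $k=(\lceil\log n\rceil+1)^{2d}$ sets, so the advanced composition inside \Cref{lem: local local sensitivity} is what saves the square root.

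The lemma then makes the pair $(\widetilde{\HS}_{f_H}(G),\widetilde{T}_1)$ $(\veps_1+\veps',\,\max(\delta_1+2e^{\veps_1}\delta',\,2\delta'+e^{\veps'}\delta_1))$-DP. Substituting, the privacy loss is $\veps_1+\veps' = (m_H+1)\veps'=\veps$. The first failure term is $\delta'(1+m_He^{\veps'}+2e^{m_H\veps'})$ and the second is $\delta'(2+e^{\veps'}+m_He^{2\veps'})$. Their maximum matches the setting of $\delta$ in line 2 of \Cref{alg: ADP_RSC}.

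\textbf{Step 5 (post-processing).} The query answers are computed from $\widetilde{T}_1$ alone, so \Cref{def:post-processing} completes the proof.

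\textbf{Main obstacle.} The delicate point is Step 3 combined with Step 4: confirming that the partition count $k$ is such that $\sqrt{k}$ equals the factor $(\lceil\log n\rceil+1)^d$ used in $\veps''$. It also requires checking that each group's $\ell_1$ local sensitivity is bounded by $\LS_{f_H}$, which is the $\LS_\cF$ that $\mathcal{B}$ dominates. If the per-group bound failed, I would instead bound $\LS_\cF$ by $\max_i$ over the groups via monotonicity of $\bw$, which still gives $\LS_{f_H}$.
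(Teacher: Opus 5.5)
Your proposal is correct and follows the paper's proof essentially step for step. You instantiate \Cref{lem: local local sensitivity} with $\mathcal{B}=\textsc{EstimateHS}$, taking $\veps_1=m_H\veps'$, $\delta_1=(1+m_He^{\veps'})\delta'$, $\veps_2=\veps'$ and $\delta_2=\delta'$; you take $\cF$ to be the $\theta\le(\lceil\log n\rceil+1)^{2d}$ depth-class vectors $\bw_i$, so that $\sqrt{\theta}$ is absorbed into the factor $(\lceil\log n\rceil+1)^d$ in $\veps''$; you check that the resulting $\delta$ matches line 2 of \Cref{alg: ADP_RSC}; and you finish by post-processing, exactly as the paper does (your explicit monotonicity bound $\LS_{\bw_i}(G)\le\LS_{f_H}(G)$ is slightly more careful than the paper's one-line $\LS_{\bw}(G)=\LS_{f_H}(G)$).
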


\begin{proof}
We continue to use $\bw$ as the vector output in \Cref{alg: Subgraph Counting Projection}. We use $\bw$ and $\bw'$ to denote the different weight vectors formed by neighboring graphs $G$ and $G'$, respectively. By fixing $H$ and treating $\bw$ as a function of $G$, we can define the local sensitivity of $\bw$, denoted as $\LS_{\bw}(G)$, by \Cref{def:Local Sensitivity}. Then we have 
\begin{align*}
\LS_{\bw}(G) =& \max_{\bw'}\norm{\bw-\bw'}_1 = \max_{G':G' \sim G}|f_H(G) - f_H(G')| = \LS_{f_H}(G)
\end{align*}
by a proof similar to \Cref{lem: PDP privacy}. Thus, if we get a noisy estimate of $\LS_{f_H}(G)$, we get a noisy estimate of $\LS_{\bw}(G)$. 

Similar to the proof of \Cref{lem: PDP privacy}, we can obtain a collection of weight vectors $\{\bw_i\}_{i = 1}^{\theta}$ with $\theta \le (\lceil \log n\rceil + 1)^{2d}$. Let $d_i$ denote the dimension of $\bw_i$ for all $i\in [\theta]$. By \Cref{lem:high order privacy}, we can get $\widetilde{\HS}_{f_H}(G)$  for $(m_H\veps', \delta'+m_He^{\veps'}\delta')$-DP at Step 3 of \Cref{alg: ADP_RSC}. Then, according to \Cref{lem: local local sensitivity}, if we release 
$$
\cA(G) =\left(\widetilde{\HS}_{f_H}(G),\left(\bw_i + \Lap^{d_i}\left(\frac{\widetilde{\HS}_{f_H}(G)\cdot (\lceil\log n\rceil + 1)^d\cdot 2\sqrt{2\ln(1/\delta'')}}{\veps'}\right)\right)_{i = 1}^{\theta}\right),
$$ 
where $\delta''= \min(e^{-\veps'/8},\delta')$, we can obtain a $((m_H+1)\veps', \max(2e^{m_H\veps'}+m_He^{\veps'}+1, m_He^{2\veps'}+e^{\veps'}+2)\delta')$-DP estimate of $\bw$. Here, we set $\veps = (m_H+1)\veps'$ and $\delta = \max(2e^{m_H\veps'}+m_He^{\veps'}+1, m_He^{2\veps'}+e^{\veps'}+2)\delta'$. Note that \Cref{alg: ADP_RSC} is actually a post-processing step of $\cA(G)$. By the post-processing property (\Cref{def:post-processing}), \Cref{alg: ADP_RSC} satisfies $(\veps, \delta)$-DP. %
\end{proof}

\textbf{Utility} we have the following lemma.

\begin{lemma}\label{lem: ADP utility}%
    For any $\veps > 0$ and $\delta \in (0, 1)$, the output of \Cref{alg: ADP_RSC} satisfies 
    $$
   \begin{aligned}
    \max_{q\in Q} \absnorm{f_H(G_q)-\widetilde{f}_H(G_q)} = O\left(\frac{\widetilde{\HS}_{f_H}(G)\cdot \sqrt{(\veps+\log(1/\delta))\log(n|Q|)} \cdot\log^{2d}{n}}{\veps}\right) \\
\end{aligned}
$$
with probability at least $1-\frac{1}{n}$. Here, the hidden constants are of the form $c^{O(d)}$ for some universal constant $c>1$.
\end{lemma}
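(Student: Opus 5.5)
The plan mirrors the proof of \Cref{lem: PDP utility}; the only change is the noise scale produced by $\veps''$. First, I will write each answer as a sum of noisy node weights. Fix $q\in Q$ and discretize it via \Cref{clm:discretized to n2}. The algorithm queries $\widetilde{T}_1$ on $\prod_{i=1}^{d}([\ell_i,n]\times[1,r_i])$. Each of the $2d$ coordinate ranges has one endpoint equal to $1$ or $n$, i.e.\ it is one-sided. Hence \Cref{lem:query nodes} applies with $k=2d$, and the answer is a sum of at most $m\le\lceil\log n\rceil^{2d}$ node weights $\widetilde w(p_i)=w(p_i)+Y_{p_i}$. By \Cref{lem: sum of weight of vertex tuple}, $\sum_i w(p_i)=f_H(G_q)$. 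So the error is exactly $|\sum_{i=1}^m Y_{p_i}|$.

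Second, I will condition on the output of \textsc{EstimateHS}. Given $\widetilde{\HS}_{f_H}(G)$, the tree noises are fresh and independent, and the $Y_{p_i}$ are i.i.d.\ Laplace with scale
\[
b=\frac{1}{\veps''}=\frac{\widetilde{\HS}_{f_H}(G)\,(\lceil\log n\rceil+1)^d\, 2\sqrt{2\ln(1/\delta'')}}{\veps'}.
\]
Apply the second part of \Cref{lem:laplace concentration} with $\beta=2n|Q|$, padding the sum with zero-scale terms or taking $m=\lceil\log n\rceil^{2d}$ as in the pure case. This gives
\[
\Bigl|\sum_i Y_{p_i}\Bigr|\le 2\sqrt2\, b\sqrt{m\log(2n|Q|)}
\]
except with probability $1/(n|Q|)$. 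A union bound over $q\in Q$ then yields overall failure probability at most $1/n$.

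Third, I will simplify the bound. We have $\sqrt m\le\lceil\log n\rceil^{d}$ and $(\lceil\log n\rceil+1)^d=c^{O(d)}\log^d n$, so together they contribute $c^{O(d)}\log^{2d}n$. Since $\veps'=\veps/(m_H+1)$ and $m_H=O(1)$, the factor $1/\veps'$ becomes $O(1/\veps)$.

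The only genuinely fiddly step, which I expect to be the main obstacle, is bounding $\sqrt{\ln(1/\delta'')}$ by $O\bigl(\sqrt{\veps+\log(1/\delta)}\bigr)$. Here $\delta''=\min(e^{-\veps'/8},\delta')$, so $\ln(1/\delta'')=\max(\veps'/8,\ln(1/\delta'))\le \veps+\ln(1/\delta')$. For $\delta'$, note that $\delta=C\delta'$ with $C=\max(2e^{m_H\veps'}+m_He^{\veps'}+1,\dots)=e^{O(\veps)}$, so $\ln(1/\delta')=\ln(1/\delta)+\ln C=O(\veps+\log(1/\delta))$. Putting these together gives
\[
\max_{q}|f_H(G_q)-\widetilde f_H(G_q)|=O\!\left(\frac{\widetilde{\HS}_{f_H}(G)\sqrt{(\veps+\log(1/\delta))\log(n|Q|)}\log^{2d}n}{\veps}\right).
\]
Because we conditioned on $\widetilde{\HS}_{f_H}(G)$, this bound is stated in terms of the random quantity $\widetilde{\HS}_{f_H}(G)$ itself, which is what \Cref{lem: ADP utility} claims, and the hidden constant is $c^{O(d)}$.
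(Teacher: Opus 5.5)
Your proposal is correct and follows the paper's proof essentially step for step. You write each answer as a sum of at most $\lceil\log n\rceil^{2d}$ noisy node weights via \Cref{lem:query nodes} and \Cref{lem: sum of weight of vertex tuple}, apply \Cref{lem:laplace concentration} with scale $b=1/\veps''$ and $\beta=2n|Q|$, take a union bound over $Q$, and unwind $\veps'$, $\delta'$, $\delta''$. One small imprecision, implicit in the paper's proof as well: since $C\ge 4$, you only get $\ln C=O(1+\veps)$ rather than $O(\veps)$, so absorbing it into $O(\veps+\log(1/\delta))$ tacitly requires $\log(1/\delta)=\Omega(1)$ (e.g.\ $\delta\le 1/2$).
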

\begin{proof}
Let $P_q = \{p_i\}_{i = 1}^{m}$ with $|P_q| = m$ denote the set of nodes in the $2d$D range tree selected by a query $q\in Q$. Then we have $m \le \lceil \log n\rceil^{2d}$ by \Cref{lem:query nodes}. For any $p_i \in P_q$, let $Y_{p_i}$ denote an independent random variable with $Y_{p_i} \sim \Lap(\widetilde{\HS}_{f_H}(G)\cdot (\lceil \log n\rceil + 1)^d \cdot 2\sqrt{2\ln(1/\delta'')}/\veps')$, where $\veps'$ and $\delta''$ are defined in \Cref{alg: ADP_RSC}. Let $w(p_i)$ denote its weight and $\widetilde{w}(p_i)$ denote the weight $w(p_i)$ plus Laplace noise.  Then the additive error $|f_H(G_q)-\widetilde{f}_H(G_q)|$ satisfies
 \begin{align*}
    \left|f_H(G_q)-\widetilde{f}_H(G_q)\right|
     &=\absnorm{\sum_{i = 1}^{m}w(p_i)-\sum_{i = 1}^{m}\widetilde{w}(p_i)} \leq \left|\sum_{i = 1}^{m} Y_{p_i}\right| \\
     &=   O\left(\frac{\widetilde{\HS}_{f_H}(G)\cdot \sqrt{\log(1/\delta'')\log(n|Q|)}\cdot \log^{2d}{n}}{\veps'}\right) \\ 
         &= O\left(\frac{\widetilde{\HS}_{f_H}(G)\cdot \sqrt{(\veps+\log(1/\delta))\log(n|Q|)}\cdot\log^{2d}{n}}{\veps}\right) 
     \end{align*}
     with probability at least $1 - \frac{1}{n|Q|}$. The penultimate equation follows from the setting $ b = \widetilde{\HS}_{f_H}(G)\cdot (\lceil \log n\rceil + 1)^d \cdot 2\sqrt{2\ln(1/\delta'')}/\veps' $, $ m = \lceil \log {n} \rceil^{2d}$, and $\beta = 2n|Q|$, as supported by \Cref{lem:laplace concentration}. The final equation follows from $\veps = (m_H + 1)\veps'$, $\delta = \max(2e^{m_H\veps'}+m_He^{\veps'}+1, m_He^{2\veps'}+e^{\veps'}+2)\delta'$ and $\delta'' = \min(e^{-\veps'/8}, \delta')$ defined in \Cref{alg: ADP_RSC}. By the union bound, we finish the proof.
\end{proof}

Combining \Cref{lem: ADP privacy} and \Cref{lem: ADP utility}, we finish the proof of \Cref{thm:main_upper}. The complexity analysis of the algorithm in \Cref{thm:main_upper} is presented in \Cref{sec:time and space}.

\subsection{The complexity analysis of the algorithms in \Cref{thm:main_upper} and \Cref{thm:main_upper_PDP}}\label{sec:time and space}

The algorithms in \Cref{thm:main_upper} and \Cref{thm:main_upper_PDP} correspond to \Cref{alg: PDP_RSC} and \Cref{alg: ADP_RSC}, respectively. Before formally presenting the complexity analysis of these algorithms, we first provide the implementation details about the weight vector $\bw$ and the range tree in these algorithms, which further reduce the time and space consumption both theoretically and in practice.

\textbf{Implementation details} In practical networks, $f_H(G)$ may be significantly lower than the size of the weight vector $\bw$ in \Cref{alg: Subgraph Counting Projection} (i.e., $n^{2d}$). This implies that most components in the weight vector $\bw$ are 0. Thus, we maintain the non-zero components of $\bw$ by a hash map to avoid materializing all of $[n]^{2d}$ %
and employ a \textit{dynamic update strategy} to implement the range tree more efficiently. 

This strategy is based on the observation that the structure of the range tree in \Cref{alg: DP Range Tree Construction} depends solely on $n$ and $d$, rather than on the specific structure of $G$. Moreover, the Laplace noise added to the weight of each tree node is determined before the construction of the range tree in \Cref{alg: DP Range Tree Construction}. Therefore, we can assume that the entire range tree has already been constructed initially, with the weight of each node initialized to 0 and the corresponding Laplace noise added, although this is not actually the case. In other words, each node in the range tree is initially \textit{virtual}. 

Then, each non-zero component in $\bw$, whose number is bounded by $f_H(G)$, is treated as a modification, incrementing the weight of the relevant tree nodes by 1. During this process, if a node is virtual and is about to be modified, it will be \textit{instantiated} just before the modification. The process of instantiation includes generating the corresponding Laplace noise and setting the tree node's weight to the noise value.

Similarly, for any query, the weights of the relevant tree nodes need to be summed. If a queried tree node is virtual, it will be instantiated just before the query. 

Now we present the complexity analysis of the above algorithms. Let $F_H(G)$ denote the time required to list all occurrences of $H$ in $G$. It is always polynomial since, for any $ H $, all occurrences can be simply enumerated in $ O(n^h) $ time. In fact, for some specific $ H $ such as a triangle, $ F_H(G) $ can be improved in sparse graphs \citep{chiba1985arboricity}.

\begin{theorem}\label{thm:time&space_PDP}
The algorithm in \Cref{thm:main_upper_PDP} (i.e., \Cref{alg: PDP_RSC}) requires $O(\min(n^{2d}, (f_H(G) + |Q|)\log^{2d}n))$ space and $O(\min(f_H(G),(dn)^{2d})\log^{2d}n + F_H(G))$ preprocessing time, and each query can be answered in $O(\log^{2d}n)$ time. Here, the hidden constants are of the form $c^{O(d)}$ for some universal constant $c>1$. %
\end{theorem}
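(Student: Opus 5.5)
We need to bound space, preprocessing time, and query time of Algorithm PDP_RSC under the implementation with a hash map and lazily instantiated (virtual) tree nodes.

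\textbf{Preprocessing.} First, computing the ranks $s_i$ takes $O(dn\log n)$ time by sorting each coordinate. Listing all occurrences of $H$ takes $F_H(G)$ time. For each occurrence we compute its rank tuple in $O(dh)=O(d)$ time and increment the hash-map entry, so $\bw$ is built in $O(F_H(G)+d\,f_H(G))$ time. Its number of nonzero entries is at most $\min(f_H(G), N)$, where $N$ bounds the number of distinct possible rank tuples. Each coordinate pair $(s_i(u_i),s_i(v_i))$ ranges over at most $n^2$ values, so $N\le n^{2d}$. The stated $(dn)^{2d}$ is a looser admissible bound; I would argue it as the count of tuples up to the $c^{O(d)}$ slack.

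Next, each nonzero entry is inserted into the virtual $2d$D range tree. A point of $[n]^{2d}$ lies on one root-to-leaf path of length $\lceil\log n\rceil+1$ in the top-level tree. For each node on that path it lies on one path of the associated $(2d-1)$D tree, and so on recursively. Hence it touches $(\lceil\log n\rceil+1)^{2d}=c^{O(d)}\log^{2d}n$ one-dimensional nodes. Each touched node is instantiated if virtual (one Laplace sample and a hash insert) and then incremented, at $O(1)$ expected cost. This gives preprocessing time $O(\min(f_H(G),(dn)^{2d})\log^{2d}n+F_H(G))$. The $O(dn\log n)$ sorting term is absorbed.

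\textbf{Query time.} Discretization uses binary search on the sorted attribute lists, costing $O(d\log n)$. The range-tree query on $\prod_i([\ell_i,n]\times[1,r_i])$ visits $O(\log^{2d}n)$ nodes by \Cref{lem:running time} and \Cref{lem:query nodes}, with constant $c^{O(d)}$. Any virtual node met is instantiated in $O(1)$ time.

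\textbf{Space.} Instantiated nodes number at most $(f_H(G)+|Q|)\cdot O(\log^{2d}n)$, counting the nodes created by updates plus those created by queries. Independently, the full tree has at most $O(n^{2d})$ nodes up to $c^{O(d)}$ factors. This holds because a $k$D range tree on $M$ points has $O(M\log^{k-1}M)$ nodes, but nodes are indexed by dyadic intervals in each coordinate, and there are at most $2n$ dyadic intervals per coordinate, giving at most $(2n)^{2d}$ distinct nodes.

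\textbf{Main obstacle.} The main subtlety is consistency of lazy instantiation with privacy. Because the tree shape depends only on $(n,d)$ and every noise variable is sampled exactly once, the result is distributed identically to \Cref{alg: DP Range Tree Construction}; I would state this explicitly. The remaining care is in bounding the node count by $n^{2d}$ via the dyadic-interval indexing, and in getting the per-point $\log^{2d}$ path count with only $c^{O(d)}$ constants.
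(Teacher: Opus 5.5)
Your argument is correct, but it is organized differently from the paper's proof.

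\textbf{The paper's route.} It splits into two alternative implementations and takes the better one.
\begin{itemize}
\item With the dynamic (lazy) strategy it gets $O((f_H(G)+|Q|)\log^{2d}n)$ space and $O(f_H(G)\log^{2d}n+F_H(G))$ time.
\item With a full static construction it gets $(2n-1)^{2d}$ nodes. Its time is $O(N\log^{2d}N)$ with $N=n^{2d}$, via \Cref{lem:Construction time}. Since $\log^{2d}(n^{2d})=(2d\log n)^{2d}$, this is $O((dn)^{2d}\log^{2d}n)$, which is where the $d^{2d}$ comes from.
\end{itemize}

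\textbf{Your route.} You use a single lazy implementation and bound each resource by both quantities at once. The number of updates is the number of nonzero entries of $\bw$, at most $\min(f_H(G),n^{2d})$. The number of instantiated nodes is at most both the touched-node count and the total $(2n-1)^{2d}$ tuples of canonical intervals.

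\textbf{What each buys.} Your route gives a slightly sharper time term ($n^{2d}$ instead of $(dn)^{2d}$). More importantly, it shows one algorithm attains both minima simultaneously. The paper's ``choose a strategy depending on context'' couples its space and time choices, so taking the two minima independently is not immediate. Your observation that lazily instantiated nodes are a subset of all nodes is exactly what justifies the combined statement. The paper's static branch, in return, reuses the generic construction-time lemma directly and needs no hashing; your per-operation costs are expected $O(1)$.

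\textbf{Two small corrections.}
\begin{itemize}
\item You describe $(dn)^{2d}$ as the count of tuples up to $c^{O(d)}$ slack. That is off, because $d^{2d}=2^{\Theta(d\log d)}$ is not $c^{O(d)}$. No such argument is needed: $n^{2d}\le(dn)^{2d}$ trivially, so your bound implies the stated one.
\item The $O(dn\log n)$ cost of computing ranks is not absorbed in general. It is absorbed only when $F_H(G)$ or $f_H(G)\log^{2d}n$ is at least that large; it fails, for instance, for Chiba--Nishizeki triangle listing on very sparse graphs with few triangles. The paper silently omits this term as well. You should treat rank computation as part of the input setup rather than claim it is absorbed.
\end{itemize}
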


\begin{proof}
We should decide whether to adopt a dynamic update strategy depending on the specific context. 

If so, the space used by \Cref{alg: PDP_RSC} actually depends on the number of tree nodes accessed during each modification or query, and this number also bounds the query time (see \Cref{lem:running time}). Therefore, \Cref{alg: PDP_RSC} requires $O((f_H(G) + |Q|)\log^{2d}n)$ space and $O(f_H(G)\log^{2d}n + F_H(G))$ preprocessing time by \Cref{lem:running time} since the number of non-zero components of the weight vector $\bw$ in \Cref{alg: Subgraph Counting Projection} can be bounded by $f_H(G)$. Note that querying the same induced subgraph multiple times does not incur additional space; therefore, we retain the previous assumption on $Q$ in \Cref{sec:contribution}. 

Otherwise, since the number of nodes in a $1$D range tree is at most $(2n - 1)$, the maximum number of nodes in the \(2d\)D range tree in \Cref{alg: Subgraph Counting Projection} is $(2n - 1)^{2d}$, obtained by aggregating the results over the $2d$ dimensions. Therefore, \Cref{alg: PDP_RSC} requires $O(n^{2d})$ space and $O((dn)^{2d}\log^{2d} n + F_H(G))$ preprocessing time by \Cref{lem:Construction time}. 

Both cases require $O(\log^{2d}n)$ query time \Cref{lem:running time}. By combining the two cases, the proof is complete.
\end{proof}

\begin{theorem}\label{thm:time&space}
The algorithm in \Cref{thm:main_upper} (i.e., \Cref{alg: ADP_RSC}) requires $O(\min(n^{2d}, (f_H(G) + |Q|)\log^{2d}n))$ space and $O(n^{2m_H} + \min(f_H(G),(dn)^{2d})\log^{2d}n + F_H(G))$ preprocessing time, and each query can be answered in $O(\log^{2d}n)$ time. Here, the hidden constants are of the form $c^{O(d)}$ for some universal constant $c>1$.
\end{theorem}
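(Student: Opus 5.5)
The plan is to reuse the proof of \Cref{thm:time&space_PDP} almost verbatim and add only the one new component of \Cref{alg: ADP_RSC}: the call to \textsc{EstimateHS} (\Cref{alg: Estimating higher-order private local sensitivity}). All other steps of \Cref{alg: ADP_RSC} match \Cref{alg: PDP_RSC} up to the Laplace scale. These steps are the projection \textsc{Proj}, the tree construction \textsc{TreeConst} with parameter $\veps''$, and the query loop. The scale of the noise does not affect the structure of the $2d$D range tree, the number of nodes touched per modification or query, or the dynamic update strategy. So the space bound $O(\min(n^{2d},(f_H(G)+|Q|)\log^{2d}n))$, the query time $O(\log^{2d}n)$ from \Cref{lem:running time}, and the preprocessing contribution $O(\min(f_H(G),(dn)^{2d})\log^{2d}n+F_H(G))$ carry over directly. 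As in the earlier proof, we choose between the dynamic-update implementation and full materialization, whichever is cheaper.

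The remaining work is to bound the running time of \textsc{EstimateHS}. The loop runs $m_H=O(1)$ times. Each iteration computes one $f_H^{(k)}(G)=\max_{|S|=k}f_H(G,S)$ and then samples one Laplace variable in $O(1)$ time. We would compute $f_H^{(k)}(G)$ by brute force. For each set $S$ of $k\le m_H$ vertex pairs there are at most $\binom{n^2}{k}\le n^{2k}$ choices. For each choice we count the occurrences of $H$ in $(V,E\cup S)$ that contain all of $S$. Because $S$ fixes at most $2k$ vertices and every occurrence must include them, we can bound the cost of counting per $S$. To keep the total at $O(n^{2m_H})$ we would instead invert the enumeration. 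We enumerate each occurrence $H'$ of $H$ in $K_n$, in $O(n^h)$ time, and for each $k$ and each $k$-subset $S\subseteq E(H')$ of its edges we test whether $E(H')\setminus S\subseteq E$. If so, we increment a counter indexed by $S$, stored in a hash map keyed by sets of pairs. At the end we take the maximum counter over $|S|=k$. There are at most $2^{m_H}=O(1)$ subsets per occurrence, so this takes $O(n^h)$ time. The hash map has at most $O(n^{2m_H})$ keys, since $S$ ranges over sets of at most $m_H$ pairs. Because $H$ is connected, $h\le m_H+1\le 2m_H$, so the whole step is $O(n^{2m_H})$. Alternatively, direct enumeration over all $S$ with $|S|\le m_H$ gives $O(n^{2m_H})$ choices. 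In that version, once the $2k$ endpoints are fixed, the count for each choice costs $O(1)$ amortized if we charge per occurrence. Either way the target bound $O(n^{2m_H})$ follows.

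Summing the contributions gives preprocessing time $O(n^{2m_H}+\min(f_H(G),(dn)^{2d})\log^{2d}n+F_H(G))$. The constants $c^{O(d)}$ come from the range-tree lemmas, as before. The main obstacle is arguing cleanly that $f_H^{(k)}$ is computable within $O(n^{2m_H})$ time. If a per-$S$ counting argument does not give this cleanly, we fall back on the occurrence-inverted enumeration described above, which yields $O(n^{h}2^{m_H})\subseteq O(n^{2m_H})$ time.
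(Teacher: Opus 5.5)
Your skeleton matches the paper's. Everything except \textsc{EstimateHS} is inherited from the pure-DP complexity analysis, since the noise scale changes nothing structural, and the only new term is the cost of computing $f_H^{(k)}(G)$ for $k\le m_H$. Where you differ is in justifying that term: the paper cites Lemma~6 of Nguyen et al.\ for the $O(n^{2m_H})$ bound, while you prove it directly. Your occurrence-inverted enumeration is correct for time. Any $S$ with $f_H(G,S)>0$ is a $k$-subset of $E(H')$ for some copy $H'$ of $H$ in $K_n$ with $E(H')\setminus S\subseteq E$, so your counters are exactly the values $f_H(G,S)$. It even gives the sharper bound $O(2^{m_H}n^h)$, which lies inside $O(n^{2m_H})$ because $h\le m_H+1$ for connected $H$.

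The gap is in the space part of the statement, which you say ``carries over directly.'' For the implementation you commit to, it does not, because the counter table is extra storage not covered by $O(\min(n^{2d},(f_H(G)+|Q|)\log^{2d}n))$. At $k=m_H$ every copy of $H$ in $K_n$ becomes a key whatever $G$ is, so the table has $\Theta(n^h)$ entries.
\begin{itemize}
\item For triangles with $d=1$ this is $\Theta(n^3)$ against a promised $O(n^2)$.
\item For sparse $G$ with small $|Q|$ it far exceeds $(f_H(G)+|Q|)\log^{2}n$.
\end{itemize}

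The fix is to never materialize the counters and keep only a running maximum. For each $k$-edge subset $T$ of $E_H$, enumerate the $O(n^{|V(T)|})$ embeddings $S$ of $T$ in $K_n$. For each $S$, compute $f_H(G,S)$ by enumerating the $O(n^{h-|V(T)|})$ placements of the remaining vertices, update the maximum, and discard. Duplicate $S$ are harmless for a maximum. This takes $O(2^{m_H}n^h)\subseteq O(n^{2m_H})$ time and $O(1)$ working space beyond the input graph, so the space bound of the pure-DP analysis then genuinely carries over.

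Two smaller points. Your ``direct enumeration over all $S$'' variant is also small-space, but its cost is $O(n^{2k}+n^h)$ per level $k$, not $O(1)$ amortized per $S$; the conclusion still holds. Also, $f_H^{(m_H)}(G)=1$ trivially for $n\ge h$, so that level needs no computation at all.
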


\begin{proof}
The proof is similar to that of \Cref{thm:time&space_PDP}. The only difference is that \Cref{alg: ADP_RSC} requires preprocessing $\widetilde{\HS}_{f_H}(G)$ (see \Cref{alg: Estimating higher-order private local sensitivity}). By Lemma 6 in \citet{nguyen2024faster}, \Cref{alg: Estimating higher-order private local sensitivity} requires $O(n^{2m_H})$ time to preprocess $\widetilde{\HS}_{f_H}(G)$.
\end{proof}

It is worth noting that the dynamic update strategy ensures that the above algorithms run in polynomial time for $d = O(\log n/ \log \log n)$. Although the time and space complexity scale exponentially with the dimension and the size of the pattern graph, this limitation reflects the inherent difficulty of range counting and subgraph counting, rather than being specific to our algorithms. %

\section{Deferred Proofs and Analysis from \Cref{sec:Lower Bounds}}\label{appendix:missing proof of lower bounds}

\subsection{Proof of \Cref{lem:discCalphaH}}\label{sec:proofofdisccalphah}

    Note for each query $q \in Q$, the quantity $(\bA_H\bx_H)_{q}$ denotes the number of occurrences of $H$ in the induced subgraph $(G(\bx))[V_q]$, i.e.,
    \begin{align}
        (\bA_H\bx_H)_{q} = f_H((G(\bx))[V_q]). \label{eq:AHxHq}
    \end{align}
    
    For each $\chi = \bx_H - \bx_H'\in C_{\alpha,H}$, let $q_\chi = \arg \max_{q\in Q} \absnorm{(\bA (\bx - \bx'))_q}$. Then we have 
    \[
    \begin{aligned}
    & \disc_{C_{\alpha, H}}(\bA_H) = \min_{\chi \in C_{\alpha, H}} \|\bA_H \chi\|_\infty \\
    =& \min_{\chi \in C_{\alpha, H}} \max_{q\in Q} \absnorm{(\bA_H\chi)_q} \ge \min_{\chi \in C_{\alpha, H}} \absnorm{(\bA_H\chi)_{q_\chi}} \\
    =& \min_{\substack{\chi\in C_{\alpha, H}\\\chi = \bx_H - \bx_H'}} \absnorm{(\bA_H \bx_H)_{q_\chi} - (\bA_H \bx_H')_{q_\chi}} \\
    =& \min_{\substack{\chi\in C_{\alpha, H}\\\chi = \bx_H - \bx_H'}} \absnorm{f_H\left((G(\bx))\left[V_{q_\chi}\right]\right) - f_H\left((G(\bx'))\left[V_{q_\chi}\right]\right)}, \\ 
    \end{aligned}
    \]
    where the last equation follows from \Cref{eq:AHxHq} and the others follow from the definitions. For any $\chi \in C_{\alpha, H}$ and $\bx \in \{0, 1\}^n$, we use $G_{\chi,\bx}$ to denote $(G(\bx))[V_{q_\chi}]$ and let $\widetilde{E}_{\chi,\bx}$ denote the set of private edges in $G_{\chi,\bx}$. For any $G_{\chi, \bx}$, we partition all occurrences of $H$ in $G_{\chi, \bx}$ by the number of private edges that they contain. Specifically, let $g^{(i)}_H(G_{\chi,\bx})$ denote the number of occurrences of $H$ in $G_{\chi,\bx}$ containing exactly $i$ private edges. Note that any occurrence of $H$ in $G_{\chi,\bx}$ can contain at most $\frac{h}{2}$ private edges. Then we have $f_H(G_{\chi,\bx}) = \sum_{i = 0}^{\lfloor \frac{h}{2} \rfloor}g_H^{(i)}(G_{\chi,\bx})$ for any $G_{\chi,\bx}$. Therefore, it follows that
    \begin{align}
    \disc_{C_{\alpha, H}}(\bA_H)  \ge \min_{\substack{\chi\in C_{\alpha, H}\\\chi = \bx_H - \bx_H'}} \absnorm{f_H\left(G_{\chi,\bx}\right) - f_H\left(G_{\chi,\bx'}\right)} = \min_{\substack{\chi\in C_{\alpha, H}\\\chi = \bx_H - \bx_H'}} \absnorm{\sum_{i = 0}^{\lfloor \frac{h}{2} \rfloor}g_H^{(i)}\left(G_{\chi,\bx}\right) - \sum_{i = 0}^{\lfloor \frac{h}{2} \rfloor}g_H^{(i)}\left(G_{\chi,\bx'}\right)}. \label{eq:disc=min|g|} 
    \end{align} 
    We further partition $\absnorm{\sum_{i = 0}^{\lfloor \frac{h}{2} \rfloor}g_H^{(i)}(G_{\chi,\bx}) - \sum_{i = 0}^{\lfloor \frac{h}{2} \rfloor}g_H^{(i)}(G_{\chi,\bx'})}$ in \Cref{eq:disc=min|g|} into three parts, .i.e., 
    \[
    \begin{aligned}
    &\absnorm{\sum_{i = 0}^{\lfloor \frac{h}{2} \rfloor}g_H^{(i)}\left(G_{\chi,\bx}\right) - \sum_{i = 0}^{\lfloor \frac{h}{2} \rfloor}g_H^{(i)}\left(G_{\chi,\bx'}\right)}\\=&  \Bigg|g_H^{(1)}\left(G_{\chi,\bx}\right) - g_H^{(1)}\left(G_{\chi,\bx'}\right) + g_H^{(0)}\left(G_{\chi,\bx}\right) - g_H^{(0)}\left(G_{\chi,\bx'}\right)
    + \sum_{i = 2}^{\lfloor \frac{h}{2} \rfloor}g_H^{(i)}\left(G_{\chi,\bx}\right) 
    - \sum_{i = 2}^{\lfloor \frac{h}{2} \rfloor}g_H^{(i)}\left(G_{\chi,\bx'}\right)\Bigg|  \\
    \ge & \left|g_H^{(1)}\left(G_{\chi,\bx}\right) - g_H^{(1)}\left(G_{\chi,\bx'}\right)\right| - \left|g_H^{(0)}\left(G_{\chi,\bx}\right) - g_H^{(0)}\left(G_{\chi,\bx'}\right)\right| - \left(\left|\sum_{i = 2}^{\lfloor \frac{h}{2} \rfloor}g_H^{(i)}\left(G_{\chi,\bx}\right)\right| + \left|\sum_{i = 2}^{\lfloor \frac{h}{2} \rfloor}g_H^{(i)}\left(G_{\chi,\bx'}\right)\right|\right)\\
    :=& \text{(I)} - \text{(II)} - \text{(III)},
    \end{aligned}
    \]
    where the last inequality follows from the inequality $|a + b + c + d| \ge |a| - |b| - |c| - |d|$. In the following,
    we denote the above three terms by (I) (II) (III), and we bound them separately. 
    
    \textbf{Term (I)} We denote by $g^{(e)}_H(G_{\chi,\bx})$ the number of occurrences of $H$ in $G_{\chi,\bx}$ containing exactly one private edge $e \in \widetilde{E}_{\chi,\bx}$ . Then we have $g^{(1)}_H(G_{\chi,\bx}) = \sum _{e\in E_{\chi, \bx}}g^{(e)}_H(G_{\chi,\bx})$ by definition. Note that for any $e_1,e_2\in \widetilde{E}_{\chi,\bx}$, it holds that $g^{(e_1)}_H(G_{\chi,\bx}) = g^{(e_2)}_H(G_{\chi,\bx})$ since all private edges are structurally symmetric in $G_{\chi,\bx}$. Furthermore, once $e$ is fixed, $g^{(e)}_H(G_{\chi,\bx}) $ is determined solely by $ n $ and $ |V_{q_\chi}| $%
    , and is independent of any private information in $G_{\chi,\bx}$. Without loss of generality, let $g^{(e)}_H(G_{\chi,\bx}) = \overline{g}_H(n, |V_{q_\chi}|)$ for any $e\in \widetilde{E}_{\chi,\bx}$. Observe that $\overline{g}_H(n, |V_{q_\chi}|) \ge \GS_{f_H}$ as each private edge, together with the vertices in $W$, forms a $n$-vertex clique (see \Cref{fig:G(x)}). By combining the above properties, we can derive
    \begin{align}
        &\left|g_H^{(1)}\left(G_{\chi,\bx}\right) - g_H^{(1)}\left(G_{\chi,\bx'}\right)\right| \notag\\
        =& \left|\sum _{e\in E_{\chi, \bx}}g^{(e)}_H(G_{\chi,\bx}) - \sum _{e\in E_{\chi, \bx'}}g^{(e)}_H(G_{\chi,\bx'})\right|\notag \\
        =& \left|\absnorm{\widetilde{E}_{\chi,\bx}}\cdot \overline{g}_H(n, |V_{q_\chi}|) - \absnorm{\widetilde{E}_{\chi,\bx'}}\cdot \overline{g}_H(n, |V_{q_\chi}|)\right|\notag \\
        =& \left|\left(\bA \bx\right)_{q_\chi}\cdot \overline{g}_H(n, |V_{q_\chi}|) - \left(\bA \bx'\right)_{q_\chi}\cdot \overline{g}_H(n, |V_{q_\chi}|)\right| \notag \\
        =& \overline{g}_H(n, |V_{q_\chi}|) \cdot \absnorm{(\bA(\bx - \bx'))_{q_\chi}} \notag \\
        \ge & \GS_{f_H} \cdot \absnorm{(\bA(\bx - \bx'))_{q_\chi}}, \label{term:g(1)-final}
    \end{align}

    where the third equation follows from $|\widetilde{E}_{\chi, \bx}| = \left(\bA \bx\right)_{q_\chi}$ by definition.
    
    \textbf{Term (II)} Observe that for any $G_{\chi,\bx}$, the function $g^{(0)}_H(G_{\chi,\bx})$ is independent of any private information in $G_{\chi,\bx}$. Then it holds that 
    \begin{align}
        \left|g_H^{(0)}\left(G_{\chi,\bx}\right) - g_H^{(0)}\left(G_{\chi,\bx'}\right)\right| = 0. \label{term:g(0)-final} 
    \end{align}

    \textbf{Term (III)} We first prove the following claim.
    \begin{claim}\label{claim:gHi}
        For any $G_{\chi,\bx}$ and $2 \le i \le \lfloor \frac{h}{2}\rfloor$, we have $g_H^{(i)}(G_{\chi,\bx}) = O(n^{h - i})$.
    \end{claim}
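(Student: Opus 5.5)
We prove \Cref{claim:gHi} by bounding the number of vertex choices for an occurrence of $H$ that contains at least $i$ private edges. The key structural fact is that the private edges of $G(\bx)$ form a matching: each private edge is of the form $p^{(1)}p^{(2)}$ for some $p\in P$, and distinct private edges are vertex-disjoint. Hence an occurrence $H'$ containing exactly $i$ private edges uses $2i$ vertices that are paired up. These $2i$ vertices are therefore determined by only $i$ choices of points $p\in P$, not by $2i$ independent vertex choices.

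The count proceeds as follows. Fix an occurrence $H'$ with exactly $i$ private edges $e_1=p_1^{(1)}p_1^{(2)},\dots,e_i=p_i^{(1)}p_i^{(2)}$. Charge $H'$ to the pair consisting of the set $\{p_1,\dots,p_i\}$ and the set of its remaining $h-2i$ vertices. There are at most $\binom{n}{i}=O(n^i)$ choices for the set of private edges. There are at most $\binom{3n-2}{h-2i}=O(n^{h-2i})$ choices for the remaining vertices, since $|V|=3n-2$ and $V_{q_\chi}\subseteq V$. Given the full vertex set of size $h$, at most $h!/|\aut(H)|=O(1)$ occurrences of $H$ live on that vertex set, as in \Cref{lem:ub f_H(G)}. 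Multiplying gives
\[
g_H^{(i)}(G_{\chi,\bx}) = O\!\left(n^{i}\cdot n^{h-2i}\right) = O(n^{h-i}),
\]
with the constant depending only on $h$. This holds for every $\bx$ and every $\chi$, since we only used that the private edges form a matching, which is true regardless of $\bx$ or the query $q_\chi$.

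There is no real obstacle here; the only point requiring care is to use the matching structure rather than naively counting $2i$ vertices independently. The naive count would only give $O(n^{h})$ and would not beat $\GS_{f_H}=\Theta(n^{h-2})$ (\Cref{lem:GS tight bound}) by enough.

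Afterwards, the claim feeds the main argument as follows. Since $i\ge 2$, Term (III) is $O(n^{h-2}\cdot n^{-(i-2)})$ summed over constantly many $i$, so it is $O(n^{h-2})=O(\GS_{f_H})$. It is thus dominated by the $\GS_{f_H}\cdot\disc_{C_\alpha}(\bA)$ lower bound on Term (I) once $\disc_{C_\alpha}(\bA)=\omega(1)$.
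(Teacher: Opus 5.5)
Your proof is correct and follows the same counting as the paper's. You choose the $i$ private edges from the at most $n$ matching edges ($O(n^i)$ ways), then the remaining $h-2i$ vertices ($O(n^{h-2i})$ ways), with only $O(1)$ ways to place $H$ on the resulting vertex set. The only cosmetic difference is that the paper writes that $O(1)$ factor explicitly as $\binom{m_H}{i}\, i!\, 2^i\,(h-2i)!$, whereas you bound it by $h!/|\aut(H)|$.
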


    \begin{proof}
        The number of occurrences of $ H $ in $ G_{\chi,\bx} $ that contain exactly $i$ private edges is equivalent to the following procedure: first, select $i$ private edges and fix the $2i$ endpoints of these edges; then, determine the positions of the remaining vertices in the occurrences of $H$. The number of ways to select $i$ private edges is at most $ \binom{n}{i} \cdot \binom{m_H}{i} \cdot i! \cdot 2^i = O(n^i) $, while the number of ways to assign the remaining vertices is at most $ \binom{3n - 2 - 2i}{h - 2i} \cdot (h - 2i)! = O(n^{h - 2i}) $. Therefore, $g_H^{(i)}(G_{\chi,\bx}) \leq O(n^i)\cdot O(n^{h - 2i}) = O(n^{h - i})$.
    \end{proof}

    Then by \Cref{claim:gHi}, we have 
    \begin{align}
       &\left(\left|\sum_{i = 2}^{\lfloor \frac{h}{2} \rfloor}g_H^{(i)}\left(G_{\chi,\bx}\right)\right| + \left|\sum_{i = 2}^{\lfloor \frac{h}{2} \rfloor}g_H^{(i)}\left(G_{\chi,\bx'}\right)\right|\right) = \sum_{i = 2}^{\lfloor \frac{h}{2} \rfloor}O(n^{h-i}) = O(n^{h-2}). \label{term:g(*)-final} 
    \end{align}

    Substituting Inequality (\ref{term:g(1)-final}) and \Cref{,term:g(0)-final,term:g(*)-final} into \Cref{eq:disc=min|g|}, we obtain
    \[
    \begin{aligned}
        & \disc_{C_{\alpha, H}}(\bA_H) \\ 
        \ge & \min_{\substack{\chi\in C_{\alpha, H}\\\chi = \bx_H - \bx_H'}} \left(\GS_{f_H} \cdot \absnorm{(\bA(\bx - \bx'))_{q_\chi}} - O\left(n^{h-2}\right)\right) \\
        = & \min_{\substack{\chi\in C_{\alpha, H}\\\chi = \bx_H - \bx_H'}} \left(\GS_{f_H} \cdot \absnorm{(\bA(\bx - \bx'))_{q_\chi}}\right) - O\left(n^{h-2}\right) \\
        = &  \GS_{f_H}\cdot\min_{\chi \in C_{\alpha}} \absnorm{(\bA\chi)_{q_\chi}} - O\left(n^{h-2}\right) \\
        = &  \GS_{f_H} \cdot \disc_{C_\alpha}(\bA) - O\left(n^{h-2}\right) \\
        = & \Omega\left(\GS_{f_H} \cdot \disc_{C_\alpha}(\bA)\right).
    \end{aligned}
    \]
    The antepenultimate equation follows from the definitions of $C_{\alpha}$ and $C_{\alpha, H}$; the penultimate equation follows from the definition of $\disc_{C_\alpha}(\bA)$; while the last equation follows from $\disc_{C_\alpha}(\bA) = \omega(1)$ and $\GS_{f_H} = \Theta(n^{h - 2})$ (see \Cref{lem:GS tight bound}).

\subsection{Proof of \Cref{thm:lower bounds dp range subgraph counting}}\label{ssec:the reduction}

The following lemma shows a connection between discrepancy and the reconstruction algorithm, which is similar to Lemma 10 in \citet{muthukrishnan2012optimal}. %

\begin{lemma}
\label{lem:general-attacker}
For any $\bx \in \{0,1\}^n$, there is a deterministic (not necessarily efficient) algorithm $\mathcal{A}$ given an output $\by = \cM(\bx)$ of some mechanism $\cM$ such that $\|\by-\bA_H\bx_H\|_\infty < \frac{1}{2}\disc_{C_{\alpha, H}}(\bA_H)$ satisfies $
\|\mathcal{A}(\by) - \bx\|_1 \le \alpha n$.
\end{lemma}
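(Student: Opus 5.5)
The natural approach is the standard "closest consistent candidate" decoder of \citet{muthukrishnan2012optimal}. Given $\by$, the algorithm $\mathcal{A}$ enumerates all $\bx' \in \{0,1\}^n$. For each candidate it builds the graph $G(\bx')$, hence the indicator vector $\bx'_H$ (which depends only on $\bx'$ through the construction of \Cref{sec:the reduction}). It then outputs any $\bx'$ minimizing $\|\by - \bA_H\bx'_H\|_\infty$. This is deterministic, and clearly inefficient, which the statement allows.

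For correctness, first note that the true $\bx$ is a feasible candidate with $\|\by-\bA_H\bx_H\|_\infty < \frac12\disc_{C_{\alpha,H}}(\bA_H)$ by hypothesis. The minimizer $\hat\bx = \mathcal{A}(\by)$ therefore satisfies
\[
\|\by-\bA_H\hat\bx_H\|_\infty \le \|\by-\bA_H\bx_H\|_\infty < \tfrac12\disc_{C_{\alpha,H}}(\bA_H).
\]
By the triangle inequality, $\|\bA_H(\bx_H-\hat\bx_H)\|_\infty < \disc_{C_{\alpha,H}}(\bA_H)$.

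Now argue by contradiction. Suppose $\|\hat\bx-\bx\|_1 > \alpha n$. Then $\bx-\hat\bx \in \{0,\pm1\}^n$ has $\ell_1$ norm at least $\alpha n$, so $\bx - \hat\bx\in C_\alpha$. By the definition of $C_{\alpha,H}$ as the set of differences $\bx_H-\bx'_H$ with $\bx-\bx'\in C_\alpha$, the vector $\chi=\bx_H-\hat\bx_H$ belongs to $C_{\alpha,H}$. Here one should check that $\chi\in\{0,\pm1\}^{f_H(K_{3n-2})}$, which holds since both are $0/1$ indicator vectors. Then $\|\bA_H\chi\|_\infty \ge \disc_{C_{\alpha,H}}(\bA_H)$ by definition of discrepancy as a minimum over $C_{\alpha,H}$. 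This contradicts the strict inequality above, so $\|\mathcal{A}(\by)-\bx\|_1\le \alpha n$.

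The only delicate point is bookkeeping. The contradiction hypothesis gives $\|\cdot\|_1>\alpha n$, while $C_\alpha$ requires $\ge \alpha n$, so membership follows. We must also check that $C_{\alpha,H}$ is defined via exactly the same pairing $\bx\mapsto\bx_H$ used by the decoder, so that $\chi$ genuinely lies in it. No probabilistic or structural facts about $H$ are needed; \Cref{lem:discCalphaH} enters only later, when this lemma is combined with the RA lower bound.
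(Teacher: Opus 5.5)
Your proposal is correct and matches the paper's proof. The paper's decoder outputs any $\bx'\in\{0,1\}^n$ with $\|\by-\bA_H\bx'_H\|_\infty<\frac12\disc_{C_{\alpha,H}}(\bA_H)$, which exists since $\bx$ qualifies, and then runs the same triangle-inequality contradiction using $\bx'_H-\bx_H\in C_{\alpha,H}$. Taking the minimizer instead (with a fixed tie-breaking rule to keep it deterministic) is an immaterial variant, with the small bonus that the decoder need not know the value of the discrepancy.
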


\begin{proof}
Given $\by = \cM(\bx)$, the algorithm outputs an vector $\bx'\in \{0,1\}^n$ such that $\|\by - \bA_H\bx'_H\| < \frac{1}{2}\disc_{C_{\alpha, H}}(\bA_H)$. Note that such $\bx'$ exists, since $\bx$ already satisfies the required properties. For the sake of contradiction, we assume that $\|\bx' - \bx\|_1 > \alpha n$. Then $\chi = \bx'_H-\bx_H$ belongs to $C_{\alpha, H}$ and therefore $\|\bA_H\chi\|_\infty \ge \disc_{C_{\alpha, H}}(\bA_H)$. However, by the triangle inequality, we have $\|\bA_H\chi\|_\infty  \le\|\bA_H\bx'_H-\by\|_\infty + \|\bA_H\bx_H - \by\|_\infty < \disc_{C_{\alpha, H}}(\bA_H)$ — a contradiction.
\end{proof}

We utilize the following lemma, a simplified version of the result in \citet{de2012lower}, as a lower bound for RA. This lemma shows that decoding most of the input from the output violates differential privacy.

\begin{lemma}[Lemma 3.9 in \citet{Talya2023LDPtriangle}]\label{lem:decoder} If \(\cB\) is an \((\veps, \delta)\)-DP reconstruction algorithm for RA and \(\bx\) is uniformly distributed in \(\{0, 1\}^N\), then we have 
\[
\E[\|\cB(\bx) - \bx\|_1] \ge e^{-\veps}\left(\frac{1}{2} - \delta\right)N,
\]
where the expectation is taken over the randomness of both $\bx$ and $\cB$.%
\end{lemma}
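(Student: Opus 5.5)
We would prove the lemma by a coordinate-by-coordinate averaging argument: lower-bound, for each $i\in[N]$, the probability that $\cB$ gets coordinate $i$ wrong, and then sum over $i$ using linearity of expectation. Here two vectors $\bx,\bx'\in\{0,1\}^N$ are neighboring when they differ in exactly one coordinate. This matches the reduction in \Cref{sec:the reduction}, where flipping one entry of $\bx$ corresponds to adding or removing one private edge of $G(\bx)$.

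Fix $i\in[N]$ and let $\bx^{\oplus i}$ denote $\bx$ with its $i$-th bit flipped, so that $\bx\sim\bx^{\oplus i}$. Define
\[
p_i=\Pr_{\bx,\cB}\bigl[\cB(\bx)_i\neq x_i\bigr].
\]
For a fixed $\bx$, consider the event $S_i=\{\by\in\{0,1\}^N : y_i=x_i\}$. Applying \Cref{def:dp_definitino} to the neighboring pair $(\bx,\bx^{\oplus i})$ gives
\[
\Pr[\cB(\bx)_i=x_i]\le e^{\veps}\Pr[\cB(\bx^{\oplus i})_i=x_i]+\delta .
\]
Since $x_i\neq(\bx^{\oplus i})_i$, the event $\cB(\bx^{\oplus i})_i=x_i$ is exactly the event that $\cB$ errs on coordinate $i$ when its input is $\bx^{\oplus i}$.

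Next we average this inequality over a uniform $\bx$. The map $\bx\mapsto\bx^{\oplus i}$ is a bijection of $\{0,1\}^N$, so $\bx^{\oplus i}$ is also uniformly distributed. The left-hand side averages to $1-p_i$, and the probability on the right-hand side averages to $p_i$. Hence
\[
1-p_i\le e^{\veps}p_i+\delta,
\qquad\text{so}\qquad
p_i\ge \frac{1-\delta}{1+e^{\veps}} .
\]
Two elementary estimates finish the per-coordinate bound. First, $1+e^{\veps}\le 2e^{\veps}$ because $\veps\ge 0$. Second, $(1-\delta)/2\ge \tfrac12-\delta$. Together they give
\[
p_i\ge e^{-\veps}\frac{1-\delta}{2}\ge e^{-\veps}\Bigl(\tfrac12-\delta\Bigr).
\]

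Finally, $\|\cB(\bx)-\bx\|_1=\sum_{i}\1[\cB(\bx)_i\neq x_i]$ because both vectors lie in $\{0,1\}^N$. By linearity of expectation, $\E\|\cB(\bx)-\bx\|_1=\sum_i p_i\ge e^{-\veps}(\tfrac12-\delta)N$, which is the claimed bound.

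There is no substantial obstacle in this argument. The one step that needs care is the averaging step: it must be justified that averaging over uniform $\bx$ turns the right-hand side into the same error probability $p_i$, and this is exactly where the bijection $\bx\mapsto\bx^{\oplus i}$ is used.
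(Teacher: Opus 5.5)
Your proof is correct. The paper does not prove this lemma; it imports it from \citet{Talya2023LDPtriangle} as a simplified form of \citet{de2012lower}. Your per-coordinate argument is the standard proof of this fact: the DP inequality for the pair $(\bx,\bx^{\oplus i})$, averaged over uniform $\bx$ via the flip bijection, followed by linearity of expectation. Solving $1-p_i\le e^{\veps}p_i+\delta$ exactly even gives the slightly stronger bound $\frac{1-\delta}{1+e^{\veps}}N$, from which the stated $e^{-\veps}\left(\frac{1}{2}-\delta\right)N$ follows as you show.
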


Finally, we finish the proof of \Cref{thm:lower bounds dp range subgraph counting}.

\begin{proof}[Proof of \Cref{thm:lower bounds dp range subgraph counting}]
    We first establish that $\eta = \Omega(\log^{d -1}\bar{n}\cdot \GS_{f_H})$ for $d = O(1)$.

Let $\bA,\bA_H, G(\bx), Q$ denote the corresponding matrices, graph, and query set, respectively, as defined earlier. Then it holds that $\disc_{C_\alpha}(\bA) = \Omega(\log^{d -1} \bar{n})$ for any constant $\alpha$. By \Cref{lem:discCalphaH}, it follows that $\disc_{C_{\alpha, H}}(\bA_H) = \Omega(\log^{d -1}\bar{n} \cdot \GS_{f_H})$. 

For the sake of contradiction, we assume that there exists an $(\veps, \delta)$-DP mechanism $\cM(\bx)$ with constants \(\veps > 0\) and \(\delta \in [0,\frac{1}{2})\) for RSC of $(3n - 2)$-vertex graph relaxed to protect $n$ private edges, whose additive error is less than $\frac{1}{2}\disc_{C_{\alpha,H}}(\bA_H)$ with a sufficiently large constant probability $\beta < 1$. %

We apply $\cM(\bx)$ to the instance $(G(\bx), Q)$. %
By \Cref{lem:general-attacker}, there is a deterministic (not necessarily efficient) algorithm $\cA$ satisfying $\|\cA(\cM(\bx)) - \bx\|_1 \le \alpha n$ for any $\bx \in \{0,1\}^n$. Let $\cB(\bx) = \cA(\cM(\bx))$ denote the reconstruction algorithm for RA. By the post-processing property (\Cref{def:post-processing}), the algorithm $\cB(\bx)$ preserves $(\veps, \delta)$-DP. If $\bx$ is uniformly distributed in $\{0, 1\}^{n}$, we have $\E[\|\cB(\bx) - \bx\|_1] \le (\beta \alpha + (1 - \beta))n$.
    
We can always choose a sufficiently small constant $\alpha > 0$ and a sufficiently large constant $\beta < 1$ such that $\E[\|\cB(\mathbf{x}) - \mathbf{x} \|_1] < e^{-\veps}(\frac{1}{2} - \delta)n$%
, which yields a contradiction with \Cref{lem:decoder}.

By appropriately rescaling and $\GS_{f_H} = \Theta(n^{h - 2})$ (see \Cref{lem:GS tight bound}), we can obtain the same lower bound for DPRSC of $n$-vertex graphs.

    Now consider the case for $d = \omega(1)$. The proofs for $d = O(\log \bar{n})$ and $d = \Omega(\log \bar{n})$ follow by applying the same techniques as that for $d = O(1)$.
\end{proof}

\subsection{Proof of \Cref{thm:main_lower}}\label{sec:proofofmainlower}

The hard instances in \Cref{thm:main_lower} corresponds to the graph $G(\bx)$. We first prove $\widetilde{\HS}_{f_H}(G(\bx)) = \Theta(\GS_{f_H})$ with a sufficiently large constant probability $(1 - \beta') < 1$. It holds that $\widetilde{\HS}_{f_H}(G(\bx)) = \GS_{f_H} = 1$ if $m_H = 1$. Now we assume that $m_H\ge 2$. 

Note that the graph $G(\bx)$ contains a $n$-vertex clique. Thus we have $f^{(1)}_{H}(G(\bx)) = \LS_{f_H}(G(\bx)) = \Theta(n^{h - 2}) = \Theta(\GS_{f_H})$ by an analysis similar to the proof of \Cref{lem:GS tight bound}. Once we determine at least 2 edges of $G(\bx)$, we have determined at least 3 vertices in $G(\bx)$. Therefore, based on an analysis similar to the proof of \Cref{lem:GS tight bound}, we can conclude that $f_H^{(k)}(G(\bx)) = O(n^{h - 3})$ for any $2 \le k \le m_H$. 

Then we can prove that $\widetilde{\HS}_{f_H}^{(k)}(G(\bx)) = O(n^{h - 3})$ with probability $(1 - \frac{(m_H - k )\beta'}{m_H - 1})$ for any $2 \le k \le m_H$ by induction. When $\widetilde{\HS}_{f_H}^{(k)}(G(\bx))$ satisfies the given condition, we can infer that $\widetilde{\HS}_{f_H}^{(k + 1)}(G(\bx))$ also satisfies the condition from \Cref{alg: Estimating higher-order private local sensitivity}, \Cref{fact:laplace}, and the union bound. Finally, using the same approach, we derive the bound for $\widetilde{\HS}_{f_H}(G(\bx))$: since $f^{(1)}(G(\bx))$ dominates the quantity, we have $\widetilde{\HS}_{f_H}(G(\bx)) = \Theta(\GS_{f_H})$ with probability $(1 - \beta')$.

Now for $d = O(1)$, we aim to establish that $\eta = \Omega(\log^{d - 1}\bar{n}\cdot \widetilde{\HS}_{f_H}(G(\bx)))$ with $\bar{n} = \min(n, |Q|)$, and then substitute $|Q| \ge n^c$ for any sufficiently small constant $c > 0$. For the sake of contradiction, we assume that there exists an $(\veps, \delta)$-DP mechanism $\cM(\bx)$ with constants \(\veps > 0\) and \(\delta \in [0,\frac{1}{2})\) for RSC of $(3n - 2)$-vertex graph relaxed to protect $n$ private edges, whose additive error is $o(\log^{d - 1}\bar{n}\cdot \widetilde{\HS}_{f_H}(G(\bx)))$ with a sufficiently large constant probability $\gamma < 1$. 

By choosing a sufficiently small constant $\beta' > 0$ and applying the union bound, it holds that $\cM(\bx)$ has an additive error less than $\frac{1}{2}\disc_{C_{\alpha, H}}(\bA_H) = \Omega(\log^{d - 1}\bar{n}\cdot \GS_{f_H})$ with a sufficiently large constant probability $\beta = \gamma - \beta' < 1$. Then a contradiction can be derived through the same proof as \Cref{thm:lower bounds dp range subgraph counting}. By appropriately rescaling, we can obtain the same lower bound for DPRSC of $n$-vertex graphs.

Now consider the case for $d = \omega(1)$. The proofs for $d = O(\log n)$ and $d = \Omega(\log n)$ follow by applying the same techniques as that for $d = O(1)$.

\subsection{A lower bound for the case $d = 1$}\label{sec:lower_bound_d=1}%

By combining our approach with the method of \citet{dwork2010differential}, we obtain a lower bound for DPRSC with $d = 1$. When $|Q|\ge n^c$ for any sufficiently small constant $c>0$, a similar lower bound $\Omega(\log n\cdot \widetilde{\HS}_{f_H}(G))$ follows from the following theorem and the proof of \Cref{thm:main_lower}.

\begin{theorem}
Let $\veps>0$ be a constant and $\delta = n^{-\Omega(1)}$. For any pattern graph $H$, let $\mathcal{A}$ be an $(\veps, \delta)$-DP algorithm for the range subgraph counting problem with additive error $\eta = \max_{q\in Q}|f_H(G_q) - \widetilde{f}_H(G_q)|$ with probability at least $\frac{2}{3}$. 

Then there exist infinitely many $n$-vertex graphs $G = (V, E, \mathbf{a})$ with $1$-dimensional vertex attributes and a corresponding query set $Q$, such that $\eta = \Omega(\log \bar{n} \cdot \GS_{f_H})$, where $\bar{n} = \min(n, |Q|)$ can be arbitrarily large.%
\end{theorem}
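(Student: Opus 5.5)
We plan to reduce from the standard one-dimensional lower bound for private threshold/range counting via packing, following \citet{dwork2010differential}, and to lift it to subgraph counts with the gadget $G(\bx)$ of \Cref{sec:the reduction}. The target is $\eta=\Omega(\log\bar n\cdot\GS_{f_H})$ for $d=1$ with $\delta=n^{-\Omega(1)}$.

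\textbf{Construction.} Take $\bar n$ points on the line at positions $1,\dots,\bar n$, plus dummy points outside all queries so that there are $n$ points in total. Place $W$ at a position contained in every query; for instance, shift all points so that $W$ sits at $0$, and let the queries be prefixes $[0,j]$, $j\in[\bar n]$. Then $|Q|=\bar n$ and all queries share $W$. For $\bx\in\{0,1\}^{\bar n}$, the answer to the prefix query $j$ is
\[
f_H\bigl(G(\bx)[V_{[0,j]}]\bigr)=c_j+\overline g_H(n,|V_j|)\sum_{p\le j}x_p+R_j(\bx),
\]
following the three-term decomposition in the proof of \Cref{lem:discCalphaH}. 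Here $c_j$ is independent of $\bx$ (term (II)), $\overline g_H\ge\GS_{f_H}$ (term (I)), and $R_j=O(n^{h-2})$ collects the occurrences with at least two private edges (term (III)). The step that needs care is $R_j$: it is not negligible relative to $\GS_{f_H}\log n$. We would handle it by making the private edges sparse in the relevant sense. Concretely, restrict to vectors $\bx$ with few ones, say $\|\bx\|_1=k$ with $k=\Theta(\log \bar n)$, placed in well-separated blocks. Occurrences with $i\ge2$ private edges then number $O(k^i n^{h-2i})$ by the counting in \Cref{claim:gHi}, which is $o(n^{h-2})=o(\GS_{f_H})$ for $h\ge4$; when $h\le3$ there are no such occurrences at all.

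\textbf{Packing family.} Split $[\bar n]$ into $k=\Theta(\log\bar n)$ blocks. For each index $s\in[\bar n^{\,c}]$, let $\bx^{(s)}$ set, in each block, a single private edge at a position determined by the digits of $s$. Two distinct $\bx^{(s)}$ differ in the prefix counts by at least $1$ at some query, scaled by $\overline g_H$. To get separation of order $k\cdot\GS_{f_H}$, we instead use the classical threshold packing. Let $\bx^{(t)}$ for $t\in[\bar n/k]$ consist of $k$ consecutive ones starting at block position $t$. Its prefix-count vector is a ramp, so for $t\ne t'$ some prefix query differs by $k\,\overline g_H\ge k\,\GS_{f_H}$ up to the $o(\GS_{f_H})$ error terms. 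Any two $\bx^{(t)}$ are within Hamming distance $2k$, so by group privacy the output distributions satisfy $\Pr[\cA(G(\bx^{(t)}))\in S]\le e^{2k\veps}\Pr[\cA(G(\bx^{(t')}))\in S]+2k e^{2k\veps}\delta$.

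\textbf{Contradiction.} Suppose $\eta<\tfrac13 k\,\GS_{f_H}$ with probability $2/3$. The balls $S_t$ of radius $\eta$ in $\ell_\infty$ around the true answer vectors are then disjoint. Fix $t_0$. Then
\[
1\ge\sum_t\Pr[\cA(G(\bx^{(t_0)}))\in S_t]\ge\frac{\bar n}{k}\Bigl(e^{-2k\veps}\tfrac23-2k\delta\Bigr).
\]
Choosing $k=c'\log\bar n/\veps$ with small $c'$ makes $e^{-2k\veps}\ge\bar n^{-1/2}$. The hypothesis $\delta=n^{-\Omega(1)}$ makes $2k\delta\ll\bar n^{-1/2}$, provided $\bar n$ is polynomially related to $n$. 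The right side then exceeds $1$, which is a contradiction. Hence $\eta=\Omega(\log\bar n\cdot\GS_{f_H})$. Rescaling the vertex count from $3n-2$ to $n$ and using $\GS_{f_H}=\Theta(n^{h-2})$ from \Cref{lem:GS tight bound} finishes the argument.

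The main obstacle is controlling the multi-private-edge terms $R_j$ uniformly across the packing family, together with the requirement that $\bar n\ge n^{\Omega(1)}$ for $\delta$ to be small enough. We would handle the former by placing the $k$ private edges in distinct points, so that $|R_j|=O(k^2n^{h-4})$, and the latter by choosing $\bar n=n$.
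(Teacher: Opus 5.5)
This is essentially the paper's argument: the same prefix-query gadget $G(\bx)$, the same family of $k=\Theta(\log\bar n)$ consecutive private edges at pairwise Hamming distance at most $2k$, and the same group-privacy packing, with your disjoint $\ell_\infty$ balls playing the role of the paper's ``matching'' events (the paper never bounds $R_j$: those terms are nonnegative and vanish on the left prefix, so they only widen the gap). One constant needs fixing, since $\delta=n^{-\Omega(1)}$ only gives $\delta\le n^{-c}$ for some possibly small $c$, so you cannot first fix $e^{-2k\veps}\ge\bar n^{-1/2}$ and then claim $2k\delta=o(\bar n^{-1/2})$; instead take $2c'<\min(1,c)$ (the paper's $c_1<\min(1,c)/(2\veps)$), which makes the sum $\Omega(\bar n^{1-2c'}/\log\bar n)>1$, and because $\delta\le n^{-c}\le\bar n^{-c}$ you no longer need $\bar n=n$.
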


\begin{proof}
Let $P$ denote a set of $\bar{n}$ vertices, where the attribute of the $i$-th vertex is $i$ for all $i\in [\bar{n}]$. Let $Q$ denote a set of $\bar{n}$ intervals, where the $i$-th interval is $[0, i]$ for all $i\in [\bar{n}]$. We can always adjust the size of $P$ by adding $n - \bar{n}$ dummy points lying outside all boxes in $Q$, or adjust the size of $Q$ by adding $|Q| - \bar{n}$ dummy intervals not containing any points. Therefore, it suffices to analyze the case for $|P| = n$ and $|Q| = \bar{n}$.

Note that all intervals in $Q$ share a common intersection at 0. We construct an $(3n - 2)$-vertex graph $G(\bx)$ based on $P$ and $Q$ as \Cref{sec:the reduction}. Assume that $\delta \in [0,n^{-c})$ for some small constant $c > 0$. Let $k = c_1 \ln \bar{n}$ for a constant $c_1 \in (0, \frac{\min(1, c)}{2\veps})$. 

Let $\GS_{f_H}$ denote the global sensitivity of $f_H$ on $n$-vertex graphs. For the sake of contradiction, assume that there exists an $(\veps,\delta)$-DP algorithm $\cA$ for RSC of $(3n - 2)$-vertex graph relaxed to protect $n$ private edges. The algorithm achieves additive error at most $c_2 \ln \bar n \cdot \GS_{f_H}$ with probability at least $\frac{2}{3}$, where $c_2$ is a constant with $c_2 \in (0, \frac{c_1}{6})$.

Without loss of generality, assume that $\frac{\bar{n}}{k}$ is an integer. We sort the $\bar{n}$ queries in $Q$ by their right endpoints and partition them into $\frac{\bar{n}}{k}$ phases, each consisting of $k$ consecutive queries. For the $s$-th phase, we construct a vector $\bx^{(s)} \in \{0, 1\}^{n}$ whose coordinates are indexed by $r \in [n]$ such that 
\[
x^{(s)}_r = \begin{cases} 1, & \text{if }r\in [n] \cap [ks - k+1,ks]; \\ 0, &\text{otherwise.} \end{cases}
\]
Let $Q^{(s)}_{\rmL} = \{[0, r] \mid r \in [n]\cap [1, ks - k] \}$ and $Q^{(s)}_{\rmR} = \{[0,r] \mid r \in [n] \cap [ks, \bar{n}]\}$ denote the sets of all queries before and after the $s$-th phase, respectively. In particular, let $q^{(s)}_{\rmL}$ and $q^{(s)}_{\rmR}$ denote the interval $[0, ks - k]$ and $[0, ks]$, respectively.

For any $\bx \in \{0, 1\}^n$ and $q\in Q$, we use $G_{\bx,q}$ to denote $(G(\bx))[V_{q}]$ for simplicity. Similar to the proof of \Cref{lem:discCalphaH}, for any subgraph $G'$ of $G(\bx)$, let $g^{(i)}_H(G')$ denote the number of occurrences of $H$ in $G'$ containing exactly $i$ private edges. For any $s \in [\frac{\bar{n}}{k}]$ and $\bx \in \{0, 1\}^n$, the output of $\cA(\bx)$ \textit{matches} the $s$-th phase if $\widetilde{f}_H(G_{\bx,q}) - g_H^{(0)}(G_{\bx,q}) < \frac{k}{2}\GS_{f_H}$ for all $q\in Q_{\rmL}^{(s)}$ and $\widetilde{f}_H(G_{\bx,q}) - g_H^{(0)}(G_{\bx,q})\ge \frac{k}{2}\GS_{f_H}$ for all $q\in Q_{\rmR}^{(s)}$. Then we have the following claim. 

\begin{claim}\label{claim:matchs}
For any $s \in [\frac{\bar{n}}{k}]$ and sufficiently large $\bar{n}$, the output of $\cA(\bx^{(s)})$ matches the $s$-th phase with probability at least $\frac{2}{3}$.
\end{claim}

\begin{proof}
By the definition of the algorithm $\cA$, we have 
\[
f_H(G_q) - c_2\ln \bar{n}\cdot \GS_{f_H}\le \widetilde{f}_H(G_{\bx, q}) \le f_H(G_q) + c_2\ln \bar{n}\cdot \GS_{f_H}
\]
for all $q\in Q$ with probability at least $\frac{2}{3}$.

Then for any $s \in [\frac{\bar{n}}{k}]$ and all $q_1\in Q_{\rmL}^{(s)},q_2\in Q_{\rmR}^{(s)}$, we have 
\begin{align*} 
& \left(\widetilde{f}_H\left(G_{\bx^{(s)},q_2}\right) - g_H^{(0)}\left(G_{\bx^{(s)},q_2}\right)\right) -\left(\widetilde{f}_H\left(G_{\bx^{(s)},q_1}\right) - g_H^{(0)}\left(G_{\bx^{(s)}, q_1}\right)\right) \notag \\
\ge & \left(f_H\left(G_{\bx^{(s)},q_2}\right) - g_H^{(0)}\left(G_{\bx^{(s)},q_2}\right) - c_2\ln \bar{n}\cdot \GS_{f_H}\right) -\left(f_H\left(G_{\bx^{(s)},q_1}\right) - g_H^{(0)}\left(G_{\bx^{(s)}, q_1}\right) + c_2\ln \bar{n} \cdot \GS_{f_H}\right) \notag \\
\ge &\left(f_H\left(G_{\bx^{(s)},q_{\rmR}^{(s)}}\right) - g_H^{(0)}\left(G_{\bx^{(s)},q_{\rmR}^{(s)}}\right)- c_2\ln \bar{n}\cdot \GS_{f_H}\right) -\left(f_H\left(G_{\bx^{(s)},q_{\rmL}^{(s)} }\right) - g_H^{(0)}\left(G_{\bx^{(s)},q_{\rmL}^{(s)}}\right)+ c_2\ln \bar{n} \cdot \GS_{f_H}\right) \notag \\
=& \sum_{i = 1}^{\lfloor \frac{h}{2} \rfloor}g_H^{(i)}\left(G_{\bx^{(s)},q_{\rmR}^{(s)}}\right) - \sum_{i = 1}^{\lfloor \frac{h}{2} \rfloor}g_H^{(i)}\left(G_{\bx^{(s)},q_{\rmL}^{(s)}}\right) - 2c_2 \ln \bar{n} \cdot \GS_{f_H}\notag \\
=& \left(g_H^{(1)}\left(G_{\bx^{(s)},q_{\rmR}^{(s)}}\right) - g_H^{(1)}\left(G_{\bx^{(s)},q_{\rmL}^{(s)}}\right)\right) + \sum_{i = 2}^{\lfloor \frac{h}{2} \rfloor}\left(g_H^{(i)}\left(G_{\bx^{(s)},q_{\rmR}^{(s)}}\right) - g_H^{(i)}\left(G_{\bx^{(s)},q_{\rmL}^{(s)}}\right)\right) - 2c_2 \ln \bar{n} \cdot \GS_{f_H}\\
\ge & k\GS_{f_H} - 2c_2 \ln \bar{n} \cdot \GS_{f_H} > \frac{2}{3}k\GS_{f_H}, 
\end{align*}
with probability at least $\frac{2}{3}$, where the second inequality follows from the monotonicity of $f_H(G_{\bx^{(s)},q}) - g_H^{(0)}(G_{\bx^{(s)},q})$ for $q \in Q$ by definition; the first equation follows from the fact that $f_H(G') = \sum_{i = 0}^{\lfloor \frac{h}{2} \rfloor}g_H^{(i)}(G')$, and the penultimate inequality follows from an analysis similar to the proof of \Cref{lem:discCalphaH}. This completes the proof since \begin{align*}
\widetilde{f}_H(G_{\bx^{(s)},q}) - g_H^{(0)}(G_{\bx^{(s)},q}) \le f_H(G_{\bx^{(s)},q}) - g_H^{(0)}(G_{\bx^{(s)},q}) + c_2\ln\bar{n}\cdot \GS_{f_H} < \frac{k}{6} \GS_{f_H},\\
\widetilde{f}_H(G_{\bx^{(s)},q}) - g_H^{(0)}(G_{\bx^{(s)},q}) \ge f_H(G_{\bx^{(s)},q}) - g_H^{(0)}(G_{\bx^{(s)},q}) - c_2\ln\bar{n}\cdot \GS_{f_H} >- \frac{k}{6} \GS_{f_H}
\end{align*}
for any $q\in Q_{\rmL}^{(s)}$.
\end{proof}

For any $s,t \in [\frac{\bar{n}}{k}]$, let $E_{s,t}$ denote the event that the output of $\cA(\bx^{(s)})$ matches the $t$-th phase. By \Cref{claim:matchs}, we have $\Pr[E_{s,s}] \ge \frac{2}{3}$ for any $s\in [\frac{\bar{n}}{k}]$. By the definition of $(\veps,\delta)$-DP, it holds that 
\begin{align*}
\Pr[E_{s,t}] \ge& e^{-2\veps k}\Pr[E_{t,t}] - \frac{e^{-\veps}(1-e^{-2\veps k})}{1 - e^{-\veps}}\delta\\  
> & \frac{2}{3}\bar{n}^{-2c_1\veps} - \frac{1 - \bar{n}^{-2c_1\veps}}{e^{\veps}-1}\bar{n}^{-c} \\
> & \frac{2}{3}\bar{n}^{-2c_1\veps} - \frac{\bar{n}^{-c}}{e^{\veps}-1}. 
\end{align*}
Fix an $s\in [\frac{\bar{n}}{k}]$. This yields a contradiction: the events $\{E_{s,{2t-1}}\}_{t=1}^{\lceil\frac{\bar{n}}{2k}\rceil}$ for different $t$ are disjoint, yet their total probability $$\sum_{t=1}^{\lceil\frac{\bar{n}}{2k}\rceil}\Pr[E_{s,2t-1}] \ge \frac{\frac{2}{3}\bar{n}^{1 - 2c_1\veps}-\frac{\bar{n}^{1-c}}{e^{\veps}-1}}{2c_1\ln \bar{n}}$$ is more than 1 for sufficiently large $\bar{n}$. By appropriately rescaling, we can obtain the same lower bound for pure DPRSC of $n$-vertex graphs.
\end{proof}

\section{Deferred Experiments from \Cref{sec:Experiment}}%
\label{appendix:Experiment}
\subsection{Deferred experiments when $H$ is an edge}\label{sec:Experiment_edge}
In this subsection, we present the experimental results when $H$ is an edge. \Cref{fig:epsilon_error_edge,fig:withQ_error_edge,fig:qtime_edge,fig:prtime_edge} report the relationship between the relative error and $\veps$, the relationship between the relative error and $|Q|$, the query time and preprocessing time, and the total time, respectively. 

For edge counting, \Cref{alg: Estimating higher-order private local sensitivity} %
used in ADP\_comp and ADP\_RSC is meaningless since the global sensitivity is 1. In practice, we directly use $\GS_{f_H} = 1$ and apply the advanced composition theorem (\Cref{def:advanced composition}). The baselines require no preprocessing time for similar reasons and compute exact counts of edges using a straightforward implementation in $O(m)$ time. The experimental results obtained are similar to those presented in \Cref{sec:Experiment}.

\begin{figure*}[!htb]
    \centering
    \raisebox{1mm} {
    \begin{minipage}{0.48\textwidth}
        \centering
        \begin{subfigure}[t]{0.45\textwidth}
            \centering
            \includegraphics[width=\textwidth]{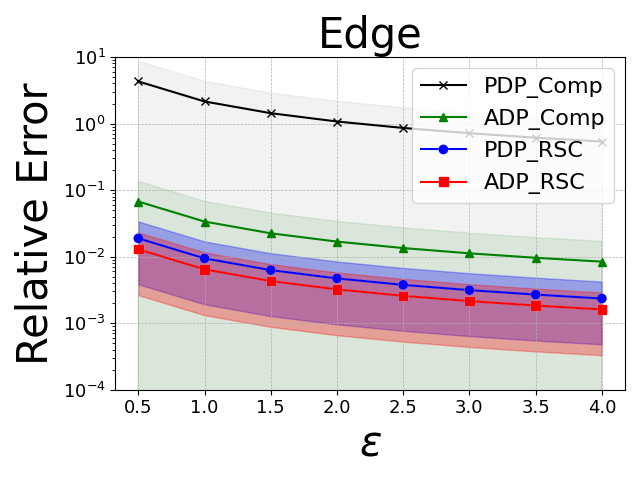}
            \caption{Wiki-Squirrel}
        \end{subfigure}
        \begin{subfigure}[t]{0.45\textwidth}
            \centering
            \includegraphics[width=\textwidth]{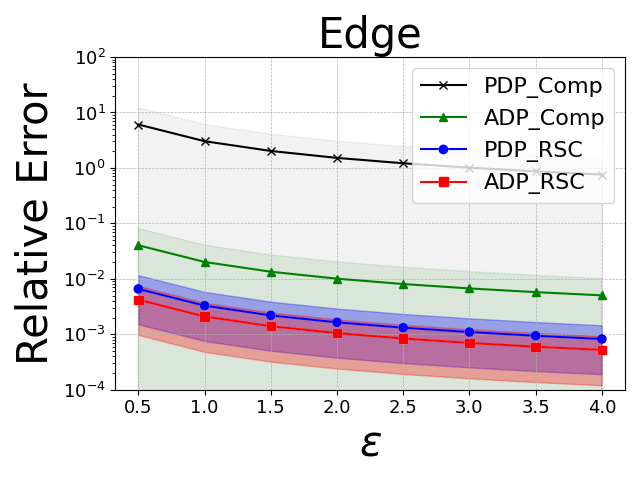}
            \caption{WormNet-v3}
        \end{subfigure}    
        \caption{Relative error vs. $\veps$}
         \label{fig:epsilon_error_edge}
    \end{minipage}
    }
    \raisebox{1.8mm} {
    \begin{minipage}{0.48\textwidth}
        \centering
    \centering
    \begin{subfigure}[t]{0.45\textwidth}
        \centering
        \includegraphics[width=\textwidth]{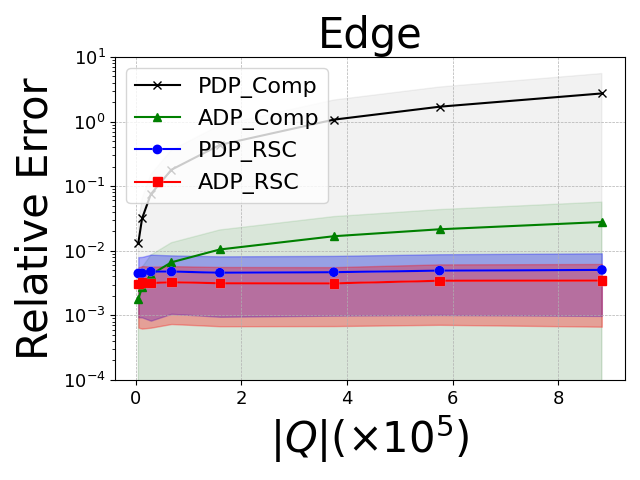}
        \caption{Wiki-Squirrel}
    \end{subfigure}
    \begin{subfigure}[t]{0.45\textwidth}
        \centering
        \includegraphics[width=\textwidth]{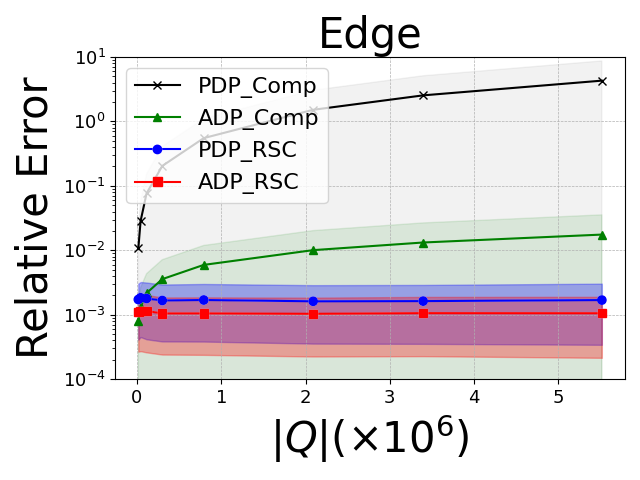}
        \caption{WormNet-v3}
    \end{subfigure}
    \caption{Relative error vs. $|Q|$}
        \label{fig:withQ_error_edge}
    \end{minipage}
    }
\end{figure*}

\begin{figure*}[!htb]
    \centering
    \raisebox{1mm} {
    \begin{minipage}{0.48\textwidth}
        \centering
        \begin{subfigure}[t]{0.45\textwidth}
            \centering
            \includegraphics[width=\textwidth]{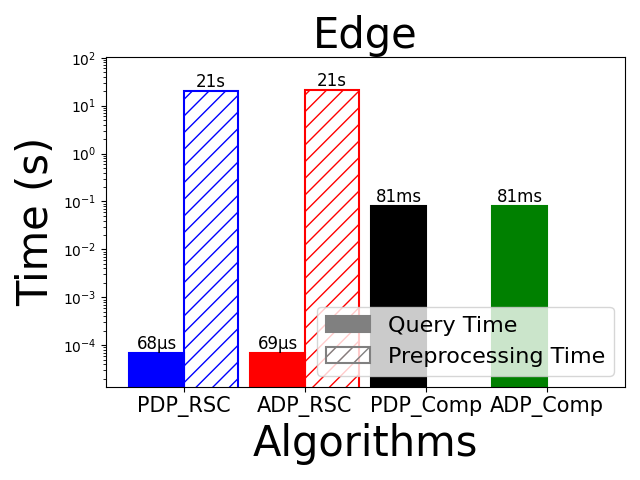}
            \caption{Wiki-Squirrel}
        \end{subfigure}
        \begin{subfigure}[t]{0.45\textwidth}
            \centering
            \includegraphics[width=\textwidth]{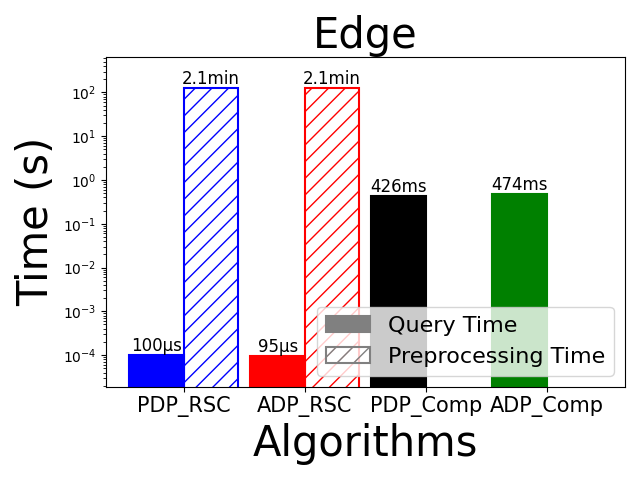}
            \caption{WormNet-v3}
        \end{subfigure}    
        \caption{Query time and preprocessing time}
         \label{fig:qtime_edge}
    \end{minipage}
    }
    \raisebox{1.8mm} {
    \begin{minipage}{0.48\textwidth}
        \centering
    \centering
    \begin{subfigure}[t]{0.45\textwidth}
        \centering
        \includegraphics[width=\textwidth]{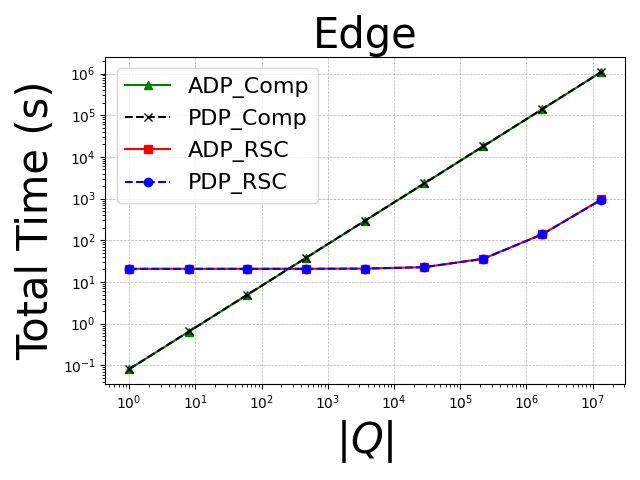}
        \caption{Wiki-Squirrel}
    \end{subfigure}
    \begin{subfigure}[t]{0.45\textwidth}
        \centering
        \includegraphics[width=\textwidth]{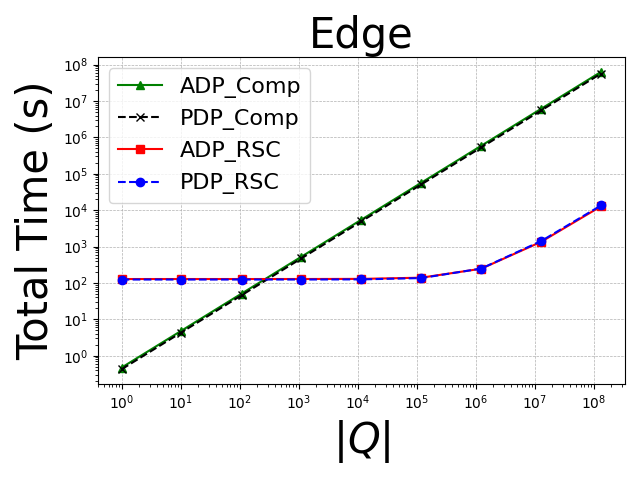}
        \caption{WormNet-v3}
    \end{subfigure}
    \caption{Total time}
        \label{fig:prtime_edge}
    \end{minipage}
    }
\end{figure*}

\subsection{Deferred experiments when $d > 1$}\label{sec:Experiment_d=2}
We report results for the case $d>1$ on the relatively small CA-Netscience dataset in this subsection. Due to the increased computational cost, experiments on larger datasets become prohibitively expensive. We set $|Q|=n^2$ and compare the cases for $d=1,2$ as shown in \Cref{fig:epsilon_error_ca_d=1,fig:epsilon_error_ca_d=2,fig:qprtime_ca_d=1,fig:qprtime_ca_d=2}. 

These figures show that both the additive error and the runtime of our algorithms increase significantly as $d$ grows, with the observed scaling largely matching our theoretical analysis. While our algorithms no longer outperform the baselines on this dataset when $d = 2$, the behavior is consistent with our theoretical expectations: although our method improves the asymptotic dependence, its advantage only becomes apparent in regimes where $n$ and $|Q|$ are large. Additionally, we observe that our approximate DP algorithm exhibits a more pronounced advantage over the pure DP algorithm when $d = 2$, consistent with their different constant factors in the exponent with respect to $d$ in \Cref{thm:main_upper,thm:main_upper_PDP}.

\begin{figure*}[!htb]
    \centering
    \raisebox{1mm} {
    \begin{minipage}{0.72\textwidth}
        \centering
        \begin{subfigure}[t]{0.32\textwidth}
            \centering
            \includegraphics[width=\textwidth]{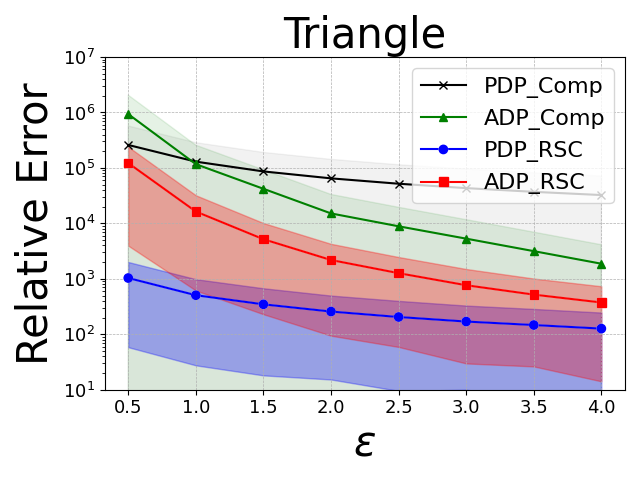}
            \caption{CA-Netscience}
        \end{subfigure}
        \begin{subfigure}[t]{0.32\textwidth}
            \centering
            \includegraphics[width=\textwidth]{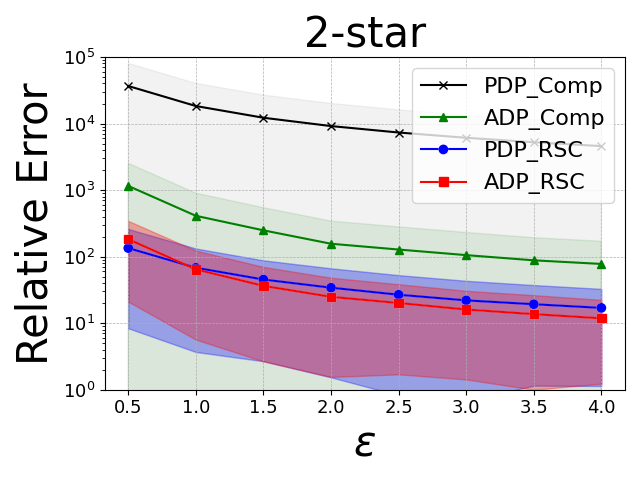}
            \caption{CA-Netscience}
        \end{subfigure}    
        \begin{subfigure}[t]{0.32\textwidth}
            \centering
            \includegraphics[width=\textwidth]{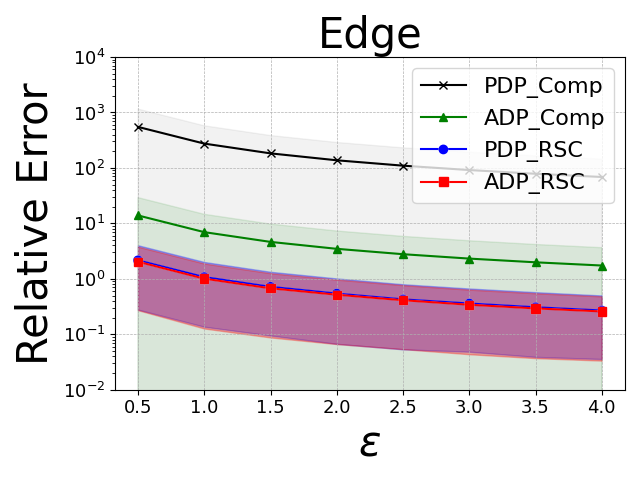}
            \caption{CA-Netscience}
        \end{subfigure}
        \caption{Relative error vs. $\varepsilon$ ($d = 1$)}
         \label{fig:epsilon_error_ca_d=1}
    \end{minipage}
    }
    \raisebox{1mm} {
    \begin{minipage}{0.72\textwidth}
        \centering
        \begin{subfigure}[t]{0.32\textwidth}
            \centering
            \includegraphics[width=\textwidth]{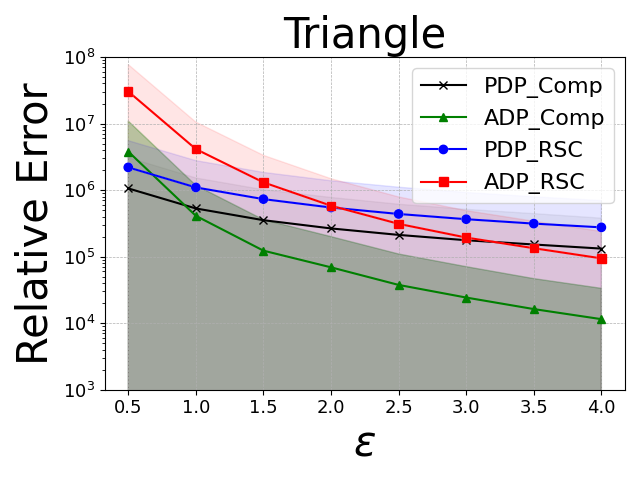}
            \caption{CA-Netscience}
        \end{subfigure}
        \begin{subfigure}[t]{0.32\textwidth}
            \centering
            \includegraphics[width=\textwidth]{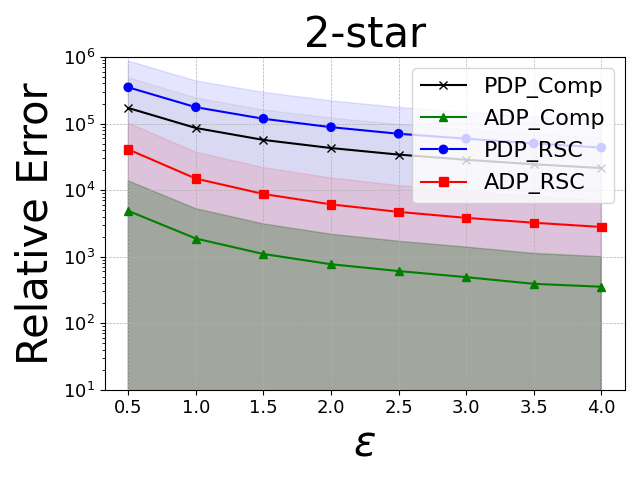}
            \caption{CA-Netscience}
        \end{subfigure}    
        \begin{subfigure}[t]{0.32\textwidth}
            \centering
            \includegraphics[width=\textwidth]{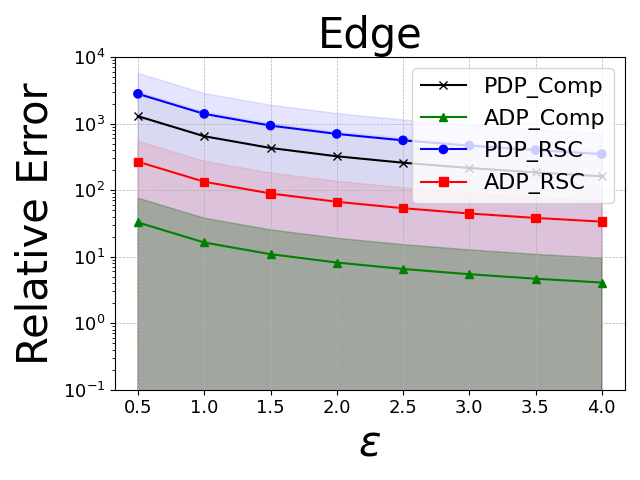}
            \caption{CA-Netscience}
        \end{subfigure}
        \caption{Relative error vs. $\varepsilon$ ($d = 2$)}
         \label{fig:epsilon_error_ca_d=2}
    \end{minipage}
    }
    \raisebox{1mm} {
    \begin{minipage}{0.72\textwidth}
        \centering
        \begin{subfigure}[t]{0.32\textwidth}
            \centering
            \includegraphics[width=\textwidth]{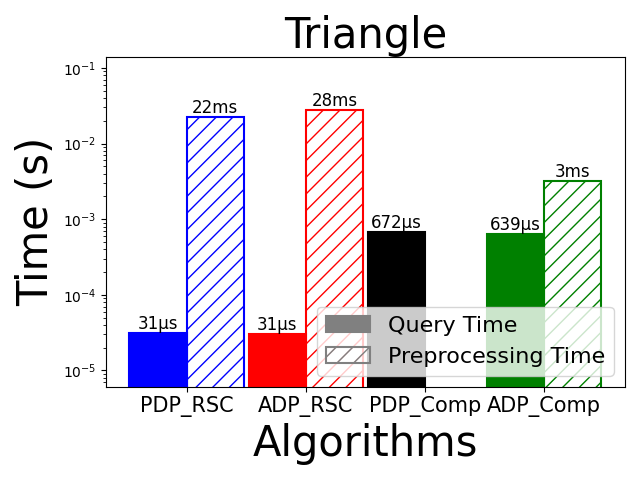}
            \caption{CA-Netscience}
        \end{subfigure}
        \begin{subfigure}[t]{0.32\textwidth}
            \centering
            \includegraphics[width=\textwidth]{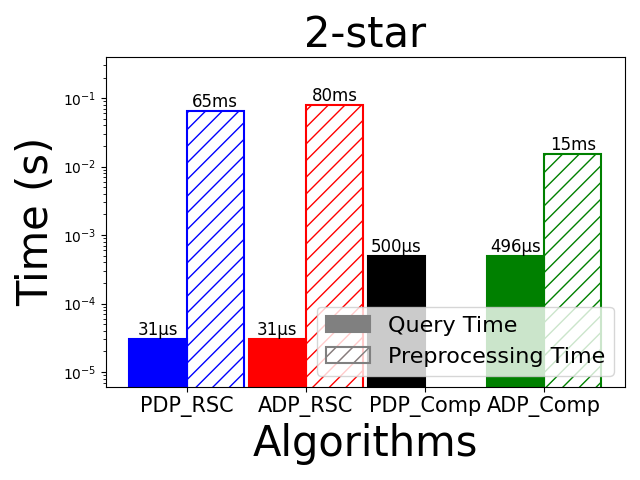}
            \caption{CA-Netscience}
        \end{subfigure}    
        \begin{subfigure}[t]{0.32\textwidth}
            \centering
            \includegraphics[width=\textwidth]{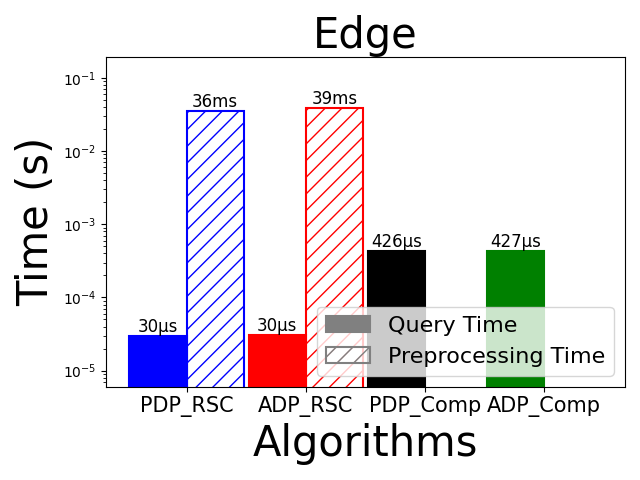}
            \caption{CA-Netscience}
        \end{subfigure}
        \caption{Query time and preprocessing time ($d = 1$)}
         \label{fig:qprtime_ca_d=1}
    \end{minipage}
    }
    \raisebox{1mm} {
    \begin{minipage}{0.72\textwidth}
        \centering
        \begin{subfigure}[t]{0.32\textwidth}
            \centering
            \includegraphics[width=\textwidth]{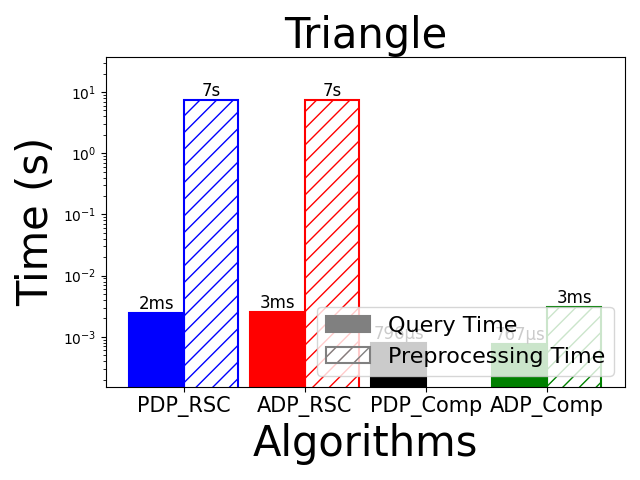}
            \caption{CA-Netscience}
        \end{subfigure}
        \begin{subfigure}[t]{0.32\textwidth}
            \centering
            \includegraphics[width=\textwidth]{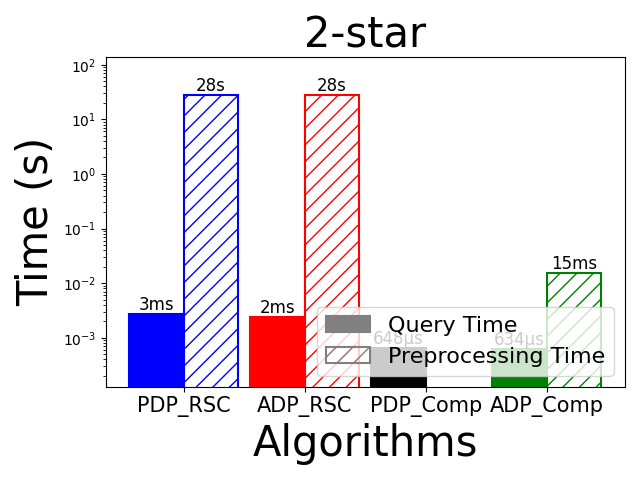}
            \caption{CA-Netscience}
        \end{subfigure}    
        \begin{subfigure}[t]{0.32\textwidth}
            \centering
            \includegraphics[width=\textwidth]{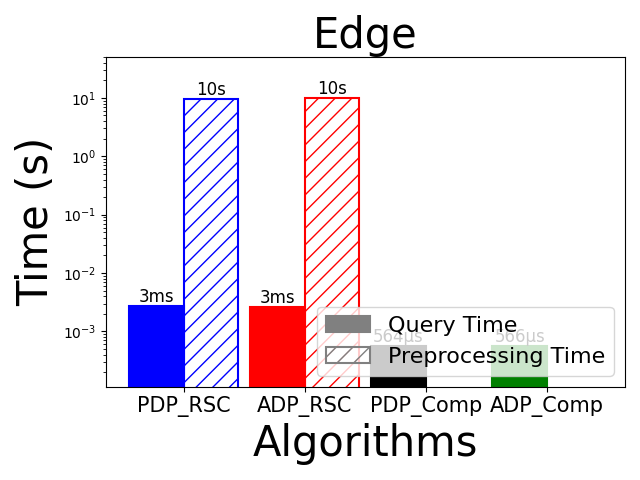}
            \caption{CA-Netscience}
        \end{subfigure}
        \caption{Query time and preprocessing time ($d = 2$)}
         \label{fig:qprtime_ca_d=2}
    \end{minipage}
    }
\end{figure*}
\subsection{Deferred experiments for small-range queries}\label{sec:Experiment_small_ranges}

In this subsection, we report results for small-range queries with $|V_q| = O(\sqrt{|Q|})$ on WormNet-v3, and show the additive error with respect to $\veps$ in \Cref{fig:epsilon_error_small_range}. We observe that our algorithms maintain a significant advantage over the baselines on small-range queries. Meanwhile, both the mean and variance of the relative error increase substantially for all methods, due to the reduced magnitude and higher variability of the true answers in this regime.

\begin{figure*}[!htb]
    \centering
    \raisebox{1mm} {
    \begin{minipage}{0.72\textwidth}
        \centering
        \begin{subfigure}[t]{0.32\textwidth}
            \centering
            \includegraphics[width=\textwidth]{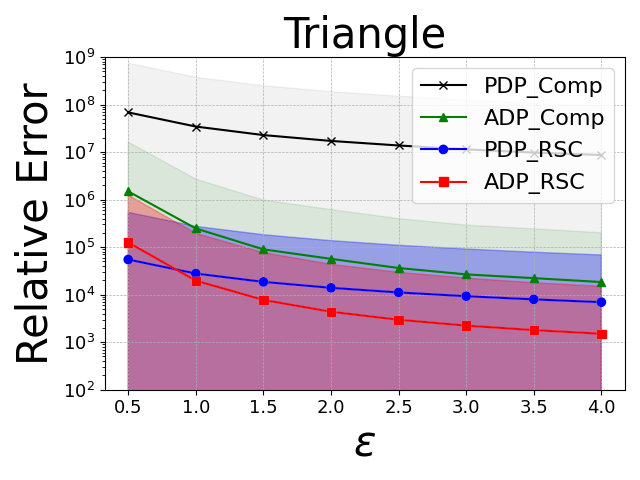}
            \caption{WormNet-v3}
        \end{subfigure}
        \begin{subfigure}[t]{0.32\textwidth}
            \centering
            \includegraphics[width=\textwidth]{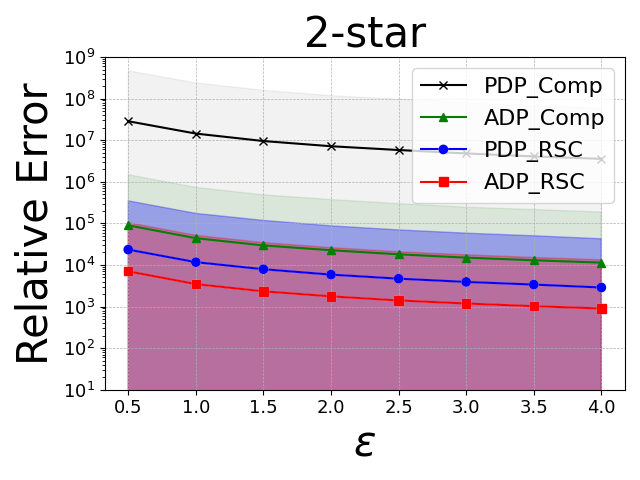}
            \caption{WormNet-v3}
        \end{subfigure}    
        \begin{subfigure}[t]{0.32\textwidth}
            \centering
            \includegraphics[width=\textwidth]{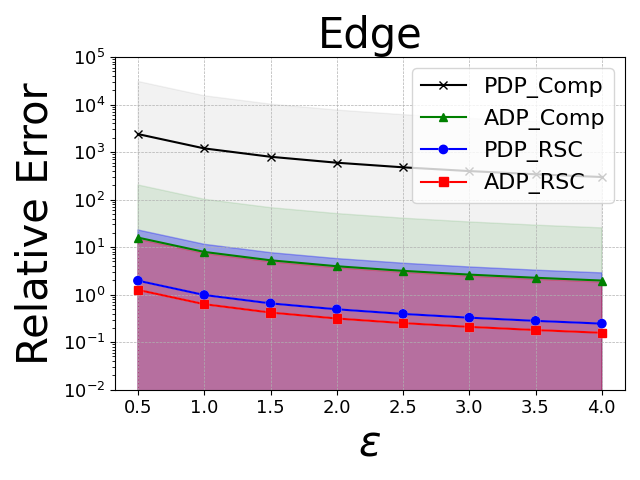}
            \caption{WormNet-v3}
        \end{subfigure}
        \caption{Relative error vs. $\varepsilon$ (small-range queries with $|V_q| = O(\sqrt{|Q|})$)}
         \label{fig:epsilon_error_small_range}
    \end{minipage}
    }
\end{figure*}

\subsection{Deferred implementation details and experiments for real attributes}\label{sec:Experiment_real_attributes}

We note that in many real-world applications, attributes often exhibit significant redundancy, with many duplicate values along each dimension. In such cases, the effective domain size per attribute is small, which can, in fact, improve the performance of our algorithms. In contrast, the setting with independently sampled continuous attributes represents a more challenging regime, where the values in each attribute dimension are almost surely all distinct. Thus, the setting used in the main text can be viewed as a conservative (worst-case) evaluation of our algorithms.

\begin{figure*}[!htb]
    \centering
    \raisebox{1mm} {
    \begin{minipage}{0.72\textwidth}
        \centering
        \begin{subfigure}[t]{0.32\textwidth}
            \centering
            \includegraphics[width=\textwidth]{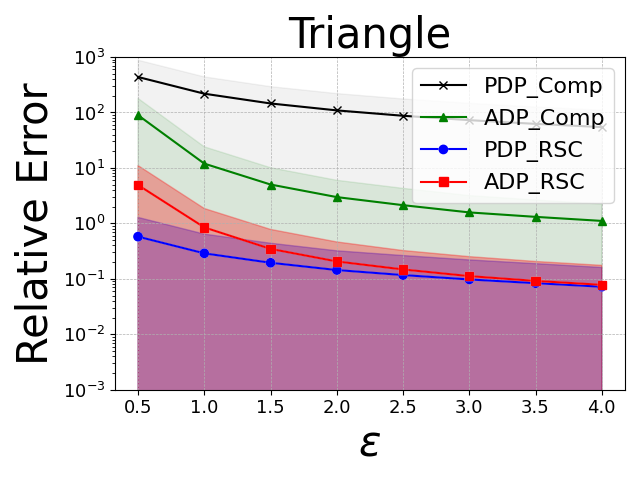}
            \caption{Wiki-Squirrel}
        \end{subfigure}
        \begin{subfigure}[t]{0.32\textwidth}
            \centering
            \includegraphics[width=\textwidth]{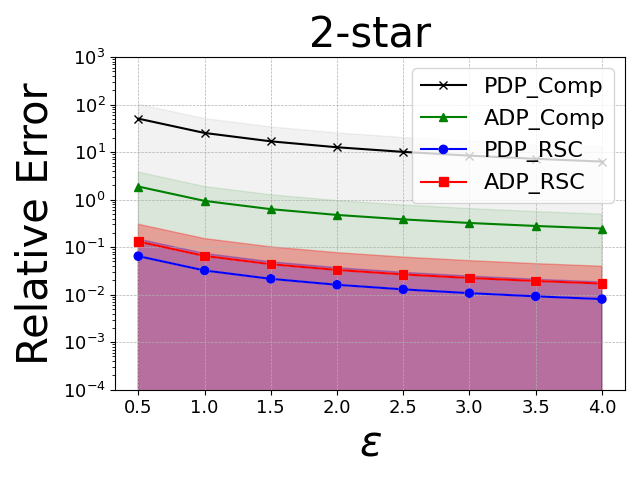}
            \caption{Wiki-Squirrel}
        \end{subfigure}    
        \begin{subfigure}[t]{0.32\textwidth}
            \centering
            \includegraphics[width=\textwidth]{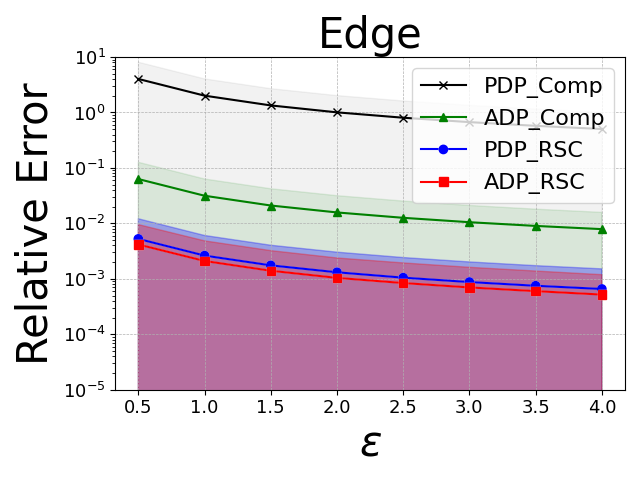}
            \caption{Wiki-Squirrel}
        \end{subfigure}
        \caption{Relative error vs. $\varepsilon$ (real attributes)}
         \label{fig:epsilon_error_real_attributes}
    \end{minipage}
    }
    \raisebox{1mm} {
    \begin{minipage}{0.72\textwidth}
        \centering
        \begin{subfigure}[t]{0.32\textwidth}
            \centering
            \includegraphics[width=\textwidth]{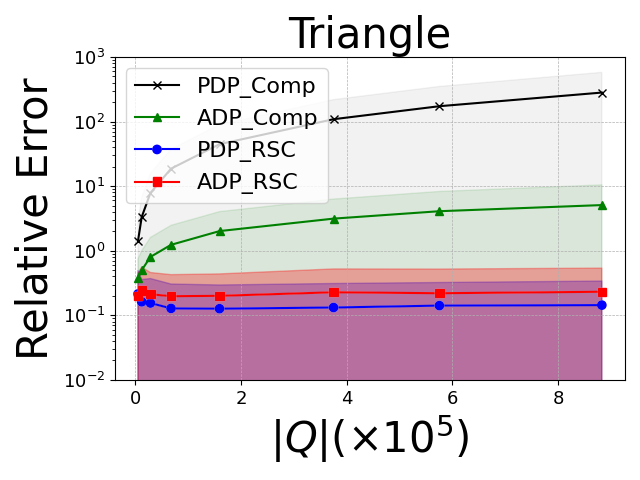}
            \caption{Wiki-Squirrel}
        \end{subfigure}
        \begin{subfigure}[t]{0.32\textwidth}
            \centering
            \includegraphics[width=\textwidth]{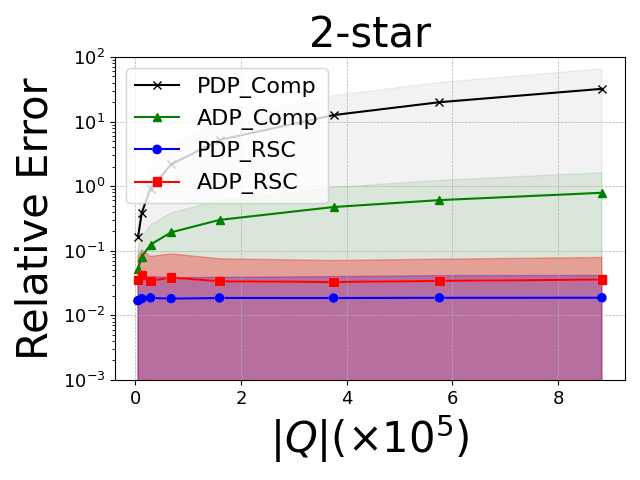}
            \caption{Wiki-Squirrel}
        \end{subfigure}    
        \begin{subfigure}[t]{0.32\textwidth}
            \centering
            \includegraphics[width=\textwidth]{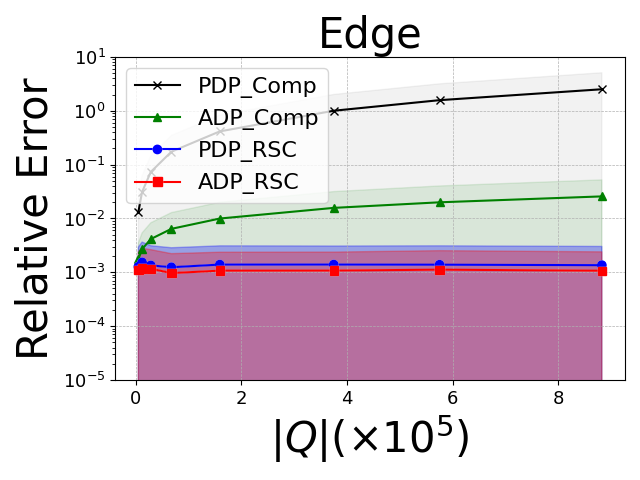}
            \caption{Wiki-Squirrel}
        \end{subfigure}
        \caption{Relative error vs. $|Q|$ (real attributes)}
         \label{fig:Q_error_real_attributes}
    \end{minipage}
    }
    \raisebox{1mm} {
    \begin{minipage}{0.72\textwidth}
        \centering
        \begin{subfigure}[t]{0.32\textwidth}
            \centering
            \includegraphics[width=\textwidth]{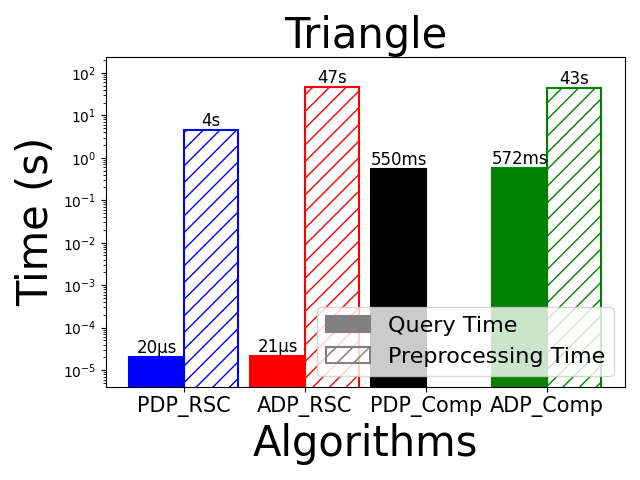}
            \caption{Wiki-Squirrel}
        \end{subfigure}
        \begin{subfigure}[t]{0.32\textwidth}
            \centering
            \includegraphics[width=\textwidth]{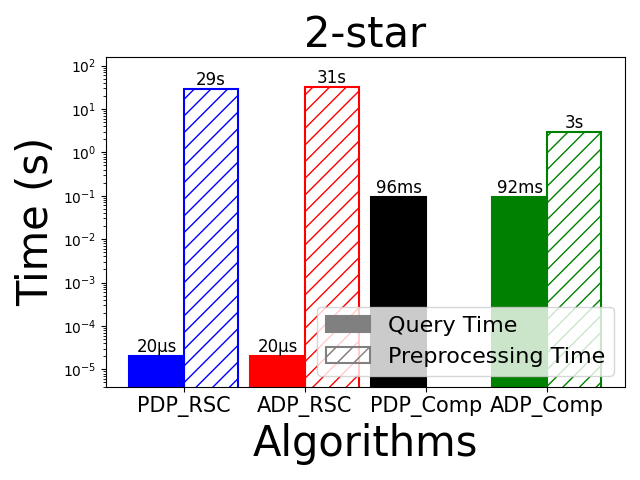}
            \caption{Wiki-Squirrel}
        \end{subfigure}    
        \begin{subfigure}[t]{0.32\textwidth}
            \centering
            \includegraphics[width=\textwidth]{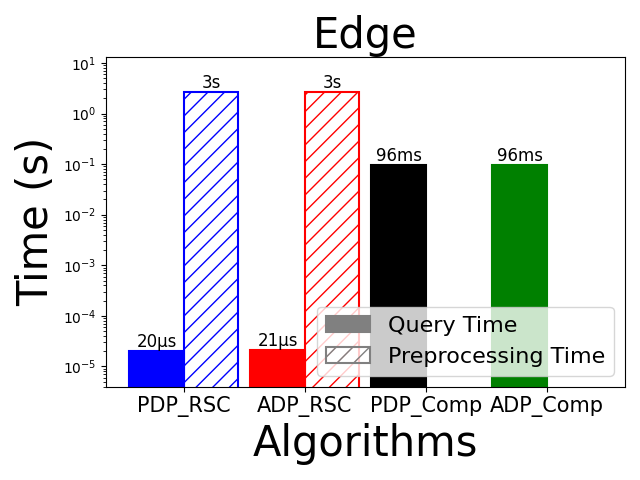}
            \caption{Wiki-Squirrel}
        \end{subfigure}
        \caption{Query time and preprocessing time (real attributes)}
         \label{fig:query_time_real_attributes}
    \end{minipage}
    }
    \raisebox{1mm} {
    \begin{minipage}{0.72\textwidth}
        \centering
        \begin{subfigure}[t]{0.32\textwidth}
            \centering
            \includegraphics[width=\textwidth]{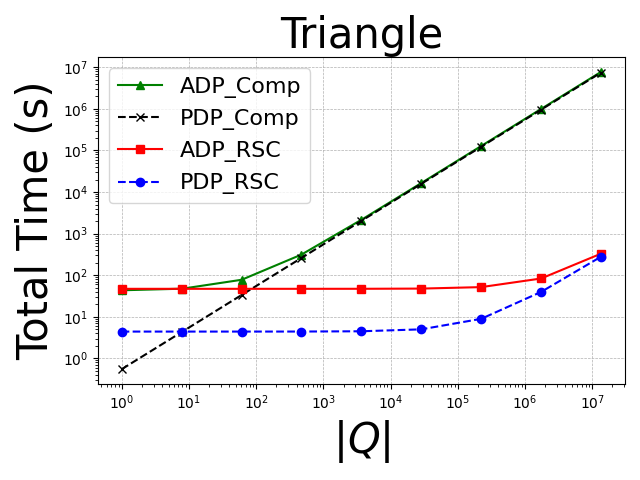}
            \caption{Wiki-Squirrel}
        \end{subfigure}
        \begin{subfigure}[t]{0.32\textwidth}
            \centering
            \includegraphics[width=\textwidth]{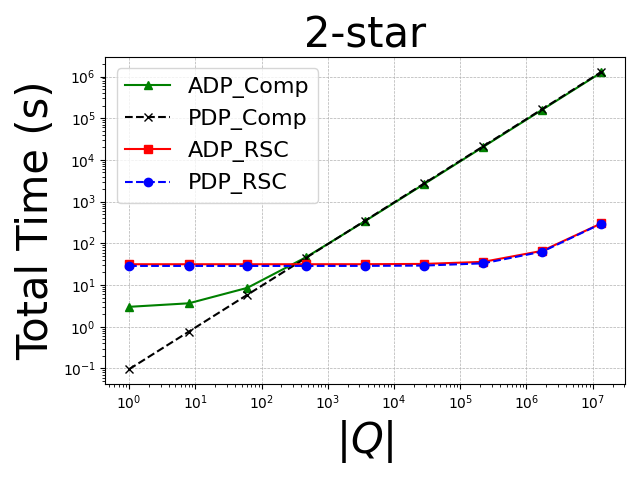}
            \caption{Wiki-Squirrel}
        \end{subfigure}    
        \begin{subfigure}[t]{0.32\textwidth}
            \centering
            \includegraphics[width=\textwidth]{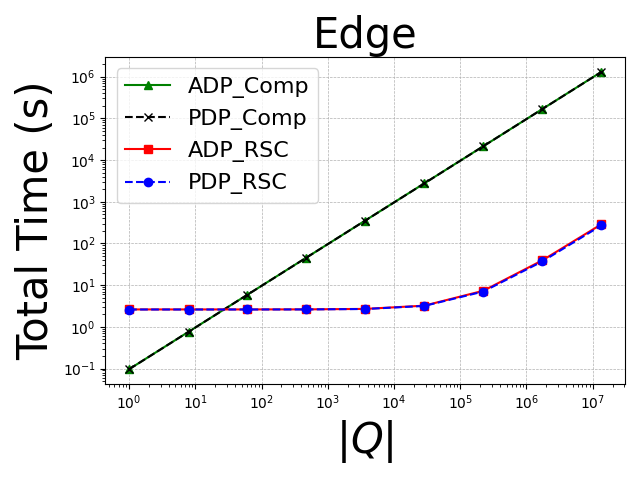}
            \caption{Wiki-Squirrel}
        \end{subfigure}
        \caption{Total time (real attributes)}
         \label{fig:total_time_real_attributes}
    \end{minipage}
    }
\end{figure*}

From a theoretical perspective, this observation can be made precise. The analysis of both the additive error and the time and space complexity can be refined to depend on the effective domain size of each attribute. In particular, the factor $\log^{O(d)} n$ can be replaced by $(\prod_{i=1}^{d}(\lceil \log |A_i(V,\ba)| \rceil + 1))^{O(1)}$, where $A_i(V,\ba)$ denotes the set of distinct values of the $i$-th attribute (as defined before \Cref{def: rank tuple}). This refinement follows from the fact that the rank $s_i(v)$ in \Cref{sec:upper bound} naturally takes values in $[|A_i(V,\ba)|]$ rather than $[n]$. Consequently, the dependence on $\log n$ in the noise parameters of our algorithms can be replaced by $(\lceil \log |A_i(V,\ba)| \rceil + 1)$, yielding improved bounds for both the additive error and the time and space complexity. In particular, by a minor adaptation of the proofs of \Cref{thm:main_upper,thm:time&space}, we obtain the following corollary. A similar corollary can also be derived from \Cref{thm:main_upper_PDP,thm:time&space_PDP}. 

\begin{corollary}
    Let $L(V,\ba) = \prod_{i=1}^{d}(\lceil \log |A_i(V,\ba)| \rceil + 1)$ and $M(V,\ba) = \prod_{i=1}^{d}(2|A_i(V,\ba)| - 1)$. For any $\veps > 0$ and $\delta \in (0,1)$, there exists a $(\veps,\delta)$-DP algorithm that, given a graph $G=(V,E,\ba)$ with $d$-dimensional vertex attributes, a fixed pattern graph $H$, and a query set $Q$, outputs noisy answers $\widetilde{f}_H(G_q)$ which satisfy
\[
\begin{aligned}
    \max_{q\in Q} \absnorm{f_H(G_q)-\widetilde{f}_H(G_q)} = O\left(\frac{\widetilde{\HS}_{f_H}(G)\cdot \sqrt{(\veps+\log(1/\delta))\log(n|Q|)} \cdot L^2(V,\ba)}{\veps}\right) \\
\end{aligned}
\] with probability at least $1-\frac{1}{n}$. The algorithm requires $O(\min(M^2(V,\ba), (f_H(G) + |Q|)L^2(V,\ba)))$ space and $O(\min(f_H(G)L^2(V,\ba), M^2(V,\ba)\log^{2d}M^2(V,\ba))+ F_H(G))$ preprocessing time, and each query can be answered in $O(L^2(V,\ba))$ time. Here, the hidden constants are of the form $c^{O(d)}$ for some universal constant $c>1$.
\end{corollary}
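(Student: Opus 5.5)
The corollary follows by rerunning the proofs of \Cref{thm:main_upper,thm:time&space}. The only change is to replace the per-dimension rank domain $[n]$ by $[|A_i(V,\ba)|]$. First I will observe that the rank function $s_i$ defined before \Cref{def: rank tuple} maps into $[|A_i(V,\ba)|]$, since vertices with equal $i$-th attribute share a rank. Hence every rank tuple produced by \Cref{alg: Subgraph Counting Projection} lies in $\prod_{i=1}^d [|A_i|]^2$. \Cref{lem: sum of weight of vertex tuple} goes through verbatim with $n$ replaced by $|A_i|$ in the $i$-th pair of coordinates, using the discretized query ranges $[\ell_i,|A_i|]\times[1,r_i]$. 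Next I will build the $2d$D range tree over this smaller grid. In each of the two coordinates associated with attribute $i$, the $1$D trees now have depth at most $\lceil \log|A_i|\rceil+1$ and at most $2|A_i|-1$ nodes.

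Privacy is handled by the depth-partition argument of \Cref{lem: PDP privacy}, refined per coordinate. It yields $\theta\le \prod_{i}(\lceil\log|A_i|\rceil+1)^2 = L^2(V,\ba)$ node classes, each with sensitivity at most $\LS_{f_H}(G)$. So in \Cref{alg: ADP_RSC} I replace the factor $(\lceil\log n\rceil+1)^d$ in $\veps''$ by $L(V,\ba)$, the square root of the number of classes, which is exactly what the advanced-composition step in \Cref{lem: local local sensitivity} needs (the original used $\sqrt{(\lceil\log n\rceil+1)^{2d}}$). \Cref{lem: ADP privacy} then applies unchanged. For utility, \Cref{lem:query nodes} applied dimension-by-dimension shows each query sums at most $L^2(V,\ba)$ noisy nodes. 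The noise scale is $b=\widetilde{\HS}_{f_H}(G)\,L(V,\ba)\,2\sqrt{2\ln(1/\delta'')}/\veps'$. \Cref{lem:laplace concentration} with $m=L^2$ and $\beta=2n|Q|$, followed by a union bound over $Q$, gives error $O(b\sqrt{m\log(n|Q|)}) = O(\widetilde{\HS}_{f_H}(G)\sqrt{(\veps+\log(1/\delta))\log(n|Q|)}\,L^2/\veps)$, as in \Cref{lem: ADP utility}.

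For complexity, I will repeat the two-case argument of \Cref{thm:time&space_PDP}. With the dynamic strategy, each insertion or query touches $O(L^2)$ nodes, which gives $O((f_H(G)+|Q|)L^2)$ space, $O(f_H(G)L^2+F_H(G))$ preprocessing, and $O(L^2)$ query time. Without it, the full tree has $\prod_i(2|A_i|-1)^2 = M^2$ nodes, and the analogue of \Cref{lem:Construction time} gives $O(M^2\log^{2d}M^2)$ construction time. The $n^{2m_H}$ term of \Cref{thm:time&space} is absorbed into $F_H(G)$ or left implicit as in the statement.

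The main obstacle is verifying that \Cref{lem:query nodes} and the depth partition remain valid when the per-coordinate domain sizes differ. I will check this by noting that both arguments proceed one coordinate at a time, so the bound factors as a product of per-coordinate depths.
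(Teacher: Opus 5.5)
Your route is the paper's own. The paper's entire justification is that $s_i$ takes values in $[|A_i(V,\ba)|]$, so each $\lceil\log n\rceil+1$ in the proofs of \Cref{thm:main_upper} and \Cref{thm:time&space} becomes $\lceil\log|A_i(V,\ba)|\rceil+1$. Several of your steps carry over correctly: the depth-partition count $\theta\le L^2$, the factor $L$ (square root of the number of classes) in $\veps''$, the per-query node bound, and the complexity bookkeeping. The per-coordinate factorization you call the main obstacle is also fine.

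The step that does not carry over is the concentration bound, which you treat as routine. The second part of \Cref{lem:laplace concentration} requires $m\ge\log\beta$, which here means $L^2\ge\log(2n|Q|)$. In \Cref{lem: ADP utility} this holds automatically for large $n$. There $m=\lceil\log n\rceil^{2d}$, while $\log(2n|Q|)\le(2d+1)\log n+O(1)$ under the standing assumption $|Q|=O(n^{2d})$. But $L$ need not grow with $n$.

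Take $d=1$ and a binary attribute. Then $L=2$ and the tree has nine noisy nodes. Every non-trivial query is answered from exactly one of them, so its error is a single $\Lap(b)$ variable. Since $b=\Theta(\widetilde{\HS}_{f_H}(G)\,L\sqrt{\veps+\log(1/\delta)}/\veps)$, the claimed bound is $O(bL\sqrt{\log(n|Q|)})=O(b\sqrt{\log n})$. Yet $\Pr[|\Lap(b)|>Cb\sqrt{\ln n}]=e^{-C\sqrt{\ln n}}\gg 1/n$ for every constant $C$. So at confidence $1-1/n$ the stated error fails for this Laplace-based algorithm. Neither your argument nor the paper's one-line adaptation establishes it.

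What the argument does give comes from the first part of \Cref{lem:laplace concentration} with $\nu=b\sqrt{\max\{L^2,2\ln(2n|Q|)\}}$. This is legitimate because $\nu$ need only upper-bound $\sqrt{\sum_i b_i^2}$, which also covers queries touching fewer than $L^2$ nodes. The resulting error is $O\big(\widetilde{\HS}_{f_H}(G)\sqrt{\veps+\log(1/\delta)}\,(L^2\sqrt{\log(n|Q|)}+L\log(n|Q|))/\veps\big)$. This coincides with the stated bound when $L^2=\Omega(\log(n|Q|))$, e.g.\ with distinct attributes, but not when the attribute domains are very small. To get the stated form unconditionally, you need either that hypothesis or a different mechanism. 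One option is Gaussian noise calibrated to the $\ell_2$-sensitivity $L\cdot\widetilde{\HS}_{f_H}(G)$ of the concatenated node vector, which would require re-proving \Cref{lem: local local sensitivity} for that mechanism.

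Separately, the $O(n^{2m_H})$ cost of \textsc{EstimateHS} cannot be ``absorbed into $F_H(G)$''. We have $F_H(G)=O(n^h)$, whereas $2m_H\ge 2h-2>h$ for connected $H$ with $h\ge 3$. The domain refinement does not touch this term, so the preprocessing bound should keep it, as in \Cref{thm:time&space}; the printed statement simply drops it.
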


Due to the scarcity of suitable real-world graph datasets with attributes aligned with our setting, we conduct experiments on Wiki-Squirrel with real attributes satisfying $|A_1(V,\ba)|=2002$. We present the results in \Cref{fig:epsilon_error_real_attributes,fig:Q_error_real_attributes,fig:query_time_real_attributes,fig:total_time_real_attributes}. These figures show that our algorithms remain significantly better than the baselines for real attributes. Moreover, due to the smaller effective domain size, our algorithms achieve substantially improved performance compared to the setting where attributes are independently sampled from a standard normal distribution, which is consistent with our theoretical analysis. 

\section{Extensions}\label{sec:extensions}
\subsection{Randomized response method}\label{appendix:randomized response}

In this subsection, we show that by using the randomized response, we obtain polynomial-time DP algorithms for range subgraph counting. Let $\mathbf{1}_k$ denote the $k$-bit all-ones vector.

\begin{definition}[Randomized Response, \citet{kasiviswanathan2011can,warner1965randomized}]\label{def:randomized-response}
Given a privacy parameter $\veps>0$ and a $k$-bit vector $\textbf{v}$, the algorithm $\RR_\veps(\textbf{v})$ outputs a $k$-bit vector, where for each $i\in[k]$, bit $i$ is $v_i$ with probability $\frac {e^\veps}{e^\veps+1}$ and $1-v_i$ otherwise.
\end{definition}

\begin{lemma}\label{lem:randomized-response}
Randomized Response is an $\veps$-DP. %
\end{lemma}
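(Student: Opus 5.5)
The plan is to verify the defining inequality of pure $\veps$-DP from \Cref{def:dp_definitino} directly. The statement concerns edge DP in context: the input $\bv$ is the adjacency bit-vector of $G$ over all $k=\binom{n}{2}$ vertex pairs, and two neighboring graphs $G\sim G'$ give vectors $\bv,\bv'$ that differ in exactly one coordinate $j$. Since the output space $\{0,1\}^k$ is finite, it is enough to compare point probabilities. For an arbitrary event $S$ we then sum over its outcomes.

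First, fix an outcome $\by\in\{0,1\}^k$. The bits of $\RR_\veps(\bv)$ are flipped independently, so
\[
\Pr[\RR_\veps(\bv)=\by]=\prod_{i\in[k]}p_i(v_i,y_i),
\]
where $p_i(v_i,y_i)$ equals $\frac{e^\veps}{e^\veps+1}$ if $y_i=v_i$ and $\frac{1}{e^\veps+1}$ otherwise. All factors with $i\neq j$ are identical for $\bv$ and $\bv'$, so they cancel in the ratio. The remaining factor at coordinate $j$ is either $\frac{e^\veps}{e^\veps+1}$ or $\frac{1}{e^\veps+1}$ for each input, so the ratio $\Pr[\RR_\veps(\bv)=\by]/\Pr[\RR_\veps(\bv')=\by]$ lies in $[e^{-\veps},e^{\veps}]$.

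Second, summing over $\by\in S$ gives
\[
\Pr[\RR_\veps(\bv)\in S]\le e^\veps\Pr[\RR_\veps(\bv')\in S],
\]
which is the definition with $\delta=0$. By the post-processing property (\Cref{def:post-processing}), any function computed from the released vector, such as the range subgraph counts built on it, inherits $\veps$-DP.

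None of the steps should be a real obstacle. The only care needed is in identifying the neighboring relation with a single-coordinate change. This holds because the vertex set and attributes are public, so only the edge indicators are randomized. The rest is a routine product calculation.
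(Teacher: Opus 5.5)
Your argument is correct and complete. Neighboring graphs differ in exactly one coordinate $j$ of the edge-indicator vector, so all factors with $i\neq j$ cancel, the remaining ratio is exactly $e^{\pm\veps}$, and summing over $\by\in S$ gives pure $\veps$-DP. The paper gives no proof of its own: it cites this lemma as the standard fact from \citet{kasiviswanathan2011can,warner1965randomized}, and your likelihood-ratio computation is precisely that standard proof.
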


We apply a linear transformation to the output of randomized response and obtain an unbiased estimation of the input vector, which is described in \Cref{alg: UB_RR}. Our subsequent algorithms mainly rely on the result of \Cref{alg: UB_RR}.

\begin{lemma}\label{lem:unbiased-randomized-response}
\Cref{alg: UB_RR} is a $\veps$-DP and returns an unbiased estimate of the input vector $\bx$.
\end{lemma}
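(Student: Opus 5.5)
We need to prove that \Cref{alg: UB_RR} is $\veps$-DP and that it returns an unbiased estimate of $\bx$. The algorithm itself isn't shown, but it presumably runs $\RR_\veps(\bx)$ to get $\by$ and then outputs $\hat{\bx} = \frac{(e^\veps+1)\by - \mathbf{1}_k}{e^\veps-1}$, coordinatewise. The plan is to split the statement into its two parts and handle each with an earlier result or a direct computation.

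\textbf{Privacy.} The output $\hat{\bx}$ is a deterministic, affine function of the randomized-response output $\by = \RR_\veps(\bx)$. By \Cref{lem:randomized-response}, $\RR_\veps$ is $\veps$-DP, so the post-processing property (\Cref{def:post-processing}) immediately gives that \Cref{alg: UB_RR} is $\veps$-DP.

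If \Cref{lem:randomized-response} were not to be taken as given, I would verify it directly for neighboring inputs differing in one bit $i$. The coordinates of $\by$ are independent. The ratio of output probabilities comes only from coordinate $i$, where it equals either $\frac{e^\veps/(e^\veps+1)}{1/(e^\veps+1)} = e^\veps$ or its inverse. This is the only place where the privacy notion matters: I would state explicitly that neighboring private vectors differ in exactly one coordinate, which matches how edge neighboring is encoded in the applications.

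\textbf{Unbiasedness.} This is a per-coordinate computation using linearity of expectation. Write $p = \frac{e^\veps}{e^\veps+1}$. Then
\[
\E[y_i] = p\,x_i + (1-p)(1-x_i) = (2p-1)x_i + (1-p),
\]
with $2p-1 = \frac{e^\veps-1}{e^\veps+1}$ and $1-p = \frac{1}{e^\veps+1}$. Solving this for $x_i$ gives exactly the affine map
\[
\hat{x}_i = \frac{y_i - (1-p)}{2p-1} = \frac{(e^\veps+1)y_i - 1}{e^\veps-1}.
\]
Hence $\E[\hat{x}_i] = x_i$ for every $i$, so $\E[\hat{\bx}] = \bx$.

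\textbf{Main obstacle.} There is no real mathematical difficulty. The only risk is that \Cref{alg: UB_RR} uses a different but equivalent normalization, for example centering with $\mathbf{1}_k/(e^\veps+1)$. In that case I would redo the one-line expectation computation above with the algorithm's exact constants, inverting the affine relation between $\E[\by]$ and $\bx$.
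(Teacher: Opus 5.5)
Your proposal is correct and follows the paper's proof: privacy from \Cref{lem:randomized-response} plus post-processing, and unbiasedness from the per-coordinate computation $\E[\widetilde{x}_e] = \frac{e^\veps x_e + (1-x_e)}{e^\veps+1}$, after which the affine map gives $\E[y_e]=x_e$. Your guessed normalization $\frac{(e^\veps+1)\RR(\bx,\veps)-\mathbf{1}_k}{e^\veps-1}$ is exactly the one used in \Cref{alg: UB_RR}, so no adjustment of constants is needed.
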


\begin{proof}
By \Cref{lem:randomized-response} and the post-processing property (\Cref{def:post-processing}), \Cref{alg: UB_RR} preserves $\veps$-DP. Let $\by$ be the output of \Cref{alg: UB_RR}. For each $e \in E$, note that 
\[
    \E[y_e] = \frac{(e^{\veps} + 1)\E[\widetilde{x}_e] - 1}{e^\veps - 1} = \frac{(e^{\veps} + 1)\left(\frac{e^{\veps}x_e}{e^\veps + 1} + \frac{1 - x_e}{e^\veps + 1}\right) - 1}{e^\veps - 1} = x_e.
\]
That is, $\by$ is an unbiased estimate of $\bx$.
\end{proof}

\subsubsection{\bf The first bound}

Let $\bx \in \{0,1\}^{\binom{n}{2}}$ be an indicator vector of $E$. Let $\mathcal{H}_{V}$ be the set of all occurrences of $H$ in the complete graph with vertex set $V$. For simplicity, we denote by $V(H')$ and $E(H')$ the vertex and edge sets of $H'$, respectively.
Note that for each query $q \in Q$ and the corresponding induced vertex set $V_q$, we have 
\begin{align}\label{eq:f_H(G_q) representation}
    f_H(G_q) = \sum_{H'\in \mathcal{H}_{V_q}} \prod_{e\in E(H')} x_{e}.
\end{align}
In \Cref{alg: HD_DP_RSC}, we construct the noisy vector $\by = \mathrm{UB\_RR}(\bx, \veps)$. And we replace $x_e$ in \Cref{eq:f_H(G_q) representation} with $y_e$ as the output of \Cref{alg: HD_DP_RSC}. We will make use of the following lemma to analyze \Cref{alg: HD_DP_RSC}.

\begin{definition}[Martingale]
A sequence of real-valued random variables $Z_0, Z_1, \dots, Z_t$ is a \emph{martingale} with respect to a sequence of random variables $\xi_1, \dots, \xi_t$ if for all $i \in [t]$, it holds that $\E[Z_i \mid \xi_1, \dots, \xi_{i-1}] = Z_{i-1}$.
\end{definition}

\begin{lemma}[Azuma's Inequality, \citet{Azu67}]
\label{lem:Azuma}
Let $Z_0, \dots, Z_t$ be a martingale satisfying $|Z_i - Z_{i - 1}| \le c_i$ for any $i \in [t]$. Then for any $\lambda > 0$, 
\[
\Pr[|Z_t - Z_0| \ge \lambda] \le 2\exp\left(\frac{-\lambda^2}{2\left(\sum_{i = 1}^{t}c_i^2\right)}\right)
\]
\end{lemma}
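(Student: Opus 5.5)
The statement immediately above is Azuma's inequality for a martingale $Z_0,\dots,Z_t$ with bounded differences $|Z_i - Z_{i-1}| \le c_i$. I will use the standard exponential-moment (Chernoff) method. Set $D_i = Z_i - Z_{i-1}$, so that $Z_t - Z_0 = \sum_{i=1}^t D_i$. The martingale property gives $\E[D_i \mid \xi_1,\dots,\xi_{i-1}] = 0$, and by hypothesis $|D_i| \le c_i$. It suffices to prove the one-sided bound $\Pr[Z_t - Z_0 \ge \lambda] \le \exp(-\lambda^2/(2\sum_i c_i^2))$. The same argument applied to the martingale $-Z_i$ gives the lower tail, and a union bound then yields the factor $2$.

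\textbf{Step 1 (Hoeffding's lemma, conditional form).} Let $X$ be a random variable with $\E[X \mid \mathcal{F}] = 0$ and $|X| \le c$, and let $s>0$. Since $x \mapsto e^{sx}$ is convex on $[-c,c]$, we have
\[
e^{sX} \le \frac{c - X}{2c} e^{-sc} + \frac{c + X}{2c} e^{sc}.
\]
Taking conditional expectations gives $\E[e^{sX} \mid \mathcal{F}] \le \cosh(sc)$. Comparing Taylor series termwise, using $(2k)! \ge 2^k k!$, gives $\cosh(sc) \le e^{s^2c^2/2}$. I expect this elementary inequality to be the only genuinely technical point. It is also where the constant $2$ in the exponent comes from: the symmetric bound $|D_i|\le c_i$, as opposed to a range of length $c_i$, is exactly what produces the $e^{s^2c_i^2/2}$ form matching the stated denominator.

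\textbf{Step 2 (peeling by the tower property).} By conditioning on $\xi_1,\dots,\xi_{t-1}$, noting that $e^{s\sum_{i<t}D_i}$ is measurable with respect to this conditioning, and applying Step 1 to $D_t$, we get
\[
\E\bigl[e^{s\sum_{i\le t} D_i}\bigr] \le e^{s^2c_t^2/2}\, \E\bigl[e^{s\sum_{i<t} D_i}\bigr].
\]
Inducting downward gives $\E[e^{s(Z_t-Z_0)}] \le \exp(s^2\sum_i c_i^2/2)$. A minor point to check is that each $D_i$ is determined by $\xi_1,\dots,\xi_i$, as in the standard setting of a martingale adapted to the filtration. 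If the definition is read loosely, I will take the filtration generated by the $\xi$'s and the $Z$'s together.

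\textbf{Step 3 (Markov and optimization).} Markov's inequality gives
\[
\Pr[Z_t - Z_0 \ge \lambda] \le e^{-s\lambda}\exp\Bigl(s^2\sum_i c_i^2/2\Bigr).
\]
Choosing $s = \lambda/\sum_i c_i^2$ yields $\exp(-\lambda^2/(2\sum_i c_i^2))$. Combining this with the symmetric lower tail completes the proof.
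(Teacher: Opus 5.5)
Your proof is correct. The paper gives no proof of this lemma and only cites Azuma (1967); your argument (the convexity bound $\E[e^{sD_i}\mid\xi_1,\dots,\xi_{i-1}]\le\cosh(sc_i)\le e^{s^2c_i^2/2}$, peeling via the tower property, Markov's inequality with $s=\lambda/\sum_i c_i^2$, and a union bound using the martingale $-Z_i$) is the standard proof of that cited inequality.

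Your measurability worry does not arise under the paper's definition: $Z_{i-1}=\E[Z_i\mid\xi_1,\dots,\xi_{i-1}]$ is automatically a function of $\xi_1,\dots,\xi_{i-1}$, which is all the peeling step needs. The degenerate case $\sum_i c_i^2=0$ is trivial, since then $Z_t=Z_0$.
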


\begin{theorem}\label{thm:HD-DP-RSC}
For any $\veps > 0$ , %
    there exists a polynomial-time $\veps$-DP algorithm that, given a graph $G=(V,E,\ba)$, where the attribute of each vertex is a $d$-dimensional vector, a pattern graph $H$, a query set $Q$, outputs all range subgraph counting queries which satisfy
\[
    \max_{q\in Q} \absnorm{f_H(G_q)-\widetilde{f}_H(G_q)} = O\left(\frac{n\sqrt{\log(n|Q|)}\GS_{f_H}}{\veps^{m_H}}\right)
\] with probability at least $1-\frac{1}{n}$.
\end{theorem}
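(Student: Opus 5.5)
The plan is the natural randomized-response estimator. Let $\bx\in\{0,1\}^{\binom{n}{2}}$ be the edge indicator vector. Run \Cref{alg: UB_RR} once to obtain $\by=\mathrm{UB\_RR}(\bx,\veps)$, where $y_e=\frac{(e^\veps+1)\widetilde{x}_e-1}{e^\veps-1}$. For each $q\in Q$, output
\[
\widetilde{f}_H(G_q)=\sum_{H'\in\mathcal{H}_{V_q}}\prod_{e\in E(H')}y_e,
\]
that is, substitute $y_e$ for $x_e$ in \Cref{eq:f_H(G_q) representation}. Privacy follows from \Cref{lem:unbiased-randomized-response}: all answers are post-processing (\Cref{def:post-processing}) of a single $\veps$-DP release of $\by$. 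The estimate for one query is a sum over at most $O(n^h)$ occurrences, so the running time is $O(n^2+|Q|n^h)$, which is polynomial in the input size.

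Next, unbiasedness. The coordinates $y_e$ are independent across vertex pairs. Within a single occurrence $H'$ the edges are distinct, so $\E[\prod_{e\in E(H')}y_e]=\prod_{e\in E(H')}x_e$. Hence $\E[\widetilde{f}_H(G_q)]=f_H(G_q)$ for every $q$.

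For concentration, fix a query $q$ and order the $N=\binom{n}{2}$ vertex pairs as $e_1,\dots,e_N$. Use the Doob martingale $Z_i=\E[\widetilde{f}_H(G_q)\mid y_{e_1},\dots,y_{e_i}]$, so that $Z_0=f_H(G_q)$ and $Z_N=\widetilde{f}_H(G_q)$. By independence, $Z_i$ is obtained from the defining sum by replacing each unrevealed $y_e$ with $x_e\in\{0,1\}$. The increment $Z_i-Z_{i-1}$ therefore involves only occurrences $H'$ that contain $e_i$, and each such term changes by $(y_{e_i}-x_{e_i})$ times a product of at most $m_H-1$ factors, each bounded by $1$ or by $|y_e|$. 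Let $\gamma=\frac{e^\veps+1}{e^\veps-1}=O(1+1/\veps)$. Since $|y_e|\le\gamma$ and $|y_{e_i}-x_{e_i}|\le\gamma+1$, each term changes by $O(\gamma^{m_H})$. The number of occurrences containing a fixed pair is at most $\GS_{f_H}$ (the proof of \Cref{lem:GS tight bound}). So $|Z_i-Z_{i-1}|\le c_i=O(\GS_{f_H}\gamma^{m_H})$, which is $O(\GS_{f_H}/\veps^{m_H})$ for $\veps=O(1)$; for large $\veps$, $\gamma=O(1)$ and the bound is only better.

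Finally, apply Azuma's inequality (\Cref{lem:Azuma}). We have $\sum_i c_i^2=O(n^2\GS_{f_H}^2/\veps^{2m_H})$. Taking $\lambda=\Theta\bigl(n\sqrt{\log(n|Q|)}\,\GS_{f_H}/\veps^{m_H}\bigr)$ gives $\Pr[|\widetilde{f}_H(G_q)-f_H(G_q)|\ge\lambda]\le 1/(n|Q|)$. A union bound over $q\in Q$ then yields the claimed maximum error with probability at least $1-\frac{1}{n}$.

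The step needing the most care is the bound on the martingale differences. The partially revealed products contain already-revealed factors $y_e$, which can be as large as $\gamma$ in absolute value, so one must check that the number of affected terms is at most $\GS_{f_H}$ and that each carries at most $m_H$ such factors. This is the source of the $\veps^{-m_H}$ factor.
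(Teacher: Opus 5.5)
Your proposal is correct and follows the paper's proof: the same randomized-response plug-in estimator, privacy by post-processing, and Azuma's inequality on an edge-exposure martingale whose increments are controlled by the at most $\GS_{f_H}$ occurrences through a fixed pair. The only difference is that the paper first expands $\prod_{e}(x_e+\cY_e)$ and runs a separate martingale for each degree-$i$ piece $w_{q,i}$ (increments $O(\GS_{f_H}/\veps^{i})$), whereas your single Doob martingale handles all degrees at once and is adapted by construction.

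One correction: for large $\veps$ your bound $O(n\sqrt{\log(n|Q|)}\,\GS_{f_H})$ is weaker than the stated $\veps^{-m_H}$ rate, not better. So your argument, like the paper's (which uses $|\cY_e|\le O(1/\veps)$), only establishes the theorem for $\veps=O(1)$, which is the constant-$\veps$ regime stated in the introduction.
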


\begin{algorithm}[t]
\caption{\textsc{UB\_RR} ($\bx,\veps$) $\quad\quad\triangleright$ Unbiased Randomized Response}
\label{alg: UB_RR}
\begin{algorithmic}[1]
\STATE \textbf{Input:} A $k$-bit vector $\bx$ and privacy parameter $\veps$.
\STATE \textbf{return} $\frac{(e^\veps + 1)\RR(\bx,\veps) - \mathbf{1}_k}{e^\veps-1}$.
\end{algorithmic}
\end{algorithm}

\begin{algorithm}[t]
\caption{\textsc{RR\_RSC} ($G,H,Q,\veps$) $\quad\quad\triangleright$ Randomized Response for Range Subgraph Counting}
\label{alg: HD_DP_RSC}
\begin{algorithmic}[1]
\STATE \textbf{Input:} An $n$-vertex graph $G = (V, E, \ba)$, a pattern graph $H$, a set of range queries $Q$, and privacy parameter $\veps$.
\STATE Let $\bx$ be an indicator vector of $E$ and $\by = \mathrm{UB\_RR}(\bx, \veps)$.
    \FOR{each query $q \in Q$} \STATE output $\widetilde{f}_H(G_q) = \sum_{H'\in \mathcal{H}_{V_q}} \prod_{e\in E(H')} y_{e}$. \ENDFOR
\end{algorithmic}
\end{algorithm}

\begin{proof}
    By \Cref{lem:unbiased-randomized-response} and the post-processing property (\Cref{def:post-processing}), the entire algorithm is also $\veps$-DP.  

    Then we analyze the additive error of \Cref{alg: HD_DP_RSC}. We use $\binom{A}{k}$ to denote the set of all subsets of $A$ with size $k$.
    By \Cref{lem:unbiased-randomized-response}, let $\cY_e = y_e - x_e$ with $\E[\cY_e] = 0$ and $|\cY_e| \le \frac{e^\veps}{e^\veps -1} \le O(\frac{1}{\veps})$ for each $e \in E$. %
    Note that for each induced subgraph $G_q$, we obtain
    \[
    \begin{aligned}
    \absnorm{\widetilde{f}_H(G_q) - f_H(G_q)} =& \absnorm{\sum_{H'\in \mathcal{H}_{V_q}}\left(\prod_{e\in E(H')} \left({x}_{e} + \cY_e\right) - \prod_{e\in E(H')} x_{e}\right)} \\
    =& \absnorm{\sum_{H'\in \mathcal{H}_{V_q}} \sum_{i = 1}^{m_H}\sum_{S \in \binom{E(H')}{i}}\prod_{e\in S}\cY_e\prod_{e'\in E(H')\setminus S}x_{e'}} \\
    =& \absnorm{\sum_{i = 1}^{m_H} \sum_{H'\in \mathcal{H}_{V_q}}\sum_{S \in \binom{E(H')}{i}}\prod_{e\in S}\cY_e\prod_{e'\in E(H')\setminus S}x_{e'}},
    \end{aligned}
    \]
    where the second and third equations follow from the binomial expansion and rearranging the summation order, respectively. For each $i\in [m_H]$, define \[w_{q,i} = \sum_{H'\in \mathcal{H}_{V_q}}\sum_{S \in \binom{E(H')}{i}}\prod_{e\in S}\cY_e\prod_{e'\in E(H')\setminus S}x_{e'}.\]
    Although $w_{q, i}$ is not a sum of independent random variables, we can still bound it by constructing an edge-exposure martingale applying \Cref{lem:Azuma}. Fix an arbitrary order of $\binom{n}{2}$ edges in the complete graph with vertex set $V$, i.e., $e_1, e_2, \dots, e_{\binom{n}{2}}$. For any $j \in \left[\binom{n}{2}\right]$, let $\cE_j$ denote the set consisting of the first $j$ edges in this ordering, i.e., $\cE_j = \{e_1, e_2, \dots, e_j\}$. Assume that $e_j = (u_j, v_j)$ and denote by $ \mathcal{H}_{A,B} $ the set of all occurrences $H'$ of $H$ in $K_n$ whose vertex set satisfies $A \subseteq V(H') \subseteq B$. Let $X_{i,0} = 0$, and define \[X_{i,j} = X_{i,j-1} + \cY_{e_j} \sum_{H'\in \mathcal{H}_{\{u_j,v_j\}, V_q}}\sum_{S\in \binom{E(H')\setminus\{e_j\}}{i-1}}\prod_{e\in S}\cY_e\prod_{e'\in E(H')\setminus\{e_j\}\setminus S}x_{e'}.\]
    Note that $X_{i,0}, X_{i,1}, \dots, X_{i,\binom{n}{2}}$ is a martingale satisfying $\absnorm{X_{i,j} - X_{i,j-1}} \le O(\GS_{f_H}/\veps^i)$ for any $j\in \left[\binom{n}{2}\right]$ since
    \[
    \begin{aligned}
    &\E\left[X_{i,j} \mid \cY_{e_1}, \cY_{e_2}, \dots,  \cY_{e_{j-1}} \right]\\ =& X_{i,j-1} + \E\left[\cY_{e_j}\right]\sum_{H'\in \mathcal{H}_{\{u_j,v_j\}, V_q}}\sum_{S\in \binom{E(H')\setminus\{e_j\}}{i-1}}\prod_{e\in S}\cY_e\prod_{e'\in E(H')\setminus S}x_{e'} \\
    =& X_{i,j-1}, \\
    \end{aligned}
    \]
    and we have $X_{i,\binom{n}{2}} = w_{q,i}$. By applying \Cref{lem:Azuma}, with probability at least $1 - \frac{1}{nm_H|Q|}$, we have 
    \[
    \absnorm{X_{i,\binom{n}{2}} - X_{i,0}} = \absnorm{w_{q,i}} \le O\left(\frac{ n \sqrt{\log(n|Q|)}\GS_{f_H}}{\veps^i}\right).
    \]
    Then, by the union bound, it holds that
    \[
    \begin{aligned}
    \max_{q\in Q}\absnorm{f_H(G_q)-\widetilde{f}_H(G_q)} \le& \max_{q\in Q} \absnorm{\sum_{i = 1}^{m_H}w_{q,i}} \le  \max_{q\in Q}\sum_{i = 1}^{m_H}\absnorm{w_{q, i}}\\
    \le & \max_{q\in Q}\sum_{i = 1}^{m_H}O\left(\frac{ n \sqrt{\log(n|Q|)}\GS_{f_H}}{\veps^i}\right)\\
    = & O\left(\frac{n\sqrt{\log(n|Q|)}\GS_{f_H}}{\veps^{m_H}}\right)
    \end{aligned}
    \]
    with probability at least $1 - \frac{1}{n}$.
\end{proof}

\subsubsection{\bf The second bound} 

For moderate $|Q|$ (i.e., $|Q| = 2^{n^{o(1)}}$), we extend the methods in \citet{Talya2023LDPtriangle} and obtain an instance-dependent error for the private range subgraph counting problem. We sketch the analysis in this subsection and present the entire algorithm in \Cref{alg: MD_DP_RSC}. 

\begin{algorithm}[H]
\caption{\textsc{RR\_IE} ($G,H,Q,\veps$) $\quad\quad\triangleright$ Randomized Response with Instance-dependent Error}
\label{alg: MD_DP_RSC}
\begin{algorithmic}[1]
\STATE \textbf{Input:} An $n$-vertex graph $G = (V, E, \ba)$, a pattern graph $H$, a set of range queries $Q$, and privacy parameter $\veps$.
\STATE Let $t = \lceil 18\ln (n|Q|)\rceil$.
\STATE Let $\bx$ be an indicator vector of $E$ and $\by^{(1)}, \by^{(2)},\dots, \by^{(t)}$ be $t$ independent outputs of $\mathrm{UB\_RR}(\bx, \frac{\veps}{t})$.
    \FOR{each query $q \in Q$} 
    \STATE Initialize $\widetilde{z}^{(i)} = \sum_{H'\in \mathcal{H}_{V_q}} \prod_{e\in E(H')} y_{e}^{(i)}$ for any $i \in [t]$.
    \STATE Output $\widetilde{f}_H(G_q) = \mathrm{median}_{i \in [t]} \, \widetilde{z}^{(i)}$. 
    \ENDFOR
\end{algorithmic}
\end{algorithm}

For any two sets $A$ and $B$, define their symmetric difference as $A \triangle B = (A \setminus B) \cup (B \setminus A)$. Given two graphs $G_1 = (V_1, E_1)$ and $G_2 = (V_2, E_2)$, their symmetric difference is defined as $G_1 \triangle G_2 = (V_1 \cup V_2, E_1\triangle E_2)$. We write $G_1 \cong G_2$ if graphs $G_1$ and $G_2$ are isomorphic. For a fixed graph $H$ and an integer $i\in [m_H]$, let $\cG_H^{(i)}$ denote the class of graphs that can be obtained as the symmetric difference of two isomorphic copies $H_1, H_2$ of $H$ whose edge sets intersect in exactly $i$ edges, i.e., $\cG_H^{(i)} = \{\, H_1 \triangle H_2 \mid H_1, H_2 \cong H,\; |E(H_1) \cap E(H_2)| = i \,\}$. Then we have the following theorem.

\begin{theorem}\label{thm:MD-DP-RSC}
For any $\veps > 0$ , %
    there exists a polynomial-time $\veps$-DP algorithm that, given a graph $G=(V,E,\ba)$, where the attribute of each vertex is a $d$-dimensional vector ($d = n^{o(1)}$), a pattern graph $H$, a query set $Q$, outputs all range subgraph counting queries which satisfy
\[
    \max_{q\in Q} \absnorm{f_H(G_q)-\widetilde{f}_H(G_q)} = O\left(\left(\frac{\tau}{\veps}\right)^{m_H}n^{\frac{h}{2}}+\sum_{i = 1}^{m_H - 1}\left(\frac{\tau}{\veps}\right)^i\sum_{H'\in \cG_H^{(i)}}\sqrt{f_{H'}(G)}\right)
\] with probability at least $1-\frac{1}{n}$, where $\tau = \log (n|Q|) \le O(d\log n)$.
\end{theorem}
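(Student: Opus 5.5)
Privacy is immediate. Each $\by^{(i)}$ is produced by $\mathrm{UB\_RR}(\bx,\veps/t)$, which is $(\veps/t)$-DP by \Cref{lem:unbiased-randomized-response}. Basic composition (\Cref{def:basic composition}) over the $t$ copies gives $\veps$-DP. Every output is a function of $\by^{(1)},\dots,\by^{(t)}$ alone, so post-processing (\Cref{def:post-processing}) covers the whole algorithm. Running time is polynomial: $t=O(\log(n|Q|))$, and since $|Q|=O(n^{2d})$ with $d=n^{o(1)}$, we get $t=O(d\log n)$. Each $\widetilde z^{(i)}$ is a sum over $O(n^h)$ occurrences.

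For utility, the plan is a median-of-means argument. Fix $q$ and one copy $i$, and write $\by=\bx+\mathcal{Y}$. Here the $\mathcal{Y}_e$ are independent, have mean zero, and satisfy $|\mathcal{Y}_e|=O(t/\veps)=O(\tau/\veps)$. So $\widetilde z^{(i)}$ is an unbiased estimator of $f_H(G_q)$. We bound its variance and apply Chebyshev to get $\Pr[|\widetilde z^{(i)}-f_H(G_q)|>\sqrt{3\Var}]\le 1/3$. A Chernoff bound on the $t=\lceil 18\ln(n|Q|)\rceil$ independent copies then shows the median is within $\sqrt{3\Var}$ with probability $1-1/(n|Q|)^{2}$. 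A union bound over $Q$ finishes.

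The variance computation is the core. Expand $\widetilde z-f_H(G_q)=\sum_{H'\in\mathcal{H}_{V_q}}\sum_{\emptyset\ne S\subseteq E(H')}\prod_{e\in S}\mathcal{Y}_e\prod_{e'\in E(H')\setminus S}x_{e'}$. Squaring gives terms indexed by pairs $(H_1,S_1),(H_2,S_2)$. By independence and mean zero, a term survives only if $S_1=S_2=S$. Its expectation is then $\prod_{e\in S}\E[\mathcal{Y}_e^2]\cdot\prod_{e'\in(E(H_1)\cup E(H_2))\setminus S}x_{e'}$, which is at most $(\tau/\veps)^{2|S|}$ times the indicator that $(E(H_1)\triangle E(H_2))\cup(S^{c}\text{-shared edges})\subseteq E(G)$. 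Now group by $k=|E(H_1)\cap E(H_2)|\ge|S|$. Two cases arise.

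\emph{Case $k=m_H$.} Here $H_1=H_2$ and only the shared non-$S$ edges must be present. The count is crudely bounded by $|\mathcal{H}_{V_q}|\le O(n^h)$, contributing $(\tau/\veps)^{2m_H}n^h$. This term is maximized when $|S|=m_H$.

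\emph{Case $k<m_H$.} The surviving pairs are bounded by the number of occurrences in $G$ of the graph $H_1\triangle H_2\in\cG_H^{(k)}$, since all symmetric-difference edges must lie in $E(G)$. Dropping the constraint on shared edges only enlarges the count; this is an upper bound, and the isolated shared vertices cause only constant-factor overcounting. This contributes $(\tau/\veps)^{2k}\sum_{H'\in\cG_H^{(k)}}f_{H'}(G)$, using $|S|\le k$ and $\tau/\veps\ge1$ (if $\veps$ is large, replace the factor by $\max(1,\tau/\veps)$).

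Summing over the $O(1)$ choices of $k$ and $S$, then taking square roots via $\sqrt{a+b}\le\sqrt a+\sqrt b$, gives the claimed bound. I expect the main obstacle to be this bookkeeping: showing that each surviving cross term is charged to an occurrence of some $H'\in\cG_H^{(i)}$ in $G$, with multiplicity depending only on $h$, and that the $k=m_H$ diagonal term is absorbed into $(\tau/\veps)^{m_H}n^{h/2}$. This follows the corresponding analysis in \citet{Talya2023LDPtriangle}, with the induced-subgraph restriction to $V_q$ only decreasing the counts.
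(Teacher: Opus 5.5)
Your proposal is correct and follows essentially the paper's proof: the same algorithm ($t$ independent copies of $\mathrm{UB\_RR}$ at privacy $\veps/t$, then the median), basic composition for privacy, and unbiasedness plus a per-copy variance bound, Chebyshev, the median trick and a union bound over $Q$ for utility. Your expansion in the centered noise $\mathcal{Y}_e$ is only a cleaner bookkeeping of the paper's covariance bound $\E[\widetilde z_{H_1}\widetilde z_{H_2}]=\prod_{e\in E(H_1)\triangle E(H_2)}x_e\prod_{e\in E(H_1)\cap E(H_2)}\E[y_e^2]$, and it makes the vanishing of edge-disjoint pairs explicit where the paper uses it implicitly. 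One small slip: Hoeffding with $t=\lceil 18\ln(n|Q|)\rceil$ gives a per-query failure probability of $e^{-t/18}\le 1/(n|Q|)$, not $(n|Q|)^{-2}$, but that is exactly what the union bound over $Q$ requires.
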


\begin{proof}
    We begin by analyzing the privacy guarantees of \Cref{alg: MD_DP_RSC}. Let $\veps' = \frac{\veps}{t}$. By \Cref{lem:unbiased-randomized-response}, releasing $y^{(i)}$ preserves $\veps'$-DP for each $i\in [t]$. By the basic composition theorem (\Cref{def:basic composition}), the entire algorithm preserves $\veps$-DP.

    Now we analyze the additive error of \Cref{alg: MD_DP_RSC}. We have the following claim.

    \begin{claim}
    For any $q\in Q$ and $i\in [t]$, we have 
    \[\left|\widetilde{z}^{(i)} - f_H(G_q)\right|= O\left(\frac{n^{\frac{h}{2}}}{\veps'^{m_H}}+\sum_{i = 1}^{m_H - 1}\frac{1}{\veps'^{i}}\sum_{H'\in \cG_H^{(i)}}\sqrt{f_{H'}(G)}\right)
    \] 
    with probability at least $\frac{2}{3}$.
    \end{claim}

    \begin{proof}
    Fix any $i\in [t]$ and $q\in Q$. Let $\widetilde{\bx}^{(i)} = \RR(\bx, \veps')$ and $\by^{(i)} = ((e^{\veps'} + 1)\widetilde{\bx}^{(i)} - \mathbf{1}_{\binom{n}{2}})/(e^{\veps'} - 1)$. For any $e \in E$, we have 
     \[  
    \Var\left[y_e^{(i)}\right] = \frac{(e^{\veps'} + 1)^2}{(e^{\veps'} - 1)^2}\Var\left[\widetilde{x}_e^{(i)}\right] = \frac{(e^{\veps'}+ 1)^2}{(e^{\veps'} - 1)^2}\frac{e^{\veps'}}{(e^{\veps'}+1)^2} = \frac{e^{\veps'}}{(e^{\veps'}-1)^2} = O\left(\frac{1}{\veps'^2}\right),
    \]
    since $\widetilde{x}_e$ is a Bernoulli variable. By \Cref{lem:unbiased-randomized-response}, it follows that
    \begin{align}\label{eq:Ey^2} 
    \E\left[\left(y_e^{(i)}\right)^2\right] = \Var\left[y_e^{(i)}\right] + \left(\E\left[y_e^{(i)}\right]\right)^2 = O\left(\frac{1}{{\veps'}^2}\right) + x_e^2 = O\left(\frac{1}{{\veps'}^2}\right).
    \end{align}
    Let $z^{(i)}_{H'} = \prod_{e\in E(H')}x_e^{(i)}$ and $\widetilde{z}^{(i)}_{H'} = \prod_{e\in E(H')}y_e^{(i)}$ for any fixed occurrence $H'$. Thus, 
    \[
    \E\left[\widetilde{z}^{(i)}_{H'}\right] = \prod_{e\in E(H')}\E \left[y_e^{(i)}\right] = \prod_{e\in E(H')}x_e^{(i)} = z_{H'}^{(i)}.
    \]
    Then we have
    \begin{align}\label{eq:Varz}
    \Var\left[\widetilde{z}^{(i)}_{H'}\right] = \E\left[\left(\widetilde{z}^{(i)}_{H'}\right)^2\right] - \left(\E\left[\widetilde{z}^{(i)}_{H'}\right]\right)^2 = \prod_{e\in E(H')}\E \left[\left(y_e^{(i)}\right)^2\right] - z_{H'}^{(i)} = O\left(\frac{1}{\veps'^{2m_H}}\right).
    \end{align}
    Combining the above results, we obtain
    \[
    \begin{aligned}
    \Var\left[\widetilde{z}^{(i)}\right] =& \sum_{H'\in \mathcal{H}_{V_q}}\Var\left[\widetilde{z}^{(i)}_{H'}\right] + \sum_{H_1,H_2\in \mathcal{H}_{V_q}}\E\left[\left(\widetilde{z}^{(i)}_{H_1} - \E\left[\widetilde{z}^{(i)}_{H_1}\right]\right) \left(\widetilde{z}^{(i)}_{H_2} - \E\left[\widetilde{z}^{(i)}_{H_2}\right]\right)\right] \\
    =& \sum_{H'\in \mathcal{H}_{V_q}}\Var\left[\widetilde{z}^{(i)}_{H'}\right] + \sum_{H_1,H_2\in \mathcal{H}_{V_q}}\left(\E\left[\widetilde{z}^{(i)}_{H_1}\widetilde{z}^{(i)}_{H_2}\right] -  {z}^{(i)}_{H_1}{z}^{(i)}_{H_2}\right) \\
    \le & \sum_{H'\in \mathcal{H}_{V_q}}\Var\left[\widetilde{z}^{(i)}_{H'}\right] + \sum_{H_1,H_2\in \mathcal{H}_{V_q}}\E\left[\widetilde{z}^{(i)}_{H_1}\widetilde{z}^{(i)}_{H_2}\right] \\
    =& O\left(\frac{f_H(G_q)}{\veps'^{2m_H}}\right) + \sum_{H_1,H_2\in \mathcal{H}_{V_q}} \prod_{e\in E(H_1) \triangle E(H_2)} \E\left[y_e^{(i)}\right]\prod_{e'\in E(H_1)\cap E(H_2)}\E\left[\left(y_{e'}^{(i)}\right)^2\right]\\
    =& O\left(\frac{n^{h}}{\veps'^{2m_H}}\right) + \sum_{i = 1}^{m_H - 1} O\left(\frac{1}{\veps'^{2i}}\right)\sum_{\substack{H_1, H_2\in \mathcal{H}_{V_q}\\ |E(H_1)\cap E(H_2)| = i}}\prod_{e\in E(H_1) \triangle E(H_2)} x_e\\
    =& O\left(\frac{n^{h}}{\veps'^{2m_H}}+\sum_{i = 1}^{m_H - 1}\frac{1}{\veps'^{2i}}\sum_{H'\in \cG_H^{(i)}}f_{H'}(G)\right).
    \end{aligned}
    \]
    The antepenultimate equation follows from \Cref{eq:Varz}; the penultimate equation follows from \Cref{lem:ub f_H(G),eq:Ey^2}; while the last equation follows from the definition of $\cG_{H}^{(i)}$. Note that $\E[\widetilde{z}^{(i)}] = \sum_{H'\in\mathcal{H}_{V_q}}\E[\widetilde{z}_{H'}^{(i)}] = f_H(G_q)$. By Chebyshev's Inequality, we finish the proof.
    \end{proof}
    By applying the standard median trick and setting the number of copies $t = \Theta(\log(n|Q|))$, we can amplify the success probability to $\frac{1}{n|Q|}$ without increasing the additive error. This completes the proof by the union bound.
\end{proof}

By \Cref{lem:GS tight bound} and \Cref{lem:ub f_H(G)}, we have $\sum_{H'\in \cG_H^{(i)}}\sqrt{f_{H'}(G)} = O(n^{h-\frac{i+1}{2}}) = O(n^{\frac{3-i}{2}}\GS_{f_H})$ since any $H'\in \cG_{H}^{(i)}$ has at most $(2h - i - 1)$ vertices for all $i \in [m_H-1]$. Thus, \Cref{thm:MD-DP-RSC} does not have a stronger dependence on $n$ than Theorem \Cref{thm:HD-DP-RSC} for moderate $d$ up to a factor of $\textrm{poly}\log n$. We denote $K_{1,k}$ by a $k$-star and $C_k$ by a $k$-cycle. Table~\ref{tab:instance-dependent error} summarizes the additive error of \Cref{alg: MD_DP_RSC} for some common pattern graphs. %

\renewcommand{\arraystretch}{1.25} %
\begin{table}[H]
    \centering
    \caption{The additive error of \Cref{alg: MD_DP_RSC} for some common pattern graphs $H$}
    \begin{tabular}{cc} 
    \toprule
    Pattern Graph & Additive Error  \\   
    \midrule
    Triangle & $O\left((\frac{\tau}{\veps})^3 n^{\frac{3}{2}} + \frac{\tau}{\veps}\sqrt{f_{C_4}(G)}\right)$ \\  
    3-Star & $O\left((\frac{\tau}{\veps})^3 n^{2} + \frac{\tau}{\veps}\sqrt{nf_{K_{1,4}}(G)} + (\frac{\tau}{\veps})^2n\sqrt{f_{K_{1,2}}(G)}\right)$ \\
    4-Cycle & $O\left((\frac{\tau}{\veps})^4 n^{2} + \frac{\tau}{\veps}\sqrt{f_{C_{6}}(G)} + (\frac{\tau}{\veps})^2\sqrt{nf_{C_4}(G)}\right)$ \\
    \bottomrule
    \end{tabular}
    \label{tab:instance-dependent error}
\end{table}

\subsection{The lower bound for sufficiently large $d$}
\label{appendix:d>=n/2}
In this subsection, we extend the methods in \citet{eliavs2020differentially} to prove the lower bound for $d \ge n/2$. For convenience, we reuse the notations $\bx,\bx_H,\bA_H$ to denote the analogous vectors and matrix defined on $n$-vertex graphs. Their meanings here are independent of those in \Cref{sec:Lower Bounds}.

We consider an $n$-vertex graph $G = (V,E)$ and a query set $Q$. Let $\bx\in \{0,1\}^{\binom{n}{2}}$ be an indicator vector of $E$. Let $\bx_H = f_H(\bx) \in \{0,1\}^{f_H(K_n)}$ be an indicator vector, where each coordinate corresponds to an occurrence $H'=(V',E')$ of $H$ in the complete graph $K_n$ with vertex set $V$, and the entry is 1 if all edges of $H'$ are present in $G$, and 0 otherwise, i.e., $(f_H(\bx))_{H'} = \prod_{e\in E'}x_e$. We construct a matrix $\bA_H$ with $|Q|$ rows, each corresponding to an induced subgraph $K_n[V_q]$ for a query $q\in Q$, and $f_H(K_n)$ columns, each corresponding to an occurrence $H'$ of $H$ in $K_n$, such that %
\[
(A_H)_{q,H'} = \begin{cases}
1, &\text{if }E(H') \subseteq E(K_n[V_q]);\\
0, & \text{otherwise}.
\end{cases}
\]

Note that $\bA_H$ is fixed and does not depend on the edge set of $G$. For each query $q\in Q$, the quantity $(\bA_H\bx_H)_{q}$ denotes the number of occurrences of $H$ in the induced subgraph $G[V_q]$.

If the dimension $d$ of the vertex attributes is at least $n/2$, we can set the $i$-th coordinate of the attributes of the $(2i-1)$-th 
and $(2i)$-th vertices to $-1$ and $1$, respectively, and set all other coordinates to $0$ for all $1\le i\le \lceil n/2 \rceil$; then, by choosing the interval for the $i$-th dimension in a query to include $-1$ (resp. $1$) or not, we can determine whether the $(2i-1)$-th (resp. $(2i)$-th) vertex is included in the queried vertex set. Therefore, when $d \ge n/2$, we can always construct the attributes of each vertex so that it is possible to query the induced subgraph of any subset of vertices. Hence there are $2^n$ rows in the matrix $\bA_H$. We use the following lemma to show that $\bA_H$ satisfies the following discrepancy property.

\begin{lemma}[Lemma 6 in \citet{BS06}]
\label{lem:rho-bound}
Let $\{a_i, 1 \le i \le n\}$ be a sequence of real numbers. Let $\rho_i \in \{0, 1\}$ be i.i.d. $\mathrm{Bernoulli}(\frac{1}{2})$, for $1\le i\le n$. Then 
\[
\mathbb{E}\left[\left|\sum_{i = 1}^{n}\rho_ia_i\right|\right] \ge \frac{\sum_{i = 1}^{n}|a_i|}{\sqrt{8n}}.
\]
\end{lemma}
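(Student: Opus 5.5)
This is Lemma 6 of \citet{BS06}, and I would prove it by combining a symmetrization step with Khintchine's inequality, keeping track of the constants.

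\textbf{Step 1 (symmetrization).} Write $\rho_i = \frac{1+\sigma_i}{2}$ with $\sigma_i \in \{\pm1\}$ i.i.d.\ Rademacher. Then
\[
\sum_i \rho_i a_i = \tfrac12 S + \tfrac12 \sum_i \sigma_i a_i, \qquad S = \sum_i a_i .
\]
The map $\sigma \mapsto -\sigma$ preserves the distribution. Averaging the two values $\left|\tfrac12(S+X)\right|$ and $\left|\tfrac12(S-X)\right|$, where $X = \sum_i \sigma_i a_i$, gives
\[
\E\left[\left|\sum_i \rho_i a_i\right|\right] = \tfrac14\,\E\bigl[|S+X| + |S-X|\bigr] \ge \tfrac12\,\E|X|,
\]
because $|S+X| + |S-X| \ge 2|X|$ by the triangle inequality. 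So it suffices to show
\[
\E\left|\sum_i \sigma_i a_i\right| \ge \frac{\sum_i |a_i|}{\sqrt{2n}} .
\]

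\textbf{Step 2 (Khintchine, sharp constant).} Szarek's theorem gives $\E\left|\sum_i \sigma_i a_i\right| \ge \frac{1}{\sqrt2}\|a\|_2$. To avoid relying on the sharp constant, I would instead use the elementary route via Hölder:
\[
\E X^2 \le (\E|X|)^{2/3} (\E X^4)^{1/3}.
\]
Since $\E X^4 \le 3 (\E X^2)^2$, this yields $\E|X| \ge \|a\|_2/\sqrt3$. Step 3 will show that this weaker constant is still enough.

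\textbf{Step 3 (Cauchy--Schwarz).} We have $\|a\|_2 \ge \|a\|_1/\sqrt n$. Combining the three steps,
\[
\E\left|\sum_i \rho_i a_i\right| \ge \frac{\|a\|_1}{2\sqrt{3n}} = \frac{\|a\|_1}{\sqrt{12 n}} .
\]
This is slightly weaker than the stated $\frac{\|a\|_1}{\sqrt{8n}}$. To get exactly $\sqrt{8n}$ I would invoke Szarek's constant $1/\sqrt2$, which gives $\frac12 \cdot \frac{1}{\sqrt2} \cdot \frac{1}{\sqrt n} = \frac{1}{\sqrt{8n}}$.

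The only delicate point is this constant in the Khintchine lower bound. Every downstream use in the paper only needs a bound of the form $\Omega\bigl(\|a\|_1/\sqrt n\bigr)$, so either version suffices. I would cite Szarek's theorem to match the stated constant exactly.
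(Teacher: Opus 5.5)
Your proof is correct. The paper gives no proof of this lemma; it quotes it from \citet{BS06}. Your chain is the standard argument, and it yields exactly $\sqrt{8n}=2\cdot\sqrt{2}\cdot\sqrt{n}$. Concretely, you write $\rho_i=\frac{1+\sigma_i}{2}$ and use $|S+X|+|S-X|\ge 2|X|$, losing a factor $\frac12$. You then apply Szarek's bound $\E|\sum_i\sigma_i a_i|\ge \|a\|_2/\sqrt{2}$, and finish with $\|a\|_2\ge \|a\|_1/\sqrt{n}$.

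One wording fix: the H\"older/fourth-moment route with constant $1/\sqrt{3}$ is ``enough'' only for the $\Omega(\|a\|_1/\sqrt{n})$ uses in the paper, not for the lemma as stated. In fact the stated constant is sharp. At $n=2$, $a=(1,-1)$ one has $\E|\rho_1-\rho_2|=\frac12=\|a\|_1/\sqrt{8n}$, with all three of your steps tight. So within your chain, the Khintchine constant must be exactly $1/\sqrt{2}$ to recover the statement.
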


\begin{lemma}
\label{lem:disc-subset-subgraph-counting}
Let $\gamma \in (0, \frac{1}{2}]$ and $\sigma \in [0,1]$. Let $C_{\sigma, \gamma}$ be the set of all vectors $\chi = \bx_{H} - \bx'_H$, where $\bx$ and $\bx'$ are the indicator vectors of edges of graphs, denoted by $G = (V, E)$ and $G' = (V, E')$, respectively, such that 
\begin{enumerate}
    \item $\|\bx- \bx'\|_1 \ge \sigma\gamma n^2$;
    \item for each vertex $v\in V$, its degree lies in the interval $[\frac{\gamma n}{2}, 2\gamma n]$;
    \item for each edge $e \in E\cup E'$, the number of occurrences of $H$ containing $e$ lies in the interval $[\frac{\LS_{f_H}(G)}{2},2\LS_{f_H}(G)]$.
\end{enumerate}
Then for the matrix $\bA_H$ defined above, we have $\disc_{C_{\sigma, \gamma}}(\bA_H) = \Omega\left(\sigma n\sqrt{\gamma} \cdot \LS_{f_H}(G)\right)$.
\end{lemma}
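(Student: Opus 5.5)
The plan is to follow the random-subset argument of \citet{eliavs2020differentially}. Because $d \ge n/2$, every subset $U \subseteq V$ is a row of $\bA_H$. So it suffices to show that a uniformly random subset $U$ (each vertex included independently with probability $\tfrac12$) satisfies $\E_U\bigl|(\bA_H\chi)_U\bigr| = \Omega(\sigma n\sqrt{\gamma}\,\LS_{f_H}(G))$ for every $\chi \in C_{\sigma,\gamma}$. The maximum over rows is then at least this expectation.

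\textbf{Step 1: reduce to one layer of randomness.} Fix $\chi = \bx_H - \bx'_H$. Write $D = E \triangle E'$, so $|D| \ge \sigma\gamma n^2$. First split $V$ randomly into two halves $V_1, V_2$. Then condition on $U \cap V_1$ (or on a related random choice) and leave the inclusion bits $\rho_v$ for $v \in V_2$ as the remaining randomness. The aim is to write $(\bA_H\chi)_U$ as an affine function $c + \sum_{v \in V_2} \rho_v a_v$ of these bits, up to higher-order terms. Here $a_v$ is the signed number of occurrences of $H$ that contain $v$ and whose other vertices lie in the conditioned part, counted in $G$ minus in $G'$. \Cref{lem:rho-bound}, applied conditionally (the shift $c$ can be absorbed by a symmetrization step), then gives $\E\bigl|\sum_v \rho_v a_v\bigr| \ge \sum_v |a_v|/\sqrt{8n}$.

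\textbf{Step 2: lower bound $\sum_v |a_v|$.} Here conditions 2 and 3 are used. Each edge $e=(u,v) \in D$ lies in $\Theta(\LS_{f_H}(G))$ occurrences of $H$ in whichever of $G, G'$ contains it, by condition 3. With constant probability, a constant fraction of those occurrences have all their vertices other than $v$ in the conditioned set. So $e$ contributes $\Theta(\LS_{f_H}(G))$ to $a_v$ with a definite sign: $+$ if $e \in E \setminus E'$, and $-$ otherwise. To avoid cancellation between the two signs, restrict attention to whichever of $E\setminus E'$ and $E'\setminus E$ contains at least half of $D$, and restrict to vertices where that type dominates. A second random-halving trick (keeping only occurrences in which $v$ is incident to a single edge of $D$) separates contributions to $\sum_v |a_v|$ from multiple $D$-edges. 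This should yield $\sum_v |a_v| = \Omega(|D|\,\LS_{f_H}(G)) = \Omega(\sigma\gamma n^2\,\LS_{f_H}(G))$. The degree condition ($\deg \le 2\gamma n$) is what keeps each $|a_v| = O(\gamma n\,\LS_{f_H}(G))$. That bound lets an averaging or Paley--Zygmund step convert the $\ell_1$ mass into the sharper form $\sqrt{\sum_v a_v^2}$. Combining this with \Cref{lem:rho-bound} gives a bound of order $|D|\LS/\sqrt{n}$; the refined estimate is what we need to reach $\sigma n\sqrt{\gamma}\,\LS_{f_H}(G)$.

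\textbf{Main obstacle.} The crude bound from Step 2 alone gives $\sigma\gamma n^{3/2}\LS$. To reach $\sigma n\sqrt\gamma\,\LS$, exploit the degree bound: at least $\Omega(\sigma n)$ vertices each carry $\Omega(\gamma n)$ same-signed $D$-edges. For those vertices the sum over $\rho$ concentrates, and the anticoncentration argument should be run with the standard deviation $\sqrt{\sum_v a_v^2}$ (via Khintchine's inequality) instead of the $\ell_1/\sqrt{n}$ form.

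The hardest part will be controlling cross terms, that is, occurrences of $H$ that involve several edges of $D$ or several randomized vertices, so that the linearization in Step 1 is valid and no cancellation occurs. I would first try to bound these terms by $O(n^{h-3})$-type counts, as in \Cref{claim:gHi}, and show they are lower order compared with $\LS_{f_H}(G)$.
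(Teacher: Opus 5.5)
\textbf{The gap is in Step 2.} Your claim that $\sum_v |a_v| = \Omega(|D|\,\LS_{f_H}(G))$ fails in general, because of cancellation at each vertex.

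\begin{itemize}
\item The hypotheses say nothing about how the $D$-edges at a vertex split between $E\setminus E'$ and $E'\setminus E$. Occurrences through $D$-edges of opposite type enter $a_v$ with opposite signs. So do occurrences that contain $v$ but whose $D$-edge lies elsewhere. Hence $a_v$ can be nearly zero for every $v$.
\item Restricting attention to the dominant type is not an available move. Once you condition, $a_v$ is a fixed quantity determined by $\chi$ and the conditioned set, and you cannot prune terms from it. Your second halving trick is not specified.
\item Sign-homogeneity holds only one level down: all occurrences through a single $D$-edge $(v,w)$ have the same sign of $\chi$.
\end{itemize}

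The paper exploits exactly this with a second application of \Cref{lem:rho-bound}. It expands the vertex-$v$ coefficient as a $\rho_w$-weighted sum over $w\in N(v)$, the $D$-neighbours of $v$. Since $|N(v)|\le 4\gamma n$, this costs a factor $\sqrt{32\gamma n}$. Each vertex therefore contributes about $|N(v)|\,\LS_{f_H}(G)/\sqrt{\gamma n}$, not $|N(v)|\,\LS_{f_H}(G)$. Combined with the outer application over $v$ (factor $\sqrt{8n}$), this gives $|D|\,\LS_{f_H}(G)/(16h^2 2^{h-1} n\sqrt{\gamma}) = \Omega(\sigma n\sqrt{\gamma}\,\LS_{f_H}(G))$. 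This is where the degree upper bound enters: it caps the number of terms in the inner anticoncentration, not the size of $|a_v|$.

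Your ``main obstacle'' is therefore misdiagnosed. The ratio $\sigma\gamma n^{3/2}/(\sigma n\sqrt{\gamma}) = \sqrt{\gamma n}$ is at least $1$ as soon as $\gamma n\ge 1$. So the bound you call crude would be \emph{stronger} than the lemma by a factor $\sqrt{\gamma n}$. The target is what survives after paying that factor to cancellation, and no Khintchine or Paley--Zygmund sharpening is needed. Your assertion that $\Omega(\sigma n)$ vertices carry $\Omega(\gamma n)$ same-signed $D$-edges also does not follow from the hypotheses.

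Finally, you cannot discard the higher-order terms of your linearization with $O(n^{h-3})$-type counts as in \Cref{claim:gHi}. That claim relies on the private edges of $G(\bx)$ forming a matching. Here $D$ has degrees up to $4\gamma n$. Occurrences with two or more vertices in the randomized half make up a constant fraction of the nonzero terms, with total absolute size of order $|D|\,\LS_{f_H}(G)$, far above the target. These terms must be kept inside the anticoncentration argument rather than bounded away. Your $V_1/V_2$ decoupling is a sensible device for that; the paper itself applies \Cref{lem:rho-bound} informally, with coefficients depending on the same $\rho$'s. But it only works if combined with the second, edge-level anticoncentration step described above.
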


\begin{proof}
In the complete graph $K_n$ with vertex set $V$, we denote by $\mathcal{H}_{A,B}$ the set of all occurrences $H'$ of $H$ whose vertex set satisfies $A \subseteq V(H') \subseteq B$.

Let $ V' \subseteq V $ be a random subset such that for each $ v \in V $, the indicator variable $ \rho_v= \1_{v \in V'} $ is i.i.d.\ $\mathrm{Bernoulli}(\frac{1}{2})$. For any given $v\in V$, denote $N(v)$ the set of vertices adjacent to $v$ by edges whose color is non-zero, i.e., $N(v) = \{w \mid (v,w)\in E\triangle E' \}$. Not that $|N(v)| \le 4\gamma n$. Let $d_{H'}(v)$ denote the degree of $v$ in an occurrence $H'$, and we have  
\begin{align}
\disc_{C_{\sigma,\gamma}}(\bA_H) =& \min_{\chi\in C_{\sigma, \gamma}}\|\bA_H\chi\|_\infty \ge \min_{\chi\in C_{\sigma, \gamma}}\mathbb{E}\left[\left|\sum_{H'\in \mathcal{H}_{\emptyset, V'}}\chi_{H'}\right|\right] \notag\\
=& \min_{\chi\in C_{\sigma, \gamma}}\frac{1}{h}\mathbb{E}\left[\left|\sum_{v\in V}\rho_v \sum_{H'\in \mathcal{H}_{\{v\}, V'}}\chi_{H'}\right|\right] \notag\\
\ge& \min_{\chi\in C_{\sigma, \gamma}}\frac{1}{h\sqrt{8n}}\sum_{v\in V}\mathbb{E}\left[\left| \sum_{H'\in \mathcal{H}_{\{v\}, V'}}\chi_{H'}\right|\right] \notag \\
=& \min_{\chi\in C_{\sigma, \gamma}}\frac{1}{h\sqrt{8n}}\sum_{v\in V}\mathbb{E}\left[\left| \sum_{w\in N(v)\cap V}\rho_w\sum_{H'\in \mathcal{H}_{\{v,w\}, V'}}\frac{\chi_{H'}}{d_{H'}(v)}\right|\right] \notag \\ 
\ge & \min_{\chi\in C_{\sigma, \gamma}}\frac{1}{16h n\sqrt{\gamma}}\sum_{(v,w)\in E\triangle E'}\mathbb{E}\left[\left| \sum_{H'\in \mathcal{H}_{\{v,w\}, V'}}\frac{\chi_{H'}}{d_{H'}(v)}\right|\right], \label{ieq:disc_C_alpha_gamma}
\end{align}

where the factor $1 / h$ (resp. $1/d_{H'}(v)$) compensates for the fact that each $\chi_{H'}$ is counted $h$ (resp. $d_{H'}(v)$) times in the sum. The last and penultimate inequalities follow from \Cref{lem:rho-bound}.

Now, by our assumption on the properties of the graphs $G$ and $G'$, for each edge $(u,v)\in
E\triangle E'$, it holds that the occurrences $H'$ containing $(u,v)$ are either in $G$ or in $G'$, but not both; and each edge belongs to at least $\frac{\LS_{f_H}(G)}{2}$ occurrences of $H$. Therefore, we obtain
\begin{align}
\mathbb{E}\left[\left|\sum_{H'\in\mathcal{H}_{\{v,w\},V'}}\frac{\chi_{H'}}{d_{H'}(v)}\right|\right] =& \sum_{H'\in\mathcal{H}_{\{v,w\},V'}}\mathbb{E}\left[\left|\frac{\chi_{H'}}{d_{H'}(v)}\right|\right] \notag \\ \ge&  \frac{1}{h}\sum_{H'\in\mathcal{H}_{\{v,w\},V}} \Pr[H'\in V'] \notag \\ =&\frac{|\mathcal{H}_{\{v,w\},V}|}{h2^{h-2}} \ge \frac{\LS_{f_H}(G)}{h2^{h-1}}, \label{ieq:Echi}
\end{align}
where the penultimate inequality follows from the fact $d_{H'}(v) \le h$. Substituting \Cref{ieq:Echi} into \Cref{ieq:disc_C_alpha_gamma}, we have \[
\disc_{C_{\sigma,\gamma}}(\bA_H) \ge  \frac{\LS_{f_H}(G)\cdot |E\triangle E'|}{16h^22^{h-1} n\sqrt{\gamma}}
\ge \frac{\LS_{f_H}(G)\cdot \sigma\gamma n^2}{16h^22^{h-1} n\sqrt{\gamma}} = \Omega\left(\sigma n\sqrt{\gamma} \cdot \LS_{f_H}(G)\right).\]
\end{proof}

Let $G(n,p)$ denote the distribution of Erd\H{o}s--R\'enyi random
graphs. For any graph $G = (V, E)\sim G(n, p)$, we use the following lemma to bound the number of occurrences of $ H $ in $G$ containing each edge $ e \in E $.

\begin{lemma}
\label{lem:ERgraph-bound-LS}
Let $G = (V, E)$ be a graph generated from $G(n,p)$. If $p \gg \left((\ln n) /n\right)^{1/(2m_H - 2)}$ %
or $H$ is an edge (i.e., $m_H = 1$), with probability at least $1 - \poly(1/n)$, the following properties $(\star)$ hold: 
\begin{enumerate}
\item\label{prop:vertex} for each vertex $v \in V$, its degree belongs to the interval $[\frac{\gamma n}{2},2\gamma n]$;
\item\label{prop:edge} for each $e \in E$, the number of occurrences of $ H $ containing $e$ belongs to the interval $[\frac{\LS_{f_H}(G)}{2},2 \LS_{f_H}(G) ]$ and $
\LS_{f_H}(G) = \Theta( \GS_{f_H} \cdot p^{m_H - 1} )$. 
\end{enumerate}
\end{lemma}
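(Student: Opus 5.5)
We take $\gamma = p$ throughout, and handle property~\ref{prop:vertex} and property~\ref{prop:edge} separately, finishing with a union bound over all vertices and all vertex pairs. Property~\ref{prop:vertex} is a standard Chernoff bound. The degree of $v$ is $\mathrm{Bin}(n-1,p)$ with mean $\approx pn$. The hypothesis gives $p \gg ((\ln n)/n)^{1/(2m_H-2)} \ge (\ln n)/n$, so $pn \gg \ln n$. Hence
\[
\Pr\bigl[\deg(v)\notin[pn/2,2pn]\bigr]\le 2e^{-\Omega(pn)} = n^{-\omega(1)},
\]
and a union bound over the $n$ vertices suffices.

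If $m_H=1$, property~\ref{prop:edge} is trivial: every edge lies in exactly one occurrence, and $\LS_{f_H}(G)=\GS_{f_H}=1$. So assume $m_H\ge 2$. Fix an unordered pair $\{u,v\}$ and let $Z_{uv}$ be the number of occurrences of $H$ containing $\{u,v\}$ in the graph $(V,E\cup\{uv\})$. Then $Z_{uv}$ depends only on the indicators of pairs other than $uv$. By the counting in the proof of \Cref{lem:GS tight bound}, exactly $\GS_{f_H}$ copies of $H$ in $K_n$ contain $\{u,v\}$. Each of them needs its other $m_H-1$ edges present, so
\[
\mu := \E[Z_{uv}] = \GS_{f_H}\, p^{m_H-1} = \Theta\bigl(n^{h-2}p^{m_H-1}\bigr).
\]

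To show concentration I would use a vertex-exposure martingale with \Cref{lem:Azuma}. Order the vertices $w_1,\dots,w_n$. Define $Z_j=\E[Z_{uv}\mid \text{all pairs within }\{w_1,\dots,w_j\}]$; this is a Doob martingale. Revealing the pairs between $w_j$ and earlier vertices changes $Z_{uv}$ only through copies containing $u$, $v$ and $w_j$ (or through pairs at $u$ or $v$ themselves). There are at most $O(n^{h-3})$ such copies for $w_j\notin\{u,v\}$ and $O(n^{h-2})$ for the two steps $w_j\in\{u,v\}$. Thus $\sum_j c_j^2 = O(n\cdot n^{2h-6} + n^{2h-4})$. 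Azuma then gives, with failure probability $n^{-c}$ for any constant $c$,
\[
|Z_{uv}-\mu| = O\bigl(n^{h-5/2}\sqrt{\ln n} + n^{h-2}\sqrt{\ln n}\bigr).
\]
The second term is a problem, because it is not $o(\mu)$ unless $p$ is constant. I would remove it by conditioning on the (irrelevant) pair $uv$ and exposing $u$ and $v$ last. Alternatively I would bound their contribution separately: given the other vertices, the number of copies through $u,v$ is a sum over $(h-2)$-sets that is already controlled by degree concentration. After this fix the deviation is $O(n^{h-5/2}\sqrt{\ln n})$. This is $o(n^{h-2}p^{m_H-1})$ exactly when $p^{m_H-1}\gg\sqrt{(\ln n)/n}$, i.e.\ when $p \gg ((\ln n)/n)^{1/(2m_H-2)}$, which is the hypothesis. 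This matching of the threshold suggests that vertex exposure is the intended route. Getting the Lipschitz constants right, in particular dealing with the exposure of $u$ and $v$ themselves, is the step I expect to be the main obstacle.

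Taking a union bound over all $\binom n2$ pairs, with probability $1-\poly(1/n)$ every $Z_{uv}$ lies in $[(1-o(1))\mu,(1+o(1))\mu]$. Since $\LS_{f_H}(G)=f^{(1)}_H(G)=\max_{\{u,v\}}Z_{uv}$, we get $\LS_{f_H}(G)=(1\pm o(1))\mu=\Theta(\GS_{f_H}p^{m_H-1})$. For each edge $e=\{u,v\}\in E$, the number of occurrences containing $e$ is $Z_{uv}\in[(1-o(1))\mu,(1+o(1))\mu]\subseteq[\LS_{f_H}(G)/2,\,2\LS_{f_H}(G)]$. This gives property~\ref{prop:edge}.
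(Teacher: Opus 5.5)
Your skeleton matches the paper's: Chernoff for the degrees, a Doob martingale with Azuma's inequality for each pair count, then a union bound. However, the step you flag as the main obstacle is not resolved by either fix you propose.

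Exposing $u$ and $v$ \emph{last} does not help. The step that reveals $u$ reveals all pairs $uw$, $w\notin\{u,v\}$, as one block. The Lipschitz constant of $Z_{uv}$ with respect to that block is still $\Theta(n^{h-2})$. Indeed, if all other pairs are present, deleting every edge at $u$ other than $uv$ destroys a constant fraction of the $\GS_{f_H}$ copies through $uv$, because $H$ is connected with $m_H\ge 2$. So Azuma still produces the $n^{h-2}\sqrt{\ln n}$ term. Contrary to your remark, that term is always $\omega(\mu)$, even for constant $p$: $\mu\le\GS_{f_H}=\Theta(n^{h-2})$, and the $\sqrt{\ln n}$ is forced by the union bound over $\binom n2$ pairs.

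Your alternative, controlling these copies ``by degree concentration'', is also not an argument. $Z_{uv}$ is not a function of degrees: for $H=C_4$ it counts $u$--$v$ paths of length $3$, which depend on the edges between $N(u)$ and $N(v)$.

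The repair is to expose $u$ and $v$ \emph{first}. Step $1$ then reveals nothing, and step $2$ reveals only the pair $uv$, on which $Z_{uv}$ does not depend, so $c_1=c_2=0$. Each later step reveals $w_ju$ and $w_jv$ together with $w_j$'s other back-pairs. Any change there affects only copies through $\{u,v,w_j\}$, so $c_j=O(n^{h-3})$ and $\sum_j c_j^2=O(n^{2h-5})$. This gives exactly the condition $np^{2m_H-2}\gg\ln n$. Equivalently, you may keep $u,v$ last but reveal their pairs one at a time rather than as two blocks.

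The paper uses edge exposure over the pairs other than $e$. The $O(n)$ pairs sharing an endpoint with $e$ have $c_i=O(n^{h-3})$, and the $O(n^2)$ disjoint pairs have $c_i=O(n^{h-4})$. This again gives $\sum_i c_i^2=O(n^{2h-5})$ and the same threshold, so the threshold does not single out vertex exposure.

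With either fix the rest of your argument goes through. Defining $Z_{uv}$ on $(V,E\cup\{uv\})$ and taking the union bound over all pairs is the right way to identify $\LS_{f_H}(G)=f^{(1)}_H(G)$ with $\max Z_{uv}$; the paper takes the maximum only over $e\in E$.
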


\begin{proof}
\Cref{prop:vertex} in the properties $(\star)$ follows from the Chernoff bound. We now focus on \Cref{prop:edge}. Note that when $ h = 2 $ and $m_H = 1$ (i.e., $ H $ is an edge), it holds automatically since $ \LS_{f_H}(G) = 1 $. Therefore, we assume $ h \ge 3 $ and $m_H \ge 2$ in the following. Fix an edge $e \in E$. Let $X_e$ denote the number of occurrences of $H$ containing $e$ in $G$. Fix an arbitrary ordering of the edges in $\{(u,v)\mid u,v\in V\} \setminus \{e\}$, denoted by $e_1, e_2, \ldots, e_{\binom{n}{2}-1} $, and define the indicator random variable $Y_i = \1_{e_i \in E}$ for each $i \in [\binom{n}{2} - 1]$. Let $Z_0 = \mathbb{E}[X_e] = \GS_{f_H} \cdot p^{m_H - 1}$. For each $i\in [\binom{n}{2} - 1]$, let $Z_i= \mathbb{E}[X_e | Y_{1}, \dots, Y_{i}]$. Then $Z_{\binom{n}{2} - 1} = X_e$ and $Z_0, \dots, Z_{\binom{n}{2} - 1}$ is a martingale by the law of total expectation. 

By an analysis similar to the proof of \Cref{lem:GS tight bound}, for each edge $e_i$, if it shares an endpoint with $e$, then $|Z_{i} - Z_{i - 1}| \le \Theta(n^{h-3})$; otherwise, $|Z_{i} - Z_{i - 1}| \le \Theta(n^{h-4})$. By \Cref{lem:GS tight bound} and \Cref{lem:Azuma}, if $p \gg((\ln n)/n)^{1/(2m_H - 2)}$, there exists a constant $C > 0$ such that  
\[
\exp\left(-\frac{(\veps \cdot \GS_{f_H} \cdot p^{m_H - 1})^2}{2\left(\sum_{i = 1}^{\binom{n}{2} - 1}c_i^2\right)}\right) \le \exp\left(-\frac{C \cdot n^{2h-4} \cdot p^{2m_H - 2}}{2n\cdot n^{2h-6} + n^2\cdot n^{2h-8}}\right) \le \poly(1/n).
\] Then with probability at least $1 - \poly(1/n)$, $X_e$ belongs to the interval $[(1 - \veps)\GS_{f_H}\cdot p^{m_H - 1}, (1 + \veps)\GS_{f_H}\cdot p^{m_H - 1}]$ for some $\veps > 0$. 

Note that $\LS_{f_H}(G) = \max_{e\in E}X_e$. By the union bound and choosing a suitable constant $\veps > 0$, it holds that with probability at least $1 - \poly(1/n)$, all the random variables $X_e$ lie in the interval $[\frac{\LS_{f_H}(G)}{2},\, 2 \LS_{f_H}(G) ]$ and $
\LS_{f_H}(G) = \Theta\left( \GS_{f_H} \cdot p^{m_H - 1} \right)$. By the union bound, with probability at least $1 - \poly(1/n)$, the properties $(\star)$ are satisfied.
\end{proof}

The remaining proof largely follows directly from \citet{eliavs2020differentially}, and we only need to make slight adjustments to adapt their proofs for our case. We sketch the proofs here for the sake of completeness. By an analysis similar to the proof of \Cref{lem:general-attacker}, we obtain the following lemma. 

\begin{lemma}
\label{lem:attacker}
Let $\bx$ be the indicator vector of the edge set of some graph $G$ satisfying the properties ($\star$) in \Cref{lem:ERgraph-bound-LS}. There is a deterministic algorithm $\mathcal{A}$ given an output $\by = \cM(\bx)$ of some mechanism $\cM$ such that $\|\by-\bA_H\bx_H\|_\infty < \frac{1}{2}\disc_{C_{\sigma, \gamma}}(\bA_H)$ satisfies $
\|\mathcal{A}(\by) - \bx\|_1 \le \sigma\gamma n^2$.
\end{lemma}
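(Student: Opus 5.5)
The decoder is the same ``consistent-vector'' argument as in \Cref{lem:general-attacker}, with the feasible set restricted to graphs satisfying ($\star$). Given $\by$, the algorithm $\mathcal{A}$ searches over all $\bx'\in\{0,1\}^{\binom{n}{2}}$ that are indicator vectors of graphs satisfying the properties ($\star$) of \Cref{lem:ERgraph-bound-LS}. It outputs any such $\bx'$ with $\|\by-\bA_H\bx'_H\|_\infty<\frac{1}{2}\disc_{C_{\sigma,\gamma}}(\bA_H)$, where $\bx'_H=f_H(\bx')$. This search is exhaustive and deterministic; efficiency is not required. The set of candidates is nonempty, since the true $\bx$ satisfies ($\star$) and meets the accuracy condition by hypothesis.

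Next I would argue by contradiction. Suppose $\|\bx'-\bx\|_1>\sigma\gamma n^2$, and set $\chi=\bx_H-\bx'_H$. I would check that $\chi\in C_{\sigma,\gamma}$:
\begin{enumerate}
\item condition~1 is exactly the assumed lower bound on $\|\bx-\bx'\|_1$;
\item condition~2, the degree bounds $[\frac{\gamma n}{2},2\gamma n]$, holds for both $G$ and $G'$ because both satisfy ($\star$);
\item condition~3 requires that every edge $e\in E\cup E'$ lie in between $\frac{\LS_{f_H}(G)}{2}$ and $2\LS_{f_H}(G)$ occurrences of $H$.
\end{enumerate}
Given membership, $\|\bA_H\chi\|_\infty\ge\disc_{C_{\sigma,\gamma}}(\bA_H)$. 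The triangle inequality gives
\[
\|\bA_H\chi\|_\infty\le\|\bA_H\bx_H-\by\|_\infty+\|\by-\bA_H\bx'_H\|_\infty<\disc_{C_{\sigma,\gamma}}(\bA_H),
\]
which is a contradiction. Hence $\|\mathcal{A}(\by)-\bx\|_1\le\sigma\gamma n^2$.

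The main obstacle is condition~3. Its bounds are stated relative to $\LS_{f_H}(G)$ for the true graph $G$, whereas ($\star$) for $G'$ only controls the edges of $G'$ relative to $\LS_{f_H}(G')$. I would handle this by strengthening the feasibility constraint in the decoder's search. Since $\LS_{f_H}(G)=\Theta(\GS_{f_H}p^{m_H-1})$ is a quantity known in advance, $\mathcal{A}$ can restrict to graphs $G'$ for which every edge lies in $[(1-\veps_0),(1+\veps_0)]\GS_{f_H}p^{m_H-1}$ occurrences, for a small constant $\veps_0$. The martingale argument of \Cref{lem:ERgraph-bound-LS} shows that the true $G$ satisfies this with high probability. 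Then for both graphs the per-edge counts lie within a factor close to $1$ of a common value, and hence within the interval $[\frac{\LS_{f_H}(G)}{2},2\LS_{f_H}(G)]$.

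If the paper's ($\star$) is instead read as already fixing this common scale, condition~3 follows immediately. In either reading, the only change from \Cref{lem:general-attacker} is that the feasible set is restricted so that the difference vector lands in $C_{\sigma,\gamma}$.
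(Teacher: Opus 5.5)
Your proposal is correct and is essentially the paper's argument. The paper simply reruns the consistent-decoder contradiction of \Cref{lem:general-attacker}: output any feasible $\bx'$ whose answers lie within $\frac{1}{2}\disc_{C_{\sigma,\gamma}}(\bA_H)$ of $\by$, then use the triangle inequality to contradict membership of $\bx_H-\bx'_H$ in $C_{\sigma,\gamma}$. Your extra step restricts the search to graphs satisfying ($\star$) with per-edge counts pinned to the known scale $\GS_{f_H}p^{m_H-1}$, which is what the martingale proof of \Cref{lem:ERgraph-bound-LS} actually delivers. The paper leaves this implicit, but it is exactly what is needed for conditions 2 and 3 of $C_{\sigma,\gamma}$ to hold for the decoded pair.
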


The following lemmas on differential privacy will be useful in the subsequent proof. 

\begin{lemma}[Lemma 5.4 in \citet{eliavs2020differentially}]\label{lem:dptransform}
Let $\cM$ be an $(\veps,\delta)$-differentially private mechanism and let $Y$ be the probability distribution over the transcripts of $\cM(\bx)$, where $\bx$ is drawn from distribution $X$. Then for any $\gamma>0$ and $\by \sim Y$, it holds that with probability $1-\delta'$ over $i\in [n]$ and $\bx\gets X_{\mid Y=\by}$, we have 
\[
2^{-\varepsilon-\gamma}\frac{1-p}{p}\leq \frac{\Pr_{\bx\gets X_{\mid Y=\by}}[\bx_i=0\mid \bx_{-i}]}{\Pr_{\bx\gets X_{\mid Y=\by}}[\bx_i=1\mid \bx_{-i}]} \leq 2^{\varepsilon+\gamma}\frac{1-p}{p},
\]
where $x_{-i}$ denotes the vector of all coordinates of $x$ excluding $x_i$ and $\delta' = 2\delta \cdot \frac{1 + e^{-\veps - \gamma}}{1 - e^{-\gamma}}$. 
\end{lemma}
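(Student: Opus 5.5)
The plan is a direct Bayes-rule computation combined with a ``bad likelihood-ratio set'' argument. Throughout, as in \citet{eliavs2020differentially}, I take $X$ to be a product distribution in which each coordinate is $\mathrm{Bernoulli}(p)$ (this is the Erd\H{o}s--R\'enyi prior used later). I also read the exponentials $2^{\pm(\veps+\gamma)}$ in the same base as the one in which the privacy guarantee $e^{\veps}$ is measured. If the two bases are to be taken literally as different, the constants in $\delta'$ must be adjusted accordingly; I would check this against the source rather than assume it.

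\textbf{Step 1: reduce to a likelihood ratio.} Fix $i$ and a value of $\bx_{-i}$, and let $\bx^{(b)}$ denote the vector with $x_i=b$ and the remaining coordinates equal to $\bx_{-i}$. Let $P_b$ be the output distribution of $\cM(\bx^{(b)})$, with densities taken with respect to $P_0+P_1$ so that continuous transcripts are handled. By independence of the coordinates under $X$ and Bayes' rule,
\[
\frac{\Pr[x_i=0\mid \bx_{-i},Y=\by]}{\Pr[x_i=1\mid \bx_{-i},Y=\by]}=\frac{1-p}{p}\cdot\frac{dP_0}{dP_1}(\by).
\]
It therefore suffices to show that the likelihood ratio $L(\by)=\frac{dP_0}{dP_1}(\by)$ lies in $[e^{-\veps-\gamma},e^{\veps+\gamma}]$, except on an event of small probability.

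\textbf{Step 2: bound the bad sets.} Define
\[
B_+=\{\by: L(\by)>e^{\veps+\gamma}\},\qquad B_-=\{\by: L(\by)<e^{-\veps-\gamma}\}.
\]
\begin{itemize}
\item Since $\bx^{(0)}$ and $\bx^{(1)}$ are neighbors, $(\veps,\delta)$-DP gives $P_0(B_+)\le e^{\veps}P_1(B_+)+\delta$.
\item On $B_+$ we have $P_1(B_+)\le e^{-\veps-\gamma}P_0(B_+)$.
\item Combining the two, $P_0(B_+)(1-e^{-\gamma})\le\delta$, so $P_0(B_+)\le \delta/(1-e^{-\gamma})$ and $P_1(B_+)\le e^{-\veps-\gamma}\delta/(1-e^{-\gamma})$.
\item The symmetric argument with the roles of $0$ and $1$ swapped bounds $P_1(B_-)$ and $P_0(B_-)$ by the same quantities.
\end{itemize}

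\textbf{Step 3: translate to the joint distribution.} The pair ($\by\sim Y$, $\bx\gets X_{\mid Y=\by}$) has exactly the joint law of $(\bx,\cM(\bx))$ with $\bx\sim X$. Hence the probability of the bad event equals
\[
\E_{\bx_{-i}}\bigl[(1-p)P_0(B_+\cup B_-)+pP_1(B_+\cup B_-)\bigr].
\]
Each of $P_0(B_+\cup B_-)$ and $P_1(B_+\cup B_-)$ is at most $\delta(1+e^{-\veps-\gamma})/(1-e^{-\gamma})$, so this expectation is at most $2\delta(1+e^{-\veps-\gamma})/(1-e^{-\gamma})=\delta'$; in fact even without the factor $2$. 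The bound holds for every fixed $i$, so it also holds with $i$ chosen uniformly from $[n]$.

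\textbf{Main obstacle.} The only delicate points are bookkeeping rather than ideas. First, the identification of ``probability over $\by$ and over the posterior'' with the joint law must be made precise; for general outputs this uses Radon--Nikodym derivatives. Second, the bad event has to be charged under both conditional laws $P_0$ and $P_1$, which is where the $(1+e^{-\veps-\gamma})$ factor in $\delta'$ comes from.
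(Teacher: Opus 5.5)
The paper gives no proof of this lemma; it is quoted from \citet{eliavs2020differentially}, so there is no in-paper argument to compare against. Your proof is correct and complete for the intended statement, namely: a product $\mathrm{Bernoulli}(p)$ prior, all exponentials in the base $e$ of the DP guarantee, and probability taken jointly over $\by\sim Y$, $i$ and $\bx\gets X_{\mid Y=\by}$. The Bayes step correctly reduces the posterior odds to $\frac{1-p}{p}\cdot\frac{dP_0}{dP_1}(\by)$. The bounds $P_0(B_+)\le \delta/(1-e^{-\gamma})$ and $P_1(B_+)\le e^{-\veps-\gamma}\delta/(1-e^{-\gamma})$, and their mirror images for $B_-$, are right. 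The stated $\delta'$ is exactly what one gets by bounding $(1-p)P_0(B)+pP_1(B)$ crudely by $P_0(B)+P_1(B)$, so the factor $2$ is indeed slack, as you say.

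One correction to your hedge about the bases. If $2^{\pm(\veps+\gamma)}$ is taken literally against an $e^{\veps}$-DP guarantee, the statement is not merely off by constants in $\delta'$; it is false. Coordinatewise randomized response is $\veps$-DP with $\delta=0$, so $\delta'=0$. Yet its likelihood ratio $dP_0/dP_1$ is always $e^{\pm\veps}$, which falls outside $[2^{-\veps-\gamma},2^{\veps+\gamma}]$ whenever $e^{\veps}>2^{\veps+\gamma}$, and no $\delta'$ that vanishes with $\delta$ can fix this. When instead $2^{\veps+\gamma}=e^{\veps+\gamma'}$ with $\gamma'>0$, your argument proves the literal version with $\gamma'$ in place of $\gamma$ in $\delta'$. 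That case covers the paper's only use of the lemma ($\veps=1$, $\gamma=10$), where one still gets $\delta'\le 3\delta$.

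Likewise, your joint reading of the probability is the only valid one. A mechanism that with probability $\delta$ releases a tagged copy of $\bx$, and otherwise runs randomized response, is $(\veps,\delta)$-DP. On those tagged outputs the posterior is a point mass, so for such a fixed $\by$ the ratio bound fails for every $i$.
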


\begin{lemma}[Lemma 2.1.2 in \citet{bun2016new}]\label{lem:bun2016new}
Let $\cM:\cX\to \cR$ be an $(\veps, \delta)$-differentially private mechanism, $c \in \mathbb{N}$, and $\bx,\bx'\in \cX$ such that $\|\bx - \bx'\|_1\le c$. Then, for every $S\subseteq \cR$, we have
\[
\Pr[\cM(\bx)\in S] \le e^{c\veps} \Pr[\cM(\bx') \in S] + \frac{e^{c\veps} - 1}{e^\veps - 1}\delta.
\]
\end{lemma}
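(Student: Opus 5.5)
This is the standard group-privacy bound, and I would prove it by chaining the single-step $(\veps,\delta)$ guarantee along a path of neighboring inputs and inducting on the path length.

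\textbf{Path construction.} Since $\|\bx-\bx'\|_1\le c$ and inputs differ coordinatewise, I pick a sequence $\bx=\bx_0,\bx_1,\dots,\bx_k=\bx'$ with $k\le c$. Each consecutive pair differs in one coordinate, so it is a neighboring pair. For bit-vectors (edge indicators) this means flipping the differing coordinates one at a time. If $k<c$, the bound for $k$ implies the bound for $c$, because $e^{k\veps}\le e^{c\veps}$ and $\frac{e^{k\veps}-1}{e^\veps-1}\le\frac{e^{c\veps}-1}{e^\veps-1}$. So it suffices to treat paths of length exactly $k$.

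\textbf{Induction.} Fix $S\subseteq\cR$ and write $p_j=\Pr[\cM(\bx_j)\in S]$. The claim to prove is
\[
p_0\le e^{j\veps}p_j+\delta\sum_{i=0}^{j-1}e^{i\veps}.
\]
\begin{itemize}
\item Base case $j=1$: this is exactly the definition of $(\veps,\delta)$-DP.
\item Inductive step: apply DP to the neighbors $\bx_j,\bx_{j+1}$ to get $p_j\le e^{\veps}p_{j+1}+\delta$. Substituting into the hypothesis gives
\[
p_0\le e^{(j+1)\veps}p_{j+1}+e^{j\veps}\delta+\delta\sum_{i=0}^{j-1}e^{i\veps},
\]
which is the claim for $j+1$.
\end{itemize}

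\textbf{Conclusion.} At $j=k$, the geometric sum is $\sum_{i=0}^{k-1}e^{i\veps}=\frac{e^{k\veps}-1}{e^\veps-1}$, which gives the stated inequality. If $\veps=0$, interpret the ratio as its limit $k$.

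\textbf{Main point of care.} There is no real obstacle. The only step needing attention is confirming that the neighboring relation matches the $\ell_1$ distance, so that $\ell_1$ distance $c$ yields a path of at most $c$ neighboring steps.
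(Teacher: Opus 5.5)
Your proof is correct and is exactly the standard hybrid argument behind Lemma~2.1.2 of \citet{bun2016new}: walk along at most $c$ neighboring inputs and induct, collecting the geometric series $\sum_{i=0}^{k-1}e^{i\veps}\delta$; the paper cites this lemma without reproducing a proof. As you point out, the only domain-dependent step is the existence of such a path, and it is immediate in this paper's edge-DP setting, since for edge-indicator vectors the $\ell_1$ distance equals the number of differing edges.
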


Now we prove the lower bound for $\varepsilon=1$. 

\begin{lemma}\label{lemma:lowereps1}
Let $G\sim G(n,p)$, where $ p\in (0,\frac{1}{2}]$, be a random graph. And we assume $p \gg ((\ln n)/n)^{1/(2m_H-2)}$ unless $H$ is an edge. Let $\cM$ be a $(1,\delta)$-DP mechanism that answers all range subgraph counting queries up to additive error $\alpha$ with probability $\beta$. Then $\alpha\geq \Omega(\disc_{C_{\sigma,\gamma}}(\bA_H))$, where $\gamma=p, \sigma=\Omega(1-\frac{9\delta}{\beta})$. 
\end{lemma}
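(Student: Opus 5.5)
We follow the Eliáš–Kapralov–Kulkarni–Lee template: run the mechanism on a random graph, use the attacker of \Cref{lem:attacker} to reconstruct most of the edge vector, and show this contradicts the local uncertainty about individual edges guaranteed by \Cref{lem:dptransform}. Concretely, suppose for contradiction that $\alpha < \frac12\disc_{C_{\sigma,\gamma}}(\bA_H)$ with $\gamma=p$, and draw $G\sim G(n,p)$ with indicator vector $\bx\in\{0,1\}^{\binom n2}$.

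\textbf{Step 1 (setup).} By \Cref{lem:ERgraph-bound-LS}, with probability $1-\poly(1/n)$ the graph $G$ satisfies properties $(\star)$. On that event, together with the event of probability $\beta$ that $\cM$ is $\alpha$-accurate, \Cref{lem:attacker} yields a deterministic decoder with $\|\cA(\cM(\bx))-\bx\|_1\le\sigma\gamma n^2$. Hence, with probability about $\beta-\poly(1/n)$ over $(\bx,\cM)$, the transcript $\by=\cM(\bx)$ determines $\bx$ up to $\sigma p n^2$ coordinates. There is one subtlety: the decoder must search only over vectors $\bx'$ that also satisfy $(\star)$, so that $\bx_H-\bx'_H\in C_{\sigma,\gamma}$. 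I would build this restriction into the decoder, which is harmless because the true $\bx$ satisfies $(\star)$.

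\textbf{Step 2 (posterior uncertainty).} Apply \Cref{lem:dptransform} with $\veps=1$ and a constant $\gamma'$ (renamed to avoid a clash with the $\gamma$ above). For a typical transcript $\by$ and a $1-\delta'$ fraction of pairs $(i,\bx)$ drawn from the posterior, the conditional odds of $x_i=0$ versus $x_i=1$ given $\bx_{-i}$ lie within a factor $2^{\pm(1+\gamma')}$ of $(1-p)/p$. Thus each edge coordinate keeps conditional probability $\Omega(p)$ of being $1$ and $\Omega(1)$ of being $0$. Summing over coordinates, the posterior expected Hamming distance between $\bx$ and any fixed estimate $\cA(\by)$ is $\Omega(p)\cdot(1-\delta')\binom n2$. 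A cleaner route is to resample one coordinate $i$ from its posterior conditional: the resulting $\bx^{(i)}$ is distributed as $\bx$ given $\by$, and the decoder cannot be right about $x_i$ for both $\bx$ and $\bx^{(i)}$ with high probability.

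\textbf{Step 3 (combine).} By Step 1, the expected error of $\cA(\by)$ is at most $\sigma p n^2$ plus a failure term of at most $(1-\beta)$ times the maximal error, which is $O(pn^2)$ on typical graphs since $|E|\approx pn^2/2$. By Step 2, this error is at least $c\,p\,n^2(1-\delta')$. Hence $\sigma$ cannot be small. More precisely, any accurate mechanism forces $\sigma\ge 1-O(\delta/\beta)$, which we rearrange as follows: setting $\sigma=\Omega(1-9\delta/\beta)$ in the definition of $C_{\sigma,\gamma}$ and assuming accuracy below $\frac12\disc$ produces the contradiction. The constant $9$ comes from $\delta'=2\delta(1+e^{-1-\gamma'})/(1-e^{-\gamma'})$ at a suitable $\gamma'$, divided by the success probability $\beta$.

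\textbf{Main obstacle.} The delicate step is converting the per-coordinate statement of \Cref{lem:dptransform} into a bound on the Hamming error of a decoder that succeeds only with probability $\beta$. We must condition on the success event without destroying the posterior bound. I would handle this as Eliáš et al.\ do: apply the lemma on the joint event and use Markov's inequality to restrict to transcripts where success has conditional probability at least $\beta/3$. Everything else is bookkeeping.
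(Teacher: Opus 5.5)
\paragraph{Where the argument fails.} Your Steps~1 and~2 are sound, and restricting the decoder to vectors satisfying $(\star)$ is the right way to make \Cref{lem:attacker} applicable. Step~3, however, does not give a contradiction unless $\beta$ is very close to $1$.

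Step~2 lower-bounds the posterior expected Hamming error of any estimate by roughly $2^{-\veps'}p\binom{n}{2}(1-\delta')\approx 2^{-\veps'-1}pn^2(1-\delta')$, with $\veps'=1+\gamma'>1$. Your upper bound is $\beta\sigma pn^2+(1-\beta)D$, where $D$ is the decoder's error on the failure event. You only control $D=O(pn^2)$, and nothing rules out $D\approx p(1-p)n^2$; for instance, on failure the mechanism's answers could be consistent with an independent $G(n,p)$ sample. So the two bounds are compatible for every $\sigma\ge 0$ once $1-\beta\gtrsim 2^{-\veps'-1}$, e.g.\ for $\beta=2/3$.

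The lemma is meant for general $\beta$; that is what the $\delta/\beta$ in $\sigma$ records. In your argument $\delta'$ enters only through an additive factor $(1-\delta')$ and is never divided by $\beta$, so $\sigma=\Omega(1-9\delta/\beta)$ is asserted, not derived. The Markov restriction in your last paragraph does not help. Even on transcripts where good vectors have posterior mass $\ge\beta/3$, the remaining mass may sit far from $\cA(\by)$ and absorb all the per-coordinate uncertainty from Step~2. A global expected-error comparison throws away what you need: how goodness correlates with individual coordinates.

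\paragraph{The missing idea.} You need to apply the odds-ratio bound of \Cref{lem:dptransform} to the good vectors themselves. The paper, following Lemma~5.3 of \citet{eliavs2020differentially}, calls $\bx$ good if $\|\bA_H\bx_H-\by\|_\infty<\frac12\disc_{C_{\sigma,\gamma}}(\bA_H)$ and $(\star)$ holds. Good vectors then have posterior mass at least $(1-\poly(1/n))\beta$ and are pairwise within Hamming distance $r=\sigma\gamma n^2$. The paper then shows that, conditioned on goodness, the ratio inequality fails often enough that its total failure probability exceeds $\delta'$.

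One concrete way to see this: for each transcript, fix a good center $c$. Take a good $\bx$ and any $i$ with $x_i=c_i$ at which the ratio bound holds. The flipped vector $\bx^{\oplus i}$ satisfies $\Pr[\bx^{\oplus i}\mid\by]\ge 2^{-\veps'}\frac{p}{1-p}\Pr[\bx\mid\by]$, using $p\le\frac12$. It also lies within distance $r+1$ of $c$, so it arises from at most $r+1$ such pairs $(i,\bx)$. Double counting and averaging over $\by$ gives
\[
r+1\;\ge\;2^{-\veps'}\,\frac{p}{1-p}\Bigl(\bigl(\tbinom{n}{2}-r\bigr)\Pr[\bx\text{ good}]-\delta'\tbinom{n}{2}\Bigr).
\]
With $r=\sigma pn^2$ this forces $\sigma\gtrsim 2^{-\veps'}(\beta-O(\delta'))$, which is the claimed form for constant $\beta$. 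Here the violations, of total mass at most $\delta'$, are charged against good mass of about $\beta$. That is where $\delta'/\beta$ comes from, and with $\delta'\le 3\delta$ it gives the constant $9$.
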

\begin{proof}[Proof Sketch]
We choose $\varepsilon=1$ and $\varepsilon'=\varepsilon +10$. By \Cref{lem:dptransform}, this implies $\delta'=2\delta\cdot \frac{1+e^{-\varepsilon-10}}{1-e^{-10}}\leq 3\delta$, then with probability $1-\delta'$ over $i\in [n]$ and $\bx\gets X_{\mid Y=\by}$, we have 
\begin{eqnarray}
2^{-\varepsilon'}\frac{1-p}{p}\leq \frac{\Pr_{\bx\gets X_{\mid Y=\by}}[\bx_i=0\mid \bx_{-i}]}{\Pr_{\bx\gets X_{\mid Y=\by}}[\bx_i=1\mid \bx_{-i}]} \leq 2^{\varepsilon'}\frac{1-p}{p}.
\label{ineq:indistinguish}
\end{eqnarray}

Then we can prove the lemma by contradiction. That is, we assume that $\cM$ has an additive error smaller than $\frac{1}{2}\disc_{C_{\sigma,\gamma}}(\bA_H)$ with probability at least $\beta$. Then we can show that for each possible output $\by$ of the mechanism $\cM$, with probability greater than $\delta'$, \Cref{ineq:indistinguish} is violated. To do so, we only need to show that 1) with high probability, $\bx$ is \emph{good} in the sense that it satisfies the desired properties; and 2) conditioned on the event that $\bx$ is good, the inequality (\ref{ineq:indistinguish}) is violated with probability greater than $\delta'$, if we set $\gamma=p$ and $\sigma=2^{-\veps'-3}\cdot (1-\frac{3\delta'}{\beta})$, which leads to a contradiction. 

Part 2) follows from the same argument as those in  the proof of Lemma 5.3 in \cite{eliavs2020differentially}. For part 1), we describe our changes. Formally, we define that $\bx\sim X_{\mid Y=\by}$ is \emph{good} if $\norm{\bA_H\bx_{H}-\by}_\infty < \frac{1}{2}\disc_{C_{\sigma,\gamma}}(\bA_H)$ and the properties $(\star)$ given in the statement of \Cref{lem:ERgraph-bound-LS} are satisfied. Note that the probability that $\bx\sim X_{\mid Y=\by}$ is {good} is at least $(1-\poly(1/n))\cdot \beta$ by \Cref{lem:ERgraph-bound-LS}. This then finishes the proof of the lemma. 
\end{proof}

Finally, we finish the proof of the lower bound for $d \ge n/2$.

\begin{theorem}
\label{thm:subset-subgraph-counting-lb}
Let $\cM$ be an $(\veps, \delta)$-DP mechanism and $ G \sim G(n, p) $, where $ p\in (0,\frac{1}{2}]$. And we assume $p \gg ((\ln n)/n)^{1/(2m_H-2)}$ unless $H$ is an edge. If $\cM$ answers all $d$-dimensional ($d\ge n/2$) range subgraph counting queries about $G$ up to an additive error $ \alpha $ with probability at least $ \beta $, then $\alpha = \Omega(\sqrt{m}(1 - c)\cdot\LS_{f_H}(G))$, where $ c = \frac{e - 1}{e^\veps - 1}·\frac{9\delta}{\beta}$.
\end{theorem}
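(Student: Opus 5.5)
The plan is to reduce the general $(\veps,\delta)$ case to the $\veps=1$ case of \Cref{lemma:lowereps1}, and then turn the discrepancy bound of \Cref{lem:disc-subset-subgraph-counting} into an explicit error bound. Here $m$ denotes the number of edges, so with $\gamma=p$ we have $m=\Theta(pn^2)$ with high probability.

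\textbf{Handling large $\veps$.} If $\veps\ge 1$, a $(\veps,\delta)$-DP mechanism is not $(1,\delta)$-DP, so I would reparametrize. Following \citet{eliavs2020differentially}, I would group the $\binom{n}{2}$ edge indicators into blocks of size $k=\lceil 1/\veps\rceil$ (or rescale when $\veps<1$). By the group privacy bound \Cref{lem:bun2016new}, the mechanism is $(k\veps,\frac{e^{k\veps}-1}{e^\veps-1}\delta)$-DP with respect to changing one block. For $\veps\le 1$, I would take $c=\lfloor 1/\veps\rfloor$ so that $c\veps\le 1$. The mechanism is then $(1,\frac{e-1}{e^\veps-1}\delta)$-DP with respect to a coarsened neighbor relation, in which the database consists of $c$-tuples of edges sharing a common random bit pattern.

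\textbf{Applying \Cref{lemma:lowereps1}.} With this reparametrization, the effective $\delta$ becomes $\frac{e-1}{e^\veps-1}\delta$. The resulting $\sigma=\Omega(1-\frac{9\delta_{\mathrm{eff}}}{\beta})=\Omega(1-c)$ matches the constant $c$ in the statement. So I would invoke \Cref{lemma:lowereps1} with $\gamma=p$ and this $\delta_{\mathrm{eff}}$. That gives $\alpha=\Omega(\disc_{C_{\sigma,\gamma}}(\bA_H))$. Applying \Cref{lem:disc-subset-subgraph-counting} then gives $\alpha=\Omega(\sigma n\sqrt{p}\,\LS_{f_H}(G))$. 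The hypotheses of \Cref{lem:disc-subset-subgraph-counting} — degree concentration and per-edge occurrence concentration for both $G$ and the reconstructed graph — hold by \Cref{lem:ERgraph-bound-LS}, via the assumption on $p$ and a union bound.

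\textbf{Converting to $\sqrt{m}$.} By Chernoff, $m=\Theta(pn^2)$ with high probability, so $n\sqrt{p}=\Theta(\sqrt{m})$, which yields $\alpha=\Omega(\sqrt m(1-c)\LS_{f_H}(G))$. Since $d\ge n/2$ lets the queries realize every vertex subset (as argued before \Cref{lem:rho-bound}), the matrix $\bA_H$ used in \Cref{lem:disc-subset-subgraph-counting} is exactly the query matrix. Hence the attacker of \Cref{lem:attacker} applies.

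\textbf{Main obstacle.} The main obstacle is the first step: making the group-privacy reduction compatible with \Cref{lem:dptransform}, which is stated per coordinate under a product prior $G(n,p)$. Grouping edges breaks the i.i.d. Bernoulli$(p)$ structure of the coordinates, so the posterior-ratio argument must be redone for blocks, or we must accept constant-factor losses. My first attempt would be to avoid grouping when $\veps\ge1$: an $(\veps,\delta)$-DP mechanism is trivially $(\veps',\delta)$-DP for larger $\veps'$, and the losses there are only in $2^{-\veps'}$ inside $\sigma$. For $\veps<1$, I would simply note that $(\veps,\delta)$-DP implies $(1,\delta)$-DP, since $\delta\le \frac{e-1}{e^\veps-1}\delta$. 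This second observation makes the case $\veps<1$ immediate and suggests the stated $c$ is just a uniform upper bound on the effective $\delta$.
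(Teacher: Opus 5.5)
Your $\veps=1$ core matches the paper's: chain \Cref{lemma:lowereps1} with \Cref{lem:disc-subset-subgraph-counting}, and use $m=\Theta(pn^2)$ to turn $n\sqrt{p}$ into $\sqrt m$. You differ in the passage to general $\veps$. The paper does not group coordinates. It applies group privacy (\Cref{lem:bun2016new}) with group size $1/\veps$ to the rescaled mechanism $\veps^{m_H}\cM(\bx/\veps)$, obtains a $(1,\frac{e-1}{e^\veps-1}\delta)$-DP mechanism, and invokes the $\veps=1$ bound. You are right to drop the block-grouping of your first paragraph, since it clashes with the per-coordinate product-prior argument behind \Cref{lem:dptransform}. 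Your final route is then just the observation that for $\veps\le 1$ an $(\veps,\delta)$-DP mechanism is already $(1,\delta)$-DP. This is correct and cleaner than the paper's. It gives $\sigma=\Omega(1-9\delta/\beta)$, which is at least $\Omega(1-c)$ because $c\ge 9\delta/\beta$ when $\veps\le 1$. It also never feeds the non-Boolean vector $\bx/\veps$ to a mechanism defined only on graphs. The rescaling, as in the weighted-graph setting of Eli\'a\v{s} et al., is meant to gain an $\veps$-dependent factor in the error. The statement here has no such factor, so the paper's group-privacy step only replaces $\delta$ by $\frac{e-1}{e^\veps-1}\delta\ge\delta$ and buys nothing over your observation.

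Your closing sentence is wrong. For $\veps>1$ we have $\frac{e-1}{e^\veps-1}<1$, so $c<9\delta/\beta$: the stated $c$ is smaller than your effective-$\delta$ term, not an upper bound on it. Re-running \Cref{lemma:lowereps1} with $\veps'=\veps+10$ gives only $\alpha=\Omega\bigl(2^{-\veps}\sqrt m\,(1-9\delta/\beta)\LS_{f_H}(G)\bigr)$. This is vacuous in the window $\beta/9\le\delta<\frac{e^\veps-1}{e-1}\cdot\frac{\beta}{9}$, where the statement still asserts a nontrivial bound.

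This is not a weakness of your route relative to the paper's. \Cref{lem:bun2016new} needs an integer group size $1/\veps\ge 1$, so the paper's reduction also covers only $\veps\le 1$. In fact the statement fails for larger $\veps$. Consider the mechanism that releases the exact answers with probability $\delta$ and a fixed symbol otherwise. It is $(0,\delta)$-DP, hence $(\veps,\delta)$-DP, and has error $0$ with probability $\delta$. Take $\beta=\delta$ and $\veps=3$; then $c=9(e-1)/(e^3-1)\approx 0.81<1$, which contradicts $\alpha=\Omega(\sqrt m(1-c)\LS_{f_H}(G))$.

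So you should state and prove the theorem for $\veps\le 1$, where your argument is complete. For $\veps>1$, claim only the weaker bound with $9\delta/\beta$ in place of $c$ and an $e^{-O(\veps)}$ loss in the constant.
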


\begin{proof}
Note that $m = \Theta(\gamma n^2)$. Then $\Theta(n \sqrt{\gamma}) = \Theta(\sqrt{m})$. \Cref{lemma:lowereps1} together with \Cref{lem:disc-subset-subgraph-counting} imply that there is no $(1,\delta)$-DP mechanism $\cM$ whose error with probability at least $\beta$ is below $o(\sqrt{m}(1 - \frac{9\delta}{\beta})\cdot \LS_{f_H}(G))$.

Let us assume for contradiction, that there is an $(\veps, \delta)$-DP mechanism $\cM(\bx)$ whose error is smaller than $o(\sqrt{m}(1 - c)\cdot\LS_{f_H}(G) )$ with probability $\beta$, where $c = \frac{e-1}{e^{\veps} - 1}\cdot \frac{9\delta}{\beta}$. Note that $f_H(\frac{\bx}{\veps}) = \veps^{-m_H} f_H(\bx)$. Let us consider a mechanism $\veps^{m_H}\cM(\frac{\bx}{\veps})$. By \Cref{lem:bun2016new}, it is $(1, \frac{e-1}{e^\veps - 1}\delta)$-DP. Moreover, it has error 
\[
\veps^{m_H} \cdot o\left(\frac{\sqrt{m}}{\veps^{m_H}}(1 - c)\cdot\LS_{f_H}(G) \right) \le o\left(\sqrt{m}(1 - c)\cdot\LS_{f_H}(G)\right)
\]
with probability at least $\beta$ — a contradiction.
\end{proof}

When the parameters $p,\veps,\delta$ in \Cref{thm:subset-subgraph-counting-lb} are constants, we show that any $(\veps, \delta)$-DP algorithm incurs an additive error of $\Omega(n \GS_{f_H})$ by \Cref{lem:ERgraph-bound-LS}.

\end{document}